\setlist[description]{leftmargin=1cm,labelindent=0.5cm}
\newcommand{\Surv}{\text{Pres}\xspace}
\newcommand{\Killer}{\text{Cons}\xspace}
\newcommand{\Pres}{\text{Pres}\xspace}
\newcommand{\Cons}{\text{Cons}\xspace}
\newcommand{\Min}{\text{Min}\xspace}
\newcommand{\Max}{\text{Max}\xspace}
\newcommand{\prefM}{\pi^{\leq m}}
\newcommand{\pref}[1]{\pi^{\leq #1}}
\newcommand{\NatInf}{\Nat^\infty}
\renewcommand{\path}{\textsf{path}}
\NewDocumentCommand{\Cand}{g}{\textsf{Cand}^{\IfNoValueTF{#1}{}{#1}}}
\NewDocumentCommand{\Allowd}{g}{\textsf{A}^{\IfNoValueTF{#1}{}{#1}}}
\NewDocumentCommand{\relativebud}{g}{\textsf{Trim}\IfNoValueTF{#1}{}{(#1)}}
\newcommand{\finprefix}{\textsf{fin}}
\newcommand{\stam}[1]{}
\newtheorem{problem}{Problem}
\newcommand{\tup}[1]{\ensuremath{\left(#1\right)}}
\newcommand{\set}[1]{\ensuremath{\left\lbrace #1 \right\rbrace}}
\newcommand{\zug}[1]{\langle #1 \rangle}
\NewDocumentCommand{\play}{g}{\textsf{play}\IfNoValueTF{#1}{}{(#1)}}
\NewDocumentCommand{\spare}{gg}{\triangledown_{\IfNoValueTF{#2}{\pi}{#2}}\IfNoValueTF{#1}{}{(#1)}}
\newcommand{\Nat}{\mathbb{N}}
\newcommand{\Natstr}{\mathbb{N}^*}
\newcommand{\Natstro}{\mathbb{N}^* \setminus \mathbb{N}}
\newcommand{\upperboundforloop}{\absolut{V}k\maxweight}
\NewDocumentCommand{\pathacyclic}{g}{\IfNoValueTF{#1}{\pi}{#1}_{\textsf{acycl}}}
\newcommand{\G}{\ensuremath{\mathcal{G}}}
\newcommand{\R}{\mathcal{R}}
\newcommand{\C}{\mathcal{C}}
\newcommand{\calO}{\mathcal{O}}
\newcommand{\A}{\ensuremath{\mathcal{A}}}
\renewcommand{\P}{\ensuremath{\mathit{Player}}}
\NewDocumentCommand{\lattice}{g}{\mathcal{F}_{\IfNoValueTF{#1}{\mathcal{G}}{#1}}}
\NewDocumentCommand{\auxop}{gg}{\mathbb{O}\IfNoValueTF{#1}{}{(#1)}\IfNoValueTF{#2}{}{(#2)}}
\NewDocumentCommand{\actop}{gg}{\tilde{\mathbb{O}}\IfNoValueTF{#1}{}{(#1)}\IfNoValueTF{#2}{}{(#2)}}
\NewDocumentCommand{\weightf}{g}{\mathsf{w}\IfNoValueTF{#1}{}{(#1)}}
\newcommand{\maxweight}{\mathsf{W}}
\NewDocumentCommand{\weightforpath}{g}{\rho_{\IfNoValueTF{#1}{}{#1}}}
\NewDocumentCommand{\sumofweights}{g}{\mathsf{sum}\IfNoValueTF{#1}{(\pi)}{(#1)}}
\NewDocumentCommand{\suff}{g}{\textsf{Suff}\IfNoValueTF{#1}{}{(#1)}}
\newcommand{\budgetset}{[k] \cup \{k+1\}}
\newcommand{\PO}{\ensuremath{\mathit{\P~1}}\xspace}
\newcommand{\PT}{\ensuremath{\mathit{\P~2}}\xspace}
\NewDocumentCommand{\PLi}{g}{\IfNoValueTF{#1}{\ensuremath{\mathit{\P~i}}\xspace}{\ensuremath{\mathit{\P~#1}}\xspace}}
\NewDocumentCommand{\allowedthresh}{g}{\textit{A}_{\thresh}\IfNoValueTF{#1}{}{(#1)}}
\NewDocumentCommand{\thresh}{gg}{\mathit{Th}_{\IfNoValueTF{#2}{}{#2}}\IfNoValueTF{#1}{}{(#1)}}
\NewDocumentCommand{\val}{gg}{\mathit{val}_{\IfNoValueTF{#2}{}{#2}}\IfNoValueTF{#1}{}{(#1)}}
\NewDocumentCommand{\config}{gg}{\tup{\ensuremath{\IfNoValueTF{#1}{v}{#1}, \IfNoValueTF{#2}{B}{#2}}}}
\NewDocumentCommand{\tbid}{gg}{b^{\IfNoValueTF{#2}{T}{#2}}_{\IfNoValueTF{#1}{}{#1}}}
\renewcommand{\set}[1]{\{ #1  \}}
\newcommand{\succb}[1]{#1 \oplus 0^*\xspace}
\newcommand{\succInt}[1]{\left(#1\right)^*}
\newcommand{\succbInt}[1]{#1^*}
\newcommand{\predb}[1]{#1 \ominus 0^*\xspace}
\newcommand{\predbInt}[1]{\left(#1-1\right)^*}
\newcommand{\eps}{\varepsilon}
\newcommand{\floor}[1]{\left\lfloor #1 \right\rfloor}
\newcommand{\mustar}{\mu^*}
\NewDocumentCommand{\sumT}{g}{\IfNoValueTF{#1}{\absolut{\thresh{\vplus}} + \absolut{\thresh{\vminus}}}{\absolut{#1(\vplus)} + \absolut{#1(\vminus)}}}
\NewDocumentCommand{\sumTf}{g}{\IfNoValueTF{#1}{\absolut{\thresh{\vplus}} + \absolut{\thresh{\vminus}}}{\absolut{#1(\vplus{#1})} + \absolut{#1(\vminus{#1})}}}
\NewDocumentCommand{\diffT}{g}{\IfNoValueTF{#1}{\absolut{\thresh{\vplus}} - \absolut{\thresh{\vminus}}}{\absolut{#1(\vplus)} - \absolut{#1(\vminus)}}}
\NewDocumentCommand{\diffTf}{g}{\IfNoValueTF{#1}{\absolut{\thresh{\vplus}} - \absolut{\thresh{\vminus}}}{\absolut{#1(\vplus{#1})} - \absolut{#1(\vminus{#1})}}}
\NewDocumentCommand{\vmin}{g}{\IfNoValueTF{#1}{v^-}{#1^-}}
\NewDocumentCommand{\vmax}{g}{\IfNoValueTF{#1}{v^+}{#1^+}}
\newcommand{\absolut}[1]{|#1|}
\newcommand{\neighbor}[1]{N(#1)}
\NewDocumentCommand{\Tmax}{g}{T^+\IfNoValueTF{#1}{}{(#1)}}
\NewDocumentCommand{\Tmin}{g}{T^-\IfNoValueTF{#1}{}{(#1)}}
\NewDocumentCommand{\optbidB}{gg}{\textit{Bid}\IfNoValueTF{#1}{}{(#1\IfNoValueTF{#2}{}{, #2})}}
\NewDocumentCommand{\muoptbid}{ggg}{\textit{Bid}_{\IfNoValueTF{#3}{\buildmu}{#3}}\IfNoValueTF{#1}{}{(#1\IfNoValueTF{#2}{}{, #2})}}
\NewDocumentCommand{\optbid}{gg}{\textsf{bid}^{\IfNoValueTF{#1}{\thresh}{#1}}_{\IfNoValueTF{#2}{}{#2}}}
\NewDocumentCommand{\optbidstar}{g}{\textit{Bid}_{\succb{T}}\IfNoValueTF{#1}{}{(#1)}}
\NewDocumentCommand{\diffbudget}{gg}{\triangledown_{\IfNoValueTF{#1}{\rho}{#1}}\IfNoValueTF{#2}{}{(#2)}}
\NewDocumentCommand{\advS}{g}{\textit{Adv}\IfNoValueTF{#1}{}{(#1)}}
\NewDocumentCommand{\prefplay}{g}{\IfNoValueTF{#1}{\rho_{\textit{pref}}}{{#1}_{\textit{pref}}}}
\NewDocumentCommand{\primethresh}{gg}{\textit{Th}'_{\IfNoValueTF{#2}{}{#2}}\IfNoValueTF{#1}{}{(#1)}}
\NewDocumentCommand{\muvertexmin}{g}{\textsf{V}_{\textsf{min}}\IfNoValueTF{#1}{}{(#1)}}
\NewDocumentCommand{\muvertexmax}{g}{\textsf{V}_{\textsf{max}}\IfNoValueTF{#1}{}{(#1)}}
\NewDocumentCommand{\vertexmin}{g}{\textsf{V}_{\textsf{min}}^{\IfNoValueTF{#1}{\thresh}{#1}}}
\NewDocumentCommand{\conThresh}{gg}{\mathit{Th}'}
\NewDocumentCommand{\preThresh}{gg}{\mathit{Th}^{\IfNoValueTF{#1}{}{#1}}}
\NewDocumentCommand{\Th}{gg}{\mathit{Th}}
\NewDocumentCommand{\reach}{g}{\ensuremath{\mathit{Reach}\IfNoValueTF{#1}{}{(#1)}}}
\NewDocumentCommand{\payoff}{g}{\textsf{Payoff}\IfNoValueTF{#1}{}{(#1)}}
\NewDocumentCommand{\meanpayoff}{g}{\textsf{Mean-Payoff}\IfNoValueTF{#1}{}{(#1)}}
\NewDocumentCommand{\energy}{g}{\textsf{Energy}\IfNoValueTF{#1}{}{(#1)}}
\NewDocumentCommand{\energyn}{gg}{\textsf{Energy}_{\IfNoValueTF{#1}{n}{#1}}\IfNoValueTF{#2}{}{(#2)}}
\NewDocumentCommand{\partialenergy}{gg}{\textsf{Energy}_{\IfNoValueTF{#1}{i}{#1}}\IfNoValueTF{#2}{}{(#2)}}
\NewDocumentCommand{\maxorzero}{gg}{\delta\IfNoValueTF{#1}{}{\left(#1, #2\right)}}
\NewDocumentCommand{\buildmu}{g}{\mu\IfNoValueTF{#1}{}{\left(#1\right)}}
\NewDocumentCommand{\buildmun}{gg}{\mu_{\IfNoValueTF{#1}{}{#1}}\IfNoValueTF{#2}{}{(#2)}}
\NewDocumentCommand{\triumph}{g}{\textit{trump}\IfNoValueTF{#1}{}{(#1)}}
\NewDocumentCommand{\trump}{g}{\textit{trump}\IfNoValueTF{#1}{}{(#1)}}
\NewDocumentCommand{\minenergy}{gggg}{\min_{v' \in N(\IfNoValueTF{#2}{v}{#2})}\maxorzero{\buildmun{\IfNoValueTF{#1}{n}{#1}}{\zug{v', \IfNoValueTF{#3}{B}{#3} - \IfNoValueTF{#4}{b}{#4}}}}{\weightf{\IfNoValueTF{#2}{v}{#2}, v'}}}
\NewDocumentCommand{\maxenergy}{gggg}{\max_{v' \in N(\IfNoValueTF{#2}{v}{#2})}\maxorzero{\buildmun{\IfNoValueTF{#1}{n}{#1}}{\zug{v', \IfNoValueTF{#3}{B}{#3} + \triumph{\IfNoValueTF{#3}{B}{#3}, \IfNoValueTF{#4}{b}{#4}}}}}{\weightf{\IfNoValueTF{#2}{v}{#2}, v'}}}
\NewDocumentCommand{\minenergymu}{ggg}{\min_{v' \in N(\IfNoValueTF{#1}{v}{#1})}\maxorzero{\buildmu{\zug{v', \IfNoValueTF{#2}{B}{#2} - \IfNoValueTF{#3}{b}{#3}}}}{\weightf{\IfNoValueTF{#1}{v}{#1}, v'}}}
\NewDocumentCommand{\maxenergymu}{ggg}{\max_{v' \in N(\IfNoValueTF{#1}{v}{#1})}\maxorzero{\buildmu{\zug{v', \IfNoValueTF{#2}{B}{#2} + \triumph{\IfNoValueTF{#2}{B}{#2}, \IfNoValueTF{#3}{b}{#3}} }}}{\weightf{\IfNoValueTF{#1}{v}{#1}, v'}}}
\NewDocumentCommand{\nobidminenergymu}{gg}{\min_{v' \in N(\IfNoValueTF{#1}{v}{#1})}\maxorzero{\buildmu{v', \IfNoValueTF{#2}{B}{#2}}}{\weightf{\IfNoValueTF{#1}{v}{#1}, v'}}}
\NewDocumentCommand{\nobidmaxenergymu}{gg}{\max_{v' \in N(\IfNoValueTF{#1}{v}{#1})}\maxorzero{\buildmu{v', \IfNoValueTF{#2}{B}{#2}}}{\weightf{\IfNoValueTF{#1}{v}{#1}, v'}}}
\NewDocumentCommand{\fminenergy}{gggg}{\textsf{f}_{\textsf{min}}(\IfNoValueTF{#1}{n}{#1}, \IfNoValueTF{#2}{v}{#2}, \IfNoValueTF{#3}{B}{#3}, \IfNoValueTF{#4}{b}{#4})}
\NewDocumentCommand{\fmaxenergy}{gggg}{\textsf{f}_{\textsf{max}}(\IfNoValueTF{#1}{n}{#1}, \IfNoValueTF{#2}{v}{#2}, \IfNoValueTF{#3}{B}{#3}, \IfNoValueTF{#4}{b}{#4})}
\NewDocumentCommand{\gminenergymu}{ggg}{\textsf{g}_{\textsf{min}}\left(\IfNoValueTF{#1}{v}{#1}, \IfNoValueTF{#2}{B}{#2}, \IfNoValueTF{#3}{b}{#3}\right)}
\NewDocumentCommand{\gmaxenergymu}{ggg}{\textsf{g}_{\textsf{max}}\left(\IfNoValueTF{#1}{v}{#1}, \IfNoValueTF{#2}{B}{#2}, \IfNoValueTF{#3}{b}{#3}\right)}
\NewDocumentCommand{\T}{gg}{\textit{T}_{\IfNoValueTF{#2}{}{#2}}\IfNoValueTF{#1}{}{(#1)}}
\NewDocumentCommand{\tminus}{ggg}{{\IfNoValueTF{#2}{\thresh}{#2}}^-_{\IfNoValueTF{#3}{}{#3}}\IfNoValueTF{#1}{}{(#1)}}
\NewDocumentCommand{\tplus}{ggg}{{\IfNoValueTF{#2}{\thresh}{#2}}^+_{\IfNoValueTF{#3}{}{#3}}\IfNoValueTF{#1}{}{(#1)}}
\NewDocumentCommand{\vminus}{g}{v^{-}_{\IfNoValueTF{#1}{}{#1}}}
\NewDocumentCommand{\vplus}{g}{v^{+}_{\IfNoValueTF{#1}{}{#1}}}
\NewDocumentCommand{\sumvT}{g}{\absolut{\IfNoValueTF{#1}{\thresh{\vplus}}{#1(\vplus)}} + \absolut{\IfNoValueTF{#1}{\thresh{\vminus}}{#1(\vminus)}}}
\NewDocumentCommand{\diffvT}{g}{\absolut{\IfNoValueTF{#1}{\thresh{\vplus}}{#1(\vplus)}} - \absolut{\IfNoValueTF{#1}{\thresh{\vminus}}{#1(\vminus)}}}
\NewDocumentCommand{\fracvT}{m}{\frac{#1}{2}}
\NewDocumentCommand{\floorvT}{m}{\floor{\fracvT{#1}}}
\NewDocumentCommand{\minvT}{g}{\IfNoValueTF{#1}{\thresh{\vminus}}{#1(\vminus)}}
\NewDocumentCommand{\optbidthresh}{g}{\textsf{cbid}^{\IfNoValueTF{#1}{}{#1}}}
\NewDocumentCommand{\optbidf}{gg}{\textsf{bid}^{\IfNoValueTF{#2}{\thresh}{#2}}\IfNoValueTF{#1}{(v)}{(#1)}}
\NewDocumentCommand{\presagn}{gg}{{\IfNoValueTF{#2}{\sigma}{#2}}_{\textsf{agn}}\IfNoValueTF{#1}{}{(#1)}}
\NewDocumentCommand{\consagn}{gg}{{\IfNoValueTF{#2}{\tau}{#2}}_{\textsf{agn}}\IfNoValueTF{#1}{}{(#1)}}
\NewDocumentCommand{\agnosticstr}{gg}{{\IfNoValueTF{#2}{\sigma}{#2}}_{\textsf{agn}}\IfNoValueTF{#1}{}{(#1)}}
\NewDocumentCommand{\sagn}{g}{{\IfNoValueTF{#1}{\sigma}{#1}}_{\textsf{agn}}}
\NewDocumentCommand{\tagn}{g}{{\IfNoValueTF{#1}{\tau}{#1}}_{\textsf{agn}}}
\NewDocumentCommand{\sVI}{g}{{\IfNoValueTF{#1}{\sigma}{#1}}_{\textsf{VI}}}
\NewDocumentCommand{\sVIn}{gg}{\IfNoValueTF{#1}{\sigma}{#1}_{\textsf{VI}, #2}}
\NewDocumentCommand{\posstr}{gg}{{\IfNoValueTF{#2}{\sigma}{#2}}_{\textsf{pos}}\IfNoValueTF{#1}{}{(#1)}}
\newcommand{\NP}{\textsf{NP}}
\newcommand{\coNP}{\textsf{coNP}\xspace}
\newcounter{cSS}
\newcounter{cGA}
\def\SS{\@ifstar\SSi}
\def\GA{\@ifstar\GAi}
\newcommand{\SSn}[1]{}
\newcommand{\GAn}[1]{}
\newcommand{\SSi}[1]{}
\newcommand{\GAi}[1]{}
\NewDocumentCommand{\ewin}{g}{e_{\text{win}}^{\IfNoValueTF{#1}{}{#1}}}
\NewDocumentCommand{\elose}{g}{e_{\text{lose}}^{\IfNoValueTF{#1}{}{#1}}}
\NewDocumentCommand{\enxt}{g}{e_{\text{next}}^{\IfNoValueTF{#1}{}{#1}}}
\NewDocumentCommand{\ebid}{g}{e_{\text{bid}}^{\IfNoValueTF{#1}{}{#1}}}
\NewDocumentCommand{\sqglymu}{g}{\tilde{\mu}_{\IfNoValueTF{#1}{n}{#1}}}%\nu_{\IfNoValueTF{#1}{n}{#1}}}
\tikzstyle{rond1}=[draw,circle,minimum size=1.5mm,inner sep=1pt]
\tikzstyle{rond2}=[draw,circle,minimum size=2.5mm,inner sep=1pt]
\tikzstyle{rond5}=[draw,circle,minimum size=5mm,inner sep=1pt]
\tikzstyle{rond6}=[draw,circle,minimum size=6mm,inner sep=1pt]
\tikzstyle{rond7}=[draw,circle,minimum size=6mm,inner sep=1pt]
\tikzstyle{rond}=[draw,circle,minimum height=7mm]
\tikzstyle{rect} = [draw, rectangle, minimum size = 5mm, inner sep = 1pt]
\renewcommand\subsubsection{\@startsection{subsubsection}{3}{\z@}%
	{-18\p@ \@plus -4\p@ \@minus -4\p@}%
	{0.5em \@plus 0.22em \@minus 0.1em}%
	{\normalfont\normalsize\bfseries\boldmath}}
\title{Mean-payoff and Energy Discrete-Bidding Games}
\author{Guy Avni}{Department of Computer Science, University of Haifa, Israel}{gavni@cs.haifa.ac.il}{}{}
\author{Suman Sadhukhan}{Department of Computer Science, University of Haifa, Israel}{suman.sadhukhan00@gmail.com}{}{}
\authorrunning{G. Avni and S. Sadhukhan}
\keywords{Bidding games, Discrete-bidding, Mean-payoff games, energy games}
\begin{document}
	\nolinenumbers
	\maketitle
	
	\begin{abstract}
A \emph{bidding} game is played on a graph as follows. A token is placed on an initial vertex and both players are allocated budgets. 
In each turn, the players simultaneously submit bids that do not exceed their available budgets, the higher bidder moves the token, and pays the bid to the lower bidder. 
We focus on \emph{discrete}-bidding, which are motivated by practical applications and restrict the granularity of the players' bids, e.g, bids must be given in cents. 
We study, for the first time, discrete-bidding games with {\em mean-payoff} and {\em energy} objectives. In contrast, mean-payoff {\em continuous}-bidding games (i.e., no granularity restrictions) are understood and exhibit a rich mathematical structure. The {\em threshold} budget is a necessary and sufficient initial budget for winning an energy game or guaranteeing a target payoff in a mean-payoff game. We first establish existence of threshold budgets; a non-trivial property due to the concurrent moves of the players. Moreover, we identify the structure of the thresholds, which is key in obtaining compact strategies, and in turn, showing that finding threshold is in \NP~and \coNP  even in succinctly-represented games.
	\end{abstract}
	
\section{Introduction}
Two-player {\em graph games} constitute a fundamental model with applications in {\em reactive synthesis}~\cite{PR89} and multi-agent systems~\cite{AHK02}, and a deep connection to foundations of logic \cite{Rab69}.
A game is played on a graph as follows. A token is placed on a vertex and the players move the token throughout the graph to generate an infinite path~({\em play}).
Two orthogonal characterizations for graph games are (1)~the {\em mode} by which the players move the token, e.g., in {\em turn-based} games, the players alternate turns in moving the token, and (2)~the players' objectives, which determine the winner or utilities in a play. 

We study {\em bidding games}~\cite{LLPSU99,LLPU96} in which an auction (bidding) determines which player acts in each turn: both players are allocated initial budgets, and in each turn, they simultaneously submit bids that do not exceed their budgets, the higher bidder moves the token, and pays their bid to the opponent. 
%We distinguish between {\em continuous}- and {\em discrete}-bidding. 
{\em Discrete bidding}~\cite{DP10}, which is the focus of this paper, impose granularity restrictions: budgets are given in ``cents'' and the smallest positive bid is a ``cent''. In contrast, {\em continuous bidding} allows arbitrarily small bids. 
We study, for the first time, discrete-bidding games with {\em mean-payoff} and {\em energy} objectives (formally defined in Sec.~\ref{sec:def-objectives}).

%Mean payoff is a canonical objective that dates back to~\cite{EM79}. 

The motivation for discrete bidding is practical; every practical application requires some granularity restriction. 
We describe examples of applications of bidding games.

{\bf Auction-based scheduling}~\cite{AMS24} applies bidding games in a ``decoupled'' synthesis procedure: given two objectives $\psi_1$ and $\psi_2$, the idea is to independently construct two policies $f_1$ and $f_2$, where policy $f_i$ only aims to satisfy $\psi_i$, for $i \in \set{1,2}$, and to compose $f_1$ with $f_2$ at runtime using a bidding for who chooses the action in each turn. For example, consider the task of finding a plan for a robot waiter, where $\psi_1$ specifies delivering food and $\psi_2$ specifies collecting dishes. The challenge in~\cite{AMS24} is to ensure that the composition of $f_1$ and $f_2$ satisfies $\psi_1 \wedge \psi_2$ even though they are constructed independently. Our work enables a combination of discrete bidding, which, again, is necessary in practice, with quantitative specifications. For example, consider the task of finding a plan for a patrolling robot, where $\psi_i$ specifies maximizing the time spent at location $t_i$, for $i \in \set{1,2}$.% which is naturally expressed as a mean-payoff objective.

{\bf Fair allocation of resources} is timely (e.g.,~\cite{ABFV22,ALMW22}). The goal is to allocate a collection of items 
%short $\set{i_1,\ldots, i_m}$ 
to agents in a {\em fair} manner. A mechanism based on bidding games is both natural and useful~\cite{MKT18,BEF21}: each agent is allocated an initial budget, and the items are auctioned sequentially. %short, i.e., on the $j$-th turn, Item~$i_j$ is auctioned, for $1 \leq j \leq k$. 
Repeated applications of this mechanism are used for ongoing allocation of resources~\cite{GBI21}, e.g., daily allocation of GPU time to users.
Mean-payoff objectives naturally specify a strong notion of fairness. 
For example, ``in the long-run, the users are scheduled for the same duration of time''. As another example, online advertisement platforms hold auctions for allocation of ad slots~\cite{Mut09}, then an advertiser might aim to maximize the long-run average daily exposure, again,  a mean-payoff objective. %Moreover, we allow stateful rewards: e.g.,   exposure is higher in midday than in the morning. 

\paragraph*{Previous results}
\noindent{\bf Continuous-bidding games.}
We briefly survey relevant literature. Continuous-bidding games with {\em reachability} objectives were studied in~\cite{LLPU96,LLPSU99} and {\em parity} objectives in~\cite{AHC19}.
The central quantity in these games is the {\em threshold budget}, which is roughly a necessary and sufficient initial budget for winning the game. Thresholds satisfy the {\em average property}: the threshold in a vertex is the average of two of its neighbors. This leads to an equivalence between bidding games and a class of {\em stochastic games}~\cite{Con92} called {\em random-turn games}~\cite{PSSW09}. 

Mean-payoff continuous-bidding games have been extensively studied. A generalized equivalence with random-turn games was shown in~\cite{AHC19}. It implies that, somewhat surprisingly, in a strongly-connected game, the optimal payoff depends only on the structure of the game; that is, a player {\em cannot} guarantee a higher payoff, given a higher initial budget. Moreover, intricate equivalences between mean-payoff continuous-bidding games and random-turn games were shown for various bidding mechanisms~\cite{AHI18,AHZ21,AJZ21}; including mechanisms for which equivalences in finite-duration games are not known and unlikely to exist. 

\noindent{\bf Discrete-bidding games}
 are far less understood than their continuous-bidding counterparts. So far, only qualitative objectives have been studied. 
Reachability discrete-bidding games were studied in~\cite{DP10}. It was shown that threshold budgets exist and satisfy a discrete version of the average property. 
Existence of thresholds in infinite-duration games was established in~\cite{AAH21}, but the question of whether thresholds satisfy the average property was left open. 
Moreover, in both papers, the algorithms for finding thresholds are exponential when the budgets are given succinctly\footnote{More formally, the total budget, later denoted $k$, is part of the representation of a discrete-bidding game, and we assume throughout the paper that $k$ is represented in binary.};
in practice, a succinct representation is appealing since large budgets imply reduced granularity constraints and high precision. 
Recently, both problems were solved in~\cite{AS25}: thresholds in parity discrete-bidding games were shown to satisfy the average property and based on this,  finding thresholds was shown to be in NP and coNP. %We point out that parity discrete-bidding games are different from their and continuous-bidding counterparts and require different (and harder) techniques that are in fact closer to techniques in turn-based games~\cite{Zie98}. 

\smallskip
Previous results leave a gap in our understanding of mean-payoff bidding games:
under continuous-bidding, the literature is a rich, whereas under discrete-bidding, even the basic properties were not known.

\paragraph*{Our results}
The central quantity that we study in mean-payoff games is threshold budgets, which we define as follows: for a target payoff $c$ for \Max, the threshold budget is necessary and sufficient for guaranteeing payoff $c$. Before elaborating on our results, we point to an inherent distinction between mean-payoff discrete- and continuous-bidding games: in strongly-connected games, under continuous-bidding, \Max can guarantee the same payoff for every initial budget, whereas the following example shows that this is not the case under discrete bidding.
\begin{example}
\label{ex:example1}
Consider the game that is depicted in Fig.~\ref{fig:multoptpayoff}. 
A {\em configuration}\footnote{Later we will specify only one of the players' budgets in a configuration and the other budget is implicit.} $\zug{v, B_\Max, B^*_\Min}$ means that the token is placed on $v$, \Max and \Min's budgets are respectively $B_\Max,B_\Min \in \Nat$. Tie breaking is resolved as follows, based on~\cite{DP10}. One of the players (in this case \Min) holds the {\em tie-breaking advantage}, marked with $*$. \Min chooses a bid $b \leq B_\Min$ and chooses whether she uses the advantage: (i)~she uses it by bidding $b^*$, then if \Max bids $b' \leq b$, she wins and pays \Max $b^*$, and (ii)~she does not use it by bidding $b$, then \Max wins if he bids $b' \geq b$.

%We show that \Max's optimal payoff depends on his initial budget. 
We describe {\em optimal plays} that arise from optimal play of both players. Note that upon winning a bidding, it is optimal for \Min and \Max to proceed left and right, respectively. We write $c \xrightarrow{b_\Max, b_\Min} c'$ to indicate that $c'$ results from configuration $c$ when the players respectively bid $b_\Max$ and $b_\Min$.
First, the optimal play from $\zug{v_0, 1^*, 0}$ is $\zug{v_0, 1^*, 0} \xrightarrow{0^*, 0} \zug{v_2, 1, 0^*} \xrightarrow{0, 0^*} \zug{v_0, 1^*, 0} \xrightarrow{0^*, 0} \zug{v_2, 1, 0^*} \ldots$ with corresponding path $(v_0, v_2)^\omega$ whose payoff is~$\frac{3}{2}$. Second, consider the configuration $\zug{v_0, 0^*, 1}$, in which \Max has less budget. The optimal play is $\zug{v_0, 0^*, 1} \xrightarrow{0^*, 1} \zug{v_1, 1^*, 0} \xrightarrow{0^*, 0} \zug{v_0, 1, 0^*} \xrightarrow{1, 0^*} \zug{v_2, 0, 1^*} \xrightarrow{0, 0^*} \zug{v_0, 0^*, 1} \ldots$ with corresponding path $(v_0, v_1, v_0, v_2)^\omega$ with payoff~$\frac{1}{4}$. 
This is a key distinction from continuous-bidding. There, the optimal payoff that \Max can guarantee does not depend on the initial budget; it is roughly $1$ in this game, for {\em every} positive initial budget.
\end{example}

\begin{figure}
	\centering
	\begin{tikzpicture}
		\draw (0,0) node[rond5] (0) {$v_0$};
		\draw (4,0) node[rond5] (1) {$v_2$};
		\draw (-4,0) node[rond5] (2) {$v_1$};

		\draw (0) edge[->, bend left = 10] node[sloped, above, pos= 0.5] {$ 3 $} (1);
		\draw (1) edge[->, bend left = 10] node[sloped, below, pos = 0.5] {$ 0 $}(0);
		\draw (2) edge[->, bend right = 10] node[sloped, below, pos= 0.5] {$ 0 $} (0);
		\draw (0) edge[->, bend right = 10] node[sloped, above, pos = 0.5] {$ -2 $}(2);
		
		\draw (1) edge[->, loop right] node[above] {$ 5 $} (0);
		\draw (2) edge[->, loop left] node[above] {$ 0 $} (0);
	\end{tikzpicture}
	\caption{A mean-payoff discrete bidding game where optimal payoff depends on the initial budget}\label{fig:multoptpayoff}
\end{figure}
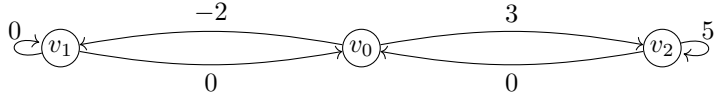

Technically, we study energy games, and the results directly apply to mean-payoff bidding games, similar to turn-based games (e.g.,~\cite{BFLMS08}).
% for turn-based games, and are different from those applied in continuous-bidding games; indeed, as Example~\ref{ex:example1} shows, the models are inherently different.
An energy bidding game is played between {\em Consumer} (\Cons) and {\em Preserver} (\Pres) on a weighted graph. A play $\pi$ corresponds to an infinite sequence of weights. Fix an initial energy $M \in \Nat$. \Cons's goal is to ``consume'' the energy; formally, \Cons wins iff there is a prefix of length $m$ such that $M + \sumofweights{\pref{m}} < 0$. 

We study two types of thresholds in energy games. First, we show existence of {\em energy thresholds}; for every initial vertex $v$, for every initial budget $B$, we show that there exists an initial energy, denoted $\energy(v,B)$ that is both necessary and sufficient for \Pres to guarantee winning. We point out that existence of energy thresholds implies {\em determinacy}, namely from each initial configuration, one of the players has a (pure) winning strategy. Bidding games are formally a subclass of {\em concurrent games}~\cite{AHK02}, the latter are not determined; e.g., neither player has a winning strategy in ``matching pennies''. Still, we show that energy bidding games are a determined subclass of concurrent games (see also~\cite{AAH21,BBR21}). 
Second, we define threshold budgets in energy bidding games. Before describing the definition, note that at $v$, there could be a budget $B$ for which \Pres loses with every initial energy. A simple example is a sink with a negative self loop. In such cases, $\energy(v, B) = \infty$. Further note that $\energy(v, B)$ increases as $B$ decreases. We define the threshold in a vertex $v$, denoted $\Th(v)$, as the minimal budget $B$ such that $\energy(v, B)$ is finite. 

%We define the threshold budget $\Th$ in a vertex $v$ in an energy discrete-bidding game. Note that $\energy(v, B)$ is monotonically increasing as $B$ decreases; indeed, \Pres requires a larger initial energy with a smaller initial budget. We define $\Th(v)$ to be the minimal $B$ such that $\energy(v, B)$ is finite. We show that in a game, the threshold for mean-payoff and energy objectives coincide: \Max can guarantee a non-negative payoff from $v$ iff his budget is at least $\Th(v)$. A similar equivalence holds in turn-based games~\cite{BFLMS08}. 

A key result in the paper shows that $\Th$ satisfies the average property. 
This is an important ingredient in constructing concise {\em budget agnostic} strategies, which ignore ``excess'' budget; more formally, at vertex $v$, for budget $B \geq \Th(v)$, a budget agnostic strategy acts in  $B$ and $B^*$ as if the budget is $\Th(v)$ and $\Th(v)^*$, respectively. 
Existence of winning budget agnostic strategies is key in proving that finding threshold budgets\footnote{Formally, given a game, a vertex $v$, and a value $t$, decide whether $\Th(v) \geq t$.} is in NP and coNP. 

\noindent{\bf Comparison with previous works.}  
We establish existence of thresholds via a value-iteration algorithm, similar to the approach in previous works. 
However, previously, establishing the average property and constructing budget-agnostic strategies was a simple byproduct~\cite{DP10,AS25}, and in our case, it is significantly more challenging. In fact, we show in Ex.~\ref{ex:example2}, that our algorithm produces strategies that are {\em not} budget agnostic. 
We circumvent this challenge by developing a novel proof structure.  We first establish the average property directly (Thm.~\ref{thm:preThreshAverage}), from which we obtain ``budget agnostic bids''. For both \Cons and \Pres separately (Sections~\ref{sec:Pres-bud-agn} and~\ref{sec:Cons-bud-agn}), we identify certain scenarios in which the value-iteration algorithm's winning strategies match the budget agnostic bids. This is particularly challenging for \Cons, for which we do not have an explicit strategy. 
These observations are used to show that eventually, our strategies maintain energy invariants like the value-iteration strategies. 

Second, interestingly, our constructions are conceptually very different from strategies in mean-payoff continuous-bidding games. There, strategies are {\em not} budget agnostic: \Max maintains an invariant between accumulated energy and budget so that when the energy increases, both \Max's budget and his bids decrease, thus the strategy is quite the opposite of being budget agnostic. Our strategies are conceptually closer to constructions in turn-based games (e.g.,~\cite{BFLMS08}) in that they guarantee that eventually the play avoids ``bad cycles'', namely cycles with average weight lower than the target payoff.

		\section{Preliminaries}
		\label{sec:prelim}
	We denote \(\Nat\) as the set of natural numbers including \(0\) and $\NatInf = \Nat \cup \{\infty\}$. %We use $-i = 3-i$ to denote the ``other player''. 
	
%	\subsection{Concurrent games}
	\paragraph*{Concurrent games}
	A bidding game is formally, a succinctly represented {\em concurrent game}. We define concurrent games, and then describe the concurrent game that a bidding game corresponds to. 
	
	Intuitively, a concurrent game is a two-player game that is played on a graph, where each vertex is associated with a set of {\em allowed actions} for each player. The game proceeds as follows. A token is initially placed on a vertex. In each turn, the players simultaneously choose an allowed action, and their joint actions determine the next vertex the token moves to. This generates an infinite path, which determines the winner of the game. 

Formally a concurrent game is played on an arena \(\zug{A, Q, \lambda, \delta}\), where \(A\) is a set of actions, \(Q\) is a set of states, \(\lambda: Q \times \{1, 2\} \rightarrow 2^A \setminus \emptyset\) specifies the allowed actions for each player at a state, and the transition function is \(\delta: Q \times A \times A \rightarrow Q\).  The {\em neighbors} of $q \in Q$ are $N(q) = \set{q' \in Q: \exists a_1, a_2 \in A \text{ such that } q' \in \delta(q, a_1, a_2)}$. For an infinite path $\pi = q_0,q_1,\ldots$, we denote the prefix of length $m \in \Nat$ by $\prefM = q_0,\ldots, q_m$. 
	
	A {\em strategy} is intuitively a \emph{recipe} for playing the game. For $i \in \set{1,2}$, a strategy for \PLi is a function \(\sigma_i: Q^* \rightarrow A\), which prescribes which action to take given a {\em history} of the game. We restrict to strategies that choose only allowed actions, that is for a history \(h\in  Q^*\), a \PLi strategy chooses an action $a_i \in \lambda(q, i)$. The {\em play} that two strategies \(\sigma_1\) and \(\sigma_2\) and an initial vertex \(q_0\) give rise to, denoted  \(\play(q_0, \sigma_1, \sigma_2)\), is defined inductively as follows.
	The play starts from \(q_0\). Suppose that the prefix $\prefM$ of length \(m \geq 1\) of \(\play(q_0, \sigma_1, \sigma_2)\) is defined, \PLi takes action \(a_i^j = \sigma_i(\prefM \cdot q_j)\), for $i \in \set{1,2}$, then the next state is \(q_{j+1} = \delta(q_j, a_1^j, a_2^j)\).  
	We say that a play $\pi$ is {\em consistent} with $\sigma_1$ from $q_0$ if there is $\sigma_2$ such that $\pi = \play(q_0, \sigma_1, \sigma_2)$, and similarly for $\sigma_2$. 
	
	For \(i \in \{1, 2\}\), we say \PLi \emph{controls} a state \(q \in Q\), if intuitively, the next state is determined solely based on their choice of action. 
	Formally, \PO controls state \(q\) if for any \(a_1 \in \lambda(q, 1)\) and \(a_2, a_2' \in \lambda(q, 2)\), we have \(\lambda(q, a_1, a_2) = \lambda(q, a_1, a_2')\). 
	The definition is dual for \PT. 
	\emph{Turn-based} games are a special case of concurrent games, where each state \(q\) is controlled by one of the players. 
	Note that a concurrent game which is not turn-based may still have some states which are controlled by one of the players.

%%%%%%%%%%%%	
%	\subsection{Bidding games}
		\paragraph*{Bidding games}
	A discrete bidding game is played on an arena \(\zug{V, E, k}\), where \(V\) is the set of vertices, \(E\) is the set of edges, and \(k \in \Nat\) is the total budget in the game.  The neighbors of a vertex \(v\), denoted \(\neighbor{v}\), are \(\neighbor{v} = \{u: (v, u) \in E\}\). 
	
	We introduce notation to formalize the tie-breaking mechanism, called {\em advantage-based} tie-breaking~\cite{DP10}.
	We denote the advantage with \(*\). Thus, when a player's budget is \(B^*\), this means that the player has a budget of \(B \in \Nat\) and holds the advantage. 
	Similarly, when we say that a player bids \(b^*\), we mean that they bid \(b \in \Nat\), and in case a tie occurs, they will use the advantage. Denote  \(\Natstr = \{0, 0^*, 1, 1^*, \ldots \}\) and \([k]=\{0, 0^*, 1, 1^*, \ldots k, k^*\}\). The integral part of \(B \in \Natstr\) is denoted \(|B|\). We define two operators \(\oplus\) and \(\ominus\) over \(\Natstr\). We describe how the operators are used. Suppose that \PO's budget is \(m^*\) and the players bid $b_1$ and $b_2$, respectively. Recall that the higher bidder pays the lower bidder. Thus, when $b_1 > b_2$, \PO's budget is updated to \(m^* \ominus b_1\), and when $b_2 > b_1$, \PO's budget is updated to \(m^* \oplus b_2\). Note that \(x^* \oplus y^*\) and \(x \ominus y^*\), for $x,y \in \Nat$, will not occur in the setting above, still it is useful to define both for convenience and completion. Formally,

	\begin{definition}{\bf (\(\oplus\) and \(\ominus\) operators).}
		For \(x, y \in  \Nat\), we define \(x^* \oplus y = x \oplus y^* = (x+y)^*\), \(x \oplus y = x + y\) and \(x^* \oplus y^* = x+y+1\). 
		%		\(\ominus\) is the dual operator of \(\oplus\). 
		For \(x, y \in \Nat\), we define \(x^* \ominus y = (x -y)^*\), \(x^* \ominus y^* = x - y\) and \(x \ominus y = x - y\). 
		Finally, \(x \ominus y^* = (x - y - 1)^*\). 
	\end{definition}
	
Consider the natural order $\prec$ over $\Natstr$ as $0 \prec 0^* \prec 1 \prec 1^* \prec \ldots$. We will frequently use the {\em successor} and {\em predecessor} according to this order: for $B \in \Natstr$, the successor of $B$ is $\succb{B}$ and its predecessor is $\predb{B}$.

\paragraph{Bidding games as concurrent games}
Consider an arena $\A = \zug{V, E, k}$ of a bidding game. 
	The {\em configurations} of $\A$ are \(\C = \{\zug{v, B}: v \in V, B \in \budgetset\}\), where \(\zug{v, B} \in \C\) means that the token is placed at vertex \(v\) and \PO's current budget is \(B\). Implicitly, \PT's budget is \(k^* \ominus B\). 
	%Guy: not yet
	%Note that, when \(B = k+1\) in a configuration \(\zug{v, B}\), this implicit meaning no longer holds, as they are purely symbolic configurations. 
	The arena of the corresponding concurrent game is \(\zug{[k] \times V, \C, \lambda, \delta}\), where we define the allowed actions $\lambda$ and transitions $\delta$ next. 
Choosing an action $\zug{b, v} \in [k] \times V$ corresponds to bidding $b$ and moving to $v$ upon winning the bidding. Consider a configuration $\zug{v, B}$.
Define \(\lambda(\zug{v, B}, 1) = \{0, \ldots, B\} \times \neighbor{v}\), that is \PO must choose a bid within his budget and must move to a neighbor of $v$ upon winning the bidding. Similarly, \(\lambda(\zug{v, B}, 2) = \{0, \ldots, k^* \ominus B\} \times \neighbor{v}\). We define \(\delta\) next.
	%The players decide prior to the bidding whether they will use the advantage if a tie occurs. 
	Suppose that the token is placed on \(c = \zug{v, B}\), and \PLi chooses \(\zug{b_i, v_i}\), for \(i = \{1, 2\}\). 
	If \(b_1 > b_2\), then the token moves to \(\zug{v_1, B \ominus b_1}\); that is, \PO wins the bidding, pays \PT, and moves the token. 
	Dually, if $b_2 > b_1$, then the token moves to $\zug{v_2, B\oplus b_2}$. The remaining case is \(b_1 = b_2\). Note that this occurs only when the player with the advantage does not use it. In this case, the other player wins the bidding, and the token moves as in the above. 	
%	If \PO holds the advantage, i.e.,  $c = \zug{v, B^*}$, then \PO loses the bidding and if $c = \zug{v, B}$, then \PT loses the bidding, and the next vertex is defined as above. 
	
	As above, two strategies $\sigma_1$ and $\sigma_2$ and an initial configuration $c_0 = \zug{v_0, B_0}$ give rise to an infinite play $\play(c_0, \sigma_1, \sigma_2)= c_0, c_1,\ldots \in \C^\omega$. 
We use $\path(c_0, \sigma_1, \sigma_2) = v_0, v_1,\ldots$ to refer to the path in $\zug{V, E}$ that corresponds to the play, assuming $c_j = \zug{v_j, B_j}$, for $j \geq 0$.
	
\begin{remark}
[Representation size] Consider an arena \(\A = \zug{V, E, k}\) of a bidding game. We assume that $k$ is encoded in binary. Thus, the size of $\A$ is \(O(\absolut{V} + \absolut{E} + \log{k})\). Note the size (number of configurations) of the explicit concurrent game that corresponds to $\A$ is $k \cdot |V|$, thus exponentially larger than $\A$. 
\end{remark}	
	
\stam{	
	As above, two strategies $\sigma_1$ and $\sigma_2$ and an initial configuration $c_0 = \zug{v_0, B_0}$ give rise to an infinite play $\play(c_0, \sigma_1, \sigma_2) = c_0, c_1,\ldots \in \C^\omega$. 
	Denote $c_j = \zug{v_j, B_j}$, for $j \geq 0$. We slightly abuse notation and use $\path(c_0, \sigma_1, \sigma_2) = v_0, v_1,\ldots$ to refer to the path in $\zug{V, E}$ that corresponds to the play. 
	We call a strategy \emph{positional} if its choices only depend on the current vertex and the budget, i.e, if \(\sigma(\zug{v_0, B_0}, \ldots \zug{v_n, B_n}) = \sigma(\zug{v_0', B_0'} \ldots \zug{v_n', B_n'})\) whenever \(v_n = v_n'\) and \(B_n = B_n'\), then \(\sigma\) is called a positional strategy. 
	A positional strategy can thus be of the form \(\sigma: V \times [k] \rightarrow [k] \times V\). 
	On the other hand, we call a strategy \emph{budget agnostic} if its choices only depend on the current vertex and the advantage status of the budget. 
	Formally, \(\sigma\) is budget agnostic if for any two plays \(\zug{v_0, B_0}, \ldots \zug{v_n, B_n}\) and \(\zug{v_0', B_0'}, \ldots \zug{v_n', B_n'}\), if \(v_n = v_n'\) and either both \(B_n, B_n' \in \Nat\) or both \(B_n', B_n' \in \Natstro\), then \(\sigma(\zug{v_0, B_0}, \ldots \zug{v_n, B_n}) = \sigma(\zug{v_0', B_0'}, \ldots \zug{v_n', B_n'})\). 
	For simplicity, we denote a budget agnostic strategy to be of the form \(\sigma: V \times B \rightarrow [k] \times V\), even though for any such budget agnostic \(\sigma\), \(\sigma(v, B_1) = \sigma(v, B_2)\) if the advantage status of \(B_1\) and \(B_2\) coincide. 
%	A budget agnostic strategy is of the form \(\sigma: V \times \Adv \rightarrow [k] \times V\), where \(\Adv = \{\top, \bot\}\) in which \(\top\) denotes that the current budget holds the tie-breaking advantage and \(\bot\) denotes otherwise. 
%	In this context, we often call a pair of vertex and an advantage status of a budget, a partial configuration, and denote them by 
%	\SSn{Need to make a decision whether we want to keep \(\Adv\) notation like this or just simply write it like positional}
	Note that, budget agnostic strategies are positional, but not the other way around. 
}

%%%%%%
\subsection{Mean-payoff and energy bidding games }
\label{sec:def-objectives}
Both mean-payoff and an energy bidding games are played on an arena  \(\zug{V, E,k, \weightf}\), where \(\zug{V, E, k}\) is as above, and $\weightf: E \rightarrow \set{-W,\ldots,W}$ is a function that assigns integer weights to edges, $W$ being the largest absolute weight. 
Consider an infinite path $\pi = v_0, v_1,\ldots$. We call the sum of weights traversed by the prefix $\prefM$ as its {\em energy}, $\sumofweights{\prefM} = \sum_{0 \leq j < m} \weightf(\zug{v_j, v_{j+1}})$. 
We define below which player wins in $\pi$ under the two objectives, and later, in Thm.~\ref{thm:MP-energy}, we will show an equivalence between the two objectives.  
	
	\noindent{\bf Energy objective.}
We call the players in an energy game  \emph{preserver} (\Pres) and \emph{consumer} (\Cons). Intuitively, the game starts with an initial energy $M \in \Nat$, \Cons's objective is to drop the energy below $0$, and \Pres wins otherwise, namely if the energy stays non-negative throughout the whole play. Formally, for an initial energy $M \in \Nat$, \Cons~{\em $M$-wins} $\pi$ if there is $m \in \Nat$ such that $M + \sumofweights{\prefM} < 0$ and \Pres~{\em $M$-wins} $\pi$ if for all $m \in \Nat$, we have $M + \sumofweights{\prefM} \geq 0$.
	We say that \Surv $M$-wins from a configuration $\zug{v,B}$ if she has a strategy $\sigma$ such that for every \Killer strategy $\tau$, \Surv $M$-wins $\path(\zug{v, B}, \sigma, \tau)$, and the definition for \Cons is dual.

	\noindent{\bf Mean-payoff objective.}
We call the players in a mean-payoff game {\em maximizer} (\Max) and {\em minimizer} (\Min). The {\em payoff} of an infinite path, which is Max's reward and Min's cost, is the long-run average of the weights traversed. Formally, we define $\meanpayoff{\pi} = \liminf_{m \rightarrow \infty} \frac{1}{m} \sumofweights{\prefM}$. Max wins in $\pi$ if $\meanpayoff{\pi} \geq 0$ and Min wins $\pi$ if $\meanpayoff{\pi} < 0$. 
	\Max wins from a configuration \(\zug{v, B}\) if he has a strategy $\sigma$ such that for every \Min strategy $\tau$, \Max wins $\path(\zug{v, B}, \sigma, \tau)$, and the definition for \Min is dual.

%%%%%%%%%%%%%%%%%%%%%%%%%%%%%%%%%%
\section{Existence of Energy Thresholds in Energy Bidding Games}
\label{sec:VI}
In this section, we show existence of an {\em energy threshold}, which is a necessary and sufficient energy required for \Pres to win from an initial configuration. 
Importantly, when \Pres loses with every initial energy, we call the energy threshold $\infty$. Formally,

\begin{definition}{\bf (Energy threshold).}\label{def:energy-thresh}
Consider an energy bidding game $\G = \zug{V, E, k, \weightf}$. The energy threshold is $\energy: V \times [k] \rightarrow \NatInf$ such that for configuration $\zug{v,B} \in \C$:
\begin{itemize}
\item If $\energy(v, B) = M \in \Nat$, then (1)~\Surv $M$-wins from $\zug{v, B}$ and \\(2)~\Killer $(M-1)$-wins from $\zug{v, B}$.
\item If $\energy(v, B) = \infty$, for every $M \in \Nat$, \Killer $M$-wins from $\zug{v, B}$.
\end{itemize}
\end{definition}	

\begin{remark}
[Energy thresholds and determinacy]
We point out that existence of $\energy$ is not trivial. Indeed, as seen Sec.~\ref{sec:prelim}, bidding games are succinctly represented concurrent games, and even simple concurrent games are not {\em determined}, namely neither player can guarantee winning.\footnote{Note that we restrict to {\em pure} strategies as opposed to {\em mixed} strategies that allow choosing a probability distribution over actions.} Existence of $\energy$ implies determinacy of energy bidding games. Indeed, assume that $\energy$ exists,  consider an initial configuration $c = \zug{v, B}$, and an initial energy level $M$. Then, if $M \geq \energy(v, B)$, \Pres has a winning strategy and if $M < \energy(v, B)$, \Cons has a winning strategy. The game is thus determined. 
\end{remark}

%%%%%%%
\subsection{Energy thresholds exist in finite-duration games}
Fix an energy game $\G = \zug{V, E, k, \weightf}$ for the remainder of this section. 
Let $n \in \Nat$. The {\em truncated game} $\G_n$ intuitively favors \Pres: she needs to keep the energy non-negative {\em only} in the first $n$ turns. We will show existence of energy thresholds in every truncated game, which is still not trivial since $\G_n$ is a concurrent game.  In the next section we extend to $\G$. 

Formally, for $M \in \Nat$, \Surv $M$-wins a path $\pi = v_0,v_1\ldots$ in $\G_n$ if for every $m \leq n$, we have $M + \sumofweights{\prefM} \geq 0$ and \Killer wins otherwise. We define threshold energies in $\G_n$, denoted $\energy_n: V \times [k] \rightarrow \Nat$, by plugging in the definition of $M$-wins in $\G_n$ in Def.~\ref{def:energy-thresh}. Recall that the minimal possible weight is $-W$, thus an initial energy of $nW$ suffices for \Pres to win in $\G_n$. It follows that energy thresholds in $\G_n$ are finite.
	
	\stam{%short
	We show existence of $\energy_n$ via an algorithm to compute it. 
	We recursively define a function \(\mu_{i}: V \times [k] \rightarrow Y\) and show that $\energy_n \equiv \mu_n$. 
	The base case is easy: since paths of length $0$ do not accumulate (negative) energy, \Surv always wins in $\G_0$, thus $\mu_0 \equiv \energy_0 \equiv 0$. 
	For the inductive step, consider a configuration $\zug{v, B}$. Intuitively, we define the initial energy $\mu_n(v, B)$ to suffice for winning even when \Pres reveals her bid first, \Cons responds adversarially, and the game proceeds to a configuration $\zug{v', B'}$, which requires an energy of \(\mu_{n-1}(v', B')\).
	}

	We show existence of $\energy_n$ via an algorithm to compute it. 
	We recursively define \(\mu_{i}: V \times [k] \rightarrow Y\) and show that $\energy_n \equiv \mu_n$. 
	For the base case, \Surv always wins in $\G_0$, thus $\mu_0 \equiv \energy_0 \equiv 0$. 
	For the inductive step, consider a configuration $\zug{v, B}$. Intuitively, we define the initial energy $\mu_n(v, B)$ to suffice for winning even when \Pres reveals her bid first, \Cons responds adversarially, and the game proceeds to a configuration $\zug{v', B'}$, which requires an energy of \(\mu_{n-1}(v', B')\).
	
We define $\mu_n$ formally. We first define \(\trump: [k] \times [k] \rightarrow [k]\) as the minimal bid that lets \Killer ``trump'' a \Surv bid and win the bidding. The definition depends on the tie-breaking status: if $B \in \Natstro$ and $b \in \Nat$ (\Pres has and does not use the advantage), then $\trump(B, b) = b$ and otherwise $\trump(B, b) = \succb{b}$. 
Consider a configuration $\zug{v, B}$ and a \Pres bid of $b \in [0, B]$. We consider two bidding outcomes: (1)~\Surv wins the bidding: the next configuration is \(\zug{v_{\text{win}}, B \ominus b}\), where \Pres chooses \(v_{\text{win}} \in \neighbor{v}\), and (2)~\Cons wins the bidding: the next configuration is $\zug{v_{\text{lose}}, B\oplus \trump{B,b})}$, where \Killer chooses  \(v_{\text{lose}} \in \neighbor{v}\). Note that, \Cons can win the bidding only if \(B \oplus \trump{B, b} \leq k^*\). 
The path accumulates energy \(\weightf(v, v_{\text{win}})\) or \(\weightf(v, v_{\text{lose}})\), respectively. The minimum required energy for \Pres to win in the respective cases is:
	\begin{align}
%		\label{def:req-energy}
		\ewin{n}(v, B, b)&= \min_{v' \in N(v)} \max\set{\mu_{n-1}{(v', B \ominus b)} - \weightf{v, v'}, 0} \label{def: ewin}\\
		\elose{n}(v, B, b) &= \max_{v' \ \in N(v)} \max\set{\mu_{n-1}(v', B \oplus \trump{B, b}) - \weightf{v, v'}, 0} \label{def: elose}
	\end{align}
We stress that both $\elose{n}(v, B, b) \geq 0$ and $\ewin{n}(v, B, b) \geq 0$. Moreover, \(\elose{n}(v, B, b)\) is defined only when \(B \oplus \trump{B, b} \leq k^*\). Define $\enxt{n}(v, B, b) =\max\set{\ewin{n}(v, B, b), \elose{n}(v, B, b)}$ if $B \oplus \trump(B, b) \leq k^*$, and $\enxt{n}(v, B, b) =\ewin{}(v, B, b)$ otherwise. Then, 
	\begin{align}\label{def:O}
		\mu_{n}(v, B) = \min_{b \leq B} \enxt{n}(v, B, b)
	\end{align}

%$\mu_{n}(v, B) = \min_{b \leq B} \enxt{n}(v, B, b)$. 

A \Pres strategy that maintains the energy above $\mu_n$ follows from the construction above, thus we obtain the following. 

\stam{%short
	\begin{align*}
		\enxt{n}(v, B, b) = 
		\begin{cases}
			\max\set{\ewin{n}(v, B, b), \elose{n}(v, B, b)} &\text{~if~} B \oplus \trump(B, b) \leq k^* \\
			\ewin{}(v, B, b) &\text{~otherwise}
		\end{cases} 
	\end{align*}
	\begin{align}\label{def:O}
		\mu_{n}(v, B) = \min_{b \leq B} \enxt{n}(v, B, b)
	\end{align}
}

%	\begin{align}\label{def:O}
%		\mu_{n}(v, B) = 
%		\begin{cases}
%			\min_{b \leq B}\max\set{\ewin{n}(v, B, b), \elose{n}(v, B, b)} &\text{~if~} B \oplus \trump(B, b) \leq k^* \\
%			
%		\end{cases}
%		\mu_{n}(v, B) \coloneqq \min \{\min_{B \oplus \trump{B, b} \leq k^*} \max \set{\ewin{n}(v, B, b), \elose{n}(v, B, b)}, \min_{B \oplus \trump{B, b} > k^*} \ewin{n}(v, B, b)\}. 
%	\end{align}

	\begin{lemma}
	\label{lem:finite-Pres}
		For every $\zug{v, B}$ and $n \in \Nat$, \Surv $\mu_n(v,B)$-wins from $\zug{v, B}$ in $\G_n$. 
	\end{lemma}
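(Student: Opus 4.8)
The natural approach is induction on $n$. For the base case $n = 0$, the claim holds trivially: $\mu_0 \equiv 0$ and \Pres $0$-wins every path in $\G_0$ since no weights are accumulated. For the inductive step, I would fix a configuration $\zug{v, B}$ and construct a \Pres strategy witnessing that she $\mu_n(v, B)$-wins in $\G_n$ with initial energy $M = \mu_n(v, B)$. The strategy is the obvious one suggested by the definition of $\mu_n$: at $\zug{v, B}$, let $b^\star \leq B$ be a bid achieving the minimum in~\eqref{def:O}, i.e., $\mu_n(v, B) = \enxt{n}(v, B, b^\star)$, and have \Pres bid $b^\star$ in the first turn. Then:
\begin{itemize}
\item If \Pres wins the first bidding, she moves to a vertex $v_{\text{win}} \in N(v)$ attaining the minimum in~\eqref{def: ewin}; the configuration becomes $\zug{v_{\text{win}}, B \ominus b^\star}$, and the remaining $n-1$ turns are played according to the inductive strategy for $\zug{v_{\text{win}}, B \ominus b^\star}$ in $\G_{n-1}$.
\item If \Cons wins the first bidding, she moves the token to some $v_{\text{lose}} \in N(v)$ of her choosing; the configuration becomes $\zug{v_{\text{lose}}, B \oplus \trump(B, b^\star)}$, and the remaining $n-1$ turns are played according to the inductive strategy for that configuration in $\G_{n-1}$.
\end{itemize}

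To verify correctness, I would track the energy. After the first turn the accumulated energy is $M + \weightf(v, v')$ where $v' \in \set{v_{\text{win}}, v_{\text{lose}}}$ is the vertex actually reached. In the win case, $M \geq \ewin{n}(v, B, b^\star) \geq \mu_{n-1}(v_{\text{win}}, B \ominus b^\star) - \weightf(v, v_{\text{win}})$ by the choice of $v_{\text{win}}$, so the new energy $M + \weightf(v, v_{\text{win}}) \geq \mu_{n-1}(v_{\text{win}}, B \ominus b^\star)$; in the lose case, for \emph{every} $v_{\text{lose}} \in N(v)$ we have $M \geq \elose{n}(v, B, b^\star) \geq \mu_{n-1}(v_{\text{lose}}, B \oplus \trump(B, b^\star)) - \weightf(v, v_{\text{lose}})$ (this is where the $\max$ over neighbors in~\eqref{def: elose} is essential, since \Cons chooses $v_{\text{lose}}$ adversarially), so again the new energy is at least $\mu_{n-1}$ of the reached configuration. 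Also $M + \weightf(v, v') \geq 0$ since both $\ewin{n}$ and $\elose{n}$ are $\geq 0$ by construction and $M \geq \weightf(v,v') \cdot(-1)$... more precisely $M \geq \mu_{n-1}(\cdot) - \weightf(v,v') \geq -\weightf(v,v')$, i.e. $M + \weightf(v,v') \geq 0$, so the energy stays non-negative after turn one. By the induction hypothesis applied in $\G_{n-1}$ from the reached configuration with its energy $\geq \mu_{n-1}$ of that configuration, \Pres keeps the energy non-negative for the remaining $n-1$ turns, hence non-negative for all $m \leq n$.

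One small bookkeeping point: when $B \oplus \trump(B, b^\star) > k^\star$, \Cons cannot win the bidding at all, so only the win case occurs and $\enxt{n} = \ewin{n}$; the argument above goes through verbatim, ignoring the lose branch. The main (though still routine) obstacle is being careful with the adversarial choices: the strategy must be well-defined as a function of histories, and one must make sure that the inductive strategies for the $\G_{n-1}$ subgame are invoked with the correct shifted energy budget and that the energy inequality is maintained in \emph{every} branch, in particular against \Cons's worst-case choice of $v_{\text{lose}}$ — which is exactly what the $\max_{v' \in N(v)}$ in the definition of $\elose{n}$ secures. No genuinely hard step is expected here; the content of the lemma is essentially unwinding the recursive definition of $\mu_n$.
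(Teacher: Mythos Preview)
Your proposal is correct and essentially identical to the paper's approach: the paper states that a \Pres strategy maintaining the invariant $e \geq \mu_m(\cdot)$ follows directly from the recursive construction, and your induction on $n$ unwinds exactly this. One small imprecision worth noting: when \Cons wins the bidding she may bid some $b_2 \geq \trump(B, b^\star)$, not necessarily the minimum, so the next configuration is $\zug{v_{\text{lose}}, B \oplus b_2}$; to conclude the energy still suffices you tacitly need $\mu_{n-1}$ to be monotone non-increasing in the budget, which is easy to fold into the induction (and which the paper treats with the same informality).
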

	\stam{%short
	\begin{proof}
		The proof is by induction on $n$. The base case is covered above. For the inductive step, let $\zug{v, B}$ and an initial energy of $e = \mu_n(v,B)$. A winning \Surv strategy chooses $\zug{b, u}$ such that $b$ attains the minimum in Eq.~\ref{def:O} and $u$ attains the minimum in Eq.~\ref{def:req-energy}. 
		This ensures that no matter what the next configuration $\zug{v', B'}$ is, the updated energy $e' = e + \weightf(v, v')$ satisfies $e' \geq \mu_{n-1}(v', B')$, and the claim follows from the induction hypothesis. 
	\end{proof}
	}
	
The following lemma, whose proof can be found in App.~\ref{app:conswinsfinite}, shows that an energy of $\mu_n(v,B)$ is necessary for \Pres to win. 
The proof proceeds by showing that for every \Pres strategy, \Cons has a winning response. Existence of a winning strategy for \Cons follows from determinacy of reachability discrete-bidding games~\cite{DP10,AAH21}.
	
	\begin{restatable}{lemma}{conswinsfinite}\label{lem:conswinsfinite}
		For every $\zug{v, B}$ and $n \in \Nat$, \Killer $(\mu_n(v,B)-1)$-wins from $\zug{v, B}$ in $\G_n$. 
	\end{restatable}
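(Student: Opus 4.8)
To prove the lemma I would argue by induction on $n$, first establishing that \emph{against every fixed \Pres strategy} \Cons has a winning response, and then using determinacy of finite reachability discrete-bidding games to extract a single uniform winning \Cons strategy. Throughout, write $e = \mu_n(v,B) - 1$. If $\mu_n(v,B) = 0$ the statement is immediate: with initial energy $-1$, \Cons already wins at turn $0$; this also covers the base case $n = 0$, where $\mu_0 \equiv 0$. So assume $\mu_n(v,B) \geq 1$, i.e. $e \geq 0$. I will also use the trivial \emph{monotonicity}: if \Cons $M$-wins a play in $\G_m$ then she $M'$-wins it for every $M' \leq M$, since the turn at which the accumulated energy drops below $0$ witnesses both.

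For the inductive step, fix a \Pres strategy $\sigma$ and let $\zug{b, v_{\text{win}}} = \sigma(\zug{v,B})$ be \Pres's first move, so $b \leq B$. By definition of $\mu_n$ we have $\mu_n(v,B) \leq \enxt{n}(v,B,b)$, hence $e < \enxt{n}(v,B,b)$, which equals $\max\set{\ewin{n}(v,B,b), \elose{n}(v,B,b)}$ if $B \oplus \trump{B,b} \leq k^*$ and $\ewin{n}(v,B,b)$ otherwise; in either case one of the two displayed quantities exceeds $e$. \emph{If $\elose{n}(v,B,b) > e$} (so in particular it is defined, i.e. $B \oplus \trump{B,b} \leq k^*$), \Cons bids $\trump{B,b}$: by definition of $\trump$ this wins the bidding, it lies within her budget precisely because $B \oplus \trump{B,b} \leq k^*$, and she moves the token to the neighbour $v'$ attaining the maximum defining $\elose{n}(v,B,b)$. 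Since $e \geq 0$ and $e < \max\set{\mu_{n-1}(v', B \oplus \trump{B,b}) - \weightf{v, v'}, 0}$, the new energy $e' = e + \weightf{v,v'}$ satisfies $e' < \mu_{n-1}(v', B \oplus \trump{B,b})$, i.e. $e' \leq \mu_{n-1}(v', B \oplus \trump{B,b}) - 1$. The residual of $\sigma$ after this first move is a \Pres strategy in $\G_{n-1}$ from $\zug{v', B \oplus \trump{B,b}}$, so by the induction hypothesis \Cons has a response that $(\mu_{n-1}(v', B \oplus \trump{B,b}) - 1)$-wins within $n - 1$ turns, hence by monotonicity $e'$-wins within $n-1$ turns; prepending \Cons's first bid yields an $e$-win within $n$ turns. \emph{Otherwise $\ewin{n}(v,B,b) > e$}: here \Cons can always force \Pres to win the bidding — the only obstruction would be \Pres bidding $0$ while holding the advantage, but then $\trump{B,b} = 0$, so $\ewin{n}(v,B,0)$ and $\elose{n}(v,B,0)$ are the $\min$ and $\max$ of the same quantity, giving $\ewin{n}(v,B,0) \leq \elose{n}(v,B,0)$, which contradicts $\ewin{n}(v,B,0) > e \geq \elose{n}(v,B,0)$ — so \Cons bids $0$, \Pres wins and moves the token to some $v_{\text{win}} \in N(v)$, and the next configuration is $\zug{v_{\text{win}}, B \ominus b}$. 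Since $\ewin{n}(v,B,b) = \min_{v'' \in N(v)} \max\set{\mu_{n-1}(v'', B \ominus b) - \weightf{v, v''}, 0} > e \geq 0$, instantiating at $v'' = v_{\text{win}}$ gives $e + \weightf{v, v_{\text{win}}} < \mu_{n-1}(v_{\text{win}}, B \ominus b)$, and we conclude exactly as above via the induction hypothesis and monotonicity applied to the residual \Pres strategy at $\zug{v_{\text{win}}, B \ominus b}$.

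This shows that \Pres has no strategy that $e$-wins from $\zug{v,B}$ in $\G_n$. To turn this into the existence of a single winning \Cons strategy — which, since bidding games are concurrent, does not come for free — observe that $\G_n$ with the objective ``\Cons drives the energy below $0$ within the first $n$ turns'' becomes, after augmenting each state with the accumulated energy and the turn counter (both confined to finite ranges, since weights are bounded by $W$ in absolute value and the horizon is $n$), a finite \emph{reachability} discrete-bidding game, and such games are determined~\cite{DP10,AAH21}. Determinacy, together with the fact that \Pres has no $e$-winning strategy, yields an $e$-winning \Cons strategy from $\zug{v,B}$ in $\G_n$, which is exactly the statement. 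I expect the main friction to be the tie-breaking bookkeeping in the ``$\ewin{n}(v,B,b) > e$'' case — pinning down that \Cons can realise whichever bidding outcome she needs, or else is automatically in the other case, which is precisely what the value $\trump{B,b} = b$ (when \Pres holds but does not use the advantage) is engineered to make true — together with keeping the two ingredients cleanly separated: an explicit strategy-by-strategy \Cons response built by the induction, and the determinacy of finite-horizon bidding games invoked only to extract one uniform winning strategy.
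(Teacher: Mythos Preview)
Your proof is correct and follows essentially the same approach as the paper: induction on $n$, showing that for every \Pres first action \Cons has a one-step response landing in a configuration where the inductive hypothesis applies, then invoking determinacy of finite reachability discrete-bidding games (the paper makes this explicit by constructing the reachability game $\R_n$). The only cosmetic differences are that the paper splits cases on $\ewin{n} \geq \elose{n}$ versus $\elose{n} > \ewin{n}$ (with $\ewin{n}$ specialized to \Pres's chosen vertex) and, in the tie-breaking edge case where \Pres holds the advantage and bids $0$, has \Cons win the bidding and move to \Pres's intended vertex rather than arguing, as you do, that this edge case is already absorbed by the $\elose{n} > e$ branch.
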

	\stam{%short
	\begin{proof}
We describe the idea and the full proof can be found in App.~\ref{app:conswinsfinite}.
We represent $\G_n$ as a {\em reachability} game $\R_n$. 
A vertex in $\R_n$ is \(\tup{v, e, m}\), which simulates that the token is placed on a vertex $v$ of \(\G\), the current energy level is $ e $, and $m \in \Nat$ with \(0 \leq m \leq n\) is a ``counter'' that marks the number of turns remaining in the game. \Cons wins iff the game reaches a vertex $\zug{v', e', m}$ with $e' < 0$ and $m > 0$; that is, the energy has been consumed in at most $n$ turns. We show that \Pres does not win from $\zug{v, \mu_n(v, B)-1, n}$: we show that for every \Pres strategy, \Cons has a winning response. Determinacy of reachability discrete-bidding games~\cite{DP10,AAH21} implies that \Cons has a winning strategy from $\zug{v, \mu_n(v, B)-1, n}$, which implies that he $(\mu_n(v, B)-1)$-wins from $\zug{v, B}$.
	\end{proof}	
}
%	\begin{remark}\label{remark: finiteconsstrategy}
%\GAn{I don't understand what this remark is trying to say. Is it \Cons has a ``positional'' strategy?}
%\SSn{Essentially saying that \Cons strategy is energy level agnostic.}
%		Note that, since \Cons \((\mu_n(v, B) - 1)\)-wins from \(\zug{v, B}\) in \(\G_n\), it can essentially \(e\)-wins for any initial energy \(e \leq \mu_{n}(v, B) - 1\). 
%		Since, both \Cons and \Pres's actions at each step in \(\G_n\) only involve choosing a bid \(b\) and a next vertex \(v' \in \neighbor{u}\) from any configuration \(\zug{u, e, m, B'}\) in \(\G_n\), the strategy which is winning for \((\mu_n(v, B)) - 1\), would be winning for any \(e \leq \mu_n(v, B) - 1\) as well. 
%		Therefore, we canonically associate the \Cons winning strategy as \(\tau_n^{\zug{v, B}}\)
%		 which wins from configuration \(\zug{v, \mu_n(v, B) - 1, n, B}\) in \(\G_n\). 
%	\end{remark}
	
Combining Lem.~\ref{lem:finite-Pres} and~\ref{lem:conswinsfinite}, we obtain the following.

	\begin{restatable}{theorem}{energycoincideswithmufinite}\label{thm: energycoincideswithmufinite}
		For every \(n \geq 0\), \(\energy_n\) exists. Moreover, we have \(\energy_n \equiv \mu_n\).
	\end{restatable}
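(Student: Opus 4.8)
The plan is to read Theorem~\ref{thm: energycoincideswithmufinite} off directly from the two preceding lemmas by checking the definition of $\energy_n$, i.e.\ Definition~\ref{def:energy-thresh} instantiated with the $M$-wins relation of $\G_n$. First I would fix a configuration $\zug{v,B}$ and set $M = \mu_n(v,B)$. Since along a path of $\G_n$ at most $n$ weights are summed and each is at least $-W$, an initial energy of $nW$ already keeps every prefix sum non-negative, so $0 \le M \le nW$; in particular $M \in \Nat$, hence the ``$\energy_n(v,B) = \infty$'' clause of Definition~\ref{def:energy-thresh} is vacuous in a truncated game, and it only remains to verify the two bullets of that definition. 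Bullet~(1), that \Surv $M$-wins from $\zug{v,B}$ in $\G_n$, is exactly Lemma~\ref{lem:finite-Pres}; bullet~(2), that \Killer $(M-1)$-wins from $\zug{v,B}$ in $\G_n$, is exactly Lemma~\ref{lem:conswinsfinite}. Therefore $\energy_n$ exists with $\energy_n(v,B) = M = \mu_n(v,B)$, which is the claim.

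Second, I would add a one-line justification that these two bullets pin $M$ down uniquely, so that ``$\energy_n \equiv \mu_n$'' means ``$\mu_n$ is \emph{the} value'' and not merely ``$\mu_n$ is \emph{a} valid choice''. This uses monotonicity of the winning relation: a \Surv strategy witnessing that she $M$-wins from $\zug{v,B}$ also witnesses that she $M'$-wins for every $M' \ge M$ (every prefix energy only grows when the initial energy grows), and dually a \Killer strategy witnessing an $(M-1)$-win also witnesses an $(M'-1)$-win for $M' \le M$. Combined with the fact that in $\G_n$ exactly one of \Surv, \Killer $M$-wins from each configuration --- which itself is a consequence of Lemmas~\ref{lem:finite-Pres} and~\ref{lem:conswinsfinite}, or can be quoted from determinacy of reachability discrete-bidding games --- this forces $\mu_n(v,B)$ to be the least energy with which \Surv wins, and hence the unique value satisfying Definition~\ref{def:energy-thresh}.

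I do not expect a genuine obstacle: the mathematical content is entirely carried by Lemmas~\ref{lem:finite-Pres} and~\ref{lem:conswinsfinite}, and the theorem is just their conjunction repackaged against Definition~\ref{def:energy-thresh}. The only points that deserve a sentence of care are the two bookkeeping observations above --- finiteness of $\mu_n$ (so the infinite case never arises in $\G_n$) and the monotonicity remark that makes the threshold well-defined --- and both are immediate, the latter from the trivial fact that more initial energy can never hurt \Surv.
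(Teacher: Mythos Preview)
Your proposal is correct and matches the paper's approach exactly: the paper states the theorem immediately after the sentence ``Combining Lem.~\ref{lem:finite-Pres} and~\ref{lem:conswinsfinite}, we obtain the following,'' with no further argument. Your additional remarks on finiteness of $\mu_n$ in truncated games and on uniqueness via monotonicity are sound bookkeeping that the paper leaves implicit.
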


%%%%%%%%%%%%%%%
\subsection{Extending to un-bounded energy games}
In this section, we show existence of energy thresholds in unbounded energy games. We define $\mu$ as a fixed point of $\mu_n$ and show that it coincides with $\energy$. The fixed-point computation needs to be done with care. A first attempt would be to simply consider the limit of the sequence $\set{\mu_n: n \geq 0}$. However, the sequence might not converge; indeed, even when $\energy(v, B) = \infty$, every $\mu_n(v, B)$ is finite. Instead, we define a sequence of trimmed functions \(\sqglymu: V \times [k] \rightarrow \NatInf\) (see details in App.~\ref{app:VI-alg}).
		\begin{align}\label{eq:trunc}
			\sqglymu{n}(v, B) &\coloneqq
			\begin{cases}
				\mu_n(v, B) &\text{~if~} \mu_n(v, B) \leq \upperboundforloop\\
				\infty &\text{~otherwise}
			\end{cases}
		\end{align}
		
In App.~\ref{app:monotonicity}, we established monotonicity. Since there are finitely many $\sqglymu{n}$ functions, monotonicity means that the sequence $\set{\sqglymu{n}: n \geq 0}$ reaches a fixed point.

%We describe proof overview and the details can be found in App.~\ref{app:monotonicity}. 
%For a configuration $\zug{v, B}$ and $n \in \Nat$, we have $\mu_n(v, B) \leq \mu_{n+1}(v,B)$, since the initial energy required for \Pres to preserve the energy for $n+1$ turns is at least as much as the initial energy required for $n$ turns. 
%Moreover, for a fixed vertex \(v\), for two budgets \(B_1 \geq B_2\), whatever action \Pres can play from \(\zug{v, B_2}\), she can play that from \(\zug{v, B_2}\). 
%Thus, the initial energy requirement to keep the energy level non-negative for \(n\) steps at \(\zug{v, B_1}\) cannot be higher than that of \(\zug{v, B_2}\). 
%Monotonicity of $\mu_n$ implies monotonicity of $\sqglymu{n}$.
%\SSi{Should we shorten this?}
% as we establish in the following lemma, whose proof can be found in App.~\ref{app:monotonicity}.

		\begin{restatable}{lemma}{mumonotone}{\bf (Monotonicity)}
			\label{lemma: mumonotone}
			For all \(n \geq 0\), \(\sqglymu{n} \leq \sqglymu{n+1}\). 
			Moreover, for any vertex \(v\), and two budgets \(B_1, B_2 \in [k]\) with \(B_1 \geq B_2\), \(\sqglymu(v, B_1) \leq \sqglymu(v, B_2)\). 
		\end{restatable}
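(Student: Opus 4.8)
The plan is to derive both inequalities from the analogous statements about the \emph{un-trimmed} functions $\mu_n$. This transfer is immediate because the trimming of Eq.~\eqref{eq:trunc} --- sending everything above $\upperboundforloop$ to $\infty$ and leaving everything else fixed --- is monotone: any inequality between two values of $\mu$ at the same or consecutive indices yields the same inequality between the corresponding trimmed values $\sqglymu{n}$ (if the larger side stays below the bound, so does the smaller; otherwise the larger side becomes $\infty$). Hence it suffices to show (i) $\mu_n(v,B) \le \mu_{n+1}(v,B)$ for all $n$ and $\zug{v,B}$, and (ii) $\mu_n(v,B_1) \le \mu_n(v,B_2)$ for all $n$, all $v$, and all $B_1 \succeq B_2$; (ii) for every $n$ gives it for the fixed point $\sqglymu{}$ as well.

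For (i) I would induct on $n$ using the recursion \eqref{def: ewin}--\eqref{def:O}. The base case holds because $\mu_0 \equiv 0$ while $\mu_1 \ge 0$ ($\enxt{1}$ is non-negative by construction). For the step, assume $\mu_{n-1} \le \mu_n$ pointwise. Now $\ewin{n+1}(v,B,b)$ is built from $\mu_n(\cdot, B \ominus b)$ and $\ewin{n}(v,B,b)$ from $\mu_{n-1}(\cdot, B \ominus b)$ by the same monotone recipe --- subtract $\weightf{v,v'}$, clamp at $0$, minimise over $N(v)$ --- so $\mu_{n-1} \le \mu_n$ gives $\ewin{n}(v,B,b) \le \ewin{n+1}(v,B,b)$, and likewise $\elose{n}(v,B,b) \le \elose{n+1}(v,B,b)$. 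The predicate $B \oplus \trump{B,b} \le k^*$ selecting the case in the definition of $\enxt{}$ does not involve $n$, so $\enxt{n}(v,B,b) \le \enxt{n+1}(v,B,b)$ for every $b \le B$, and minimising over $b$ gives $\mu_n(v,B) \le \mu_{n+1}(v,B)$. (Alternatively: a \Pres strategy that $M$-wins $\G_{n+1}$ also $M$-wins $\G_n$, so $\energy_n \le \energy_{n+1}$, and Thm.~\ref{thm: energycoincideswithmufinite} yields $\mu_n \le \mu_{n+1}$.)

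Statement (ii) is the heart of the argument, and I would again prove it by induction on $n$, the base case $\mu_0 \equiv 0$ being trivial. By transitivity it is enough to handle a single step $B_1 = \succb{B_2}$. Fix $v$, and let $b_2 \le B_2$ attain the minimum in $\mu_n(v,B_2) = \enxt{n}(v,B_2,b_2)$. I would exhibit a legal bid $b_1 \le B_1$ with $\enxt{n}(v,B_1,b_1) \le \enxt{n}(v,B_2,b_2)$; then $\mu_n(v,B_1) \le \enxt{n}(v,B_1,b_1) \le \mu_n(v,B_2)$. The bid $b_1$ is, morally, ``$b_2$ bumped up by one'': take $b_1 = \succb{b_2}$ whenever that is a legal bid at $B_1$ (this happens when $B_2 \in \Nat$, so $B_1 = \absolut{B_2}^*$, and when $B_2 \in \Natstro$ with $b_2$ starred, so $B_1 = \absolut{B_2}+1$), and $b_1 = b_2$ in the remaining case ($B_2 \in \Natstro$, $b_2$ plain). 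A short computation with $\oplus$, $\ominus$ and the definition of $\trump$, in each of these cases, shows two things: (a) the ``\Pres wins'' successor budget satisfies $B_1 \ominus b_1 \succeq B_2 \ominus b_2$; and (b) whenever $B_1 \oplus \trump{B_1,b_1} \le k^*$, the ``\Cons wins'' successor budget satisfies $B_1 \oplus \trump{B_1,b_1} \succeq B_2 \oplus \trump{B_2,b_2}$ --- concretely one always lands on a pair $(X{+}1)^*$ versus $X^*$, from which both inequalities, and also $B_2 \oplus \trump{B_2,b_2} \le k^*$, are clear. Feeding (a) and (b) into the induction hypothesis (budget-monotonicity of $\mu_{n-1}$) gives $\ewin{n}(v,B_1,b_1) \le \ewin{n}(v,B_2,b_2)$ and $\elose{n}(v,B_1,b_1) \le \elose{n}(v,B_2,b_2)$; and when $B_1 \oplus \trump{B_1,b_1} > k^*$ the \Cons-wins term is absent at $B_1$, so $\enxt{n}(v,B_1,b_1) = \ewin{n}(v,B_1,b_1) \le \ewin{n}(v,B_2,b_2) \le \enxt{n}(v,B_2,b_2)$. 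Either way $\enxt{n}(v,B_1,b_1) \le \enxt{n}(v,B_2,b_2)$, which closes the induction; chaining over the finitely many unit steps from $B_2$ up to $B_1$ completes (ii).

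I expect the main obstacle to be the tie-breaking advantage in (ii): a single extra cent can grant \Pres the advantage ($m \rightsquigarrow m^*$) or strip it and hand it to \Cons ($m^* \rightsquigarrow m{+}1$), so the matching bid $b_1$, the instances of $\trump$ that arise, and the question of whether the \Cons-wins branch exists at $B_1$ genuinely differ across the cases, and verifying (a)--(b) uniformly is where the bookkeeping lives. A cleaner but less elementary route to (ii) avoids this: \Pres holding $B_1$ simulates a fixed energy-$M$ \Pres winning strategy from $\zug{v,B_2}$, in each round playing a bid that dominates the simulated one and banking the surplus; since \Cons's actual budget $k^* \ominus B_1$ is $\preceq$ her simulated budget $k^* \ominus B_2$, every actual \Cons move maps to a legal simulated one, so the simulation stays consistent and \Pres wins from $\zug{v,B_1}$ --- giving $\energy_n(v,B_1) \le \energy_n(v,B_2)$ and, via Thm.~\ref{thm: energycoincideswithmufinite}, the claim.
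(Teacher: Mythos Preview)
Your proposal is correct. For the monotonicity in $n$, the paper argues by contradiction: it assumes some $\zug{v,B}$ with $\mu_n(v,B) > \mu_{n+1}(v,B)$, plugs the optimal bid $b_{n+1}$ for $\mu_{n+1}$ into the level-$n$ recursion, and extracts a configuration where $\mu_{n-1} > \mu_n$, contradicting the hypothesis. Your direct argument --- the recipe $\mu_{n-1} \mapsto (\ewin{n},\elose{n}) \mapsto \enxt{n} \mapsto \mu_n$ is monotone in its input, so $\mu_{n-1} \le \mu_n$ propagates to $\mu_n \le \mu_{n+1}$ --- is the same content in contrapositive-free form, and slightly cleaner.

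For the budget monotonicity, the paper's appendix actually proves only the first statement and does not give an argument for the second, so here you supply more than the paper. Your bid-bumping construction is correct: in each of the three advantage cases one checks $B_1 \ominus b_1 \succeq B_2 \ominus b_2$ and $B_1 \oplus \trump{B_1,b_1} \succeq B_2 \oplus \trump{B_2,b_2}$ (both coming out as $(X{+}1)^*$ versus $X^*$ or equal on the win side), and the latter also guarantees that whenever the $\elose{}$ branch is present at $B_1$ it is present at $B_2$. The alternative simulation argument you sketch is also valid and is closer in spirit to how the paper reasons elsewhere (e.g., Cor.~\ref{corr: energyinherits} invokes budget monotonicity as ``inherited'' without further justification).
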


We denote the fixed point by $\mu$ and define the strategy that it gives rise to as follows. 

\begin{definition}
\label{def:sVI}
We define a strategy \(\sVI: V \times [k] \rightarrow [k] \times V\). At configuration $\zug{v, B}$, we define \(\zug{b', u'} = \sVI(v, B)\), where the bid $b'$ and $v'$ minimizes  Eq.~\ref{def:O} at the fixed point.
Namely, let $ \ewin{}(v, B, b) $ and $ \elose{}(v, B, b) $ is defined like Eq.~\ref{def: ewin} and Eq.~\ref{def: elose}, except $ \mu_{n-1} $ is replaced with $ \mu $, and $ \enxt{}(v, B, b) =  \max \set{\ewin(v, B, b), \elose{}(v, B, b)}$ if $ B \oplus \trump(B, b) < k+1 $, otherwise $ \enxt{}(v, B, b) = \ewin{}(v, B, b) $. Then, $ b' =  \arg\min_{b \leq B} \enxt{}(v, B, b)$, and $ u' = \arg\min_{u\in \neighbor{v}} \max\set{\mu(u, B \ominus b') - \weightf(v, u), 0} $.  

% namely let $\ewin{}(v, B, b) = \min_{u \in \neighbor{v}} \max \set{\mu(u, B \ominus b) - \weightf(v, v'), 0}$ and 
%$\elose{}(v, B, b) = \max_{v' \in \neighbor{v}} \max \set{\mu(v', B \oplus \trump(B, b)) - \weightf(v, v'), 0}$, then $b' = \arg\min_{b} \max\{\ewin{}(v, B, b), \elose{}(v, B, b)\}$, and the choice of vertex is $v' = \arg\min_{u \in N(v)} \max \set{\mu(u, B \ominus b') - \weightf(v, u), 0}$.
\stam{%short
			\begin{align*}
				b' &= \arg\min_{b} \max\{\ewin{}(v, B, b), \elose{}(v, B, b)\}\\
				v' &= \arg\min_{u \in N(v)} \max \set{\mu(u, B \ominus b') - \weightf(v, u), 0}\\
				\ewin{}(v, B, b) &= \min_{u \in \neighbor{v}} \max \set{\mu(u, B \ominus b) - \weightf(v, v'), 0}\\
				\elose{}(v, B, b) &= \max_{v' \in \neighbor{v}} \max \set{\mu(v', B \oplus \trump(B, b)) - \weightf(v, v'), 0}
			\end{align*}
}
\end{definition}

%The pseudo-code can be found in App.~\ref{app:VI-alg}. 
We show how \Pres wins by maintaining an {\em energy invariant}. This is reused in Sec.~\ref{sec:Pres-bud-agn}.

%two lemmas establish that $\mu \equiv \energy$.

\begin{restatable}{lemma}{preswinsinfinite}\label{lemma: preswinsinfinite}
			If \(\mu(v, B) <\infty\), then $\sVI$ is a \Pres' strategy by which she \(\mu(v, B)\)-wins from configuration \(\zug{v, B}\)  in \(\G\).
\end{restatable}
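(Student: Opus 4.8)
The plan is to show that, from initial energy $M \coloneqq \mu(v, B)$ (finite by hypothesis), the strategy $\sVI$ maintains the current energy level above the $\mu$-value of the current configuration forever; since $\mu$ takes values in $\NatInf$, a finite value of $\mu$ is non-negative, so the energy never drops below $0$ and $\Pres$ wins. Note first that $\sVI$ is a legal positional $\Pres$ strategy: the bid $b'$ it returns satisfies $b' \le B$ and the move $u'$ lies in $\neighbor{v}$ by construction. Hence an arbitrary play consistent with $\sVI$ from $\zug{v,B}$ is a sequence $c_0, c_1, \dots$ with $c_0 = \zug{v,B}$, $c_j = \zug{v_j, B_j}$, and energy $e_j = M + \sumofweights{\pref{j}}$ after $j$ steps. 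I will prove the invariant
\[
\text{for every } j \ge 0:\qquad \mu(v_j, B_j) < \infty \ \text{ and }\ e_j \ge \mu(v_j, B_j).
\]
Given this, every $\mu(v_j, B_j)$ is a non-negative integer, so $e_j \ge 0$ for all $j$; thus $\Pres$ $M$-wins the play. As the play was arbitrary, $\sVI$ is an $M$-winning $\Pres$ strategy from $\zug{v, B}$, which is the lemma.

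The invariant is proved by induction on $j$. The base case is $e_0 = M = \mu(v, B) < \infty$. For the inductive step, assume it at $c_j = \zug{v_j, B_j}$, let $\zug{b', u'} = \sVI(v_j, B_j)$, and write $c_{j+1} = \zug{v', B'}$ and $w = \weightf{v_j, v'}$, so $e_{j+1} = e_j + w$. The central fact is that $\mu$, being the fixed point of the value iteration, satisfies at the finite-valued configuration $\zug{v_j, B_j}$ the one-step equation $\mu(v_j, B_j) = \min_{b \le B_j} \enxt{}(v_j, B_j, b)$, with $\ewin{}, \elose{}, \enxt{}$ defined from $\mu$ as in Def.~\ref{def:sVI}; moreover the bid $b'$ returned by $\sVI$ realizes this minimum, so $\enxt{}(v_j, B_j, b') = \mu(v_j, B_j)$ is finite, and therefore so are $\ewin{}(v_j, B_j, b')$ and (when defined) $\elose{}(v_j, B_j, b')$. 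Now split on the outcome of turn $j$'s bidding. If $\Pres$ wins, then $\zug{v', B'} = \zug{u', B_j \ominus b'}$ and, since $u'$ attains the minimum in $\ewin{}(v_j, B_j, b') = \min_{u \in \neighbor{v_j}} \max\set{\mu(u, B_j \ominus b') - \weightf{v_j, u}, 0}$, a short computation gives $e_{j+1} = e_j + w \ge \mu(v_j, B_j) + w \ge \ewin{}(v_j, B_j, b') + w \ge \mu(u', B_j \ominus b')$, a quantity $\le e_{j+1} < \infty$. If instead $\Cons$ wins the bidding with some bid $b''$ and moves to $v' \in \neighbor{v_j}$, then $B_j \oplus \trump{B_j, b'} \le k^*$, so $\mu(v_j, B_j) \ge \elose{}(v_j, B_j, b') \ge \mu(v', B_j \oplus \trump{B_j, b'}) - w$; since $b'' \succeq \trump{B_j, b'}$ we have $B' = B_j \oplus b'' \succeq B_j \oplus \trump{B_j, b'}$, so by budget-monotonicity of $\mu$ (Lem.~\ref{lemma: mumonotone}) $\mu(v', B') \le \mu(v', B_j \oplus \trump{B_j, b'})$, whence $e_{j+1} = e_j + w \ge \mu(v', B')$, again finite and bounded by $e_{j+1}$. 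This closes the induction.

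The argument is chiefly bookkeeping around the one-step equation, so the two points requiring care are: (a) justifying that $\mu$ satisfies the \emph{untrimmed} one-step equation wherever it is finite, despite the truncation in Eq.~\ref{eq:trunc} --- this is part of the fixed-point analysis in App.~\ref{app:VI-alg} (a configuration's value never decreases once it exceeds $\upperboundforloop$, so trimming commutes with the iteration on the finite part); and (b) handling the case where $\Cons$ bids strictly more than $\trump{B_j, b'}$, which is precisely where budget-monotonicity of $\mu$ (Lem.~\ref{lemma: mumonotone}) is invoked. (When $B_j \oplus \trump{B_j, b'} > k^*$ the $\Cons$-wins case cannot occur and only the easy $\Pres$-wins subcase applies.) The invariant itself --- not merely the winning claim --- is what is reused in Sec.~\ref{sec:Pres-bud-agn}.
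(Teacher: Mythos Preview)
Your proposal is correct and follows essentially the same approach as the paper: maintain the energy invariant $e_j \ge \mu(v_j, B_j)$ along any play consistent with $\sVI$, proving it by induction via the one-step fixed-point equation and a case split on who wins the bidding. You are simply more explicit than the paper about the two edge cases it hides under ``it is not hard to verify'': handling a \Cons overbid via budget-monotonicity of $\mu$, and the untrimmed-versus-trimmed fixed-point equation at finite configurations.
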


\begin{proof}
			When \(\mu(v, B) < \infty\), we have $\mu(v, B) = \min_{b \leq B}\max\set{\ewin{}(v, B, b), \elose{}(v, B, b)}$. 
%			\begin{align}\label{eq: ewin}
%					\ewin{}(v, B, b) &= \min_{v' \in N(v)}\max\set{\mu(v', B \ominus b) - \weightf(v, v'), 0} \\
%					\elose{}(v, B, b) &= \max_{v' \ \in N(v)}\max\set{\mu(v', B \oplus \trump(B, b)) - \weightf(v, v'), 0}\nonumber
%			\end{align}

Consider an initial configuration $\zug{v_0, B_0}$ and an initial energy $e_0 \geq \mu(v, B)$. 
We describe a \Pres winning strategy inductively. Suppose that the game reaches  $\zug{v, B}$ with an accumulated energy of $e$. 
\Pres maintains that $e_0 + e \geq \mu(v, B)$. 
Since $\mu \geq 0$, it follows that a non-negative energy is preserved throughout the play, and \Pres wins. \Pres chooses \(\zug{b', u'}\) such that \(b'\) attains the minimum in the definition of $\mu$ 
and \(v'\) attains the minimum in the definition of \(\ewin\). It is not hard to verify that the invariant is maintained: no matter what the next configuration \(\zug{v'', B''}\) is, the accumulated energy \(e' = e + \weightf(v, v'')\) satisfies \(e' \geq \mu(v'', B'')\). 
		\end{proof}

We show that \Cons wins $\G$ by simulating a winning strategy in a finite game $\G_n$, for a large enough $n$. This idea is reused in Sec.~\ref{sec:Cons-bud-agn}. 
		
\begin{restatable}{lemma}{conswinsinfiniteenergy}\label{lemma: conswinsinfiniteenergy}
If \(\mu(v, B) < \infty\), then \Cons \((\mu(v, B) - 1)\)-wins from \(\zug{v, B}\) in \(\G\). 
If \(\mu(v, B) = \infty\), then for every \(e \in \Nat\), \Cons \(e\)-wins from \(\zug{v, B}\) in \(\G\). 
\end{restatable}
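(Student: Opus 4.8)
The plan is to make \Cons win $\G$ by reusing, essentially verbatim, a winning \Cons strategy from a long enough truncated game $\G_n$. Two simple observations drive this. First, \Cons winning $\G_n$ is a \emph{finitary} event: if \Cons $M$-wins $\G_n$ then she forces $M+\sumofweights{\pref{m}}<0$ for some $m\le n$, and that length-$m$ prefix is already a winning witness in $\G$; hence the very same \Cons strategy, extended arbitrarily after turn $n$, $M$-wins $\G$. Second, ``$M$-winning'' is monotone in $M$: if \Cons $M$-wins a game she $M'$-wins it for every $M'\le M$ (replay the same strategy). We will also use that $\energy_n\equiv\mu_n$ (Thm.~\ref{thm: energycoincideswithmufinite}), that \Cons $(\mu_n(v,B)-1)$-wins $\G_n$ for every $n$ (Lem.~\ref{lem:conswinsfinite}), and we fix an index $N$ at which the trimmed sequence has stabilized, i.e.\ $\sqglymu{n}\equiv\mu$ for all $n\ge N$.

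For the first case, suppose $\mu(v,B)=M<\infty$. Since $\sqglymu{N}(v,B)=M$ is finite, the definition of the trim (Eq.~\ref{eq:trunc}) forces both $M\le\upperboundforloop$ and $\mu_N(v,B)=M$. By Lem.~\ref{lem:conswinsfinite}, \Cons $(M-1)$-wins $\G_N$, and by the finitary observation she $(M-1)$-wins $\G$ --- which is exactly the claim, as $M-1=\mu(v,B)-1$.

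For the second case, suppose $\mu(v,B)=\infty$ and fix $e\in\Nat$. It suffices to exhibit some $n$ with $\mu_n(v,B)>e$: then \Cons $(\mu_n(v,B)-1)$-wins $\G_n$ by Lem.~\ref{lem:conswinsfinite}, hence $e$-wins $\G_n$ by monotonicity in $M$, hence $e$-wins $\G$. So the whole statement reduces to the claim $\sup_n\mu_n(v,B)=\infty$. Here $\mu_n\equiv\energy_n$ is non-decreasing in $n$ --- a \Pres strategy keeping the energy non-negative for $n{+}1$ turns does so for $n$ turns, so $\energy_n\le\energy_{n+1}$ --- whence $L:=\sup_n\mu_n(v,B)=\lim_n\mu_n(v,B)$. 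Since $\sqglymu{N}(v,B)=\infty$ we have $\mu_N(v,B)>\upperboundforloop$, so $L>\upperboundforloop$, and it remains only to rule out $\upperboundforloop<L<\infty$.

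This last step is the crux, and the one I expect to cost real work. The recursion defining $\mu_n$ (Eqs.~\ref{def: ewin}--\ref{def:O}) is exactly the value iteration of a turn-based energy game --- the one in which \Pres first commits a bid (and a move), then \Cons responds with a bid and, upon outbidding, her own move --- whose states are (essentially) the configurations and whose only weight-bearing transitions are the at-most-one-per-round vertex moves. Hence the classical dichotomy for value iteration in turn-based energy games applies: at each state the value sequence either stays bounded by $\upperboundforloop$ for all $n$, in which case $\mu$ is finite there, or it diverges to $\infty$; since $\mu_N(v,B)>\upperboundforloop$ the bounded alternative fails at $\zug{v,B}$, so $\sup_n\mu_n(v,B)=\infty$. (This is precisely the calibration behind the choice of $\upperboundforloop$ as the trim bound.) The delicate point is to justify this dichotomy in the bidding setting: \Cons has no explicit strategy, and, unlike in plain turn-based energy games, ``winning'' a bidding costs \Cons budget, so a negative loop cannot be pumped naively. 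I would therefore carry it out through the turn-based reduction $\turnbasedG$ and the determinacy of reachability discrete-bidding games already used for Lem.~\ref{lem:conswinsfinite}: from a \Cons strategy draining more than $\upperboundforloop$ energy in finitely many turns, extract a configuration that recurs at strictly smaller accumulated energy, then iterate it to drain an arbitrary amount.
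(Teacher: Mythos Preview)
Your Case~1 argument is correct and matches the paper.

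For Case~2, reducing to $\sup_n\mu_n(v,B)=\infty$ is a legitimate reframing, but the execution has a gap exactly where you flag it as delicate. The phrase ``extract a configuration that recurs at strictly smaller accumulated energy, then iterate it'' does not work as stated: a negative configuration cycle appears in every play consistent with $\tau_n$, but \emph{which} cycle depends on \Pres's responses, and if \Cons replays her moves from the cycle \Pres need not cooperate, so no fixed cycle can be pumped. The paper's resolution is to simulate $\tau_n$ in $\G$ and, whenever the simulated history closes a negative configuration cycle, \emph{excise} that cycle from the history and continue the simulation from the truncated history (which ends in the same configuration, so $\tau_n$ remains defined on it). The actual play in $\G$ then decomposes into negative configuration cycles plus an acyclic remainder of length at most $n$, so any finite initial energy is consumed. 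This mechanism already yields the winning \Cons strategy in $\G$ directly, making the intermediate goal $\sup_n\mu_n=\infty$ superfluous; the paper in fact derives that statement later (Lem.~\ref{lemma: sufficientnumber}) \emph{from} the present lemma rather than the other way around.

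Two smaller points. Your appeal to the classical turn-based dichotomy with threshold $\upperboundforloop$ is not quite calibrated: the sequentialized game has at least one state per configuration, and there are roughly $2|V|(k{+}1)$ of those, so the textbook bound exceeds $\upperboundforloop$ and merely crossing $\upperboundforloop$ does not by itself rule out the bounded alternative. And your reference to $\turnbasedG$ points to the construction of Sec.~\ref{sec: NPandcoNP}, which relies on Thm.~\ref{thm:preThreshAverage} and hence on results proved downstream of this lemma; presumably you mean the reachability game $\R_n$ used for Lem.~\ref{lem:conswinsfinite}. Finally, the paper's construction produces a single \Cons strategy whose plays carry the negative-cycle decomposition; that structure is reused in Thm.~\ref{thm:MP-energy} and in Sec.~\ref{sec:Cons-bud-agn}, which your per-$e$ route would not deliver.
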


\begin{proof}[Proof sketch]
	We describe the proof idea and the details can be found in App.~\ref{appn: conswinsinfiniteenergy}. 
%	If \(\mu(v, B) < \infty\), then \(\mu \equiv \sqglymu{n}\) for some \(n\), and the proof is as in Lem.~\ref{lem:conswinsfinite}.
	Assume \(\mu(v, B) = \infty\), thus \(\mu_n(v, B) > \upperboundforloop\), for some \(n\).  
%	For any energy \(e \leq \upperboundforloop \leq \mu_n(v, B) -1\), we can directly simulate \Cons strategy from \(\G_n\). 
	Consider an initial energy \(e\). 
	We construct a \Cons strategy \(\tau\) as follows. 
	Let $\tau_n$ be a \((\mu_n(v, B) -1)\)-winning strategy in \(\G_n\). 
	Intuitively, $\tau$ ``simulates'' $\tau_n$ and follows its actions. Consider a play $\pi_n$ that is consistent with $\tau_n$. Note that $\pi_n$ coincides with a prefix of a of a play $\pi$  in $\G$ that is consistent with $\tau$. Since $\tau_n$ is winning, $\pi_n$ consumes at least $\upperboundforloop$ units of energy. Recall that the minimal weight in $\G$ is $-W$, thus the length of $\pi_n$ is at least $|V| \cdot k$, which implies that $\pi_n$ must contain a negative configuration cycle. Formally, there is an index $m$, such that $\pi_n^{\leq m} = \pi'_n \cdot \chi_n$, where $\chi_n$ is cycle of configurations with $\sumofweights{\chi_n} < 0$. We define $\tau$ to intuitively omit $\chi_n$ and restart the simulation of $\tau_n$ at $\pi'_n$. That is, the next action it chooses is $\tau_n(\pi_n')$. 
	By repeating this process, we obtain that a play $\pi$ that is consistent with $\tau$ consists of only negative cycles and at most \(n\) additional configuration, thus \(e\) is eventually gets consumed and $\tau$ is winning.
\end{proof}

Combining Lem.~\ref{lemma: preswinsinfinite} and Lem.~\ref{lemma: conswinsinfiniteenergy}, we obtain the following. 
		\begin{restatable}{theorem}{energyconicideswithmulimit}\label{them: energyconicideswithmulimit}
Consider an energy bidding game \(\G= \zug{V, E, k, \weightf}\). The energy threshold function \(\energy: V \times [k] \rightarrow \NatInf\) exists and satisfies 
$\mu \equiv \energy$. 
		\end{restatable}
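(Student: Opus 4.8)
The plan is to verify that the fixed point $\mu$ satisfies both bullet points of Definition~\ref{def:energy-thresh} verbatim; this exhibits an energy threshold function, hence establishes its existence, and a short uniqueness argument then yields $\energy \equiv \mu$. The proof is a case analysis on an arbitrary configuration $\zug{v,B}$ according to whether $\mu(v,B)$ is finite, feeding the two lemmas of this subsection into the two cases.

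Concretely: if $\mu(v,B) = M \in \Nat$, then Lemma~\ref{lemma: preswinsinfinite} says $\sVI$ is a \Pres strategy with which she $M$-wins from $\zug{v,B}$, which is exactly condition (1), and Lemma~\ref{lemma: conswinsinfiniteenergy} says \Cons $(M-1)$-wins from $\zug{v,B}$, which is condition (2). If instead $\mu(v,B) = \infty$, then the second clause of Lemma~\ref{lemma: conswinsinfiniteenergy} states precisely that \Cons $e$-wins from $\zug{v,B}$ for every $e \in \Nat$, which is the remaining bullet. Here one uses the fact, built into the trimming construction (Eq.~\ref{eq:trunc}) and the monotonicity of $\set{\sqglymu{n}}$, that $\mu(v,B) = \infty$ holds exactly when $\mu_n(v,B) > \upperboundforloop$ for some $n$, so that the hypothesis of Lemma~\ref{lemma: conswinsinfiniteenergy} is met. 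Thus $\mu$ is an energy threshold function, so one exists.

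For the equality $\energy \equiv \mu$ I would remark that Definition~\ref{def:energy-thresh} determines its function uniquely, using two elementary facts: at a fixed configuration and fixed initial energy $e$, at most one player has a winning strategy (a play consistent with both would be won by both); and \Cons's winning condition is monotone in $e$, i.e., a \Cons strategy that $M$-wins also $M'$-wins for every $M' \le M$ (and symmetrically for \Pres). Given these, if two functions both satisfied the definition but disagreed at some $\zug{v,B}$, taking the smaller of the two values would force both players to win at a common finite energy level, a contradiction; hence $\energy := \mu$ is the unique such function. The substance of the argument resides entirely in Lemmas~\ref{lemma: preswinsinfinite} and~\ref{lemma: conswinsinfiniteenergy} (and, beneath them, in establishing monotonicity so that the fixed point $\mu$ is well defined); relative to those, the present theorem is a bookkeeping step, the only mild subtlety being the $\mu(v,B)=\infty$ case, which must route through the characterization of $\infty$ via the bound $\upperboundforloop$.
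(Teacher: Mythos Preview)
Your proposal is correct and matches the paper's approach: the paper simply states that the theorem follows by combining Lemmas~\ref{lemma: preswinsinfinite} and~\ref{lemma: conswinsinfiniteenergy}, and your case analysis on whether $\mu(v,B)$ is finite is exactly how one unpacks that combination. One minor point: the hypothesis of Lemma~\ref{lemma: conswinsinfiniteenergy} in the infinite case is already stated as $\mu(v,B)=\infty$, so you need not route through the $\upperboundforloop$ characterization at the theorem level (that step lives inside the lemma's proof); your uniqueness remark is a welcome clarification that the paper leaves implicit.
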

	
	We observe the following about \(\energy\):
	
	\begin{corollary}\label{corr: energyinherits}
		\(\energy\) inherits the monotonicity of \(\sqglymu\) for budgets: for every vertex \(v\), and two budgets \(B_1 \geq B_2\), we have \(\energy(v, B_1) \leq \energy(v, B_2)\). 
%		(2) Denoting \(\zug{b', v'} = \sVI(v, B)\) for a configuration \(\zug{v, B}\), we derive from Def.~\ref{def:sVI}, \(\energy(v, B) = \max\set{\ewin(v, B, b'), \elose(v, B, b')}\), and \(\ewin(v, B, b') = \max\set{\energy(v', B \ominus b') - \weightf(v, v'), 0}\). 
	\end{corollary}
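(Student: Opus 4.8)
The plan is to derive the corollary directly from two facts that are already in hand: Theorem~\ref{them: energyconicideswithmulimit}, which gives $\energy \equiv \mu$, and the ``moreover'' clause of Lemma~\ref{lemma: mumonotone}, which gives budget-monotonicity of the trimmed iterates $\sqglymu{n}$. The bridge between them is the observation — already used to justify that the fixed point $\mu$ exists — that the non-decreasing sequence $\{\sqglymu{n}\}_{n \geq 0}$ takes values among finitely many functions $V \times [k] \to \NatInf$ (by the trimming in Eq.~\ref{eq:trunc}, each $\sqglymu{n}(v,B)$ lies in $\{0, 1, \ldots, \upperboundforloop\} \cup \{\infty\}$), hence \emph{stabilizes}: there is an $N$ with $\mu = \sqglymu{N}$. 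So the fixed point is not merely a limit but is attained by a concrete iterate, and therefore inherits every pointwise property that all the iterates share.

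Concretely I would proceed in three short steps. First, fix $N$ with $\sqglymu{N} = \sqglymu{N+1} = \mu$. Second, apply the budget-monotonicity clause of Lemma~\ref{lemma: mumonotone} to $\sqglymu{N}$: for any vertex $v$ and budgets $B_1 \geq B_2$ in $[k]$, $\sqglymu{N}(v, B_1) \leq \sqglymu{N}(v, B_2)$ (the inequality is in $\NatInf$, so the case where one side is $\infty$ is included automatically). Third, chain the (in)equalities $\energy(v, B_1) = \mu(v, B_1) = \sqglymu{N}(v, B_1) \leq \sqglymu{N}(v, B_2) = \mu(v, B_2) = \energy(v, B_2)$, where the first and last equalities are Theorem~\ref{them: energyconicideswithmulimit}. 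That is the whole argument.

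I do not expect a real obstacle here: all the substance — monotonicity of the value-iteration operator in both the iteration index and the budget, and the identification of the fixed point with the semantic threshold $\energy$ — is carried by the cited lemma and theorem, and the corollary is a bookkeeping consequence. The only point worth stating carefully is that the budget-monotonicity clause of Lemma~\ref{lemma: mumonotone} is to be read as holding for \emph{each} $\sqglymu{n}$ (in particular for the stabilized iterate $\sqglymu{N}$), not merely in the limit. If one preferred a self-contained semantic proof, one could instead argue that a larger budget $B_1 > B_2$ for \Pres corresponds to a smaller budget $k^* \ominus B_1$ for \Cons, so a \Pres strategy that $M$-wins from $\zug{v, B_2}$ can be mimicked verbatim from $\zug{v, B_1}$ while treating the surplus budget as frozen, giving $\energy(v, B_1) \leq \energy(v, B_2)$ directly; but the algebraic route above is shorter given what has already been proved.
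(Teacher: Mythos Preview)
Your proposal is correct and is precisely the approach the paper takes: the corollary is presented without a standalone proof, as an immediate observation combining $\energy \equiv \mu$ (Theorem~\ref{them: energyconicideswithmulimit}) with the budget-monotonicity clause of Lemma~\ref{lemma: mumonotone} applied at the stabilized iterate. Your explicit articulation of the stabilization step ($\mu = \sqglymu{N}$ for some $N$) is exactly the missing link the paper leaves implicit.
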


We close this section by an equivalence between energy and mean-payoff games, proof of which can be found in App.~\ref{app: MP-energy}.

\begin{restatable}{theorem}{mpenergy}
\label{thm:MP-energy}
Consider a game $\G = \zug{V, E, k, \weightf}$ and a configuration $\zug{v, B}$. If $\energy(v, B) < \infty$, \Max can guarantee non-negative payoff from $\zug{v,B}$. If $\energy(v, B) = \infty$, \Min can guarantee negative payoff from $\zug{v, B}$. 
\end{restatable}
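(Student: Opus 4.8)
The plan is to transfer each implication to the energy game it is built on, just as one does for turn-based weighted games (cf.~\cite{BFLMS08}). The point to keep in mind is that a \Pres (resp.\ \Cons) strategy is literally a strategy in the same bidding game, so it can be reused verbatim as a \Max (resp.\ \Min) strategy; only the winning condition changes.

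\emph{If $\energy(v,B)<\infty$}, set $M=\energy(v,B)\in\Nat$. By Thm.~\ref{them: energyconicideswithmulimit} the energy threshold exists, so by Def.~\ref{def:energy-thresh} \Pres has a strategy $\sigma$ that $M$-wins from $\zug{v,B}$; that is, every play $\pi$ consistent with $\sigma$ satisfies $M+\sumofweights{\prefM}\ge0$, equivalently $\sumofweights{\prefM}\ge-M$, for all $m$. I would let \Max play $\sigma$; then $\tfrac1m\sumofweights{\prefM}\ge-\tfrac Mm$ for every $m\ge1$, so $\meanpayoff{\pi}=\liminf_m\tfrac1m\sumofweights{\prefM}\ge0$ and \Max wins $\pi$. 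This direction is routine.

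\emph{If $\energy(v,B)=\infty$}, Thm.~\ref{them: energyconicideswithmulimit} gives $\mu_n(v,B)>\upperboundforloop$ for some $n$, and I would have \Min play the \Cons strategy $\tau$ that the proof of Lem.~\ref{lemma: conswinsinfiniteenergy} constructs from a $(\mu_n(v,B)-1)$-winning strategy $\tau_n$ of $\G_n$. The crucial step is to extract from that construction the structural fact that, along every play $\pi$ consistent with $\tau$, all but at most $n$ positions lie on excised negative configuration cycles, each of length at most $n$; hence $\pi$ is, after a finite prefix, a contiguous concatenation $\chi_1\chi_2\chi_3\cdots$ of negative configuration cycles. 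This holds because $\tau$ simulates $\tau_n$, excising a negative configuration cycle from the simulated play whenever one appears and resuming the simulation from the shortened history, while keeping the invariant that the simulated play has length at most $n$ throughout. Given this shape the mean-payoff estimate is elementary: $|\chi_j|\le n$ and $\sumofweights{\chi_j}\le-1$ (a negative integer), so for $m$ beyond the prefix, writing $\prefM=\rho\cdot\chi_1\cdots\chi_t\cdot\chi'$ with $\rho$ the fixed prefix and $\chi'$ a partial cycle, one gets $t\ge (m-|\rho|)/n-1$ and $\sumofweights{\prefM}\le|\rho|W+nW-t\le C-m/n$ for a constant $C$ independent of $m$; therefore $\meanpayoff{\pi}=\liminf_m\tfrac1m\sumofweights{\prefM}\le-\tfrac1n<0$ and \Min wins.

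The main obstacle is this second implication, and within it the structural claim about $\tau$ (the mean-payoff estimate itself is routine arithmetic): one must make precise, from the construction in App.~\ref{appn: conswinsinfiniteenergy}, that the excised negative configuration cycles cover all but boundedly many positions of the realised play. This rests on two invariants of that construction --- that the simulated finite-game play never exceeds $n$ configurations, and that every winning $\tau_n$-play consumes enough energy to be long enough to contain a negative configuration cycle --- and I would invoke them directly rather than re-derive them here.
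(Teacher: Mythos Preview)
Your proposal is correct and follows essentially the same approach as the paper: both directions reuse the \Pres/\Cons strategies from the energy analysis, the first direction bounds the partial sums below by $-M$ and passes to the liminf, and the second direction invokes the structural decomposition of $\tau$-consistent plays into negative configuration cycles plus a bounded remainder (the paper states this as cycles of length at most $|V|\cdot k$ and an acyclic residue, but your bound of $n$ on cycle lengths is also valid since $n>|V|k$). Your mean-payoff estimate in the second case is in fact more explicit than the paper's, which simply asserts that the conclusion ``is not hard to show''.
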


\section{On Threshold Budgets}
Our goal is to develop succinct winning strategies. Towards this goal, in this section, we define threshold budgets, identify their mathematical structure, and deduce succinct {\em budget agnostic} strategies from this structure.

%Fix an energy game $\G = \zug{V, E, k, \weightf}$. 
Recall that $\energy(v, B)$ is the necessary and sufficient initial energy for \Pres to win from a configuration $\zug{v, B}$. 
Further recall that $\energy(v, B)$ is (weakly) monotonically increasing as $B$ decreases (a lower initial budget, requires a higher initial energy). 
The {\em threshold budget} is the lowest budget $B$ such that $\energy(v, B)$ is finite.

%We differentiate between configurations from which there exists a finite initial energy that allows \Pres to win and those that \Cons wins with every initial energy. We focus on the {\em threshold budget}, which is the first 

\begin{definition}[Threshold budgets]\label{def: presthresholdbudget}
Define \(\preThresh: V \rightarrow \budgetset\) such that (1)~if $\energy(v, B) = \infty$ for all $B \in [k]$, then $\preThresh(v) = k+1$ and (2)~$\preThresh(v) = \min\set{B < k+1: \mu(v, B) < \infty}$ otherwise. 
\end{definition}

\begin{remark}[Thresholds in mean-payoff games]
Thm.~\ref{thm:MP-energy} implies that thresholds directly apply to mean-payoff games; $\Th(v)$ is a necessary and sufficient budget for \Max to guarantee a non-negative payoff. 
\end{remark}

%Stated differently, for a configuration $c = \zug{v, B}$, if $B \geq \preThresh(v)$, then there exists an initial energy with which \Pres wins from $c$, and if $B< \preThresh(v)$, then \Cons wins from $c$ with every initial energy. In this section, we show that when a player wins in this way, they have a succinctly-represented winning strategy, called a {\em budget-agnostic} winning strategy: it always chooses one of two bids in a vertex, depending only on the tie-breaking status.

%\subsection{Thresholds satisfy the average property}\label{sec: averageprop}

In reachability continuous-bidding games~\cite{LLPU96}, the threshold of a vertex $v$ is the average of two of its neighbors, $v^+$ and $v^-$, which respectively denote the neighbor with the maximal and minimal threshold. Below, we describe a discrete version of this {\em average property}~\cite{DP10}.

\begin{definition}[Average property]\label{def: averageprop}
		Consider a graph $\zug{V, E}$ and $k \in \Nat$. A function \(T: V \rightarrow \budgetset\) satisfies the average property if 
		\begin{align*}
			T(v) = \floorvT{\sumTf{T}} + \eps 
		\end{align*}
		where \(\vplus{T} = \arg\max_{v' \in \neighbor{v}} T(v')\), \(\vminus{T} = \arg\min_{v' \in \neighbor{v}} T(v')\), and (1)~if $\sumTf{T}$  is even and $T(\vminus{T}) \in \Nat$, then $\eps = 0$, (2)~if $\sumTf{T}$  is odd and $T(\vminus{T}) \in \Natstr$, then $\eps = 1$, and (3)~otherwise $\eps = *$. 
		We often drop $ T $ from the notation of $ \vplus{T} $ and $ \vminus{T} $.
\stam{%short
		\begin{align*}
			\eps = 
			\begin{cases}
				0 &\text{~if~} \sumTf{T} \text{~is even and~} T(\vminus{T}) \in \Nat\\
				1 &\text{~if~} \sumTf{T} \text{~is odd and~} T(\vminus{T}) \in \Natstr\\
				* &\text{~otherwise}
			\end{cases}
		\end{align*}
		}
	\end{definition}

The main result in this section states that thresholds in energy games satisfy the average property. 
Our proof technique is very different from previous works; both for reachability~\cite{DP10} and parity games~\cite{AS25}, the proof that thresholds satisfy the average property is a byproduct of a value-iteration algorithm. Our value-iteration algorithm (Sec.~\ref{sec:VI}) focuses on the energy threshold and does not immediately imply the average property for the budget thresholds. Instead, in App.~\ref{app:preThreshAverage} we proceed as follows. For each $v \in V$, we define $f(v) \coloneqq \floorvT{\sumTf{\preThresh}} + \eps$ as in Def.~\ref{def: averageprop}. We show that $f \equiv \preThresh$ by showing that for every vertex $v$, we have $\energy(v, f(v)) < \infty$ and $\energy(v, \predb{f(v)}) = \infty$. The proof proceeds by a careful case-by-case analysis.		
		
		\begin{restatable}{theorem}{preThreshAverage}\label{thm:preThreshAverage}
			Consider an energy game \(\G = \zug{V, E, k, \weightf, \text{energy}}\). The threshold budget \(\preThresh: V \rightarrow \budgetset\) satisfies the average property. 
		\end{restatable}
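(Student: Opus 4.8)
The plan is to establish the equivalent assertion $\preThresh \equiv f$, where $f(v) \coloneqq \floorvT{\sumTf{\preThresh}} + \eps$ with $\eps$ as in Def.~\ref{def: averageprop}; this equality is precisely the average property. By Def.~\ref{def: presthresholdbudget} and Thm.~\ref{them: energyconicideswithmulimit}, $\preThresh(v)$ is the $\prec$-least budget $B$ with $\energy(v,B) < \infty$, and $\energy(v,B) = \infty$ whenever $B \prec \preThresh(v)$. Hence it suffices to prove, for every $v$: (i)~$\energy(v, f(v)) < \infty$, which gives $\preThresh(v) \preceq f(v)$; and (ii)~when $f(v) \succ 0$, $\energy(v, \predb{f(v)}) = \infty$, which gives $\preThresh(v) \succeq f(v)$ (when $f(v) = 0$, this bound is immediate as $0$ is the least budget). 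Together, (i) and (ii) yield $\preThresh(v) = f(v)$.

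The core of (i) is an arithmetic claim: from budget $f(v)$ there is a \Pres bid $b^+$ such that $f(v) \ominus b^+ \succeq \preThresh(\vminus)$ and $f(v) \oplus \trump{f(v), b^+} \succeq \preThresh(\vplus)$. Granting it, consider \Pres's one-step strategy at $\zug{v, f(v)}$ that bids $b^+$ and, upon winning, moves to $\vminus$. If she wins the bidding, her new budget is $f(v) \ominus b^+ \succeq \preThresh(\vminus)$; if she loses, the opponent's minimal winning bid is $\trump{f(v), b^+}$ — and larger \Cons bids only increase \Pres's budget, since $b \mapsto f(v) \oplus b$ is monotone, while if \Cons cannot afford $\trump{f(v), b^+}$ then \Pres wins outright — so her new budget is at least $f(v) \oplus \trump{f(v), b^+} \succeq \preThresh(\vplus) \succeq \preThresh(v')$ for whichever neighbour $v'$ \Cons moves to. In every case the game reaches a configuration $\zug{v', B'}$ with $B' \succeq \preThresh(v')$, so by monotonicity of $\energy$ in the budget (Cor.~\ref{corr: energyinherits}), $\energy(v', B') \leq \energy(v', \preThresh(v')) < \infty$. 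Let $e_0$ be the maximum of $0$ and of $\energy(v', B') - \weightf(v, v')$ over the finitely many configurations $\zug{v', B'}$ reachable from $\zug{v, f(v)}$ in one step under this strategy; $e_0$ is finite. From energy $e_0$, \Pres first reaches such a $\zug{v', B'}$ with accumulated energy at least $\energy(v', B')$ and then follows a \Pres winning strategy from there (Lem.~\ref{lemma: preswinsinfinite}); thus $\energy(v, f(v)) \leq e_0 < \infty$.

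The core of (ii) is the dual claim: for every \Pres bid $b$ available at $\zug{v, \predb{f(v)}}$, either (a)~$\predb{f(v)} \ominus b \prec \preThresh(\vminus)$, or (b)~$\predb{f(v)} \oplus \trump{\predb{f(v)}, b} \prec \preThresh(\vplus)$ — in which case this quantity is also $\preceq k^*$, so \Cons can afford the bid $\trump{\predb{f(v)}, b}$. Granting it, \Cons responds as follows: in case~(a) she bids $0$, so wherever \Pres then moves, say to $v'$, her budget $\predb{f(v)} \ominus b \prec \preThresh(\vminus) \preceq \preThresh(v')$; in case~(b) she bids $\trump{\predb{f(v)}, b}$ and, on winning, moves to $\vplus$, leaving \Pres with budget $\prec \preThresh(\vplus)$. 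Either way, the next configuration $\zug{v', B'}$ has $B' \prec \preThresh(v')$, hence $\energy(v', B') = \infty$; by Def.~\ref{def:energy-thresh}, \Cons can consume any amount of energy from $\zug{v', B'}$. Therefore, for every initial energy $M$, \Cons $M$-wins from $\zug{v, \predb{f(v)}}$ by making this one move and then playing a strategy that consumes the remaining target $M + \weightf(v, v')$ (already consumed if this is negative); hence $\energy(v, \predb{f(v)}) = \infty$.

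What remains — and where essentially all the difficulty lies — is proving the two arithmetic claims. Each splits according to the three cases defining $\eps$ in Def.~\ref{def: averageprop} (parity of $\sumTf{\preThresh}$ and whether $\preThresh(\vminus) \in \Nat$), and further according to the advantage status of the budget $f(v)$, respectively $\predb{f(v)}$, and of \Pres's bid; throughout one must carefully track the off-by-one corrections in the $\oplus$, $\ominus$, and $\trump$ rules. Two boundary situations are absorbed into this analysis: if $\preThresh(\vminus) = k+1$ then every neighbour of $v$ is unwinnable with any budget, so $\preThresh(v) = k+1 = f(v)$; and when a self-loop realizes $\vplus$ or $\vminus$ the argument is unchanged — the displayed inequalities still hold, and in fact force $\preThresh(\vminus) = \preThresh(v) = \preThresh(\vplus)$ there. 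Unlike the reachability and parity settings of~\cite{DP10,AS25}, where the average property fell out of the value-iteration, here it must be proved by this direct, configuration-level argument.
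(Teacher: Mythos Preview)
Your plan matches the paper's proof: define $f(v) \coloneqq \floorvT{\sumTf{\preThresh}} + \eps$ and show $\preThresh \equiv f$ by proving $\energy(v, f(v)) < \infty$ and $\energy(v, \predb{f(v)}) = \infty$, each via the same bid-arithmetic. The paper carries this out through the fixed-point equation $\mu(v,B) = \min_{b \leq B} \enxt{}(v,B,b)$, exhibiting a bid that makes both $\ewin{}$ and $\elose{}$ finite at $f(v)$ and showing every bid makes at least one of them infinite at $\predb{f(v)}$; your one-step strategic phrasing is the same content. (Minor: the case split is four-way, not three --- the case $\eps=*$ subdivides according to whether $\preThresh(\vminus)\in\Nat$.)

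One genuine caution on part~(ii): you have \Cons ``respond'' to \Pres's bid $b$, bidding $0$ in case~(a) and $\trump{\predb{f(v)},b}$ in case~(b). But bids are simultaneous, so \Cons cannot condition her bid on $b$; what you have written is not a \Cons strategy. Your argument does establish that for every \Pres first action there is a winning \Cons reply, i.e.\ that \Pres has no winning strategy from $\zug{v,\predb{f(v)}}$. To conclude $\energy(v,\predb{f(v)})=\infty$ you must then invoke determinacy (available via Thm.~\ref{them: energyconicideswithmulimit}), or --- as the paper does --- work directly with the fixed-point equation, whose $\min_b$ already lets \Cons see $b$: your dichotomy (a)/(b) is literally ``$\ewin{}=\infty$ or $\elose{}=\infty$'', hence $\enxt{}(v,\predb{f(v)},b)=\infty$ for every $b$ and so $\mu(v,\predb{f(v)})=\infty$. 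Either fix is one line, but it should be stated.
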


%%%%%%%%%%%%%%%%%%%%%%%%%%%		
%\subsection{A budget-agnostic partial strategy}
\paragraph*{A budget-agnostic partial strategy}
We seek succinct winning strategies, which intuitively choose bids ignoring excess budget. 

\begin{definition}[Budget agnostic strategy]\label{def: budget-agnostic}
Define $ \relativebud: V \times [k] \rightarrow [k] $ that ``trims'' excess budget: for $\zug{v, B}$ with $B \geq \Th(v)$, define $\relativebud(v, B)$ to be whichever of $\preThresh(v)$ or $\succb{\preThresh(v)}$ agrees with $B$ on the tie-breaking advantage.
	A winning strategy $f$ is budget agnostic if for every $v \in V$ and $B \geq \Th(v)$, and every two histories $h_1,h_2$ that end in $\zug{v, B}$, we have $\zug{b, u_1} = f(h_1)$ and $\zug{b, u_2} = f(h_2)$.
%OLD	if there exists a set \(A(v) \subseteq \neighbor{v}\) for every vertex \(v\) such that for every two winning configurations \(c_1 = \zug{v, B_1}\) and \(c_2 = \zug{v, B_2}\), where \(B_1\) and \(B_2\) agree on the status of the tie-breaking advantage, it chooses the same bid at \(c_1\) and \(c_2\), and the next vertex from \(A(v)\).  Formally, denoting \(f(c_1) = \zug{b_1, u_1}\) and \(f(c_2) = \zug{b_2, u_2}\), for a budget-agnostic \(f\), we have \(b_1 = b_2\) and \(u_1, u_2 \in A(v)\).
\end{definition}

%\begin{definition}[Budget agnostic strategy]
%A winning strategy $f: V \times [k] \rightarrow V \times [k]$ is {\em budget agnostic} if in every two winning configurations $c = \zug{v, B}$ and $c' = \zug{v, B'}$ with $\Th(v) \leq B < B'$ and such that $B$ and $B'$ agree on the status of the tie-breaking advantage, it chooses the same action, namely $f(c) = f(c')$.
%\end{definition}
In fact, in Rem.~\ref{rem: trubudgetagnostic}, we will show existence of winning {\em positional} budget agnostic strategies.

We proceed as follows. A function $T$ that satisfies the average property gives rise to a {\em partial} budget-agnostic strategy $f_T$; namely it assigns to each configuration $\zug{v, B}$ a pair $\zug{b, S}$, where $b$ is a bid and $S \subseteq V$ is a set of {\em allowed vertices} to move to upon winning the bidding. %Moreover, the choice is budget agnostic: for every $B'$ that agrees with $B$ on the status of the advantage, we have $f_T(\zug{v, B}) = f_T(\zug{v, B'})$. 
Intuitively, a strategy that {\em agrees} with $f_T$ maintains a {\em budget invariant} (see Lem.~\ref{lem:Talgebra} below).
In subsequent sections, we will construct winning budget-agnostic strategies $\sagn$ and $\tagn$ for \Pres and \Cons, respectively, that agree with a partial strategy $f_T$, namely the strategies choose the same bid as $f_T$ and choose one of the allowed vertex, thus they maintain a budget invariant. We define the bids and allowed vertices below. 

\begin{definition}[Bid choice]\label{def: optbid}
Consider a function $T: V \rightarrow [k]$ that satisfies the average property. We define a bid \(\optbid{T}(v, B)\) in two steps. First, let 
\begin{align*}
b^T(v) = 
\begin{cases}
	T(v) \ominus T(\vminus) &\text{~if~} T(\vminus) \in \Nat\\
	T(v) \ominus \left(\absolut{T(\vminus)} +1 \right) &\text{~otherwise}
\end{cases}
\end{align*}
Second, we define $\optbid{T}(v, B)= b^T(v)$ when both $b^T(v)$ and $B$ belong to either $\Nat$ or $\Natstro$, and $\succb{b^T(v)}$ otherwise. 
\end{definition}
Intuitively, \Pres ``attempts'' to bid $b^T(v)$ at $\zug{v, B}$. If $b^T(v)$ requires the advantage and $B$ does not have it, \Pres bids $\absolut{b^T(v)} + 1 \in \Nat$, which does not require the advantage. 

Next, we define the allowed vertices as the neighbors of $v$ that minimize $T$.

%choose a vertex to proceed to upon winning the bidding. Our definition refines the definition in~\cite{AS25}. There, intuitively, proceeding from $v$ to a neighbor $u$ is an {\em allowed} if $u$ attains the minimal threshold among the neighbors of $v$. That is, \Pres moves to vertices that have the least budget requirement. Formally, 

		\begin{definition}[Allowed vertices]\label{def: allowedvertices}
For a function $T$ that satisfies the average property, the set of \emph{allowed vertices} at vertex $v$ are: if $T(\vminus) \in \Nat$, then $\Allowd{T}(v) = \{u \in \neighbor{v}: T(u) = T(\vminus)\}$, and otherwise $\Allowd{T}(v) = \{u \in \neighbor{v}: T(u) \leq \succb{T(\vminus)}\}$.
\stam{%short
			\begin{align*}
				\Allowd{T}(v) = 
				\begin{cases}
					\{u \in \neighbor{v}: T(u) = T(\vminus)\} &\text{~if~} T(\vminus) \in \Nat\\
					\{u \in \neighbor{v}: T(u) \leq \succb{T(\vminus)}\} &\text{~otherwise}
				\end{cases}
			\end{align*}
			}
		\end{definition}

Consider a configuration $\zug{v, B}$ with $B \geq T(v)$. The following lemma shows that choosing an action $\zug{b, u}$ with $b = \optbid{T}(v, B)$ and $u \in \Allowd{T}(v)$, maintains a budget invariant: no matter how the opponent acts, in the next configuration $\zug{v', B'}$, we have $B' \geq T(v')$. In particular, this implies that a $\optbid{T}(v, B)$ is a legal bid, i.e., $\optbid{T}(v, B) \leq B$. 

%$\sagn$ preserves an invariant on the budget: suppose that the initial configuration is $\zug{v, B}$ with $B \geq \Th(v)$, \Pres follows $\sagn$, and \Cons plays according to some strategy, then when the game reaches a configuration $\zug{v', B'}$, we have $B' \geq \Th(v')$. In particular, this implies that $\sagn$ chooses legal bids.

\begin{lemma}\label{lem:Talgebra}\cite{AS25}
Let \(T\) be a function that satisfies the average property, and \(v \in V\). Then, \(T(v) \ominus b^T(v) \geq T(\vminus)\) and \(T(v) \oplus \left(\succb{b^T(v)}\right) \geq T(\vplus)\).
\end{lemma}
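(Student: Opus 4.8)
The plan is to verify both inequalities directly from the definitions of $b^T(v)$ and $\eps$, splitting into cases according to the tie-breaking status of $T(\vminus)$. Recall $T(v) = \floorvT{\sumTf{T}} + \eps$, where $\sumTf{T} = \absolut{T(\vplus)} + \absolut{T(\vminus)}$, and $\eps$ depends on the parity of $\sumTf{T}$ and the advantage status of $T(\vminus)$ as in Def.~\ref{def: averageprop}. I will first treat the case $T(\vminus) \in \Nat$, where $b^T(v) = T(v) \ominus T(\vminus)$, and then the case $T(\vminus) \in \Natstro$, where $b^T(v) = T(v) \ominus (\absolut{T(\vminus)} + 1)$. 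In each case I compute $T(v) \ominus b^T(v)$ and $T(v) \oplus \succb{b^T(v)}$ explicitly using the $\oplus$/$\ominus$ operator definitions, and compare with $T(\vminus)$ and $T(\vplus)$ respectively.

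For the first inequality, $T(v) \ominus b^T(v) \geq T(\vminus)$: when $T(\vminus) \in \Nat$, by construction $b^T(v) = T(v) \ominus T(\vminus)$, so applying the operator identities ($x^* \ominus y = (x-y)^*$, $x \ominus y = x - y$) one gets $T(v) \ominus b^T(v)$ equals exactly $T(\vminus)$ (matching integral parts and advantage status), giving equality. When $T(\vminus) \in \Natstro$, one has $b^T(v) = T(v) \ominus (\absolut{T(\vminus)}+1)$, and computing $T(v) \ominus b^T(v)$ yields a value with integral part $\absolut{T(\vminus)} + 1$ or $\absolut{T(\vminus)}$ depending on whether $T(v)$ carries the advantage; since $T(\vminus) = \absolut{T(\vminus)}^*$, in every subcase the result is $\succb{T(\vminus)} \succeq T(\vminus)$, so the inequality (in fact $\succeq T(\vminus)$) holds. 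For the second inequality, $T(v) \oplus \succb{b^T(v)} \geq T(\vplus)$: I use the average-property equation $2\absolut{T(v)} + [\eps \in \{1,*\}] \geq \absolut{T(\vplus)} + \absolut{T(\vminus)}$, namely $\absolut{T(v)} = \floor{(\sumTf{T})/2} + [\eps = 1]$. Since $\succb{b^T(v)}$ is roughly $T(v) \ominus T(\vminus)$ bumped up by one step, $\absolut{T(v)} + \absolut{\succb{b^T(v)}}$ is approximately $2\absolut{T(v)} - \absolut{T(\vminus)} + 1 \geq \absolut{T(\vplus)}$, and a careful bookkeeping of the $+1$'s coming from the parity of $\sumTf{T}$ and the advantage carries in the operator definitions closes the bound, with the advantage status of the resulting sum handled by checking it dominates the advantage status of $T(\vplus)$.

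The main obstacle is the bookkeeping of the advantage bit ($*$) through the $\oplus$ and $\ominus$ operators, since these operators behave non-uniformly (e.g. $x^* \oplus y^* = x+y+1$ produces an extra unit and loses the advantage, while $x \ominus y^* = (x-y-1)^*$ produces a unit decrement and gains the advantage). The parity condition in the definition of $\eps$ is precisely calibrated so that these carries line up, so the delicate part is matching the three cases of $\eps$ against the subcases of the operator arithmetic. Since this lemma is attributed to~\cite{AS25}, the cleanest route is to invoke it as an already-established fact about functions satisfying the discrete average property; the verification above is purely arithmetic over $\Natstr$ and contains no game-theoretic content, so it carries over verbatim.
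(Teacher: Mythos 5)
Your proposal is correct, and it is worth noting that the paper itself contains no proof of this lemma: it is imported from \cite{AS25}, and the only related in-paper statement is Observation~\ref{obs: Talgebra}, which records (without proof) precisely the equalities your case analysis produces, namely $T(v) \ominus b^T(v) = T(\vminus)$ when $T(\vminus) \in \Nat$, $T(v) \ominus b^T(v) = \absolut{T(\vminus)} + 1 = \succb{T(\vminus)}$ when $T(\vminus) \in \Natstro$, and $T(v) \oplus \left(\succb{b^T(v)}\right) = \absolut{T(\vplus)}^*$ in all cases. So your direct verification --- splitting on the advantage status of $T(\vminus)$ and the parity of $\sumTf{T}$, i.e.\ on the three values of $\eps$, and pushing the $\oplus$/$\ominus$ arithmetic through --- is exactly the argument that the citation stands in for, and it in fact yields the stronger exact form rather than just the inequalities. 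One small inaccuracy in your sketch: in the $T(\vminus) \in \Natstro$ case, $T(v) \ominus b^T(v)$ comes out with integral part $\absolut{T(\vminus)}+1$ and \emph{without} the advantage in both subcases (when $\eps = *$ one applies $x^* \ominus y^* = x-y$, and when $\eps = 1$ both operands are plain), not ``$\absolut{T(\vminus)}$ or $\absolut{T(\vminus)}+1$ depending on whether $T(v)$ carries the advantage''; your stated conclusion $\succb{T(\vminus)} \succeq T(\vminus)$ is nevertheless the right one, so nothing breaks. Finally, clause~(2) of Def.~\ref{def: averageprop} should be read with $T(\vminus{T}) \in \Natstro$ (the $\Natstr$ there is evidently a typo), which is how your case split implicitly treats it.
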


\section{Constructing a Budget Agnostic Winning Strategy for \Pres}
\label{sec:Pres-bud-agn}
%Recall that, by definition, \(\sVI\) is {\em positional} in the configurations, i.e., for every two histories \(h\) and \(h'\) ending at a configuration \(\zug{v, B}\), we have \(\sVI{\sigma}(h) = \sVI{\sigma}(h')\). 
In this section we construct a budget-agnostic strategy $\sagn$ for \Pres. Recall that $\sVI$ is the strategy that is constructed by the value-iteration algorithm (see Def.~\ref{def:sVI}). In qualitative games~\cite{DP10,AS25}, the value-iteration algorithm outputs a budget-agnostic strategy. However, the following example shows that $\sVI$ is not budget agnostic.

\begin{example}
\label{ex:example2}
Consider the energy discrete-bidding game depicted in Fig.~\ref{fig:example2}. Set \(k = 5\). 
%OLD We show that $\sVI$ does not bid budget agnostically. We claim that $\preThresh(v_0) = 1$ and $\energy(v_0, 1) = 2$. Consider vertex $v_2$. Since $\Th(t) = 0$, when \Pres has the advantage, $\sagn$ bids $0^*$, and otherwise $1$. 
Suppose that the game starts from $\zug{v_1, 1}$ with initial energy $2$, and we describe a play that is consistent with $\sVI$ when \Cons responds optimally:
%OLD $\zug{v_0, 1} \xrightarrow{0,0} \zug{v_1, 1} \xrightarrow{0,0} \zug{v_2,1} \xrightarrow{1, 1^*} \zug{v_0, 2^*} \xrightarrow{0,0} \zug{v_1, 2^*} \xrightarrow{0,0} \zug{v_2, 2^*} \xrightarrow{0^*,1} \zug{v_0, 3^*}\xrightarrow{0,0} \zug{v_1, 3^*} \xrightarrow{0,0} \zug{v_2,3^*}$. 
$\zug{v_1, 1} \xrightarrow{0,0} \zug{v_2,1} \xrightarrow{1, 1^*} \zug{v_1, 2^*}  \xrightarrow{0,0} \zug{v_2, 2^*} \xrightarrow{0^*,1} \zug{v_1, 3^*} \xrightarrow{0,0} \zug{v_2,3^*}\xrightarrow{3^*,2} \zug{t,0}$. 
Intuitively, observe that traversing the cycle $v_1,v_2,v_1$ causes both a decrease in energy and an increase to \Pres's budget. 
Note that at $v_2$, the bids are $0^*$ until the last visit in which $\sVI$ bids $3^*$. Since \Cons budget is $2$, this forces the game to $t$, where \Pres wins. 

The budget agnostic strategy $\sagn$ that we construct will always bid $0^*$ at $v_2$. It too is winning, but requires traversing the cycle twice more. Indeed, after two more traversals, we reach configuration $\zug{v_2, 5^*}$ in which \Cons's budget is $0$, and \Pres's bid of $0^*$ is winning. Finally, we note that while an initial energy of $2$ suffices for $\sVI$ to win from $\zug{v_1, 1}$, $\sagn$ requires an initial energy of $5$.
\stam{%OLD
	Consider the mean-payoff discrete-bidding game that is depicted in Fig.~\ref{fig:example2}, and a total budget \(k = 5\). 
	We claim that $ \preThresh(v_0) = 1 $, and $ \energy(u, 1) = 2 $. 
	Note that, if the initial energy is less than \(2\), then \Pres cannot afford to take the cycle $ c = v_0 \rightarrow v_1 \rightarrow v_2  \rightarrow v_0$ more than once. 
	On the other hand, \Cons with budget $ 4^* $ can, in fact, force the cycle at least twice, thus showing $ \energy(u, 1) \geq 2 $. 
	However, a \Pres strategy $ \sigma $ exists which ensures that from $ \zug{u, 1} $, if $ c $ is taken twice then when the token reaches again at $ v_2 $, she has a budget of $ 3 $.
	From $ \zug{v_2, 3} $, she plays $ \zug{3, t} $, and wins the game. 
	This strategy requires the initial energy $ 2 $ to keep the energy level always non-negative. 
	}
\end{example}
%\begin{figure}
%	\centering
%	\begin{tikzpicture}
%		\draw (0,0) node[rond5] (0) {$v_0$};
%		\draw (4,0) node[rond5] (1) {$v_1$};
%		\draw (8,0) node[rond5] (2) {$v_2$};
%		\draw (12, 0) node[rect] (3) {$ t $};
%		
%		\draw (0) edge[->, bend left = 7.5] node[above, pos= 0.5] {$ 0 $} (1);
%		\draw (1) edge[->, bend left = 7.5] node[sloped, below, pos = 0.5] {$ 0 $}(0);
%		\draw (1) edge[->] node[sloped, above, pos= 0.5] {$ 2 $} (2);
%		\draw (2) edge[->] node[sloped, above, pos= 0.5] {$ 2 $} (3);
%		\draw (2) edge[->, bend right = 16.5] node[sloped, above, pos= 0.5] {$ -3 $} (0);
%%		\draw (0) edge[->, bend right = 30] node[sloped, above, pos = 0.5] {$ -2 $}(2);
%		
%		\draw (3) edge[->, loop right] node[above] {$ 0 $} (3);
%%		\draw (2) edge[->, loop left] node[above] {$ 0 $} (0);
%	\end{tikzpicture}
%	\caption{A mean-payoff discrete bidding game where $ \sVI{\sigma} $ and $ \sagn{\sigma}$ sometimes act differently.}\label{fig:example2}% at some configuration}\label{fig:example2}
%\end{figure}

\begin{figure}
	\centering
	\begin{tikzpicture}
%		\draw (0,0) node[rond5] (0) {$v_0$};
		\draw (4,0) node[rond5] (1) {$v_1$};
		\draw (8,0) node[rond5] (2) {$v_2$};
		\draw (12, 0) node[rect] (3) {$ t $};
		
%		\draw (0) edge[->, bend left = 7.5] node[above, pos= 0.5] {$ 0 $} (1);
%		\draw (1) edge[->, bend left = 7.5] node[sloped, below, pos = 0.5] {$ 0 $}(0);
		\draw (1) edge[->] node[sloped, below, pos= 0.5] {$ 2 $} (2);
		\draw (2) edge[->] node[sloped, above, pos= 0.5] {$ 2 $} (3);
		\draw (2) edge[->, bend right = 18.5] node[sloped, above, pos= 0.5] {$ -3 $} (1);
		%		\draw (0) edge[->, bend right = 30] node[sloped, above, pos = 0.5] {$ -2 $}(2);
		
		\draw (3) edge[->, loop right] node[above] {$ 0 $} (3);
		%		\draw (2) edge[->, loop left] node[above] {$ 0 $} (0);
	\end{tikzpicture}
	\caption{A mean-payoff discrete bidding game where $ \sVI{\sigma} $ and $ \sagn{\sigma}$ sometimes act differently.}\label{fig:example2}% at some configuration}\label{fig:example2}
\end{figure}

In App.~\ref{app: acts-coincide}, we prove the following key lemma.
It identifies configurations in which the bids chosen by $\sVI$ coincide with the bids that are deduced from the average property (Def.~\ref{def: optbid}).

\begin{restatable}{lemma}{actscoincide}
\label{lem:acts-coincide}
%\label{lem:bids-coincide}
Let \(\zug{v, B}\) with \(B \in \set{\preThresh(v), \succb{\preThresh(v)}}\). 
Then, $ \sVI(v,B) = \zug{\optbid{\preThresh}(v, B), u}$ for some $ u \in \Allowd{\preThresh}(v) $. 
\end{restatable}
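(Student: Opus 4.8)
The plan is to show that at a configuration $\zug{v,B}$ with $B \in \set{\preThresh(v), \succb{\preThresh(v)}}$, the bid $b' $ and destination chosen by $\sVI$ (i.e.\ the minimizers of Eq.~\ref{def:O} at the fixed point $\mu$) must equal $\optbid{\preThresh}(v,B)$ and some allowed vertex. The key tool is Thm.~\ref{thm:preThreshAverage}: $\preThresh$ satisfies the average property, so Lem.~\ref{lem:Talgebra} applies with $T = \preThresh$, giving $\preThresh(v) \ominus b^{\preThresh}(v) \geq \preThresh(\vminus)$ and $\preThresh(v) \oplus \succb{b^{\preThresh}(v)} \geq \preThresh(\vplus)$. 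First I would record the direct consequences: with $b = \optbid{\preThresh}(v,B)$ and $B \in \set{\preThresh(v),\succb{\preThresh(v)}}$, on winning the next budget is $B \ominus b \geq \preThresh(\vminus) \geq \preThresh(u)$ for any allowed $u$, so (by Cor.~\ref{corr: energyinherits} and Thm.~\ref{them: energyconicideswithmulimit}) $\mu(u, B\ominus b) < \infty$; symmetrically, since $\trump(B,b)$ is at most one more than $b$ up to the tie-breaking status, on losing the next budget is at least $\preThresh(\vplus)$, hence $\mu(v', B \oplus \trump(B,b)) < \infty$ for every neighbor $v'$ (when $\Cons$ can afford the bid at all). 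Thus the bid $b = \optbid{\preThresh}(v,B)$ witnesses $\enxt{}(v,B,b) < \infty$, and moving to an allowed vertex attains the minimum in $\ewin{}$.

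Next I would argue optimality in two directions. The ``upper bound'' direction: because $\mu(v,B) = \energy(v,B) < \infty$ (as $B \geq \preThresh(v)$), $\sVI$'s bid $b'$ must also have $\enxt{}(v,B,b') < \infty$, and among such bids the value-iteration picks the one minimizing $\enxt{}$. The ``lower bound''/uniqueness direction — this is where the real work sits — is to show no bid smaller than $\optbid{\preThresh}(v,B)$ can keep $\elose{}$ finite, and no bid larger is needed. A smaller bid $b'' \prec b^{\preThresh}(v)$ would let $\Cons$ win the bidding while paying less, so on losing, $\Pres$'s budget $B \oplus \trump(B,b'')$ could drop below $\preThresh(\vplus)$ — more precisely, I must show that with such a $b''$ the adversary can force the token to $\vplus$ (or to a neighbor with threshold exceeding the resulting budget), so $\mu$ at the successor is $\infty$, making $\elose{}(v,B,b'') = \infty$ and hence $\enxt{}(v,B,b'') = \infty$; this uses the exact arithmetic of $\oplus,\ominus,\trump$ and the average property's $\eps$-cases. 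Conversely a strictly larger bid than needed cannot strictly decrease $\enxt{}$ (it only shrinks $\ewin{}$'s budget $B\ominus b''$, which is already finite at the allowed vertices, while not being required for $\elose{}$), so value iteration's tie-breaking toward the minimal such bid selects exactly $\optbid{\preThresh}(v,B)$. For the destination, $u'$ minimizes $\max\set{\mu(u, B\ominus b')-\weightf(v,u),0}$; I would argue that any $u \notin \Allowd{\preThresh}(v)$ has $\preThresh(u) > B \ominus b'$, forcing $\mu(u, B\ominus b') = \infty$, so $u'$ must be allowed.

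The main obstacle I anticipate is the careful case analysis by tie-breaking status of $B$, $b^{\preThresh}(v)$, $\preThresh(\vminus)$, and $\preThresh(\vplus)$ — the three $\eps$-cases of the average property interact with the definitions of $\optbid{}$ (which may round up to $\succb{b^{\preThresh}(v)}$ when $B$ lacks the advantage) and of $\trump$ (which depends on whether $\Pres$ holds and uses the advantage). Getting the inequalities $B \ominus \optbid{\preThresh}(v,B) \geq \preThresh(\vminus)$ and $B \oplus \trump(B, \optbid{\preThresh}(v,B)) \geq \preThresh(\vplus)$ to hold in every sub-case, and showing the adversary can actually realize the ``losing'' scenario at $\vplus$ when $\Pres$ under-bids, is the delicate part; I expect this is exactly the ``careful case-by-case analysis'' deferred to App.~\ref{app: acts-coincide}, leaning on Lem.~\ref{lem:Talgebra} as the algebraic backbone and on the monotonicity Cor.~\ref{corr: energyinherits} to convert threshold inequalities into finiteness of $\mu$.
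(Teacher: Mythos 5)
Your overall route matches the paper's: you invoke Thm.~\ref{thm:preThreshAverage} so that Lem.~\ref{lem:Talgebra} applies to $\preThresh$, and then use monotonicity of $\energy \equiv \mu$ (Cor.~\ref{corr: energyinherits}, Thm.~\ref{them: energyconicideswithmulimit}) to argue that any bid below $\optbid{\preThresh}(v,B)$ leaves \Pres, upon losing, with budget at most $\predb{\preThresh(\vplus)}$, making $\elose{}$ infinite, and that any destination outside $\Allowd{\preThresh}(v)$ has $\mu(u, B \ominus b') = \infty$, so the minimizer in Def.~\ref{def:sVI} must be an allowed vertex. Those two parts coincide with App.~\ref{app: acts-coincide}.

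The genuine gap is in your exclusion of bids \emph{larger} than $\optbid{\preThresh}(v,B)$. You assert that a strictly larger bid ``cannot strictly decrease $\enxt{}$'' because it ``only shrinks $\ewin{}$'s budget''; this overlooks that a larger bid also increases \Pres's budget upon losing ($B \oplus \trump(B,b)$ grows with $b$), so $\elose{}$ weakly \emph{decreases}, and monotonicity of $\enxt{}$ in the bid does not follow from the reason you give. What the paper actually proves, and what your plan is missing, are two sharper facts: (1) for every bid $b$ strictly greater than $\succb{\optbid{\preThresh}(v,B)}$ one has $B \ominus b \preceq \predb{\preThresh(\vminus)} \preceq \predb{\preThresh(u)}$ for \emph{every} neighbor $u$, so $\ewin(v,B,b) = \infty$ and such bids are strictly worse regardless of $\elose{}$; and (2) for the one remaining candidate $b = \succb{\optbid{\preThresh}(v,B)}$, which is feasible only when $B \in \Natstro$, the trump values coincide, $\trump(B, \optbid{\preThresh}(v,B)) = \trump(B, \succb{\optbid{\preThresh}(v,B)})$, hence $\elose{}$ is unchanged while $\ewin{}$ weakly increases, so $\optbid{\preThresh}(v,B)$ attains the minimum. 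Your sketch neither isolates the case $b = \succb{\optbid{\preThresh}(v,B)}$ (where the advantage-status argument is indispensable) nor supplies the $\ewin{} = \infty$ argument for the rest, and the appeal to ``value iteration's tie-breaking toward the minimal such bid'' is not licensed by Def.~\ref{def:sVI}, which only takes an $\arg\min$. Without (1)--(2) you have not shown that $\sVI$ bids exactly $\optbid{\preThresh}(v,B)$, which is the content of the lemma.
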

\stam{
\begin{proof}[Proof sketch]
	To show that $ \sVI $ bids $ \optbid{\preThresh}(v, B) $ at these configurations, we show contradiction to all other possible bids with the properties of $ \sVI $, described in Def.~\ref{def:sVI}. 
	Similar is the argument to show $ \sVI $ must choose the next vertex from $ \Allowd{\preThresh}(v) $. (See proof in App.~\ref{app: acts-coincide}).
%	The whole proof can be found in App.~\ref{app: acts-coincide}
\end{proof}
}
We define $\sagn$ as follows and note that it is budget agnostic by construction. Recall that $\relativebud(v, B)$ is whichever of $\preThresh(v)$ or $\succb{\preThresh(v)}$ agrees with $B$ on the tie-breaking advantage. 

\begin{definition}\label{def:sagn}
For a configuration $ \zug{v, B} $ with $ B \geq \preThresh(v) $,	 define $ \sagn{\sigma}(v, B) = \sVI(v, \relativebud(v, B)) $. 
\end{definition}

Intuitively, $\sagn$ enjoys two features. First, since it follows $\sVI$, it maintains an energy invariant as in the proof of Lem.~\ref{lemma: preswinsinfinite}. Second, Lem.~\ref{lem:acts-coincide} means that it maintains a budget invariant as in Lem.~\ref{lem:Talgebra}. But Lem.~\ref{lem:acts-coincide} does not apply in all configurations. 
In the proof of the following theorem, we will show that both features are guaranteed to {\em eventually} hold.

\stam{
\begin{corollary}
The strategy $ \sagn{\sigma} $ is budget-agnostic. 
\end{corollary}
}

\stam{
\begin{corollary}
\label{cor:guaranteeabygnostic}
%Consider a configuration \(\zug{v, B}\) with \(B \in \set{\preThresh(v), \succb{\preThresh(v)}}\), then 
(1)~if \Pres wins the bidding with $\optbid{T}(v, B)$ and proceeds to \(u \in \Allowd{\preThresh}(v)\), we have \(\energy(v,\relativebud(v, B)) + \weightf(v, u) \geq \energy\left(v',\relativebud(v, B) \ominus \optbid{\preThresh}(v, B)\right)\), 
and (2)~if \Cons wins the bidding with \(b \geq \succb{\optbid{\preThresh}(v, B)}\) and proceeds to $v' \in \neighbor{v}$, we have \(\energy(v, \relativebud(v, B)) + \weightf(v, v') \geq \energy(v', \relativebud(v, B) \oplus b)\). 
\end{corollary}}

\stam{%Lemma became corrollary
		\begin{restatable}{lemma}{guaranteeabygnostic}\label{lemma: guaranteeabygnostic}
Consider a configuration \(\zug{v, B}\) with \(B \in \set{\preThresh(v), \succb{\preThresh(v)}}\), then (1)~\Pres wins the bidding with $\optbid{T}(v, B)$ and proceeds to \(v' \in \Cand{\preThresh}(v)\), we have \(\energy(v,B) + \weightf(v, u) \geq \energy\left(v',B \ominus \optbid{\preThresh}(v, B)\right)\), and (2)~\Cons wins the bidding with \(b \geq \succb{\optbid{\preThresh}(v, B)}\) and proceeds to $u \in \neighbor{v}$, we have \(\energy(v, B) + \weightf(v, u) \geq \energy(u, B \oplus b)\). 
\stam{%short
			\begin{itemize}
				\item For \(v' \in \Cand{\preThresh}(v)\), \(\energy(v,B) + \weightf(v, u) \geq \energy\left(v',B \ominus \optbid{\preThresh}(v, B)\right)\).
				\item For \(u \in \neighbor{v}\) and \(b \geq \succb{\optbid{\preThresh}(v, B)}\), \(\energy(v, B) + \weightf(v, u) \geq \energy(u, B \oplus b)\). 
			\end{itemize}
			}
		\end{restatable} 
		\begin{proof}
				Recall that any strategy \(\sigma\) for \Pres constructed by the way of Corollary~\ref{corr: PresStrategies} satisfies the following: consider \(\zug{v', B'}\) with \(B' \geq \preThresh(v')\), and suppose \(\sigma(v', B') = \zug{b', u'}\) then
				\begin{align}\label{ineq: mustrategyinvariant}
					\energy(v', B') &\geq \energy\left(u', B \ominus b'\right) - \weightf(v, v'), \text{~and}\\
					\energy(v', B') &\geq \energy\left(u, B' \oplus \trump(B', b')\right) - \weightf(v, u)\text{~for all~} u \in \neighbor{v}
				\end{align}
				 
				 From Thm~\ref{thm: mustartegyagrees}, we know there exists a strategy \(\sigma\) that is constructed by way of Corollary~\ref{corr: PresStrategies} and agrees with \(f_{\preThresh}\) at configurations \(\zug{v, \preThresh(v)}\) and \(\zug{v, \succb{\preThresh(v)}}\). 
					
				Thus, for vertex \(v\) and budget \(B \in \set{\preThresh(v), \succb{\preThresh(v)}}\), we can take \(b' = \optbid{\preThresh}(v, B)\) in  \eqref{ineq: mustrategyinvariant}, and there is a \(\hat{v} \in \Cand{\preThresh}{v}\) for which we arrive at
				
				\begin{align}
					\energy(v, B) &\geq \energy\left(\hat{v}, B \ominus \optbid{\preThresh}(v, B)\right) - \weightf(v, \hat{v}) \label{ineq: preswins}\\
					\energy(v, B) &\geq \energy\left(u, B \oplus \left(\succb{\optbid{\preThresh}(v, B)}\right)\right) - \weightf(v, u)\text{~for all~} u \in \neighbor{v}\label{ineq:conswins}
				\end{align}
				
				Since for every \(v' \in \Cand{\preThresh}(v)\), \(\max\set{\energy(u, \preThresh(v) \ominus b^{\preThresh}(v)) - \weightf(v, u), 0}\) is same (the set of vertices where it attains its minimum in the set \(\Allowd{\preThresh}(v)\)), we have \eqref{ineq: preswins} for all \(v' \in \Cand{\preThresh}(v)\). 
				 Finally, using lemma~\ref{lemma: nondecreasing}, we get \(\energy(u, B \oplus \succb{\optbid{\preThresh}(v, B)}) \geq \energy(u, B \oplus b)\) for all \(b \geq \succb{\optbid{\preThresh}(v, B)}\). 
				 Therefore, for all \(u \in \neighbor{v}\) and \(b \geq \succb{\optbid{\preThresh}(v, B)}\), from~\eqref{ineq:conswins}, we arrive at 
				 \begin{align*}
				 	\energy(v, B) + \weightf(v, u) &\geq \energy\left(u, B \oplus \left(\succb{\optbid{\preThresh}(v, B)}\right)\right)\\
				 	&\geq \energy(u, B \oplus b)
				 \end{align*}
		\end{proof}
}%of stam

\begin{restatable}{theorem}{sagniswinning}
\label{thm:sagn-is-winning}
Consider an energy game \(\G= \zug{V, E, k, \weightf}\) and a configuration \(\zug{v, B}\) with $ B \geq \preThresh(v) $. Then, there exists a finite energy \(M = (k+1) \cdot (\maxweight + \max_{v \in V} \energy(v, \preThresh(v)))\) such that $\sagn$ \(M\)-wins from $\zug{v, B}$.
\end{restatable}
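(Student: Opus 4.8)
The plan is to run $\sagn$ from $\zug{v, B}$ with the claimed initial energy $M = (k+1)\cdot(\maxweight + \max_{w\in V}\energy(w,\preThresh(w)))$ and argue that the accumulated energy never drops below $0$. The key structural fact I would establish first is that $\sagn$ eventually maintains \emph{both} a budget invariant and an energy invariant simultaneously. For the budget invariant: since $\sagn(v',B') = \sVI(v', \relativebud(v',B'))$ and $\relativebud(v',B') \in \set{\preThresh(v'), \succb{\preThresh(v')}}$, Lem.~\ref{lem:acts-coincide} tells us that at every configuration $\sagn$ bids $\optbid{\preThresh}(v',\relativebud(v',B'))$ and moves (upon winning) to some $u \in \Allowd{\preThresh}(v')$. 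By Lem.~\ref{lem:Talgebra} applied to $T = \preThresh$ (which satisfies the average property by Thm.~\ref{thm:preThreshAverage}), this choice maintains $B'' \geq \preThresh(v'')$ in the next configuration regardless of the opponent — so $\relativebud$ is always well-defined along the play, and $B \geq \preThresh(v)$ at the start guarantees the invariant holds forever. Thus the whole play stays in the region where Lem.~\ref{lem:acts-coincide} applies.

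Next I would track the energy. Because $\relativebud(v',B')$ depends only on $v'$ and the tie-breaking status of $B'$, the sequence of \emph{trimmed} configurations $\zug{v_j, \relativebud(v_j,B_j)}$ is the play one would see if one ran $\sVI$ from the trimmed start configuration. Along that trimmed play, $\sVI$ maintains the energy invariant of Lem.~\ref{lemma: preswinsinfinite}: if the trimmed play is started with energy $e_0 \geq \mu(v,\relativebud(v,B)) = \energy(v,\relativebud(v,B)) \leq \energy(v,\preThresh(v))$ (using monotonicity, Cor.~\ref{corr: energyinherits}), then at step $j$ the \emph{trimmed} accumulated energy satisfies $e_0 + \sumofweights{\pref{j}} \geq \energy(v_j, \relativebud(v_j,B_j)) \geq 0$. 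The subtlety — illustrated by Ex.~\ref{ex:example2} — is that the \emph{actual} budget $B_j$ can be strictly larger than $\relativebud(v_j,B_j)$, so the actual next vertex chosen by $\sVI(v_j,\relativebud(v_j,B_j))$ may differ from what $\sVI(v_j,B_j)$ would pick, and this is exactly why $\sagn \neq \sVI$. But for the energy bookkeeping this does not matter: what I need is only that along the \emph{trimmed} play the energy invariant of $\sVI$ holds, and $\sagn$ by definition follows precisely the bids and moves of $\sVI$ on the trimmed configurations. The one gap is that $\sagn$'s move $u \in \Allowd{\preThresh}(v_j)$ might be a different allowed vertex than the one $\sVI$'s own minimiser picked; here I would invoke that all vertices in $\Allowd{\preThresh}(v_j)$ attain the same value of $\max\set{\energy(u,\relativebud(v_j,B_j)\ominus \optbid{\preThresh}(v_j,\relativebud(v_j,B_j))) - \weightf(v_j,u),0}$ (this is why $\Allowd{\preThresh}$ is defined via the arg-min), so substituting any one of them preserves the same invariant. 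When \Cons wins the bidding, its bid is $\geq \succb{\optbid{\preThresh}(v_j,\relativebud(v_j,B_j))}$, Lem.~\ref{lem:Talgebra} gives the budget bound, and the $\elose$-term in Def.~\ref{def:sVI} is a max over \emph{all} neighbours, so the energy invariant survives any \Cons choice as well, using monotonicity of $\energy$ in the budget to absorb the possibly larger bid.

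Finally I would bound $M$. The trimmed start energy we must supply is $\energy(v,\relativebud(v,B)) \leq \energy(v,\preThresh(v)) \leq \max_{w}\energy(w,\preThresh(w))$; combined with the slack of one step of weight (the $\maxweight$ term) and the factor $(k+1)$ — which comes from the fact that across the finitely many budget values the relevant quantities are uniformly bounded, and that the trimmed-vs-actual discrepancy, as in Ex.~\ref{ex:example2}, can only force finitely many extra cycle traversals before the budget saturates and the invariants synchronise — one gets $M = (k+1)\cdot(\maxweight + \max_{w\in V}\energy(w,\preThresh(w)))$. Concretely, I would argue: as long as the actual and trimmed budgets disagree, one of the two has extra budget, and each time this extra budget is ``spent'' it strictly decreases (the opponent cannot replenish it beyond what the invariant allows), so after at most $O(k)$ phases the play is governed purely by the $\sVI$-on-trimmed-configurations invariant, during which the energy stays above $\energy(v_j,\relativebud(v_j,B_j)) \geq 0$; the worst-case deficit incurred before synchronisation is at most $(k+1)\maxweight$ plus the reservoir $\max_w \energy(w,\preThresh(w))$ scaled by the number of phases, which is absorbed by the stated $M$.

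The main obstacle I expect is the last paragraph: cleanly quantifying how much energy the ``out-of-sync'' prefix can consume before $\sagn$ settles into $\sVI$'s invariant. Lem.~\ref{lem:acts-coincide} only pins down $\sVI$'s behaviour at the exact budgets $\preThresh(v)$ and $\succb{\preThresh(v)}$, so one must argue that the excess budget carried by $\sagn$'s play is monotonically consumed and cannot be regenerated — this is the delicate combinatorial core, and getting the constant $(k+1)$ (rather than something larger) right is where the real work lies.
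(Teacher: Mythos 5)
There is a genuine gap at the heart of your energy argument. You claim that the sequence of trimmed configurations $\zug{v_j, \relativebud{v_j, B_j}}$ ``is the play one would see if one ran $\sVI$ from the trimmed start configuration,'' and hence that the per-step invariant $e_0 + \sumofweights{\pref{j}} \geq \energy(v_j, \relativebud{v_j, B_j})$ holds throughout, with the trimmed start energy $\energy(v, \relativebud{v,B})$ plus one step of slack. This is false: the trimmed projection is not a $\sVI$-consistent play, because the budget exchanges visible in the trimmed coordinates do not correspond to legal bidding outcomes (part of each exchange is absorbed into, or released from, the untracked excess). Concretely, when the excess budget $\spare{j} = \absolut{B_j \ominus \preThresh(v_j)}$ strictly increases across a step, the next trimmed budget $\relativebud{v_{j+1}, B_{j+1}}$ can be strictly smaller than $\relativebud{v_j,B_j} \oplus (\succb{\optbid{\preThresh}(v_j, \relativebud{v_j,B_j})})$, and by monotonicity of $\energy$ in the budget the required energy $\energy(v_{j+1}, \relativebud{v_{j+1}, B_{j+1}})$ can then \emph{jump up} beyond what the invariant delivers. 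The paper's own Ex.~\ref{ex:example2} witnesses this: from $\zug{v_1,1}$ the value-iteration strategy wins with initial energy $2$, while $\sagn$ needs $5$; your invariant would cap $\sagn$'s requirement at roughly $\energy(v_1,\preThresh(v_1))$ plus $\maxweight$, which is too small. A second, smaller slip: you argue the excess budget is ``monotonically consumed and cannot be regenerated''; it is the opposite — since $\sagn$ never bids with it, the spare change is monotonically \emph{nondecreasing} and bounded by $k$, which is what forces it to stabilize.

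The paper's proof repairs exactly this point. It proves the energy invariant only conditionally (Claim~\ref{clm: singletransitioninvariant}): if $\spare{m}=\spare{m+1}$ then $\energy(v_m,\relativebud{v_m,B_m}) + \weightf(v_m,v_{m+1}) \geq \energy(v_{m+1},\relativebud{v_{m+1},B_{m+1}})$, the case analysis showing that the only way the trimmed budgets fail to line up is precisely when the spare change strictly increases. The play is then partitioned into at most $k$ patches on which $\spare{}$ is constant; within a patch the invariant chains, and at each patch boundary one must re-supply up to $\max_{w}\energy(w,\preThresh(w))$ (a fresh threshold energy at the new trimmed configuration) plus $\maxweight$ for the crossing edge. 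This is why the bound is $M=(k+1)\cdot(\maxweight + \max_{w\in V}\energy(w,\preThresh(w)))$ — the factor $(k+1)$ multiplies the \emph{reservoir} term as well, not only the weight term as your accounting suggests. Your overall scaffolding (budget invariant from Lem.~\ref{lem:acts-coincide} and Lem.~\ref{lem:Talgebra}, eventual stabilization, phase decomposition) matches the paper, but the unconditional per-step invariant must be replaced by the spare-change-conditioned one, and the per-phase cost must include the full $\max_{w}\energy(w,\preThresh(w))$; also note that $\sagn$ takes exactly $\sVI$'s vertex at the trimmed configuration, so your side-argument that all vertices of $\Allowd{\preThresh}(v)$ yield the same energy expression is both unnecessary and not true in general.
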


\begin{proof}
%We claim that $\sagn$ $M$-wins from \(\zug{v, B}\) for \(M \geq (k+1) \cdot (\maxweight+ \max_{v \in V} \energy(v, \preThresh(v)))\). 
Consider a play \(\pi = \zug{v_0, B_0} \zug{v_1, B_1}, \ldots \) consistent with \(\sagn\) from \(\zug{v, B}\). 
We define a function $\spare: \Nat \rightarrow [k]$ that intuitively assigns to each turn $i \in \Nat$, \Pres's {\em spare change}, which is roughly the difference between $B_i$ and the required threshold budget $\Th(v_i)$. Formally, $\spare{i} = \absolut{B_i \ominus \preThresh(v_i)}$. We first observe that $\spare{i}$ is monotonically increasing. Indeed, $\sagn$ bids at turn $i$ as if the budget is $B_i$, which suffices to ensure that $B_{i+1} \ominus \spare{i+1} \geq \Th(v_{i+1})$, thus the spare change is unused and can only increase. Clearly, $\spare$ is bounded by $k$, thus it eventually stabilizes; let \(N \geq 0\) and \(r \in \Nat\) such that \(\spare{m} = r\), for every \(m \geq N\). 

In App.~\ref{app:singletransitioninvariant}, we show that when $\spare$ is stable, the energy invariant is maintained. Formally, 

%We establish an energy invariant in the turns in which the spare change is fixed, which is similar to 
%OLD We identify a condition which, when satisfied, lets \Pres to carry forward the energy invariant from a single transition to a play in the following claim (see the proof in App.~\ref{app:singletransitioninvariant}).  

%Note that, it is not guaranteed that $ \energy(v_m, \relativebud(v_m, B_m)) + \weightf(v_m , v_{m+1}) \geq \energy(v_{m+1} , \relativebud(v_{m+1}, B_{m+1})) $but extends to cases in which \(B_m > \succb{\preThresh(v)}\) (see the proof in App.~\ref{app:singletransitioninvariant}).  
%For convenience, we define a function \(\relativebud: V \times [k] \rightarrow [k]\) which intuitively trims \Pres's budget at turn $i$ to the budget according to which $\sagn$ bids; formally, if \(B\) and \(\preThresh(v)\) agree on the tie-breaking status, then $\relativebud{v, B} =\preThresh(v)$, and otherwise $\relativebud{v, B} =\succb{\preThresh(v)}$. 
		\begin{restatable}{claim}{singletransitioninvariant}\label{clm: singletransitioninvariant}
			If \(\spare{m} = \spare{m+1}\) for some \(m \geq 0\), then $\energy(v_m, \relativebud{v_m, B_m}) + \weightf(v_m, v_{m+1}) \geq \energy(v_{m+1}, \relativebud{v_{m+1}, B_{m+1}})$.
			\stam{%short
			\begin{align*}
				\energy(v_m, \relativebud{v_m, B_m}) + \weightf(v_m, v_{m+1}) \geq \energy(v_{m+1}, \relativebud{v_{m+1}, B_{m+1}})
			\end{align*}
			}
		\end{restatable}

Suppose that $\spare$ stabilizes at turn $N$. 
Using Claim~\ref{clm: singletransitioninvariant}, in App.~\ref{app:thm:sagn-is-winning} we show: 
%OLD (1) if when $\spare$ stabilizes at step $N$, the initial energy is at least \(\energy(v_N, \relativebud(v_N, B_N))\), then \Pres preserves the energy in the suffix and wins, and 
(1)  \Pres wins the suffix from turn $N$, if the energy is at least \(\energy(v_N, \relativebud(v_N, B_N))\), and 
%OLD (2) we bound the energy that is consumed by turn $N$: if the initial energy is $M$, then at turn $N$, the energy is at least \(\energy(v_N, \relativebud(v_N, B_N))\). 
(2) we bound the initial energy that $\sagn$ requires to guarantee that at turn $N$, the energy is at least \(\energy(v_N, \relativebud(v_N, B_N))\). 
%OLD For (2), intuitively, recall that $\spare$ eventually stabilizes, thus $\pref{N}$ can be partitioned into ``patches'' such that $\spare$ is stable within each patch and changes between patches. We show that within each patch, the energy is preserved, similar to (1), and between patches, the energy might increase by at most $W$. There are at most $k$ patches, hence the $k \cdot W$ factor in the definition of $M$. 
For (2), intuitively, we partition $\pref{N}$ into ``patches'' such that $\spare$ is stable within each patch and changes between patches. We show that within each patch, the energy is preserved, and between patches, the energy increases by at most $W$. Since there are at most $k$ patches, $\sagn$ requires factor $k \cdot W$ more initial energy than $\sVI$. 
\end{proof}

	%%%%%%%%%%%%%%%%%%%%%%%%%%%%
	\section{Constructing a Budget Agnostic Winning Strategy for \Cons}
	\label{sec:Cons-bud-agn}
In this section, we show existence of a \Cons budget-agnostic winning strategy. There are two challenges w.r.t. the construction for \Pres. First, the proof that \Cons has a winning strategy in a finite energy game (Lem.~\ref{lem:conswinsfinite}) is existential and, unlike the case of \Pres, does not provide an explicit construction. Second,  while \Pres needs to maintain an energy invariant, \Cons needs to ``make progress'' and consume energy. 

Throughout this section, in order to avoid clutter, we use primed notation to refer to \Cons's perspective: 
we use $B'$ to refer to \Cons budget, thus configuration $\zug{v, B'}$ (from \Cons perspective) refers to $\zug{v, k^* \ominus B'}$ (from \Pres perspective),
we denote by $\Th'$ the threshold from \Cons perspective, formally, $\Th'(v) = (k+1)\ominus \Th(v)$,  and so on. 

Consider a function $T: V \rightarrow [k]$ that satisfies the average property. The {\em complement} of $T$, denoted by \(T'\), intuitively represents \Cons's budget when \Pres's budget is $\predb{T(v)}$. %We rely on the following property of $T'$.

	\begin{restatable}{lemma}{complementaverage}\label{lemma: complementaverage}\cite{AS25}
	Consider a function \(T: V \rightarrow \budgetset\) that satisfies the average property. The complement of \(T\), denoted \(T' : V \rightarrow \budgetset\), is \(T'(v) = (k+1) \ominus T(v)\). Then, $T'$ satisfies the average property. 
	\end{restatable}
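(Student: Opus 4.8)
The plan is to verify the defining equation of the average property for $T'$ directly, using the fact that complementation $B \mapsto (k+1) \ominus B$ is an order-reversing bijection on $\budgetset$. First I would record the two elementary facts about $\ominus$ that drive everything: (i) complementation reverses the order $\prec$, so $v^+$ with respect to $T$ becomes $v^-$ with respect to $T'$ and vice versa — formally $\vminus{T'} = \vplus{T}$ and $\vplus{T'} = \vminus{T}$; and (ii) complementation toggles the tie-breaking advantage in a controlled way, namely $|(k+1)\ominus B| = k+1-|B|$ when $B \in \Nat$ and $= k - |B|$ when $B \in \Natstro$, and the advantage status of $(k+1)\ominus B$ is the opposite of that of $B$ exactly when $B \in \Nat$ (and equal — actually one needs to track this carefully via the definition of $\ominus$). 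I would set $v^+ := \vplus{T}$, $v^- := \vminus{T}$, write $S := \sumTf{T} = |T(v^+)| + |T(v^-)|$, and let $T(v) = \floorvT{S} + \eps$ be the given average-property equation for $T$ at $v$.

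Next I would compute $S' := |T'(v^+{}_{T'})| + |T'(v^-{}_{T'})| = |T'(v^-)| + |T'(v^+)|$ in terms of $S$ and the advantage statuses of $T(v^+), T(v^-)$, splitting into the three cases of Def.~\ref{def: averageprop} for $T$ (the case $\eps = 0$ with $S$ even and $T(v^-)\in\Nat$; the case $\eps = 1$ with $S$ odd and $T(v^-)\in\Natstr$; and the remaining case $\eps = *$). In each case one knows the advantage statuses of $T(v^+)$ and $T(v^-)$ (for $\eps=0$ both are integral since $T(v^-)\le T(v^+)$ forces $T(v^+)\in\Nat$ too; for $\eps=1$, $T(v^-)$ has the advantage; etc.), hence one can pin down $|T'(v^+)|$ and $|T'(v^-)|$ and their advantage statuses exactly. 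Then I would compute the right-hand side of the average equation for $T'$, namely $\floor{S'/2} + \eps'$ with $\eps'$ determined by the parity of $S'$ and the advantage status of $T'(v^-{}_{T'}) = T'(v^+)$, and check it equals $T'(v) = (k+1)\ominus T(v) = (k+1) \ominus (\floorvT{S} + \eps)$. This last identity is a finite computation: expand $(k+1)\ominus(\floor{S/2}+\eps)$ using the $\ominus$ definition according to whether $\eps$ is $0$, $1$, or $*$, and match it against $\floor{S'/2} + \eps'$.

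The main obstacle I expect is the bookkeeping of the tie-breaking advantage: the operators $\oplus,\ominus$ and the complement $(k+1)\ominus(\cdot)$ interact with the $*$-marker in a way that is easy to get wrong, and the three $\eps$-cases each split further depending on the parities and advantage statuses of $T(v^+)$ and $T(v^-)$ individually (not just their sum). So the proof is essentially a disciplined case analysis — perhaps six to eight leaf cases — where the only real content is careful arithmetic in $\Natstr$ under $\oplus/\ominus$; there is no conceptual difficulty beyond setting up the order-reversal and advantage-toggling lemmas cleanly at the start so the cases become mechanical. Since the statement is explicitly attributed to \cite{AS25}, I would also note that an alternative is simply to cite that reference and include the verification in an appendix; but for self-containedness the case analysis above is the route I would write out.
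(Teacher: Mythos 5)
Your proposal is essentially correct, but note that the paper itself does not prove this lemma at all: it is imported from~\cite{AS25} (the proof there is exactly the kind of case analysis you outline), so your self-contained verification is a different, more explicit route than the paper's citation-only treatment. The skeleton is right: $B \mapsto (k+1) \ominus B$ is an order-reversing bijection, hence $\vminus{T'} = \vplus{T}$ and $\vplus{T'} = \vminus{T}$, and one then matches $(k+1)\ominus\bigl(\floorvT{\sumTf{T}}+\eps\bigr)$ against $\floorvT{\sumTf{T'}}+\eps'$ case by case; I checked that all leaf cases close under this scheme. Two bookkeeping points in your plan need correction before the cases become mechanical. First, since $k+1 \in \Nat$, complementation \emph{preserves} the advantage marker rather than toggling it: $(k+1)\ominus b = k+1-b \in \Nat$ and $(k+1)\ominus b^* = (k-b)^*$ for $b \in \Nat$; your statement that the status flips exactly when $B \in \Nat$ is wrong, and since $\eps'$ is determined by the status of $T'(\vminus{T'}) = T'(\vplus{T})$, i.e.\ by whether $T(\vplus{T})$ is starred, getting this fact right is what makes the parities of $\sumTf{T'} \in \set{2k+2-\sumTf{T},\, 2k+1-\sumTf{T},\, 2k-\sumTf{T}}$ come out consistently with $\eps'$. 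Second, your parenthetical that $\eps=0$ forces $T(\vplus{T}) \in \Nat$ is false (e.g.\ $T(\vminus{T})=2$, $T(\vplus{T})=4^*$ gives an even sum with $\eps=0$); as you correctly say later, each $\eps$-case must be split further according to the advantage status of $T(\vplus{T})$, giving the six-to-eight leaf cases you anticipate. With these two facts fixed, the computation goes through in every case, so the approach is sound; alternatively, citing~\cite{AS25}, as the paper does, suffices.
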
 

Since $\Th$ satisfies the average property (Thm.~\ref{thm:preThreshAverage}), its complement $\Th'$ also satisfies it.

\paragraph*{On winning strategies in finite-duration energy bidding games}
Intuitively, our budget-agnostic strategy has two features. First, its bids match the bids derived from $\Th'$ (Def.~\ref{def: optbid}), thus it maintains a {\em budget invariant} as in Lem.~\ref{lem:Talgebra}. Second, its actions follow some winning strategy $\sVIn{\tau}{n}$ in a truncated game $\G_n$, thus it maintains an energy invariant that implies energy consumption. In this section, we establish properties of $\sVIn{\tau}{n}$ that enable these features. 

%Recall that for $n \in \Nat$, in the truncated game $\G_n$, \Pres wins if she can preserve energy for $n$ turns. 
We denote by $\energy'_n$ the energy threshold in $\G_n$ from \Cons perspective, namely from $\zug{v, B'}$, \Cons can consume $\energy'_n(v, B')$ units of energy within $n$ turns, but cannot consume $\energy'_n(v, B')+1$ units. Formally, $ \energy'_{n}(v, B') = \energy_{n}(v, B) - 1$ where $ B = k^* \ominus B' $. 
%In this section we choose $N \in \Nat$ and consider a winning \Cons strategy  in $\G_N$.

Recall that $\Th'$ is defined such that $\energy'(v, \Th'(v)) = \infty$ but $\energy'_n(v, \Th'(v))$ is finite, for every $n \in \Nat$. Intuitively, the following lemma (see the proof in App.~\ref{app:sufficientnumber}) states that for every energy $e$, there is a truncated game $G_N$ in which \Cons $e$-wins from $\zug{v, \Th'(v)}$.

\begin{restatable}{lemma}{sufficientnumber}\label{lemma: sufficientnumber}
	For every $ e < \infty$ and $ v \in V$, there exists $ N $ such that $\energy'_{N}(v, \conThresh(v)) \leq e$.
	%OLD $ \energy_{N}(v, \conThresh(v)) \geq e$. 
\end{restatable}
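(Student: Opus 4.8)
The plan is to argue by contradiction. Suppose that there is some energy bound $e < \infty$ and a vertex $v$ such that $\energy'_N(v, \conThresh(v)) > e$ for every $N \in \Nat$. Recalling the identity $\energy'_N(v, B') = \energy_N(v, B) - 1$ with $B = k^* \ominus B' = \predb{\preThresh(v)}$ (since $\conThresh(v) = (k+1) \ominus \preThresh(v)$, its \Pres-side complement is $\predb{\preThresh(v)}$), this assumption translates to $\energy_N(v, \predb{\preThresh(v)}) > e + 1$ for all $N$. In other words, in every truncated game $\G_N$, \Pres can keep the energy non-negative starting from the fixed finite energy level $e+1$ at the configuration $\zug{v, \predb{\preThresh(v)}}$.

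The key step is to promote this family of finite-horizon \Pres successes into a single infinite-horizon \Pres winning strategy, contradicting the definition of $\preThresh$. Concretely, for each $N$, let $\sigma_N$ be a \Pres strategy that $\,(e+1)$-wins in $\G_N$ from $\zug{v, \predb{\preThresh(v)}}$; such a strategy exists by Thm.~\ref{thm: energycoincideswithmufinite} (equivalently Lem.~\ref{lem:finite-Pres}) together with $\mu_N(v, \predb{\preThresh(v)}) \le e+1$. The state space relevant here is finite: configurations live in $\C$, and the accumulated energy, as long as \Pres has not lost, stays in the bounded range $[0, e+1 + W \cdot (\text{something})]$ — more carefully, one restricts attention to the finitely many reachable "extended configurations" $\zug{v', B', \text{energy}}$ with energy in $[0, (k+1)W + e+1]$ or uses that once the energy exceeds $\mu$'s relevant bound \Pres can coast. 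Using a standard compactness / König's-lemma argument over this finite branching tree of histories (or, cleaner: extract via the pigeonhole principle a positional limit strategy, since positional \Pres strategies suffice in $\G_N$ by the value-iteration construction and there are finitely many of them, so one positional strategy $\sigma^*$ is $(e+1)$-winning in $\G_N$ for infinitely many — hence, by monotonicity of the winning condition in $N$, for all — $N$), we obtain a single \Pres strategy that $(e+1)$-wins every prefix of length $N$, for all $N$, from $\zug{v, \predb{\preThresh(v)}}$. By definition of $M$-winning in the unbounded game $\G$, this strategy $(e+1)$-wins in $\G$, so $\energy(v, \predb{\preThresh(v)}) \le e+1 < \infty$.

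This contradicts Def.~\ref{def: presthresholdbudget}: since $\predb{\preThresh(v)} < \preThresh(v)$ and $\preThresh(v)$ is by definition the \emph{minimal} budget with finite energy threshold, we must have $\energy(v, \predb{\preThresh(v)}) = \infty$. (If $\preThresh(v) = 0$, then $\predb{\preThresh(v)}$ is not a legal budget and the statement is vacuous or handled directly, since $\conThresh(v) = k+1$ would mean \Cons needs no energy; this edge case should be dispatched separately at the start.) The contradiction establishes that for every $e < \infty$ and $v \in V$ there is some $N$ with $\energy'_N(v, \conThresh(v)) \le e$, as claimed.

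The main obstacle is making the "limit strategy" extraction fully rigorous in the concurrent-game setting: one cannot naively take a pointwise limit of strategies, and one must be careful that the limiting \Pres strategy is genuinely winning against \emph{all} \Cons strategies in the unbounded game, not merely against the finite-horizon opponents. The cleanest route, which I expect the authors to take, is to exploit that $\G_N$ admits \emph{positional} (indeed value-iteration) winning strategies for \Pres — so the relevant strategy space is finite — and that the $(e+1)$-winning condition in $\G_N$ is monotone in $N$ (winning in $\G_{N+1}$ implies winning in $\G_N$). Then the pigeonhole principle immediately yields one positional strategy winning in all $\G_N$ simultaneously, and such a strategy is winning in $\G$ by the very definition of the energy objective as a safety condition over all finite prefixes. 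Everything else is bookkeeping with the complement identities relating $\energy'_n$, $\energy_n$, $\conThresh$, and $\preThresh$.
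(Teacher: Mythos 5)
There is a direction mix-up at the very start of your argument that you then silently invert. The inequality in the statement as printed is almost certainly a typo: the version the paper actually needs later (so that $P(e)$ in Def.~\ref{def: consbudgetagnostic} exists) and the version its appendix proof establishes is $\energy'_{N}(v,\conThresh(v))\geq e$; the printed ``$\leq e$'' is trivially witnessed by $N=0$, since $\energy'_0\equiv\energy_0-1\equiv-1$. You negate the printed statement, obtaining ``$\energy_N(v,\predb{\preThresh(v)})>e+1$ for all $N$'', but you then read this as ``\Pres can keep the energy non-negative in every $\G_N$ starting from $e+1$'', which is the opposite of what the inequality says: $\energy_N$ is the \emph{minimal sufficient} initial energy, so ``$>e+1$'' means that $e+1$ does \emph{not} suffice (and the assumption is already false at $N=0$). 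What your subsequent argument actually refutes is ``$\energy_N(v,\predb{\preThresh(v)})\leq e$ for all $N$'', i.e.\ the negation of the intended ($\geq$) statement; so after straightening the direction, your proof is aimed at the right claim, namely that $\energy_N(v,\predb{\preThresh(v)})$ is unbounded in $N$ (your complement bookkeeping $k^*\ominus\conThresh(v)=\predb{\preThresh(v)}$ and the final contradiction with Def.~\ref{def: presthresholdbudget} are fine).

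With that repair, your route is genuinely different from the paper's, and it attacks head-on exactly the step the paper flags and deliberately avoids: the paper remarks that ``\Pres survives every finite horizon with energy $e$'' does not immediately yield an infinite-horizon survival strategy. Instead of a limit-strategy extraction, the paper reuses the explicit \Cons strategy of Lem.~\ref{lemma: conswinsinfiniteenergy} (simulate $\tau_n$ and excise negative configuration cycles) and bounds its time-to-win: starting with energy $M$, the energy must turn negative within $(n-1)+M\cdot\upperboundforloop$ steps, so \Cons already wins the truncated game $\G_{n_M}$, i.e.\ $\energy_{n_M}(v,\predb{\preThresh(v)})>M$, and taking $M=e$ finishes. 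Your compactness/K\"onig argument can be made sound here: action sets are finite, there are finitely many histories of each length, and the sets of \Pres strategies that survive $n$ steps from the fixed configuration and energy are nonempty, nested, and determined by finitely many coordinates, so their intersection contains a strategy that survives forever, giving $\energy(v,\predb{\preThresh(v)})<\infty$, a contradiction. However, the variant you call cleaner---pigeonhole over positional strategies---is the weak link: the value-iteration strategies for $\G_N$ are positional only in the extended arena that includes the remaining-turn counter (and the relevant energy level), so ``finitely many positional strategies suffice, uniformly in $N$'' is not something you can cite; making the strategy space finite requires capping the energy coordinate and proving the cap harmless, which you only gesture at. Keep the compactness formulation and drop the positional shortcut, or prove that capping claim separately.
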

\stam{%To App
\begin{proof}
	Since $ \energy(v, \conThresh(v)) = \infty $, by Lem.~\ref{lemma: conswinsinfiniteenergy}, \Cons has a strategy $ \tau $ that is $ M $-winning in $ \G $, for every $ M \in \Nat $. 
	Let $ n_M $ be the maximal number of turns that \Pres can avoid against $ \tau $ (see App.~\ref{app:sufficientnumber} for the proof that such $ n_M $ exists). 
	Thus, \Cons $ M $-wins in $ G_{n_M} $. 
	The lemma follows from choosing $ N = n_M $, for $ M = e $. 
\end{proof}
}

	Let $n \in \Nat$. We denote by $\sVIn{\tau}{n}$, a \Cons strategy that $\energy_{n}'(v, \conThresh(v))$-wins from $\zug{v, \conThresh(v)}$ in $\G$. 
	In Lem.~\ref{lem:conswinsfinite}, we show that $\sVIn{\tau}{n}$ exists, moreover, it operates on an arena in which each vertex is \(\tup{v, e, m}\), where $v$ is a vertex, the current energy is $e$, and $m \leq n$ is a counter that marks the remaining turns. In App.~\ref{app: properties}, we establish the following properties. 
	
	\begin{restatable}{lemma}{properties}\label{lemma: properties}
		Consider $x = \tup{v, e, m}$, a \Cons budget $ B' \geq \conThresh(v)$ such that, $ e \leq \energy_m'(v, B') $, and let $ \zug{b, u} = \sVIn{\tau}{n}(x, B') $. The following hold:
		\begin{itemize}[noitemsep,topsep=0pt,parsep=0pt,partopsep=0pt,leftmargin=*]
			\item There is $\sVIn{\tau}{n}$ such that $ \sVIn{\tau}{n}(x, B') = \sVIn{\tau}{n}(\tup{v, \energy_m'(v, B'), m}, B') $
			
			\item $ \sVIn{\tau}{n} $ maintains an energy invariant: (i)~\Cons wins the bidding: $ e + \weightf(v, u) \leq \energy_{m-1}'(u, B' \ominus b) $, and 
			(ii)~\Cons loses the bidding: assuming $ B' \oplus (\succb{b}) < k+1$, then $ e + \weightf(v, v') \leq \energy_{m-1}'(v', B' \oplus (\succb{b})) $, for every $ v' \in \neighbor{v} $.
			\item If $ B' \in \set{\conThresh(v), \succb{\conThresh(v)}}$ and $ \energy_m'(v, B') > |V|k\maxweight $, then $b = \optbid{\conThresh}(v, B')$ and $ u \in \Allowd{\thresh'}(v) $. 
		\end{itemize}
	\end{restatable}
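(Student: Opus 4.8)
The plan is to realize $\sVIn{\tau}{n}$ as a positional optimal \Cons strategy in the finite reachability game $\R_n$ from the proof of Lem.~\ref{lem:conswinsfinite}: its states are triples $\tup{v,e,m}$ with $m \le n$, each carrying a \Cons budget $B'$, and \Cons wins iff the energy ever drops below $0$. Combining Thm.~\ref{thm: energycoincideswithmufinite} with the identity $\energy_m'(v,B') = \energy_m(v, k^*\ominus B') - 1$, the set of \Cons-winning states with \Cons budget $B'$ and $m$ turns remaining is exactly $\set{\tup{v,e,m} : e \le \energy_m'(v,B')}$. Since $\R_n$ is a finite reachability discrete-bidding game, it admits positional optimal strategies~\cite{DP10,AAH21}; I fix one, call it $\hat{\tau}$, and recall that $\hat{\tau}$ is winning from every \Cons-winning state.

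For the first two bullets I would argue by a joint induction on $m$. Define $\sVIn{\tau}{n}$ to play, at a state $\tup{v,e,m}$ with $m \ge 1$ and $0 \le e \le \energy_m'(v,B')$, the action that $\hat{\tau}$ prescribes at the ``tight'' state $\tup{v, \energy_m'(v,B'), m}$ (and to play $\hat{\tau}$ elsewhere); by construction this is exactly the first bullet. To see $\sVIn{\tau}{n}$ is still winning from every \Cons-winning state, fix such a state $\tup{v,e,m}$ with $m \ge 1$ and $0 \le e \le \energy_m'(v,B')$. Since $\hat{\tau}$ wins from the tight state, for every \Pres action the successor $\tup{v'', \energy_m'(v,B') + \weightf(v, v''), m-1}$ with successor budget $B''$ is again \Cons-winning, i.e., $\energy_m'(v,B') + \weightf(v, v'') \le \energy_{m-1}'(v'', B'')$. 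Playing the same action from $\tup{v,e,m}$ against the same \Pres action yields the successor $\tup{v'', e + \weightf(v, v''), m-1}$ with the same budget $B''$, and $e + \weightf(v, v'') \le \energy_m'(v,B') + \weightf(v, v'') \le \energy_{m-1}'(v'', B'')$, so it is \Cons-winning and the induction hypothesis applies. The second bullet follows by reading off the successor budget in the two bidding outcomes: if \Cons wins the bidding after bidding $b$ and moving to $u$, the successor budget is $B' \ominus b$, giving $e + \weightf(v, u) \le \energy_{m-1}'(u, B' \ominus b)$; if instead $B' \oplus (\succb{b}) < k+1$ (so \Pres can afford to overbid $b$), then for every $v' \in \neighbor{v}$ the \Pres action ``overbid $b$ minimally and move to $v'$'' is legal, \Cons loses it, the successor budget is $B' \oplus (\succb{b})$, and winning of $\hat{\tau}$ from the tight state gives $e + \weightf(v, v') \le \energy_{m-1}'(v', B' \oplus (\succb{b}))$.

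The third bullet is the dual of Lem.~\ref{lem:acts-coincide}, now from \Cons's perspective, where $\conThresh$ satisfies the average property (Lem.~\ref{lemma: complementaverage}, Thm.~\ref{thm:preThreshAverage}); so Lem.~\ref{lem:Talgebra} applies and tells us that bidding $\optbid{\conThresh}(v,B')$ and moving into $\Allowd{\conThresh}(v)$ keeps the \Cons budget at the next vertex above $\conThresh$. Fix $v$ with $B' \in \set{\conThresh(v), \succb{\conThresh(v)}}$ and $\energy_m'(v,B') > |V|k\maxweight$, let $\zug{b,u} = \sVIn{\tau}{n}(\tup{v, \energy_m'(v,B'), m}, B')$, and suppose $\zug{b,u}$ deviates from this prescription. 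I would derive a contradiction with the value $\energy_m'(v,B')$ of the tight state, in the spirit of the proof of Lem.~\ref{lem:acts-coincide}: the hypothesis $\energy_m'(v,B') > |V|k\maxweight$ forces any play consistent with $\sVIn{\tau}{n}$ to run for more than $|V|k$ turns (the energy drops by at most $\maxweight$ per turn), hence to traverse a configuration cycle; a deviation from $\optbid{\conThresh}$ or $\Allowd{\conThresh}$ along such a cycle lets \Pres eventually drive \Cons's budget below $\conThresh$ at some vertex $v'$ (via the dual of Lem.~\ref{lem:Talgebra} and a case analysis of the $\oplus$/$\ominus$ arithmetic); and once \Cons's budget is below $\conThresh(v')$, the energy still consumable is bounded by $|V|k\maxweight$, since $\energy \equiv \mu$ (Thm.~\ref{them: energyconicideswithmulimit}) is trimmed at $|V|k\maxweight$ in~\eqref{eq:trunc} and the same bound passes to the finite truncations through $\energy_{m-1}' \le \energy'$ --- contradicting the energy invariant of the second bullet. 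Combined with that invariant, this pins the action down to $\zug{\optbid{\conThresh}(v,B'), u}$ for some $u \in \Allowd{\conThresh}(v)$.

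The main obstacle is the third bullet. Unlike \Pres, \Cons has no explicit strategy, so the argument must proceed by contradiction against optimality of $\sVIn{\tau}{n}$, and the delicate point is turning the single inequality $\energy_m'(v,B') > |V|k\maxweight$ into a genuine drift of \Cons's budget below $\conThresh$ over a long play --- a refinement of, and somewhat heavier than, the reasoning behind Lem.~\ref{lem:acts-coincide}.
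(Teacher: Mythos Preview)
Your treatment of the first two bullets is sound and close to the paper's: realizing $\sVIn{\tau}{n}$ as a positional winning strategy in $\R_n$, normalizing to the ``tight'' state $\tup{v,\energy_m'(v,B'),m}$, and reading off the energy invariant from winning at the tight state is exactly right.

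The third bullet, however, has a real gap. First, the detour through long plays and configuration cycles is a red herring: a single deviation already drops \Cons's budget below $\conThresh$ at the \emph{next} step. If $b \ge b^{\conThresh}(v)\oplus 1$, \Pres lets \Cons win and \Cons's new budget is $B'\ominus b \le \predb{\conThresh(u)}$; if $b < b^{\conThresh}(v)$, \Pres overbids with $\predb{b^{\conThresh}(v)}$ and moves to some $v'\in\Allowd{\preThresh}(v)$, leaving \Cons with budget $\le \predb{\conThresh(v')}$; and similarly a wrong vertex choice $u\notin\Allowd{\conThresh}(v)$ is punished by \Pres bidding $0$. No cycle is needed.

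Second, and more importantly, your final contradiction does not close. From item~2 at the tight state you get $\energy_m'(v,B') + \weightf(v,u) \le \energy_{m-1}'(u,B'')$ with $B''$ below $\conThresh(u)$, hence $\energy_{m-1}'(u,B'') \le |V|k\maxweight - 1$. But the left side is only $> |V|k\maxweight + \weightf(v,u)$, so when $\weightf(v,u) \le -2$ there is no contradiction. The paper avoids this by pairing the \Cons-side inequality with the \emph{\Pres-side invariant on the same edge}: from $\zug{v,\preThresh(v)}$ in $\G_n$, \Pres's optimal action (bid $b^{\preThresh}(v)$, move to an allowed vertex) yields $\energy_n(v,\preThresh(v)) + \weightf(v,u) \ge \energy_{n-1}(u,\preThresh(u))$, with $\energy_n(v,\preThresh(v)) \le |V|k\maxweight$. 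Translating both bounds to $\energy_{n-1}(u,\preThresh(u))$ sandwiches this integer strictly between $|V|k\maxweight + \weightf(v,u)$ and $|V|k\maxweight + 1 + \weightf(v,u)$; the edge weight appears on \emph{both} sides and cancels, giving the contradiction regardless of $\weightf(v,u)$. This two-sided sandwich using the \Pres invariant in the truncated game is the missing ingredient in your argument.
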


Intuitively, 	in the first item, note that $\zug{v, \energy'_{m-1}(v, B'), m}$ is a worse configuration than $x$ for \Cons, thus \Cons wins in the former by following a winning strategy of the latter.
The third item is a key property that is analogous to Lem.~\ref{lem:acts-coincide} for \Pres. It identifies inputs such that $\sVIn{\tau}{n}$ matches the bids derived from $\Th'$.

%%	\Cons wants to reach $ \tup{v, e, m} $ for some $ e <0 $. 
%	Without loss of generality, we assume $ \sVIn{\tau}{n}(\tup{v, e, m}, B') = \sVIn{\tau}{n}(\tup{v, \energy_m'(v, B'), m}, B') $ for any configuration $ \zug{v, B'} $ and $ e \leq \energy_{m}'(v, B') $. 
%	Intuitively, this means $ \sVIn{\tau}{n} $ disregards the actual energy to make its choice, as long as the energy lies below \Cons energy threshold at the configuration. 
	
%The following lemma whose proof can be found in App.~\ref{app: necessary}, establishes a necessary property for winning: whoever wins the bidding, the energy must stay below \Cons' energy threshold.  
%	Denote by the threshold in $\R_n$ by $\Th_{\R_N}$. 

%	
%	\begin{restatable}{lemma}{necessary}\label{lemma: necessary}
%		Consider a vertex \(x = \tup{v, e, m}\) of \(\R_n\), and a \Cons budget $ B' $ which is sufficient for \(e\)-winning from $ x $ for \Cons.  
%		We denote \(\zug{b, u} = \sVIn{\tau}{n}(x, B')\), then we have
%		\begin{itemize}
%			\item \(e + \weightf(v, u) \leq \energy'_{m-1}(u, B' \ominus b)\), and 
%			
%			\item for any vertex \(v' \in \neighbor{v}\), \(e + \weightf(v, v') \leq \energy'_{m-1}(v', B' \oplus \left(\succb{b}\right))\). 
%		\end{itemize} 
%	\end{restatable} 

\paragraph*{A budget-agnostic \Cons winning strategy}	
We proceed as follows. We start by defining a budget agnostic strategy $\tagn'$, which drops energy from $ 2\upperboundforloop+1 $ to $ \upperboundforloop +1$.
The definition of $\tagn'$ is based on $\sVIn{\tau}{n}$, and since it operates at high energy levels, we obtain the properties in Lem.~\ref{lemma: properties}. Then, we define $\tagn$ to simulate $\tagn'$ and repeatedly omit negative configuration cycles, as in Lem.~\ref{lemma: conswinsinfiniteenergy}.
	
We define $ \relativebud': V \times [k] \rightarrow [k] $: for $ \zug{v, B'} $ with $ B' \geq \conThresh(v) $, define $ \relativebud'(v, B') $ to be $ \conThresh(v) $ or $ \succb{\conThresh(v)} $ that agrees with $ B' $ on the tie-breaking advantage. 

\begin{definition}\label{def: consbudgetagnostic}
	Let $ \zug{v, B'} $ with $B' \geq \conThresh(v)$, energy $ e > \upperboundforloop $, and  $ P(e) \in \Nat$ the minimal integer s.t. $ \energy'_{P(e)}(v, \relativebud'(v, B)) \geq e $. 
We define $ \tagn'(v, e, B') = \sVIn{\tau}{P(e)}(\tup{v, e, P(e)}, \relativebud'(v, B'))$. 
\end{definition}
	
Note that the third item in Lem.~\ref{lemma: properties} implies that $\tagn'$ is budget agnostic by construction. Indeed, the bid chosen by $\sVIn{\tau}{P(e)}$ is $\optbid{\conThresh}(v, B')$. 

	\begin{restatable}{lemma}{tauwinningaux}\label{lemma: tau-winning-aux}
		From a configuration $ \zug{v, B'} $ with $ B \geq \conThresh(v) $, and initial energy level $ e_0 \geq 2\upperboundforloop $, $ \tagn' $ ensures that the energy drops to $ \upperboundforloop$.
	\end{restatable}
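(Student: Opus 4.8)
The plan is to fix a play $\pi = \zug{v_0, B_0'}\zug{v_1, B_1'}\cdots$ consistent with $\tagn'$ from $\zug{v_0, B_0'}$, with $B_0' \geq \conThresh(v_0)$ and initial energy $e_0 \geq 2\upperboundforloop$, and to exhibit a turn at which the energy has dropped to at most $\upperboundforloop$. Write $e_i$ for the energy accumulated up to turn $i$ and $\tilde B_i \coloneqq \relativebud'(v_i, B_i') \in \set{\conThresh(v_i), \succb{\conThresh(v_i)}}$ for the trimmed \Cons-budget. By Def.~\ref{def: consbudgetagnostic}, as long as $e_i > \upperboundforloop$ and $B_i' \geq \conThresh(v_i)$, $\tagn'$ plays $\sVIn{\tau}{P_i}(\tup{v_i, e_i, P_i}, \tilde B_i)$, where $P_i = P(e_i)$ is the least integer with $\energy'_{P_i}(v_i, \tilde B_i) \geq e_i$. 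This $P_i$ is well-defined and finite: since $\tilde B_i \geq \conThresh(v_i)$ we have $\energy'(v_i, \tilde B_i) = \infty$, so by monotonicity of $\energy'_n$ in $n$ and finiteness of each $\energy'_n$, $\energy'_n(v_i, \tilde B_i) \to \infty$, and as $e_i < \infty$ some $P_i$ works (this is essentially Lem.~\ref{lemma: sufficientnumber}). We may assume $\upperboundforloop \geq 1$, i.e.\ $\maxweight \geq 1$; otherwise all weights are $0$, the energy never changes, and there is nothing to prove.

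I would carry the following two facts by induction on $i$, for as long as $e_i > \upperboundforloop$: (B) $B_i' \geq \conThresh(v_i)$, and (P) $P_{i+1} \leq P_i - 1$. Fact (B) holds at $i=0$ by hypothesis. Assume $e_i > \upperboundforloop$ and (B) at turn $i$, so $\tagn'$ is defined at $\zug{v_i, B_i'}$. Since $\tilde B_i \in \set{\conThresh(v_i), \succb{\conThresh(v_i)}}$ and $\energy'_{P_i}(v_i, \tilde B_i) \geq e_i > \upperboundforloop = \absolut{V}k\maxweight$, the third item of Lem.~\ref{lemma: properties} applies: $\tagn'$ bids $b_i = \optbid{\conThresh}(v_i, \tilde B_i)$ and, upon winning, moves to some $u_i \in \Allowd{\thresh'}(v_i)$. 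By Lem.~\ref{lem:Talgebra} applied to $\conThresh$ — which satisfies the average property by Thm.~\ref{thm:preThreshAverage} and Lem.~\ref{lemma: complementaverage} — this bid is legal ($b_i \leq \tilde B_i \leq B_i'$). Now consider the next configuration $\zug{v_{i+1}, B_{i+1}'}$ and energy $e_{i+1}$, distinguishing whether \Cons wins or loses the bidding: in the first case $v_{i+1} = u_i$ and, from the trimmed perspective, the budget becomes $\tilde B_i \ominus b_i$; in the second case $v_{i+1} \in \neighbor{v_i}$ is chosen by \Pres, who bids at least $\succb{b_i}$, and the trimmed budget becomes at least $\tilde B_i \oplus \succb{b_i}$. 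By Lem.~\ref{lem:Talgebra} and the definitions of $\optbid{\conThresh}$ and $\Allowd{\thresh'}$, this updated trimmed budget is again in $\set{\conThresh(v_{i+1}), \succb{\conThresh(v_{i+1})}}$, and it has the same tie-breaking status as $B_{i+1}'$ (because $\tilde B_i$ and $B_i'$ share their tie-breaking status, and the two budgets are updated by the same bids). Hence it equals $\relativebud'(v_{i+1}, B_{i+1}') = \tilde B_{i+1}$; in particular $B_{i+1}' \geq \tilde B_{i+1} \geq \conThresh(v_{i+1})$, which is (B) at turn $i+1$. Finally, the energy invariant of $\sVIn{\tau}{P_i}$ (second item of Lem.~\ref{lemma: properties}), together with monotonicity of $\energy'$ in the budget, gives $e_{i+1} \leq \energy'_{P_i - 1}(v_{i+1}, \tilde B_{i+1})$; since $\tilde B_{i+1} = \relativebud'(v_{i+1}, B_{i+1}')$, this witnesses $P_{i+1} = P(e_{i+1}) \leq P_i - 1$, which is (P).

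To conclude, note that $P_i \geq 1$ whenever $e_i > \upperboundforloop$, since $\energy'_0 \equiv -1$ (no energy is consumed in zero turns) and $e_i > \upperboundforloop \geq 1 > -1$. By (P), as long as $e_i > \upperboundforloop$ the sequence $P_0 > P_1 > P_2 > \cdots$ is a strictly decreasing sequence of positive integers, hence has at most $P_0 = P(e_0)$ terms. Therefore at some turn $i \leq P(e_0)$ the condition $e_i > \upperboundforloop$ fails, i.e.\ the energy has dropped to at most $\upperboundforloop$, as claimed.

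The main obstacle is the middle paragraph, specifically the claim that $\relativebud'$ commutes with one step of the bidding update — that the trimmed budget obtained by updating $\tilde B_i$ through the bid $b_i$ is exactly $\relativebud'(v_{i+1}, B_{i+1}')$. This is what lets both the budget invariant (B) and the energy invariant of $\sVIn{\tau}{P_i}$ pass to turn $i+1$ without loss, and it rests on the $\oplus/\ominus$ bookkeeping around $\optbid{\conThresh}$ and $\Allowd{\thresh'}$ and on the preservation of the tie-breaking status under the update — essentially the case analysis already underlying Lem.~\ref{lem:Talgebra}. Once this is in place, the potential argument on $P(e_i)$ is immediate.
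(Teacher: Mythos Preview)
Your potential argument via $P(e_i)$ is attractive, but the step it rests on is false: the trimmed budget does \emph{not} in general commute with one bidding step. Lem.~\ref{lem:Talgebra} only gives the lower bounds $T(v)\ominus b^T(v)\geq T(\vminus)$ and $T(v)\oplus(\succb{b^T(v)})\geq T(\vplus)$; it never asserts that the result lands in $\set{T(v_{i+1}),\succb{T(v_{i+1})}}$. Concretely (this is exactly the fourth case of the analysis in App.~\ref{app:singletransitioninvariant}, transposed to \Cons): if $\tilde B_i=\succb{\conThresh(v_i)}$ and \Cons loses the bidding, then by Obs.~\ref{obs: Talgebra} one gets $\tilde B_i\oplus(\succb{b_i})=(\succb{\conThresh(v_i)})\oplus(b^{\conThresh}(v_i)\oplus 1)=\succbInt{\absolut{\conThresh(\vplus)}}+1$, which strictly exceeds $\succb{\conThresh(v_{i+1})}$ even when $v_{i+1}=\vplus$. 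The same overshoot occurs whenever \Pres moves to a neighbour $v_{i+1}$ with $\conThresh(v_{i+1})<\conThresh(\vplus)$, or overbids. In all these cases the budget from the energy invariant of Lem.~\ref{lemma: properties}, namely $\tilde B_i\oplus(\succb{b_i})$, is \emph{larger} than $\tilde B_{i+1}$; since $\energy'_m$ is increasing in the \Cons budget, the inequality $e_{i+1}\leq\energy'_{P_i-1}(v_{i+1},\tilde B_i\oplus(\succb{b_i}))$ does \emph{not} yield $e_{i+1}\leq\energy'_{P_i-1}(v_{i+1},\tilde B_{i+1})$, and (P) fails.

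The paper repairs exactly this point by introducing the \Cons spare change $\spare{i}{\pi'}=\absolut{B_i'\ominus\conThresh(v_i)}$: it is non-decreasing and bounded by $k$, hence stabilises at some turn $m$, and the commutation failure is \emph{precisely} the case in which the spare change strictly increases (so it cannot occur for $i\geq m$). Only from $m$ onward does one get the transition invariant $\energy'_{P(e_i)}(v_i,\tilde B_i)+\weightf(v_i,v_{i+1})\leq\energy'_{P(e_i)-1}(v_{i+1},\tilde B_{i+1})$ (Claim~\ref{clm: conssingletransition}), after which your potential argument on $P(e_i)$ goes through verbatim to force the energy below $\upperboundforloop$. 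In short: your invariant (B) survives, but (P) needs the spare-change stabilisation before it can be used.
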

	
	\begin{proof}[Proof Sketch]
%		The proof is similar to Thm.~\ref{thm:sagn-is-winning}, with some necessary adjustments. 
%		We highlight here the key ideas, and the whole proof can be found in 
Consider a play $ \pi'$ consistent with $\tagn'$. 
		Similar to the proof of Thm.~\ref{thm:sagn-is-winning}, we define the {\em spare change} of \Cons $\triangledown_{\pi'}$ and show that it eventually stabilizes due to the budget invariant that $\tagn'$ maintains. When $\triangledown_{\pi'}$ is stable, we show that $\tagn'$ maintains an energy invariant, which implies that energy is consumed. See details in App.~\ref{app:  tau-winning-aux}. 
	\end{proof}
	
	We  proceed to prove the main result in this section.

	\begin{restatable}{theorem}{tauagniswinning}\label{thm: tau-agn-winning}
		Consider an energy game $ G = \zug{V, E, k, \weightf} $. There exists a \Cons budget-agnostic strategy $ \tagn $ that $M$-wins from every configuration $ \zug{v, B'} $ with $ B' \geq \conThresh(v)$, for every energy $ M \in \Nat$. 
	\end{restatable}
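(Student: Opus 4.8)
The plan is to build $\tagn$ by running $\tagn'$ in \emph{phases}: each phase restarts $\tagn'$ from a fresh virtual energy budget of $2\upperboundforloop$ and lasts until that virtual budget has been spent down to $\upperboundforloop$. This is in the spirit of the ``simulate and omit negative configuration cycles'' construction of Lem.~\ref{lemma: conswinsinfiniteenergy}, recast so that budget-agnosticism of $\tagn$ is immediate. Concretely, fix an initial configuration $\zug{v_0, B'_0}$ with $B'_0 \geq \conThresh(v_0)$, and along a play $\pi = \zug{v_0, B'_0}, \zug{v_1, B'_1}, \ldots$ maintain a \emph{virtual energy} $\hat e_i$: set $\hat e_0 = 2\upperboundforloop$, and inductively $\hat e_{i+1} = \hat e_i + \weightf(v_i, v_{i+1})$ if this value exceeds $\upperboundforloop$, and $\hat e_{i+1} = 2\upperboundforloop$ otherwise (a phase boundary). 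Let $\tagn$ play $\tagn'(v_i, \hat e_i, B'_i)$ at $\zug{v_i, B'_i}$; on configurations with $B' < \conThresh(v)$ (irrelevant to the statement) define $\tagn$ arbitrarily.

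First I would verify well-definedness, legality and budget-agnosticism. Within a phase the virtual energy stays strictly above $\upperboundforloop$ until the boundary transition, so $\tagn'$ is always invoked inside its domain (Def.~\ref{def: consbudgetagnostic}). Since $\hat e_i > \upperboundforloop = |V|k\maxweight$ and $\tagn'$ acts on the trimmed budget $\relativebud'(v_i, B'_i) \in \set{\conThresh(v_i), \succb{\conThresh(v_i)}}$, item~3 of Lem.~\ref{lemma: properties} gives that the chosen bid is $\optbid{\conThresh}(v_i, B'_i)$ and the next vertex lies in $\Allowd{\conThresh}(v_i)$ --- independently of the precise value of $\hat e_i$. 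Hence $\tagn$'s bid at $\zug{v, B'}$ depends only on $v$ and the tie-breaking status of $B'$, so it is history-independent, and once $\tagn$ is known to be winning (shown next) this makes it budget agnostic per Def.~\ref{def: budget-agnostic} (which constrains only the bid). Moreover $\conThresh$ satisfies the average property (Thm.~\ref{thm:preThreshAverage} and Lem.~\ref{lemma: complementaverage}), so $\optbid{\conThresh}(v_i, B'_i) \leq B'_i$ is legal and, by Lem.~\ref{lem:Talgebra}, maintains the budget invariant $B'_{i+1} \geq \conThresh(v_{i+1})$ against every \Pres response. Thus $B'_i \geq \conThresh(v_i)$ throughout $\pi$, in particular at every phase start, so each phase-restart call to $\tagn'$ satisfies the hypotheses of Lem.~\ref{lemma: tau-winning-aux}.

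Next I would show that $\tagn$ wins. Fix $M \in \Nat$ and a \Pres strategy, and let $\pi$ be the resulting play. A phase that begins at configuration $c$ proceeds exactly as $\tagn'$ started at $c$ with initial energy $2\upperboundforloop$ --- because $\tagn'$'s action depends only on the current $(v, e, B')$ (item~1 of Lem.~\ref{lemma: properties} and Def.~\ref{def: consbudgetagnostic}) --- so by Lem.~\ref{lemma: tau-winning-aux} its virtual energy drops to at most $\upperboundforloop$ after finitely many turns. Hence every phase is finite and $\pi$ decomposes into infinitely many phases $\rho_1, \rho_2, \ldots$, consecutive ones sharing their boundary configuration. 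Over $\rho_j$ the virtual energy goes from $2\upperboundforloop$ down to at most $\upperboundforloop$, and the weights that $\pi$ actually traverses during $\rho_j$ are exactly the virtual-energy increments of $\rho_j$ (the boundary restart touches only the tracker, never a move); so the total weight of $\rho_j$ is at most $-\upperboundforloop$. Writing $m_j$ for the length of $\rho_1 \cdots \rho_j$, we get $M + \sumofweights{\pref{m_j}} \leq M - j\upperboundforloop$, which is negative for every $j > M / \upperboundforloop$; thus $\tagn$ $M$-wins $\pi$. As $M$ and the \Pres strategy were arbitrary, this completes the proof.

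I expect the delicate point to be the compatibility of the phase-restart mechanism with budget-agnosticism: a priori $\tagn$'s action at $\zug{v, B'}$ could depend on the virtual-energy tracker, hence on the history, contradicting Def.~\ref{def: budget-agnostic}. What rescues the argument is that in the high-energy regime in which $\tagn'$ runs, the prescribed \emph{bid} is always the average-property bid $\optbid{\conThresh}(v, B')$ and never inspects the tracker (Lem.~\ref{lemma: properties}, item~3); only the next-vertex choice may vary with the history, which Def.~\ref{def: budget-agnostic} permits. Everything else --- that each phase consumes at least $\upperboundforloop$ energy units and that there are infinitely many phases --- is immediate from Lem.~\ref{lemma: tau-winning-aux} once the budget invariant keeps its hypotheses satisfied, so no combinatorial argument beyond Lemmas~\ref{lemma: tau-winning-aux} and~\ref{lemma: properties} is needed.
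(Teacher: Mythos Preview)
Your proof is correct and takes a genuinely different route from the paper's. The paper follows the cycle-omission template of Lem.~\ref{lemma: conswinsinfiniteenergy}: it simulates $\tagn'$ from virtual energy $2\upperboundforloop$, waits until a negative \emph{configuration cycle} closes, omits that cycle from the simulated history, and resumes the simulation from the shortened prefix; the actual play then consists of negative configuration cycles plus a bounded acyclic remainder (the bound $\psi(e_0,v)$ of App.~\ref{app:maximumnumber}), which is enough to drive any finite initial energy below zero. Your phase-restart mechanism dispenses with cycle detection entirely: each phase runs $\tagn'$ afresh from virtual energy $2\upperboundforloop$ until the tracker reaches at most $\upperboundforloop$, and Lem.~\ref{lemma: tau-winning-aux} directly yields both finiteness of the phase and a net weight of at most $-\upperboundforloop$ over it, so summing over phases gives $M + \sumofweights{\pref{m_j}} \leq M - j\upperboundforloop$. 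This is arguably more elementary --- no pigeonhole on configurations, no auxiliary length bound --- while the paper's cycle-omission argument stays closer to the standard ``good/bad cycle'' paradigm of turn-based energy games and makes the link to Lem.~\ref{lemma: conswinsinfiniteenergy} explicit. Both proofs rely on the same two ingredients for everything else: the budget invariant (Lem.~\ref{lem:Talgebra}) to keep $B'_i \geq \conThresh(v_i)$ across restarts, and item~3 of Lem.~\ref{lemma: properties} to force the bid to be $\optbid{\conThresh}(v,B')$ throughout, so that Def.~\ref{def: budget-agnostic} (which constrains only the bid) is satisfied.
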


	\begin{proof}
We proceed similar to the proof of Lem.~\ref{lemma: conswinsinfiniteenergy}. We observe that any play $ \pi' $ consistent with $ \tagn' $ from $ \zug{v, B'} $ and initial energy $ e_0 = 2\upperboundforloop $ must contain a negative configuration cycle before reaching energy $ \upperboundforloop$. 
		We define $ \tagn$ to simulate $ \tagn' $ from $ \tup{v, e_0, B'} $ until such a negative cycle is closed, omit it, and restart the simulation. Since $ \tagn' $ is budget agnostic, so is $\tagn$. By repeating this process, we obtain that a play $ \hat{\pi} $ that is consistent with $ \tagn $, consists of only negative cycles and a finitely many additional configurations (see App.~\ref{app:maximumnumber} for an upper bound). Thus, $ \tagn$ is $ M $-winning from $ \zug{v, B'} $, for every energy $ M $. 
	\end{proof}

\stam{
---------------------- OLD -------------------------

	We define a budget-agnostic strategy $ \sagn{\tau}' $ based on $ \sVIn{\tau}{n} $, that starting from an energy above $ 2\upperboundforloop+1 $, drops the energy below $ \upperboundforloop +1$. 
	We show $ \consagn' $ is budget-agnostic. 
	Later, we define $ \tagn $ based on $ \tagn' $. %, which is $ M $-winning for \Cons for any finite energy $ M $. 
	But first, we define a function $ \relativebud: V \times [k] \rightarrow [k] $ which trims \Cons budget: for $ \zug{v, B'} $ with $ B' \geq \conThresh(v) $, define $ \relativebud'(v, B') $ to be $ \conThresh(v) $ or $ \succb{\conThresh(v)} $ that agrees with $ B $ on the tie-breaking advantage. 
%	In this section, we show that $ \sagn{\tau} $ is a \Cons' $ e $-winning strategy from $ \zug{v, \conThresh(v)} $ for any finite energy $ e $. In Sec.~\ref{sec: NPandcoNP}, we will see that the dependency on the current energy level can be dropped from $ \sagn{\tau} $. 

\begin{definition}\label{def: consbudgetagnostic}
	Consider $ \zug{v, B'} $ with $B' \geq \conThresh(v)$, an energy level $ e > \upperboundforloop $.
	Let $ P(e) $ be the minimum integer such that $ \energy'_{P(e)}(v, \relativebud'(v, B)) \geq e $. 
	We define $ \tagn': V \times \Nat \times [k] \rightarrow [k] \times V $ to be $ \tagn'(v, e, B') = \sVIn{\tau}{P(e)}(\tup{v, e, P(e)}, \relativebud'(v, B'))$. 
%	If $\sVIn{\tau}{N(e)}$ acts budget agnostically at $\zug{\tup{v, e, N(e)}, B'}$, we define $\tagn: V \times \Nat \times [k] \rightarrow [k] \times V$ to follow it, formally if $\sVIn{\tau}{N(e)}(\tup{v, e, N(e)}, \relativebud'(v, B')) = \zug{\optbid{\conThresh}(v, B'), u} $ for some $ u \in \Allowd{\conThresh}(v) $, then $ \tagn(v, e, B') = \sVIn{\tau}{N(e)}(v, B') $. Otherwise, we define $\tagn$ to choose an arbitrary-budget agnostic action, formally $ \tagn(v, e, B') = \zug{\optbid{\conThresh}(v, B'), u} $ for some $ u \in \Allowd{\conThresh}(v) $. 
\end{definition}
	
%	\begin{definition}\label{def: consbudgetagnostic}
%		Consider $ \zug{v, B'} $ with $B' \geq \conThresh(v)$, an energy level $ e > 2\upperboundforloop +1 $
%%		, and $ N $ is the minimum integer such that $ \energy_{N}'(v, \relativebud'(v)) > \upperboundforloop $. 
%		Let $ N(e) $ be the minimum integer such that $ \energy'_{N(e)}(v, \relativebud'(v, B)) \geq e $, which exists by Lem.~\ref{lemma: sufficientnumber}. 
%If $\sVIn{\tau}{N(e)}$ acts budget agnostically at $\zug{\tup{v, e, N(e)}, B'}$, we define $\tagn: V \times \Nat \times [k] \rightarrow [k] \times V$ to follow it, formally if $\sVIn{\tau}{N(e)}(\tup{v, e, N(e)}, \relativebud'(v, B')) = \zug{\optbid{\conThresh}(v, B'), u} $ for some $ u \in \Allowd{\conThresh}(v) $, then $ \tagn(v, e, B') = \sVIn{\tau}{N(e)}(v, B') $. Otherwise, we define $\tagn$ to choose an arbitrary-budget agnostic action, formally $ \tagn(v, e, B') = \zug{\optbid{\conThresh}(v, B'), u} $ for some $ u \in \Allowd{\conThresh}(v) $. 
%	\end{definition}
	
	We show that $ \tagn' $ is budget-agnostic as long as the energy is greater than $ \upperboundforloop $. 
	
%	\begin{restatable}{lemma}{conschoicecoincide}\label{lemma: conschoicecoincide}
%		Let $v$, budget $ B' \in \set{\conThresh(v), \succb{\conThresh(v)}} $, and $ n $ s.t. $ \energy_{n}'(v, B') > \upperboundforloop $. Then, $ \sVIn{\tau}{n}(\tup{v, e, n}, B') = \zug{\optbid{\conThresh}(v, B'), u} $ for $ u \in \Allowd{\conThresh}(v) $ and every $ e \leq \energy_{n}'(v, B') $.  
%	\end{restatable}
	
%	\begin{proof}
%		We show that $ \sVIn{\tau}{n}(\tup{v, \energy_n'(v, B'), n}, B') = \zug{\optbid{\conThresh}(v, B'), u} $ for $ u \in \Allowd{\conThresh}(v) $. 
%		Similar to the proof of  Lem.~\ref{lem:acts-coincide}, we discard all bids except $ \optbid{\conThresh}(v, B) $, and all choices of vertex except for $ \Allowd{\conThresh}(v) $, by showing a contradiction to Lem.~\ref{lemma: necessary}. 
%See details in App.~\ref{app: conschoicecoincide}. 
%	\end{proof}
	
%	\begin{corollary}
%		Since $ \relativebud'(v, B) \in \set{\conThresh(v), \succb{\conThresh(v)}} $, $ \tagn' $ is budget-agnostic at least till the energy $ > \upperboundforloop $
%	\end{corollary}
%	Lemma~\ref{lemma: conschoicecoincide} shows that $ \sagn{\tau} $ and $ \sVIn{\tau}{n} $ in certain configurations. 
%	We thus obtain that $ \sagn{\tau} $ maintains the invariant that $ \sVIn{\tau}{n} $ maintains in these configurations, namely the accumulated energy remains less than what \Cons can consume. 

	\begin{restatable}{lemma}{tauwinningaux}\label{lemma: tau-winning-aux}
		From a configuration $ \zug{v, B'} $ with $ B \geq \conThresh(v) $, and initial energy level $ e_0 \geq 2\upperboundforloop $, $ \tagn' $ ensures that the energy drops to $ \upperboundforloop$. 
	\end{restatable}
	
	\begin{proof}
		The proof is similar to Thm.~\ref{thm:sagn-is-winning}, with some necessary adjustments. 
%		We highlight here the key ideas, and the whole proof can be found in 

		Consider a play $ \pi' = \zug{v_0, B_0'}, \zug{v_1, B_1'} \ldots$ consistent with $\tagn'$, and the initial energy $ e_0 $.
		Denote the energy level at $ i^{th} $ step as $ e_i $. 
		Since, $\tagn'$ bids $ \optbid{\conThresh}(v_i, B_i') $, and choose $ u_i \in \Allowd{\conThresh}(v_i) $ for every $ i $, by the average property of $ \conThresh $, we have $ B_i' \geq \conThresh(v_i) $ for all $ i $. 
		Like in the proof of Thm.~\ref{thm:sagn-is-winning}, we define a function $ \triangledown_{\pi'}: \Nat \rightarrow [k] $ that intuitively assign \Cons spare change. 
		Formally, $ \spare{i}{\pi'} = \absolut{B_i' \ominus \conThresh(v_i)} $. 
		$ \spare{i}{\pi'} $ is monotonically non-decreasing, and bounded above by $ k $, thus it eventually stabilises; let $ m \geq 0 $ and $ r \in \Nat $ such that $ \spare(i) = r $ for $ i \geq m $. 
		
		If the energy level is already $\leq \upperboundforloop $, we are done. 
		Otherwise, we establish an energy invariant that $\tagn'$ maintains at a transition where the spare change is fixed (see App.~\ref{app: conssingletranstion}). 
%		We define $ \relativebud': V \times [k] \rightarrow [k]$: if $ B $ and $ \conThresh(v) $ agree on the tie-breaking status, then $ \relativebud'(v, B) = \conThresh(v) $, otherwise $ \relativebud'(v, B) = \succb{\conThresh(v)} $. 
%		We have a claim (proof of which can be found in App.~\ref{app: conssingletranstion}) similar to Claim~\ref{clm: singletransitioninvariant} as follows:
		
		\begin{restatable}{claim}{conssingletransition}\label{clm: conssingletransition}
			If $ \spare{i}{\pi'} = \spare{i}{\pi'} $, and $ n = P(e_i) $, then 
			$ \energy'_{n}(v_i, \relativebud(v_i, B_i)) +  \weightf(v_i, v_{i+1}) \leq \energy'_{n-1}(v_{i+1}, \relativebud'(v_{i+1}, B_{i+1}))$. 
		\end{restatable}
		
%		We define a corresponding function $ \relativebud': V \times [k] \rightarrow [k] $ as follows: if $ B' $ and $ \conThresh(v) $ agree on the tie-breaking status, then $ \relativebud'(v, B) = \conThresh(v) $, otherwise $ \relativebud'(v, B) = \succb{\conThresh(v)} $. 
%		
%		\begin{restatable}{claim}{energyinvcons}
%			If $ \space{i}{\pi'} =  \spare{i+1}{\pi'}$ for some $ i \geq 0 $, then $ \energy $
%		\end{restatable} 
		
		We suppose the energy level at step $ m $ in $ \pi' $ is $ e_m > \upperboundforloop$. 
%		We argue in App.\ref{app: tauagnwinning} that from configuration $ \zug{v_p, B_p} $ and energy level $ e_p $, $ \sagn{\tau} $ simulates $ \sVIn{\tau}{N(e_p)} $ forever until the energy becomes negative. 
%		Since, from Lemma~\ref{lemma: sufficientnumber}, an $ P(e_m) $ always exists. 
		We apply Claim.~\ref{clm: conssingletransition}, for every step $ i \geq p $, until the energy level becomes $ \leq \upperboundforloop$, or $\pi' $ traverses $ P(e_m) $ more steps, whichever is sooner. For the latter, 
		%The energy level is guaranteed to become $ \leq \upperboundforloop $ at some point, because 
		if $ \pi' $ traverses all $ P(e_m) $ steps, the energy level would become less than or equal to $ \energy_{0}'(v_{m+N(e_m)}, \relativebud'(v_{m+P(e_m)}, B_{m+P(e_m)}))$, which is $ -1 $. 
%		Since, the number of steps is bounded by $ N(e_p) $, and 
%		we eventually get $ \energy_{0}'(v_{p+N(e_p)}, \relativebud'(v_{p+N(e_p)}, B_{p+N(e_p)}))$, which is $ -1 $. 
%%		, and since $ \sVIn{\tau}{N(e_p)} $ is a \Cons $ e_p $-winning strategy from $ \zug{v_i, B_i} $, the energy indeed becomes negative in $ N(e_p) $ steps. 
%		Thus, $ \sagn{\tau} $ is $ M $-winning for \Cons from $ \zug{v, B'} $, for any finite energy $ M $. 
	\end{proof}

	We claim that there exists a finite upper bound $ \psi(e, v) $ on the number of steps by which, starting from configuration $ \zug{v, \conThresh(v)}, $ $ \tagn' $ reduces the energy level from $ e \geq 2\upperboundforloop $ to $\leq \upperboundforloop $ (see App.~\ref{app:maximumnumber} for the proof). 
	We  proceed to prove the main result in this section.

	\begin{restatable}{theorem}{tauagniswinning}\label{thm: tau-agn-winning}
		Consider an energy game $ G = \zug{V, E, k, \weightf} $, and a $ \zug{v, B'} $ with $ B' \geq \conThresh(v)$. 
		Then, \Cons $ M $-wins from $ \zug{v, B'} $ for any finite energy $ M $ by a budget-agnostic strategy $ \tagn $. 
	\end{restatable}

	\begin{proof}
		Similar to the proof of Lem.~\ref{lemma: conswinsinfiniteenergy}, we observe that any play $ \pi' $ that is consistent with $ \tagn' $ form $ \zug{v, B'} $ and initial energy level $ e_0 = 2\upperboundforloop $ must contain a negative cycle in the configuration, in its course to reduce the energy level from $e_0$ to less than or equal to $ \upperboundforloop$. 
		We define a \Cons strategy $ \tagn: V \times [k] \rightarrow [k] \times V $ to simulate $ \tagn' $ form $ \tup{v, e_0, B'} $ till a negative cycle $ \chi $ in the configuration is closed.  
		Then, $ \tagn $ omits $ \chi $, and restarts the simulation from the beginning of the cycle.
		By repeating this process, we obtain that a play $ \hat{\pi} $ that is consistent with $ \tagn $, and consists of only negative cycles and at most $ \psi(e_0, v) $ additional configurations. 
		Thus, \Cons $ M $-wins from $ \zug{v, B'} $ for any finite energy $ M $ by $ \tagn$. 
	\end{proof}
\stam{
\subsection{A \Cons strategy from a large bounded game}\label{sec: energygamereachability}
Recall that for $n \in \Nat$, in the truncated game $\G_n$, \Pres wins if she can preserve energy for $n$ turns. 
We denote by $\energy'_n$ the energy threshold in $\G_n$ from \Cons perspective, thus with a budget of $B'$ and from $v \in V$, \Cons can consume $\energy'_n(v, B')$ units of budget within $n$ turns, but cannot consume $\energy'_n(v, B')+1$ units. 
In other words, $ \energy'_{n}(v, B') = \energy_{n}(v, B) - 1$ where $ B = k^* \ominus B' $. 
In this section we choose $N \in \Nat$ and consider a winning \Cons strategy \(\sVIn{\tau}{N}\) in $\G_n$.

\begin{restatable}{lemma}{appropriaten}\label{lemma: appropriaten}
	There exists an integer \(N\) such that \(\energy'_{N}(v, \conThresh(v)) \geq 3\upperboundforloop+1\). 
\end{restatable}
\begin{proof}
	Since \(\energy'(v, \conThresh(v)) = \infty\), by Lem.~\ref{lemma: conswinsinfiniteenergy}, \Cons has a strategy $\tau$ that is $M$-winning in $\G$ from $\zug{v, \conThresh(v)}$, for every $M \in \Nat$. Denote by \(n_M\) the maximal number of turns that \Pres can avoid losing when playing against $\tau$. Thus, \Cons $M$-wins in $\G_{n_M}$. The lemma follows from choosing $N = n_M$, for $M = 3\upperboundforloop+1$.
\end{proof}

%	\begin{restatable}{lemma}{appropriaten}\label{lemma: appropriaten}
%		There exists an integer \(N\) such that \(\energy'_{N}(v, \conThresh(v)) > 2(\upperboundforloop + 1)\). 
%	\end{restatable}
%	\begin{proof}
%Since \(\energy'(v, \conThresh(v)) = \infty\), by Lem.~\ref{lemma: conswinsinfiniteenergy}, \Cons has a strategy $\tau$ that is $M$-winning in $\G$ from $\zug{v, \conThresh(v)}$, for every $M \in \Nat$. Denote by \(n_M\) the maximal number of turns that \Pres can avoid losing when playing against $\tau$. Thus, \Cons $M$-wins in $\G_{n_M}$. The lemma follows from choosing $N = n_M$, for $M = 2(\upperboundforloop + 1)$.
%	\end{proof}
	
	For the rest of this section, we fix an \(N\) as in Lem.~\ref{lemma: appropriaten} and denote by \(\sVIn{\tau}{N}\) the \Cons strategy that \(\energy'_{N}(v, \conThresh(v))\)-wins from \(\zug{v, \conThresh(v)}\). 
%	We can prove the following by using the argument that one edge can at most consume $ W $ amount of energy. 
	The proof of the following lemma is similar to Lem.~\ref{lemma: conswinsinfiniteenergy}, and also uses the fact that a single edge can consume energy at most $ W $.
%	\SSn{How is it similar to Lemma~\ref{lemma: conswinsinfiniteenergy}?}
	
	\begin{restatable}{lemma}{propN}\label{lemma: propN}
	Consider a play \(\pi\) that is consistent with \(\sVIn{\tau}{N}\) from \(\zug{v, \conThresh(v)}\). For $M = \energy'_{N}(v, \conThresh(v)) \geq 3\upperboundforloop+1$, let $m$ be the first index such that $M + \sumofweights{\pref{m}} < 2\upperboundforloop+1$. Then, $m \geq k \cdot |V|$, $ M + \sumofweights{\pref{m}} > \upperboundforloop$, and $\pref{m}$ contains a configuration cycle with a negative energy. 
%OLD	, and a finite energy \(M\) satisfies: \(2\upperboundforloop + 1 \leq M \leq \energy_{N}(v, k^* \ominus \conThresh(v)) -1 \). If the initial energy is \(M\), then the energy level in \(\pi\) stays above \(\upperboundforloop + 1\) for at least \(|V|k\) steps, before becoming strictly less than \(\upperboundforloop + 1\) within \(N - |V|k\) steps. 
	\end{restatable}
\stam{%redundant
	\begin{proof}
		\Cons \(M\)-wins in \(\rho\). 
		The energy level stays above \(\upperboundforloop+ 1\) for \(|V|k\) steps in \(\rho\) because one step can consume at most \(W\) amount of energy. 
		
		For the same reason, the energy level becomes strictly less than \(\upperboundforloop + 1\) within \(N - |V|k\) steps.  
		Otherwise the rest of the energy would not become negative within the rest of \(|V|k\) steps, contradicting that \(\sVIn{\tau}{N}\) is a \(M\)-winning strategy for \Cons. 
	\end{proof}
}

We construct a winning \Cons strategy \(\sVI{\tau}\) based on \(\sVIn{\tau}{N}\) as in Lem.~\ref{lemma: conswinsinfiniteenergy}. 
Recall that \(\sVI{\tau}\) repeatedly simulates \(\sVIn{\tau}{N}\) until a negative configuration cycle is closed and omits the cycle before continuing the simulation. By Lem.~\ref{lemma: propN}, a negative cycle is closed {\em before} the accumulated energy drop below $|V| \cdot k \cdot W$. We will show that this implies that \(\sVIn{\tau}{N}\) is budget agnostic until a cycle is closed and conclude that \(\sVIn{\tau}{N}\) is budget agnostic. 

	Intuitively, the following lemma identifies a necessary condition for winning: starting from energy \(e \leq \energy'_{n}(v, \conThresh(v))\), a \Cons winning strategy should always maintain the energy level at most what he can consume from that configuration. 
Recall that in the reachability game $\R_N$ that corresponds to \(\G_N\), a vertex \(x\) is \(\tup{v, e, m}\), where $v$ is a vertex in $\G_N$, the accumulated energy is $e$, and $m$ is a counter than marks the remaining turns. Denote by the threshold in $\R_N$ by $\Th_{\R_N}$; note that $\Th_{\R_N}$ coincides with \Cons's threshold budget $\Th'_N$ in $\G_N$. 
	The proof can be found in App.~\ref{app: necessarycond}.
	
	\begin{restatable}{lemma}{necessarycond}\label{lemma: necessarycond}
For a vertex \(x = \tup{v, e, m}\) of \(\R_N\), \Cons budget $B' = \thresh{x}{\R_N}$, and action \(\zug{b, u} = \sVIn{\tau}{N}(x, \thresh{x}{\R_N})\), we have
		\begin{itemize}
			\item \(e + \weightf(v, u) \leq \energy'_{m-1}(u, B' \ominus b)\), and 
			
			\item for any vertex \(v' \in \neighbor{v}\), \(e + \weightf(v, v') \leq \energy'_{m-1}(v', B' \oplus \left(\succb{b}\right))\). 
		\end{itemize} 
	\end{restatable}

	We proceed to the main technical lemma in this section. 

	\begin{restatable}{lemma}{consthresholds}\label{lemma: consthresholds}{\bf (\(\sVIn{\tau}{N}\) is budget agnostic).}
	Consider a play \(\pi\) that is consistent with \(\sVIn{\tau}{N}\) in $\R_N$ from configuration \(\zug{x_0, \conThresh(v)}\) with $x_0 = \tup{v, \energy'_{N}(v, \conThresh(v)), N}$. Consider a configuration $\zug{x, B'}$ with \(x = \tup{v, e, m}\) that $\pi$ traverses such that \(e \geq 2\upperboundforloop + 1\) and \(B' \geq \conThresh(v)\). Then, \(\sVIn{\tau}{N}(x, B') = \zug{\optbid{\conThresh}(v, B'), u}\) for some \(u \in \Allowd{\conThresh}(v)\). 
	\end{restatable}
	
	\begin{proof}[Proof sketch]
		We describe the proof idea and the details can be found in App.~\ref{app: consthresholds}. Consider \(x = \zug{v, e, m}\) that \(\pi\) traverses with \(e \geq 2\upperboundforloop +1\). 
		We prove by induction that the threshold in $ \R_N $ and $ \G $ coincide: $ \thresh{x}{\R_N} = \conThresh(v) $.
%		We prove by induction that the thresholds in $\R_N$ and $\G$ (almost) coincide: if \(\conThresh(v) \in \Nat\), we have \(\thresh{x}{\R_N} = \conThresh(v)\), otherwise \(\absolut{\thresh{x}{\R_N}} = \absolut{\conThresh(v)}\). Note that, for example, $\thresh{x}{\R_N} = 5^*$ and $\conThresh(v) = 5$ is possible.		
		We argue for the $ i^{th} $ configuration of $ \pi $ that \Cons' action at \(\zug{x_i, \thresh{x_i}{\R_N}}\) must be \(\zug{b^{\conThresh}(v_i), u}\), where \(u\) cannot be outside of \(\Allowd{\conThresh}(v_i)\). 
		We show by an intricate case by case analysis that for any other choice of bids, a contradiction can be derived using Lemma~\ref{lemma: necessarycond}, leaving \(b^{\conThresh}(v)\) as the only feasible choice of bid. 
		On the other hand, if \Cons chooses a vertex outside of \(\Allowd{\conThresh}(v_i)\) at \(\zug{x_i, \thresh{x_i}{\R_N}}\), then \Pres has a winning response to it to keep the energy non-negative for the remaining \(N-i\) turns. 
		Finally, we prove the inductive step: another case analysis shows that the induction hypothesis for $n$ and \(b^{Th_{\R_N}}(x_n) = b^{\conThresh}(v_n)\), imply  \(\thresh{x_{n+1}}{\R_N} = \conThresh(v_{n+1})\).  
	\end{proof}
	
We conclude with the following. 		

\begin{restatable}{theorem}{budgetagsnosticcons}\label{thm: budgetagsnosticcons}
	Consider an energy bidding game \(\G = \zug{V, E, k, \weightf}\) and a configuration \(c = \zug{v, B'}\) for a \Cons budget \(B' \geq \preThresh(v)\). Then, the strategy \(\sVI{\tau}\) is a budget-agnostic strategy that \(e\)-wins from \(c\): for every vertex \(v'\) and \Cons budget \(B' \geq \preThresh(v')\), we have \(\zug{\optbid{\conThresh}(v', B'), u} = \sVI{\tau}(v', B')\), where \(u \in \Allowd{\conThresh}(v')\).
\end{restatable}}
}%of stam	

%%%%%%%%%%%%%%%%%%%%%%%%%%%%%%%%%%%%%%%%%
	\section{Finding Threshold Budgets is in \NP~and \coNP~}\label{sec: NPandcoNP}	
	We formalize the problem of finding threshold budgets as a decision problem: 
	\begin{problem}[Finding Threshold Budgets]\label{prob: decisionprobthresh}
		Given an energy bidding game \(\G = \zug{V, E, k, \weightf}\), a vertex \(v\), and a budget \(\ell \in [k]\), decide whether \(\preThresh(v) \geq \ell\).
	\end{problem}

We show that Prob.~\ref{prob: decisionprobthresh} is in \NP~and \coNP. Our approach is similar to~\cite{AS25}. The core of the algorithm is to decide, given a function \(T: V \rightarrow \budgetset\) that satisfies the average property whether $T \equiv \Th$. Note that $T$ can be guessed since its size is \(\calO(|V|\cdot\log(k))\).

\noindent{\bf From bidding games to turn-based games.}
We show how to decide $T \geq \Th$ and showing that $T \leq \Th$ is dual.
Given an energy bidding game $\G$ and $T$ that satisfies the average property, we construct and solve a turn-based energy game $G_{T, \G}$. 
We describe the idea and the details can be found in App.~\ref{app:turn-based-construction}. Intuitively, a \Pres winning strategy in $G_{T,\G}$ corresponds to a winning budget-agnostic strategy in $\G$. 
For each vertex $v \in V$, there are two \Pres vertices $\zug{v, B}$ for $B \in \set{T(v), \succb{T(v)}}$, which are configurations in $\G$, and a third copy $\zug{v, \top}$, which is a winning sink for \Pres. 
Vertex $\zug{v, B}$ in $G_{T,\G}$ simulates configuration $\zug{v, \tilde{B}}$ in $\G$ with $B = \relativebud(v, \tilde{B})$. \Pres can choose any $v' \in \Allowd{T}(v)$. This corresponds to choosing action $\zug{\optbid{T}(v, B), v'}$ in $\G$. Then, \Cons responds by either: (1)~letting \Pres win the bidding and proceeding to $\zug{v', B \ominus \optbid{T}(v, B)}$ or (2)~win the bidding and choosing the successor vertex $u$, then the next vertex is $\zug{u, \tilde{B}}$, where $\tilde{B}= B \oplus (\succb{\optbid{T}(v, B)})$ if $\tilde{B} \in \set{T(u), \succb{T(u)}}$, otherwise the budget is trimmed to $\top$. 
The weights in $G_{T,\G}$ are derived from $\G$ such that a play in $G_{T, \G}$ that does not end in a sink, corresponds to a play in $\G$ that traverses the same sequence of weights. 

The following lemma shows soundness of the procedure. $\energy$ in turn-based games is similar to Def.~\ref{def:energy-thresh}; a necessary and sufficient initial energy for \Pres to win, see also~\cite{BCD+11}. 
	
	\begin{restatable}{lemma}{soundness}\label{lemma: soundness}
If $\energy(\zug{v, B}) < \infty$ in \(G_{T, \G}\), for every $v$ with $T(v) < k+1$, then \(T \geq \preThresh\). 
	\end{restatable}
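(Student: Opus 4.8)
The plan is to prove $\preThresh(v)\le T(v)$ for every $v\in V$. Since $\preThresh$ takes values in $\budgetset$, this is immediate when $T(v)=k+1$, so it suffices to show, for every $v$ with $T(v)<k+1$, that \Pres can win the energy game in $\G$ from $\zug{v,T(v)}$ with \emph{some} finite initial energy; by the definition of $\preThresh$ this gives $\preThresh(v)\le T(v)$. I would obtain such a winning strategy by compiling the hypothesised finite-energy \Pres winning strategies of $\turnbasedG$ into one budget-agnostic \Pres strategy $\sigma_T$ in $\G$ and then bounding its energy needs, closely following the template of Thm.~\ref{thm:sagn-is-winning}. Fix a finite number $M_0$ with $M_0\ge\energy(\zug{v,B})$ in $\turnbasedG$ for every $v$ with $T(v)<k+1$ and every $B\in\set{T(v),\succb{T(v)}}$ -- such $M_0$ exists since there are finitely many such vertices, each with a finite threshold by hypothesis -- and fix, for each such $\zug{v,B}$, an $M_0$-winning \Pres strategy in $\turnbasedG$. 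Define $\sigma_T$ so that at a configuration $\zug{v,B}$ of $\G$ with $B\ge T(v)$ it bids $\optbid{T}(v,B)$ and, upon winning the bidding, moves to the neighbour $u\in\Allowd{T}(v)$ prescribed by the $\turnbasedG$-strategy currently being simulated (see the next paragraph). By Lem.~\ref{lem:Talgebra} and the budget invariant it entails, this bid is legal and, whatever \Cons does, the next configuration $\zug{v',B'}$ satisfies $B'\ge T(v')$; moreover the \Pres budget stays at most $k^*$ throughout, so $T(v')\le B'\le k^*<k+1$ and $\zug{v',\relativebud{v', B'}}$ is again a vertex of $\turnbasedG$. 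Hence the invariant $B\ge T(v)$ propagates along every play of $\G$ consistent with $\sigma_T$ that starts from such a configuration.

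Next I would relate a play $\pi=\zug{v_0,B_0},\zug{v_1,B_1},\dots$ of $\G$ consistent with $\sigma_T$, started at $\zug{v_0,T(v_0)}$, to plays of $\turnbasedG$. As in Thm.~\ref{thm:sagn-is-winning}, let $\spare{i}=\absolut{B_i\ominus T(v_i)}$ be \Pres's spare change; because $\sigma_T$ bids $\optbid{T}(v_i,B_i)$, a value depending only on $v_i$ and the tie-breaking advantage of $B_i$, the spare change is never spent, so $\spare{i}$ is non-decreasing in $i$, and being bounded by $k$ it stabilises. Each step of $\pi$ is mirrored by a move of $\turnbasedG$: a bidding won by \Pres and a move to $u\in\Allowd{T}(v_i)$ is \Cons choosing outcome~(1), while a bidding won by \Cons and a move to $u$ is \Cons choosing outcome~(2) with successor $u$, whose target is the trimmed configuration $\zug{u,\relativebud{u, B_{i+1}}}$ when the spare change is unchanged at that step and the winning sink $\zug{u,\top}$ otherwise. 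Accordingly $\sigma_T$ is played by simulating, at each point, a fixed $M_0$-winning strategy of $\turnbasedG$ anchored at the configuration where the simulation was last started: it starts the simulation at $\zug{v_0,T(v_0)}$ and restarts it at the current trimmed configuration at every turn where the spare change strictly increases -- of which there are at most $k$. This splits $\pi$ into at most $k+1$ \emph{patches}, on each of which the spare change is constant and $\pi$ mirrors a play of $\turnbasedG$ that is consistent with the anchored $M_0$-winning strategy and never reaches $\top$.

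Finally I would bound the required initial energy $e_0$. Within a patch, the mirrored winning strategy of the turn-based game $\turnbasedG$ maintains an energy invariant (the turn-based analogue of Lem.~\ref{lemma: preswinsinfinite}), which translates into the analogue of Claim~\ref{clm: singletransitioninvariant}: for every step $m$ inside the patch, $\energy(\zug{v_m,\relativebud{v_m, B_m}})+\weightf(v_m,v_{m+1})\ge\energy(\zug{v_{m+1},\relativebud{v_{m+1}, B_{m+1}}})$ in $\turnbasedG$. Hence the ``deficit'' $\energy(\zug{v_m,\relativebud{v_m, B_m}})-\bigl(e_0+\sumofweights{\pref{m}}\bigr)$ is non-increasing within each patch, while at a patch boundary it increases by at most $M_0+\maxweight$ (using $0\le\energy(\cdot)\le M_0$ in $\turnbasedG$ and $\weightf\ge-\maxweight$). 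With at most $k$ boundaries and an initial deficit of at most $M_0-e_0$, the deficit stays non-positive once $e_0\ge(k+1)(M_0+\maxweight)$, and then $e_0+\sumofweights{\pref{m}}\ge\energy(\zug{v_m,\relativebud{v_m, B_m}})\ge0$ for all $m$, so $\sigma_T$ is $e_0$-winning from $\zug{v_0,T(v_0)}$. Since $e_0$ is finite, $\preThresh(v_0)\le T(v_0)$, and ranging over all $v_0$ with $T(v_0)<k+1$ yields $T\ge\preThresh$.

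The step I expect to be the main obstacle is the bookkeeping compressed into the last two paragraphs, in the spirit of Sec.~\ref{sec:Pres-bud-agn}. One must verify that the spare change is monotone non-decreasing (so that there are at most $k$ patch boundaries) and that, inside a patch, the mirrored $\turnbasedG$-play never reaches a $\top$ sink -- both rest on a careful case analysis of the trimming map $\relativebud$ and of the tie-breaking advantage under $\oplus$ and $\ominus$, and must cover the case where \Cons overbids. One must also check that, inside a patch, $\sigma_T$'s moves mirror a single play of $\turnbasedG$ consistent with the anchored winning strategy, so that the turn-based energy invariant transfers verbatim; and, if a sharper energy bound in the shape of Thm.~\ref{thm:sagn-is-winning} is wanted, the $M_0+\maxweight$ jump at patch boundaries needs a tighter analysis. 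None of these is conceptually deep, but each is technically delicate.
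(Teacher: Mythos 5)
Your proposal is correct and takes essentially the same route as the paper: both compile the finite-energy winning strategies of $\turnbasedG$ into a \Pres strategy in $\G$ that bids $\optbid{T}(v,B)$ and follows the allowed-vertex choices of the simulated turn-based strategy, restart the simulation when the spare change grows (the paper phrases the trigger as the simulated play hitting a $\top$-sink), bound the number of restarts by $k$ via monotonicity of the spare change, and conclude with an initial energy of order $(k+1)(\max_w \energy(w) + \maxweight)$. The patch/deficit bookkeeping you flag as delicate is precisely what the paper leaves implicit, so there is no gap relative to its level of detail.
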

	\begin{proof}[Proof sketch]
We construct a \Pres budget-agnostic strategy $\sigma$ in $\G$ that simulates the operation of a winning strategy $\sigma'$ in $G_{T,\G}$ as follows. See details in App.~\ref{app:soundness}. 
Suppose that $\G$ is in $\zug{v, \tilde{B}}$ with $\tilde{B} \geq T(v)$, then the simulation of $G_{T,\G}$ is in $\zug{v, \relativebud(v, \tilde{B})}$. We define $\zug{\optbid{T}(v,B), v'} = \sigma(v, \tilde{B})$ such that $\sigma$ chooses $v'$ in $G_{T,\G}$. 
\Pres simulates in $G_{T, \G}$, \Cons's response in $\G$, which we assume wlog, is either $\zug{0, u}$ or $\zug{\succb{\optbid{T}(v,B)}, u}$, for some $u \in N(v)$. If the next vertex in $G_{T,\G}$ is not a sink, we repeat. A sink is reached only when \Cons wins the bidding and \Pres's updated budget is $\tilde{B}' = B \oplus (\succb{\optbid{T}(v,B)})$ with $\tilde{B} > \succb{T(u)}$. We restart the simulation of $G_{T, \G}$ in $\zug{u, \relativebud(u, B')}$. Note that \Pres's spare change has increased, thus a play can end in a sink only finitely often. %In App.~\ref{app:soundness}, we show that \Pres $M$-wins, for \(M = (k+1) \cdot \max_{v \in V} \energy(\zug{v,\preThresh(v)})\). 
	\end{proof}

We proceed to prove completeness. 
	
	\begin{restatable}{lemma}{completeness}\label{lemma: completeness}
		If \(T \equiv \preThresh\), then \(\energy(\zug{v,B}) < \infty\) in \(G_{T, \G}\), for every $v$
		 with $T(v) < k+1$. 
	\end{restatable}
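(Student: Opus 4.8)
The plan is to derive a winning \Pres strategy in the turn-based game $G_{T, \G}$ from the budget-agnostic \Pres strategy $\sagn$ of Thm.~\ref{thm:sagn-is-winning}; this is, in essence, the reverse of the construction in the proof of Lem.~\ref{lemma: soundness}. Since $T \equiv \preThresh$, the strategy $\sagn$ is well-defined and, by Thm.~\ref{thm:sagn-is-winning}, it $M$-wins from every configuration $\zug{v, B}$ with $B \geq \preThresh(v)$, for the finite energy $M = (k+1)\cdot(\maxweight + \max_{v \in V}\energy(v, \preThresh(v)))$. The first thing I would establish is that \emph{inside} $G_{T, \G}$ the strategy $\sagn$ plays exactly the moves that are hard-coded into the arena: every \Pres vertex $\zug{v, B}$ of $G_{T,\G}$ has $B \in \set{T(v), \succb{T(v)}} = \set{\preThresh(v), \succb{\preThresh(v)}}$, hence $\relativebud(v, B) = B$ and so $\sagn(v, B) = \sVI(v, B)$ by Def.~\ref{def:sagn}; and Lem.~\ref{lem:acts-coincide} applies precisely at such configurations, giving $\sVI(v, B) = \zug{\optbid{\preThresh}(v, B), u} = \zug{\optbid{T}(v, B), u}$ for some $u \in \Allowd{\preThresh}(v) = \Allowd{T}(v)$. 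That is exactly a legal \Pres move out of $\zug{v, B}$ in $G_{T,\G}$ (\Pres moves to $u$; the bid $\optbid{T}(v,B)$ is forced). I would therefore let $\sigma'$ be the \Pres strategy in $G_{T,\G}$ that, at $\zug{v, B}$, moves to the vertex $u$ prescribed by $\sagn(v, B)$, and that takes the forced self-loop at every sink $\zug{v, \top}$.

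Next I would set up the play correspondence. Fix $v$ with $T(v) < k+1$, a \Pres vertex $\zug{v, B_0}$ of $G_{T,\G}$ (a configuration of $\G$ with $B_0 \geq \preThresh(v)$), and initial energy $M$. Consider a play $\rho'$ of $G_{T,\G}$ from $\zug{v, B_0}$ consistent with $\sigma'$. If $\rho'$ never visits a $\top$-sink, then by the definition of the weights of $G_{T,\G}$ it mirrors, step for step, a play $\rho$ of $\G$ that traverses the same weights, is consistent with $\sagn$, starts at $\zug{v, B_0}$, and stays among the configurations $\zug{w, B}$ with $B \in \set{T(w), \succb{T(w)}}$; here I would invoke the average property of $T \equiv \preThresh$ (Thm.~\ref{thm:preThreshAverage}) together with Lem.~\ref{lem:Talgebra}, exactly as in the definition of $G_{T,\G}$, to see that the budget invariant $B \geq T(w)$ holds and no trimming is ever needed. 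Since $\sagn$ $M$-wins and $B_0 \geq \preThresh(v)$, the energy stays non-negative along $\rho$, hence along $\rho'$. If instead $\rho'$ does reach some sink $\zug{w, \top}$, then its prefix up to that point still coincides, weight-for-weight, with a prefix of a $\sagn$-consistent play of $\G$ from $\zug{v, B_0}$ (extend \Cons arbitrarily after the sink is reached), so the energy is non-negative on that prefix, and from $\zug{w, \top}$ onwards the weight-$0$ self-loop keeps it non-negative. In either case \Pres $M$-wins $\rho'$, so $\sigma'$ is an $M$-winning \Pres strategy from $\zug{v, B_0}$, i.e.\ $\energy(\zug{v, B_0}) \leq M < \infty$ in $G_{T, \G}$; since $v$ and the copy $B_0$ were arbitrary, the lemma follows.

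I expect the main obstacle to be the claim, in the first case, that $\rho'$ genuinely mirrors a $\sagn$-consistent play of $\G$: this needs $\sagn$ to actually play the hard-coded bid $\optbid{T}(v, B)$ and to move only to allowed vertices at \emph{every} configuration visited along the play — which is exactly Lem.~\ref{lem:acts-coincide}, and it is applicable here precisely because every such configuration carries a trimmed budget $B \in \set{T(v), \succb{T(v)}}$ — together with Lem.~\ref{lem:Talgebra} to keep the budget within these two copies. Handling the transition to the $\top$-sink (and checking that it is reached only with non-negative energy, so that ``\Pres wins at $\top$'' is legitimate) is a minor but necessary detail, as is matching each \Cons move in $G_{T,\G}$ with a legal \Cons move in $\G$, which is built into the construction of $G_{T,\G}$.
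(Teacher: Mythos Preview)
Your proposal is correct and follows essentially the same approach as the paper: transport $\sagn$ into $G_{T,\G}$ via Lem.~\ref{lem:acts-coincide} (which applies because every non-sink \Pres vertex carries a trimmed budget), match plays weight-for-weight, and invoke Thm.~\ref{thm:sagn-is-winning}. The only cosmetic difference is that you argue directly that the transported strategy is $M$-winning, whereas the paper frames the same simulation as a proof by contradiction (assuming $\energy(w)=\infty$, taking a \Cons winning strategy in $G_{T,\G}$, and simulating it in $\G$ against $\sagn$).
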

	\begin{proof}[Proof sketch]
Suppose towards contradiction that \(T \equiv \preThresh\) and for $\zug{v, B}$, we have $\energy(v,B) = \infty$ in $G_{T,\G}$ but $\energy(v,B) < \infty$ in $\G$. Let $\sagn$ be a $M$-winning from $\zug{v,B}$ in $\G$ (see Thm.~\ref{thm:sagn-is-winning}) and let $\tau'$ be a \Cons $M$-winning strategy in $G_{T,\G}$. We simulate $\sagn$ against $\tau'$ in both games. Crucially, the simulation in $G_{T,\G}$ is possible only since the actions that $\sagn$ chooses imply that the configuration updates in both games are the same. We obtain two plays, one in $\G$ and the other in $G_{T,\G}$ with the same sequence of weights, which is a contradiction since \Cons $M$-wins in $G_{T,\G}$ while \Pres $M$-wins in $\G$. See details in App.~\ref{app: completeness}.
	\end{proof}
	
	Finally, we verify \(T' \geq \conThresh\). 
	We proceed as before to construct \(G_{T', \G}\), except that \Pres responds to \Cons and sink vertices are winning for \Cons, i.e, $\zug{v,\top}$ has a \((-1)\)-valued self-loop. Dually it follows:
		\begin{restatable}{lemma}{conscorrectness}\label{lemma: conscorrectness}
		If \(\energy(\zug{v, B}) = \infty\) in $G_{T', \G}$, for  every $v$ with $\conThresh(v) < k+1$, then \(T' \geq \conThresh\). If \(T' \equiv \conThresh\), then \(\energy(\zug{v, B}) = \infty\) in $G_{T,\G}$, for every $v$ with $T'(v) < k+1$. 
	\end{restatable}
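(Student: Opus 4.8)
The plan is to prove this exactly as the \Cons-side dual of Lemmas~\ref{lemma: soundness} and~\ref{lemma: completeness}, swapping the roles of \Pres and \Cons everywhere (so in $G_{T', \G}$ it is \Cons who picks the move and \Pres who responds, and the sinks $\zug{v, \top}$ are winning for \Cons through their $(-1)$-self-loop). The one structural fact the whole argument rests on is that $\conThresh = T'$ satisfies the average property: $\preThresh$ does by Thm.~\ref{thm:preThreshAverage}, complements of functions with the average property again have it by Lem.~\ref{lemma: complementaverage}, and likewise the complement of any guessed $T$ with the average property. Hence Lem.~\ref{lem:Talgebra} applies to $T'$, giving the \Cons budget invariant: at a vertex $w$ with \Cons budget $B' \ge T'(w)$, bidding $\optbid{\conThresh}(w, B')$ and moving to any $u \in \Allowd{\conThresh}(w)$ leaves \Cons with budget at least $T'(u)$, no matter how \Pres responds. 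This is precisely what makes the configuration updates in $\G$ and in $G_{T', \G}$ coincide whenever \Cons plays $\optbid{\conThresh}$-bids into $\Allowd{\conThresh}$, and it pins down that a sink $\zug{u, \top}$ of $G_{T', \G}$ is reached exactly when \Pres wins a bidding in $\G$ and drives \Cons's budget above $\succb{T'(u)}$.

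For the first implication I would fix $v$ with $T'(v) < k+1$, take a \Cons strategy $\tau'$ witnessing $\energy(\zug{v, B}) = \infty$ in $G_{T', \G}$ (positional, by finiteness of the turn-based energy game, cf.~\cite{BCD+11}), and build a budget-agnostic \Cons strategy $\sigma$ in $\G$ that simulates $\tau'$: at a reached configuration $\zug{w, \tilde{B}'}$ with $\tilde{B}' \ge T'(w)$, bid $\optbid{\conThresh}(w, \relativebud'(w, \tilde{B}'))$ and move to the vertex that $\tau'$ picks at $\zug{w, \relativebud'(w, \tilde{B}')}$; whenever \Pres wins a bidding and pushes \Cons's budget above $\succb{T'(u)}$ the simulated play of $G_{T', \G}$ enters a sink and $\tau'$ is simply restarted from the new trimmed configuration. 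I would then rerun the spare-change bookkeeping from the proofs of Thm.~\ref{thm:sagn-is-winning} and Lem.~\ref{lemma: tau-winning-aux} for the quantity $\spare{i}{\pi'} = \absolut{B'_i \ominus \conThresh(v_i)}$: it is monotone non-decreasing and bounded by $k$, hence stabilizes, so the play splits into at most $k+1$ ``patches'' inside which $\tau'$ runs uninterrupted, with \Pres gaining at most $W$ energy at each of the at most $k$ patch boundaries. In the final, infinite patch the $\G$-play coincides with a play of $G_{T', \G}$ consistent with $\tau'$, along which \Cons consumes unbounded energy, so starting with $M + kW$ units \Cons still drives the energy below $0$, for every $M$. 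Thus \Cons $M$-wins from every $\zug{v, B'}$ with $B' \ge T'(v)$ for every $M$, i.e.\ $\energy(v, k^* \ominus T'(v)) = \infty$ in $\G$; passing from \Cons's to \Pres's budget convention through $\ominus$ rewrites this as $\energy(v, \predb{T(v)}) = \infty$, which by monotonicity of $\energy$ in the budget (Cor.~\ref{corr: energyinherits}) forces $\predb{T(v)} < \preThresh(v)$, i.e.\ $T(v) \le \preThresh(v)$, which is equivalent to $T' \ge \conThresh$.

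For the converse I would argue by contradiction: assume $T' \equiv \conThresh$ but $\energy(\zug{v, B}) < \infty$ in $G_{T', \G}$ for some $v$ with $T'(v) < k+1$, and let $\sigma'$ be a \Pres $M$-winning strategy there. By Thm.~\ref{thm: tau-agn-winning}, \Cons has a budget-agnostic strategy $\tagn$ that $M$-wins from $\zug{v, B'}$ in $\G$, where $B'$ is trimmed to $\set{T'(v), \succb{T'(v)}}$. I would play $\tagn$ against $\sigma'$ in $\G$ while simulating the induced play in $G_{T', \G}$: since $\tagn$ bids $\optbid{\conThresh}$ into $\Allowd{\conThresh}$, Lem.~\ref{lem:Talgebra} forces the configuration updates in the two games to agree, so the two plays traverse the same sequence of weights. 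If the $G_{T', \G}$-play ever enters a sink $\zug{u, \top}$ then \Cons wins there, contradicting that $\sigma'$ is $M$-winning for \Pres; otherwise the two identical-weight infinite plays are $M$-won by \Cons in $\G$ and by \Pres in $G_{T', \G}$, again a contradiction. Hence $\energy(\zug{v, B}) = \infty$ in $G_{T', \G}$ for every $v$ with $T'(v) < k+1$.

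The main obstacle will be the one that pervades Sec.~\ref{sec:Cons-bud-agn}: \Cons has no explicit finite-game strategy, so I must show that $\tau'$ (and $\tagn$) simultaneously keep the budget invariant of Lem.~\ref{lem:Talgebra} for $T'$ \emph{and} keep consuming energy across the finitely many restarts triggered by \Cons's spare change increasing; getting ``unbounded consumption in $G_{T', \G}$'' to transfer to ``$M$-wins for every $M$ in $\G$'' is exactly where the ``at most $k+1$ patches, at most $W$ lost per boundary'' accounting is essential. One should also verify the corner case $T'(v) = k+1$, equivalently $T(v) = 0$ (here \Cons can never win and $T'(v) \ge \conThresh(v)$ holds vacuously, and the move into $\Allowd{\conThresh}$ never produces such a vertex since \Cons's budget cannot exceed $k^*$), and read the two clauses dually to Lemmas~\ref{lemma: soundness} and~\ref{lemma: completeness}, in particular the hypothesis ``$v$ with $T'(v) < k+1$'' and the game $G_{T', \G}$ in the second clause.
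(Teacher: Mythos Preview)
Your proposal is correct and follows exactly the approach the paper intends: the paper's own proof of Lem.~\ref{lemma: conscorrectness} is literally the phrase ``Dually it follows'', and what you have written is a careful spelling-out of that duality, mirroring the proofs of Lem.~\ref{lemma: soundness} and Lem.~\ref{lemma: completeness} with the roles of \Pres and \Cons swapped. You also correctly note that the second clause should read $G_{T',\G}$ rather than $G_{T,\G}$.

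One small quantitative point in your soundness direction: the bound ``\Pres gains at most $W$ energy at each of the at most $k$ patch boundaries'' is not quite enough on its own. Within each \emph{finite} patch the simulated $G_{T',\G}$-play is only a prefix of a play consistent with the memoryless winning strategy $\tau'$, and along such a prefix the partial sums can temporarily increase. The right observation (standard for memoryless \Cons-winning strategies in turn-based energy games) is that every cycle in the $\tau'$-restricted graph has negative total weight, so the energy gain along any prefix is bounded by a constant depending only on $G_{T',\G}$ (roughly $|V_{G_{T',\G}}|\cdot W$). Summing this per-patch bound over at most $k$ finite patches, plus the $kW$ from the boundaries, still gives a finite overhead, and the final infinite patch then consumes unbounded energy as you argue. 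This does not affect the structure of your proof, only the constant in ``starting with $M + kW$ units''.
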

	
Since solving mean-payoff turn-based games is in NP and coNP~\cite{ZP96}, we obtain:
	
		\begin{restatable}{theorem}{complexity}\label{thm: complexity}
Finding threshold budgets in energy bidding game is in \NP~and \coNP.	
	\end{restatable}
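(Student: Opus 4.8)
The plan is to turn the turn-based reductions of Lemmas~\ref{lemma: soundness},~\ref{lemma: completeness}, and~\ref{lemma: conscorrectness} into a guess-and-verify procedure, using that the threshold function $\preThresh$ is a succinct object (it satisfies the average property, Thm.~\ref{thm:preThreshAverage}) and that turn-based mean-payoff games are decidable in NP and coNP~\cite{ZP96}.

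First I would note that a function $T : V \to \budgetset$ satisfying the average property has size $O(|V|\log k)$, so it can be guessed, and that verifying Def.~\ref{def: averageprop} for a given $T$ is polynomial. For such a $T$, let $T'$ be its complement, which satisfies the average property by Lem.~\ref{lemma: complementaverage}; then the turn-based games $G_{T,\G}$ and $G_{T',\G}$ have $O(|V|)$ vertices and integer weights bounded by $W$, hence are constructible in polynomial time. Chaining the three lemmas gives an exact characterization: $T \equiv \preThresh$ if and only if (i)~$\energy(\zug{v,B}) < \infty$ in $G_{T,\G}$ for every $v$ with $T(v) < k+1$, and (ii)~$\energy(\zug{v,B}) = \infty$ in $G_{T',\G}$ for every $v$ with $T'(v) < k+1$. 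Indeed, by Lemmas~\ref{lemma: soundness} and~\ref{lemma: conscorrectness}, (i) yields $T \geq \preThresh$ and (ii) yields $T \leq \preThresh$; conversely, if $T \equiv \preThresh$ then $T' \equiv \Th'$, so (i) holds by Lem.~\ref{lemma: completeness} and (ii) holds by Lem.~\ref{lemma: conscorrectness}.

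Conditions (i) and (ii) are decided by solving turn-based games: a turn-based energy game has a finite energy threshold at a vertex exactly when the induced turn-based mean-payoff value there is non-negative (e.g.,~\cite{BFLMS08,BCD+11}), which is in NP and coNP~\cite{ZP96}, witnessed by a positional strategy of the relevant player whose correctness is checkable in polynomial time by a cycle test on the one-player graph it induces. Hence, to place Prob.~\ref{prob: decisionprobthresh} in \NP, the verifier guesses, together with $T$, a positional \Pres strategy in $G_{T,\G}$ and a positional \Cons strategy in $G_{T',\G}$, and checks in polynomial time that: $T$ satisfies the average property; the \Pres strategy certifies (i); the \Cons strategy certifies (ii); and $T(v) \geq \ell$. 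By Thm.~\ref{thm:preThreshAverage}, $\preThresh$ itself is an admissible choice of $T$, so an accepting certificate exists whenever $\preThresh(v) \geq \ell$; and since (i) and (ii) together pin $T$ down to $\preThresh$, nothing is accepted when $\preThresh(v) < \ell$. For membership in \coNP, the same verifier with the final test replaced by $T(v) < \ell$ decides the complement problem, since then an accepting certificate exists precisely for the NO-instances. This shows Prob.~\ref{prob: decisionprobthresh} is in \NP~and \coNP.

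Given the machinery of the earlier sections, the argument above is mostly bookkeeping; the one point that needs care—and that I expect to be the only genuine obstacle in the proof of the theorem itself—is composing the outer guess of $T$ with the nondeterministic decision procedures for the turn-based games, which is handled by also guessing the polynomial-size positional strategies so that all verification remains polynomial-time. The real substance sits in the invoked results: the average property for $\preThresh$ (Thm.~\ref{thm:preThreshAverage}) and the soundness and completeness of the turn-based reduction (Lemmas~\ref{lemma: soundness}--\ref{lemma: conscorrectness}), which in turn rely on the budget-agnostic strategies built in Sections~\ref{sec:Pres-bud-agn} and~\ref{sec:Cons-bud-agn}.
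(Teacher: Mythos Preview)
Your proposal is correct and follows essentially the same approach as the paper: guess a function $T$ satisfying the average property, build the polynomial-size turn-based games $G_{T,\G}$ and $G_{T',\G}$, and use Lemmas~\ref{lemma: soundness}, \ref{lemma: completeness}, and~\ref{lemma: conscorrectness} together with the \NP~$\cap$~\coNP~complexity of turn-based mean-payoff games to verify $T \equiv \preThresh$ and read off the answer. Your explicit handling of the composition issue (also guessing positional strategies in the turn-based games so that verification is deterministic polynomial time) is exactly the care the paper alludes to but leaves implicit.
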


%	Since turn-based energy games have a memoryless strategy, $ f' $ in $ G_{\preThresh, \G} $ (and $ g' $ in $ G_{\conThresh, \G} $) makes a choice for a given vertex $ w $, irrespective of the finite history, thus we obtain the following:
		\begin{remark}\label{rem: trubudgetagnostic}
			Since there exist optimal strategies in turn-based energy games that are memoryless, the strategies in $\G$ constructed in Lem.~\ref{lemma: soundness} and~Lem.~\ref{lemma: conscorrectness} strengthen Thm.~\ref{thm:sagn-is-winning} and Thm.~\ref{thm: tau-agn-winning}: 
there exists a winning {\em positional} budget agnostic strategy $ \sagn{\hat{\sigma}}: V \times [k] \rightarrow [k] \times V$ such that $\sagn{\hat{\sigma}}(v, B) = \sagn{\hat{\sigma}}(v, \relativebud(v, B))$, for every $B \geq \Th(v)$, and similarly for \Cons. 
	\end{remark}

%%%%%%%%%%%%%%%%%%%%%%%%%%%%
\section{Discussion}
We study, for the first time, a combination of discrete-bidding with mean-payoff and energy objectives. We define threshold budgets, establish existence, and construct concise budget-agnostic winning strategies, which serve as the basis for showing that finding thresholds is in NP and coNP, even when the budgets in the game are represented in binary. %This closes a gap in the understanding of discrete-bidding games: previously, only continuous-bidding games with mean-payoff objectives have been thoroughly studied. 

We believe that our work opens the door to extensions and generalizations, which are technically challenging to study in combination with continuous bidding. One example is bidding games in which the players are partially-informed of their opponent's budgets~\cite{AJZ23}, which is common in practice, but results under continuous bidding are limited due to the demanding technicalities. Other examples that have not yet been consider for mean-payoff objectives include bidding games with charging~\cite{AGHM24}, stochastic transitions~\cite{AK+25}, and non-zero sum games, which require refined solution concepts~\cite{AMS24}. 

Finally, we point out that our result positions mean-payoff discrete-bidding games in a peculiar state of affairs: on the one hand, solving turn-based mean-payoff games (in NP and coNP but not known to be in P) easily reduces to solving mean-payoff discrete-bidding games with total budget \(0\), and on the other hand, the result applies to a seemingly exponentially harder problem of finding thresholds in bidding games with budgets given in binary.

	\bibliography{ga-2}	

\begin{thebibliography}{10}

\bibitem{AAH21}
M.~Aghajohari, G.~Avni, and T.~A. Henzinger.
\newblock Determinacy in discrete-bidding infinite-duration games.
\newblock {\em Log. Methods Comput. Sci.}, 17(1), 2021.

\bibitem{AHK02}
R.~Alur, T.~A. Henzinger, and O.~Kupferman.
\newblock Alternating-time temporal logic.
\newblock {\em J. {ACM}}, 49(5):672--713, 2002.

\bibitem{ABFV22}
G.~Amanatidis, G.~Birmpas, A.~Filos{-}Ratsikas, and A.~A. Voudouris.
\newblock Fair division of indivisible goods: {A} survey.
\newblock In Luc~De Raedt, editor, {\em Proc. 31st IJCAI}, pages 5385--5393.
  ijcai.org, 2022.

\bibitem{AGHM24}
G.~Avni, E.~Kafshdar Goharshady, T.~A. Henzinger, and K.~Mallik.
\newblock Bidding games with charging.
\newblock In {\em Proc. 35th CONCUR}, volume 311 of {\em LIPIcs}, pages
  8:1--8:17. Schloss Dagstuhl - Leibniz-Zentrum f{\"{u}}r Informatik, 2024.

\bibitem{AHC19}
G.~Avni, T.~A. Henzinger, and V.~Chonev.
\newblock Infinite-duration bidding games.
\newblock {\em J. ACM}, 66(4):31:1--31:29, 2019.

\bibitem{AHI18}
G.~Avni, T.~A. Henzinger, and R.~Ibsen-Jensen.
\newblock Infinite-duration poorman-bidding games.
\newblock In {\em Proc. 14th WINE}, volume 11316 of {\em LNCS}, pages 21--36.
  Springer, 2018.

\bibitem{AHZ21}
G.~Avni, T.~A. Henzinger, and D.~Zikelic.
\newblock Bidding mechanisms in graph games.
\newblock {\em J. Comput. Syst. Sci.}, 119:133--144, 2021.

\bibitem{AJZ21}
G.~Avni, I.~Jecker, and {\DJ}.~\v{Z}ikeli\'c.
\newblock Infinite-duration all-pay bidding games.
\newblock In {\em Proc. 32nd SODA}, pages 617--636, 2021.

\bibitem{AJZ23}
G.~Avni, I.~Jecker, and D.~Zikelic.
\newblock Bidding graph games with partially-observable budgets.
\newblock In {\em Proc. 37th AAAI}, 2023.

\bibitem{AK+25}
G.~Avni, M.~Kurecka, K.~Mallik, P.~Novotn{\'y}, and S.~Sadhukhan.
\newblock Bidding games on markov decision processes with quantitative
  reachability objectives.
\newblock In {\em Proc. 24th AAMAS}. {ACM}, 2025.

\bibitem{AMS24}
G.~Avni, K.~Mallik, and S.~Sadhukhan.
\newblock Auction-based scheduling.
\newblock In {\em Proc 30th TACAS}, volume 14572 of {\em Lecture Notes in
  Computer Science}, pages 153--172. Springer, 2024.

\bibitem{AS25}
G.~Avni and S.~Sadhukhan.
\newblock Computing threshold budgets in discrete-bidding games.
\newblock {\em TheoretiCS}, Volume 4, 2025.

\bibitem{ALMW22}
H.~Aziz, B.~Li, He. Moulin, and X.~Wu.
\newblock Algorithmic fair allocation of indivisible items: a survey and new
  questions.
\newblock {\em SIGecom Exch.}, 20(1):24--40, 2022.

\bibitem{BEF21}
M.~Babaioff, T.~Ezra, and U.~Feige.
\newblock Fair-share allocations for agents with arbitrary entitlements.
\newblock In {\em Proc. 21st EC}, page 127. {ACM}, 2021.

\bibitem{BBR21}
B.~Bordais, P.~Bouyer, and S.~Le Roux.
\newblock From local to global determinacy in concurrent graph games.
\newblock In {\em Proc. 41st FSTTCS}, volume 213 of {\em LIPIcs}, pages
  41:1--41:14. Schloss Dagstuhl - Leibniz-Zentrum f{\"{u}}r Informatik, 2021.

\bibitem{BFLMS08}
P.~Bouyer, U.~Fahrenberg, K.~Guldstrand Larsen, N.~Markey, and J.~Srba.
\newblock Infinite runs in weighted timed automata with energy constraints.
\newblock In {\em Proc. 6th FORMATS}, pages 33--47, 2008.

\bibitem{BCD+11}
L.~Brim, J.~Chaloupka, L.~Doyen, R.~Gentilini, and J.{-}F. Raskin.
\newblock Faster algorithms for mean-payoff games.
\newblock {\em Formal Methods Syst. Des.}, 38(2):97--118, 2011.

\bibitem{Con92}
A.~Condon.
\newblock The complexity of stochastic games.
\newblock {\em Inf. Comput.}, 96(2):203--224, 1992.

\bibitem{DP10}
M.~Develin and S.~Payne.
\newblock Discrete bidding games.
\newblock {\em The Electronic Journal of Combinatorics}, 17(1):R85, 2010.

\bibitem{GBI21}
A.~Gorokh, S.~Banerjee, and K.~Iyer.
\newblock The remarkable robustness of the repeated fisher market.
\newblock In {\em Proc. 22nd EC}, page 562. {ACM}, 2021.

\bibitem{LLPSU99}
A.~J. Lazarus, D.~E. Loeb, J.~G. Propp, W.~R. Stromquist, and D.~H. Ullman.
\newblock Combinatorial games under auction play.
\newblock {\em Games and Economic Behavior}, 27(2):229--264, 1999.

\bibitem{LLPU96}
A.~J. Lazarus, D.~E. Loeb, J.~G. Propp, and D.~Ullman.
\newblock Richman games.
\newblock {\em Games of No Chance}, 29:439--449, 1996.

\bibitem{MKT18}
R.~Meir, G.~Kalai, and M.~Tennenholtz.
\newblock Bidding games and efficient allocations.
\newblock {\em Games and Economic Behavior}, 112:166--193, 2018.
\newblock \href {https://doi.org/https://doi.org/10.1016/j.geb.2018.08.005}
  {\path{doi:https://doi.org/10.1016/j.geb.2018.08.005}}.

\bibitem{Mut09}
S.~Muthukrishnan.
\newblock Ad exchanges: Research issues.
\newblock In {\em Proc. 5th WINE}, pages 1--12, 2009.

\bibitem{PSSW09}
Y.~Peres, O.~Schramm, S.~Sheffield, and D.~B. Wilson.
\newblock Tug-of-war and the infinity laplacian.
\newblock {\em J. Amer. Math. Soc.}, 22:167--210, 2009.

\bibitem{PR89}
A.~Pnueli and R.~Rosner.
\newblock On the synthesis of a reactive module.
\newblock In {\em Proc. 16th POPL}, pages 179--190, 1989.

\bibitem{Rab69}
M.O. Rabin.
\newblock Decidability of second order theories and automata on infinite trees.
\newblock {\em Transaction of the AMS}, 141:1--35, 1969.

\bibitem{ZP96}
U.~Zwick and M.~Paterson.
\newblock The complexity of mean payoff games on graphs.
\newblock {\em Theor. Comput. Sci.}, 158(1{\&}2):343--359, 1996.

\end{thebibliography}
%	\bibliography{../../../ga}	
	\bibliographystyle{plainurl}% the mandatory bibstyle

%%%%%%%%%%%%%%%%%%%%%%%%%%%%%%%%%%%%%%%%%%%%%%%%%%%%%%%%
	\newpage
	\appendix

\section{Proof of Lem.~\ref{lem:conswinsfinite}}
\label{app:conswinsfinite}
We describe $\G_n$ as a {\em reachability} bidding game $\R_n = \zug{V_n, E_n, k, T_n}$, where the vertices are $V_n = V \times \set{-1, \ldots, |V| \cdot W} \times \set{0,\ldots, n}$, for $v \in V$, $e \geq 0$, and $m \geq 0$, there is an edge $E_n(\zug{v, e, m}, \zug{v', e', m-1})$ if $m \geq 0$, $v' \in N(v)$, and $e' = \max\set{e+w(v, v'), -1}$, and vertices of the form $\zug{v, e, 0}$ are sinks, the total budget is $k$, and $T_n = \set{\zug{v, -1, m}: v \in V \text{ and } m > 0}$. \Cons wins a play in $\R_n$ iff it visits $T_n$. 

The following claim relates $\R_n$ with $\G_n$. It can be proven using a simple induction.

%\noindent{\bf Claim:} 
\begin{restatable}{claim}{equivalent}\label{claim: equivalent}
	A player wins in $\R_n$ from configuration $\zug{\tup{v, e, m}, B}$ iff they $e$-win from $\zug{v, B}$ in $\G_m$.
\end{restatable}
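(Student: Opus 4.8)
The plan is to prove the claim by induction on the counter $m$, keeping the game $\R_n$ (hence $n$) fixed; we may assume $e \ge 0$, since the energy value $-1$ appears in $\R_n$ only as an internal flag signalling that \Cons has already won, and for it both sides agree trivially. The key structural remark is that $\R_n$ is nothing but the bidding arena of $\G_m$ with its state refined by the pair $(e,m)$, and this pair is a deterministic function of the initial $(e,m)$ together with the path traversed so far: after a move to $v'$ the energy updates by $e \mapsto \max\{e + \weightf(v,v'),-1\}$ and the counter decrements. In particular the budgets, the allowed bids, and the bidding outcomes are identical in the two games, so there is a bijection between \Pres-strategies (resp.\ \Cons-strategies) in $\R_n$ from $\tup{v,e,m}$ with budget $B$ and strategies in $\G_m$ from $v$ with budget $B$ --- a history-dependent strategy in $\G_m$ already ``sees'' the energy and the remaining-turn counter, and conversely --- and under this bijection a pair of strategies induces the same path $\pi = v_0,v_1,\ldots$ in both games. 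It therefore suffices to prove the play-level statement: for a fixed $\pi$, the $\R_n$-run from $\tup{v,e,m}$ along $\pi$ reaches $T_n$ iff \Cons $e$-wins $\pi$ in $\G_m$; the configuration-level claim follows by quantifying over strategies on both sides, and since $\R_n$ is a determined reachability discrete-bidding game~\cite{DP10,AAH21}, the \Pres and \Cons versions of ``a player wins'' are equivalent.

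For the play-level statement I would track the clamped energy $\tilde e_0 = e$, $\tilde e_{t+1} = \max\{\tilde e_t + \weightf(v_t,v_{t+1}),-1\}$ --- the energy component of the $\R_n$-state after $t$ steps --- against the true accumulated energy $e + \sumofweights{\pref{t}}$. The base case $m=0$ is immediate: $\tup{v,e,0}$ is a sink outside $T_n$, so the run never reaches $T_n$, and in $\G_0$ \Cons never $e$-wins (for $e \ge 0$), the only check being $e + \sumofweights{\pref{0}} = e \ge 0$. For the inductive step, consider the first move, to $v_1$, producing energy $e' = \max\{e + \weightf(v_0,v_1),-1\}$ at counter $m-1$. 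If $e + \weightf(v_0,v_1) < 0$ then $e' = -1$ and the run reaches $T_n$, while on the $\G_m$ side \Cons $e$-wins $\pi$, the drop below $0$ being witnessed already after one edge. Otherwise $e' \ge 0$, the state $\tup{v_1,e',m-1}$ is not in $T_n$, and one checks that ``\Cons $e$-wins $\pi$ in $\G_m$'' holds iff ``\Cons $e'$-wins the shifted play $v_1,v_2,\ldots$ in $\G_{m-1}$'' holds (the first, passing, energy check is discarded and all indices shift by one); applying the induction hypothesis to $\tup{v_1,e',m-1}$ closes the step.

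The point that needs the most care is the bookkeeping around the clamps and the counter. The lower clamp at $-1$ is exactly what records a \Cons win, so it causes no trouble; the upper clamp is harmless because along the plays that matter the true energy never actually reaches it --- for instance along a \Cons-optimal play the energy stays below $\mu_m \le mW$, since beyond that \Pres already survives and \Pres cannot profit from inflating the energy --- so $\tilde e_t$ and the true energy never diverge at a step relevant to reaching $T_n$. One must also fix, consistently with the definitions of $T_n$ and of $M$-winning in $\G_m$, whether a drop below $0$ occurring exactly when the counter vanishes counts as a \Cons win; once this convention is pinned down the two sides match and the induction is routine.
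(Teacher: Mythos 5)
Your proposal is correct and follows exactly the route the paper intends: the paper offers no proof beyond the remark that the claim ``can be proven using a simple induction'', and your strategy-bijection plus play-level induction on $m$ (tracking the clamped energy against the true accumulated energy) is precisely that induction, carried out carefully. Your closing caveat about the boundary convention is also well taken --- the paper's definition of $T_n$ (requiring counter $m>0$) is in tension with its own base case in the proof of Lem.~\ref{lemma: conshasresponse}, which treats $\zug{\tup{v,-1,0},B}$ as winning for \Cons, so that convention does indeed need to be pinned down exactly as you say.
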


\begin{restatable}{lemma}{conshasresponse}\label{lemma: conshasresponse}
	 Let $v \in V$ and $B \in [k]$. We claim that \Pres does not have a winning strategy from configuration \(c_0 = \zug{\tup{v, \mu_{n}(v, B) - 1, n}, B}\). That is, for every \Pres strategy $\tau$, \Cons has a response $\sigma$, such that $\play(c_0, \sigma, \tau)$ visits $T$. 
\end{restatable}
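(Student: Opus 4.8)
I would argue directly, without appealing to determinacy of reachability bidding games. Fix an arbitrary \Pres strategy $\tau$ for $\R_n$ and build a \Cons response $\sigma$ step by step, maintaining an energy invariant that mirrors the recursion defining $\mu_n$. Write the configurations of $\play(c_0, \sigma, \tau)$ as $\zug{\tup{v_j, e_j, n-j}, B_j}$, where $B_j$ is \Pres's budget and $n-j$ is the remaining-turns counter. The invariant to maintain is: for every $j$, either the length-$j$ prefix has already visited $T$, or $e_j \geq 0$ and $e_j \leq \mu_{n-j}(v_j, B_j) - 1$. For $j = 0$ this is exactly the hypothesis $e_0 = \mu_n(v, B) - 1$ (if $\mu_n(v, B) = 0$ then $e_0 = -1$ and $c_0$ already lies in $T$, so assume $\mu_n(v,B) \geq 1$). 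The construction is legitimate because $\sigma$ is defined against a \emph{fixed} $\tau$, so at each history \Cons may read off \Pres's committed bid; this matches the ``\Pres reveals her bid first'' setting underlying the definition of $\mu_n$.

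\textbf{The step.} Suppose the invariant holds at step $j$, the counter $n-j \geq 1$, and the prefix is not yet in $T$, and let $b_j \leq B_j$ be the bid prescribed by $\tau$. Since $\mu_{n-j}(v_j, B_j) = \min_{b \leq B_j} \enxt{n-j}(v_j, B_j, b) \leq \enxt{n-j}(v_j, B_j, b_j)$, the invariant gives $e_j < \enxt{n-j}(v_j, B_j, b_j)$. If \Cons can force itself to \emph{lose} the bidding (which fails only in the corner case $b_j = 0$ with \Pres holding the advantage) and $e_j < \ewin{n-j}(v_j, B_j, b_j)$, then $\sigma$ does so (e.g.\ bidding $0$): \Pres moves to some $v_{j+1} \in \neighbor{v_j}$, her budget becomes $B_{j+1} = B_j \ominus b_j$, and unwinding the minimum in $\ewin{n-j}$ with $e_j \geq 0$ yields $e_j + \weightf{v_j, v_{j+1}} < \mu_{n-j-1}(v_{j+1}, B_{j+1})$. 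In every other situation one checks that $\enxt{n-j}(v_j, B_j, b_j) = \max\set{\ewin{n-j}(v_j,B_j,b_j), \elose{n-j}(v_j,B_j,b_j)}$ (so \Cons can afford $\trump{B_j, b_j}$) and $e_j < \elose{n-j}(v_j, B_j, b_j)$; then $\sigma$ bids $\trump{B_j, b_j}$, \emph{wins} the bidding, and moves to a neighbour $v^* = v_{j+1}$ attaining the outer maximum in $\elose{n-j}$, giving $B_{j+1} = B_j \oplus \trump{B_j, b_j}$ and $e_j + \weightf{v_j, v^*} < \mu_{n-j-1}(v^*, B_{j+1})$. (The one delicate sub-case, $b_j = 0$ with \Pres holding the advantage, lands here: then $\ewin{n-j}(v_j, B_j, 0) \leq \elose{n-j}(v_j, B_j, 0)$, so $\enxt{n-j}(v_j, B_j, 0) = \elose{n-j}(v_j, B_j, 0) > e_j$.) In both cases: if $e_j + \weightf{v_j, v_{j+1}} < 0$ then $e_{j+1}$ clamps to $-1$ and the play has reached a $T$-vertex of $\R_n$ (the energy was consumed within $n-j \leq n$ turns); otherwise $e_{j+1} = e_j + \weightf{v_j, v_{j+1}} \geq 0$ and $e_{j+1} \leq \mu_{n-j-1}(v_{j+1}, B_{j+1}) - 1$, so the invariant carries to step $j+1$.

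\textbf{Termination and the main obstacle.} The invariant cannot persist with non-negative energy all the way to counter $0$: that would force $e_n \leq \mu_0(v_n, B_n) - 1 = -1 < 0$, contradicting $e_n \geq 0$. Hence within at most $n$ steps the energy drops below $0$ and the play visits $T$; thus $\play(c_0, \sigma, \tau)$ visits $T$, i.e.\ $\tau$ is not winning for \Pres, which is the claim. The main obstacle will be the step's case analysis: reading off, from whichever of $\ewin{n-j}$, $\elose{n-j}$ realizes $\enxt{n-j}$, a \Cons move that is always \emph{legal} — in particular, that \Cons can afford $\trump{B_j,b_j}$ precisely in the branch where we invoke $\elose{n-j}$, and that the corner case where \Cons cannot lose the bidding forces $\enxt{n-j} = \elose{n-j}$ — together with the bookkeeping that lines up the remaining-turns counter and the clamping at $-1$ with the $\R_n$ construction, so that exhausting the energy genuinely records a \Cons win. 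Everything else is a routine unwinding of the definitions of $\ewin{}$, $\elose{}$, $\enxt{}$, and $\mu_n$.
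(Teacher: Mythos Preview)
Your argument is correct and uses the same one-step case analysis as the paper (compare $\ewin{n-j}$ versus $\elose{n-j}$, let \Pres win in the first case, bid $\trump{B_j,b_j}$ and move to the $\elose{}$-maximiser in the second; your handling of the corner case $b_j=0$ with $B_j\in\Natstro$ is exactly right, since then $\trump{B_j,0}=0$ and $B_j\oplus 0=B_j$ makes $\ewin{}\le\elose{}$). The difference is structural: the paper runs an induction on $n$ whose hypothesis is the stronger statement ``\Cons $(\mu_{n-1}-1)$-wins'', and therefore needs determinacy of reachability discrete-bidding games at every level to upgrade ``\Pres does not win'' back to ``\Cons wins'' and keep the induction going. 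You instead fix $\tau$ once and maintain the invariant $e_j\le\mu_{n-j}(v_j,B_j)-1$ across all $n$ steps, which proves exactly the stated lemma without any appeal to determinacy. That is a genuine simplification: it makes the lemma self-contained, at the cost of producing only a $\tau$-dependent \Cons response rather than a uniform winning strategy (which the paper obtains as a byproduct, and uses downstream).

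One cosmetic point: your termination paragraph says the play ``visits $T$'' when the energy drops at step $n$, but $T_n$ as written in the paper requires the counter to be strictly positive. The paper's own base case nonetheless treats $\tup{v,-1,0}$ as winning for \Cons, so this is an inconsistency in the construction of $\R_n$ rather than in your reasoning; your invariant argument is aligned with the paper's intended reading.
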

%\noindent{\bf Claim:}

\begin{proof}
We first note that {\em determinacy} of reachability discrete-bidding games~\cite{DP10,AAH21} means that if \Pres does not win, then \Cons wins; formally, \Cons has a strategy $\tau$ such that for every strategy $\sigma$ of \Pres, $\play(c_0, \sigma, \tau)$ visits $T$.

		The proof of the claim is by induction on \(n\). 
		The base case is trivial: since \(\mu_0(v, B) = 0\), we have \(\zug{\tup{v, -1, 0}, B}\) which is winning for \Cons. 
For the induction step, we assume that \Cons $(\mu_{n-1}(v', B')-1)$-wins in $\G_{n-1}$ from configuration $\zug{v', B'}$, for $v' \in V$ and $B' \in [k]$, and we prove that \Cons $(\mu_n(v, B)-1)$-wins in $G_n$ from $\zug{v, B}$, for $v \in V$ and $B \in [k]$. 
Assume without loss of generality that \(\mu_n(v, B) > 0\), otherwise we apply the argument in the base case. 
		We will show that for every \Pres choice of \(\zug{b, u}\) at \(\zug{v, B}\) in \(\G\), \Cons has a response \(\zug{b', u'}\) which ensures that the next configuration \(\zug{v', B'}\) satisfies:
		\begin{align}\label{ineq: inductiveconsumer}
			\buildmun{n}(v, B) - 1 + \weightf{v, v'} \leq \buildmun{n-1}{v', B'} -1
		\end{align} 
By the induction hypothesis, \Cons $(\buildmun{n-1}{v', B'} -1)$-wins from $\zug{v',B'}$, thus we will obtain that \Pres does not \((\mu_{n}(v, B) -1)\)-win from \(\zug{v, B}\), which using determinacy implies that \Cons \((\mu_{n}(v, B) -1)\)-wins from \(\zug{v, B}\), as required. 
		
		\smallskip
		
		We proceed to show \eqref{ineq: inductiveconsumer}.
		Suppose \Pres chooses \(\zug{b, u}\) at \(\zug{v, B}\). Recall the energy \Pres requires if she wins and loses the bidding with $b$ is respectively
		\begin{align*}
			\ewin{n}(v, B, b) &= \max\set{\mu_{n-1}(u, B \ominus b) - \weightf(v, u), 0}, \text{~and}\\
			\elose{n}(v, B, b) &= \max\set{\mu_{n-1}(u', B \oplus b') - \weightf(v, u'), 0}
		\end{align*}
		We define \Cons's response \(\zug{b', u'}\) by
		\begin{align*}
			b' \coloneqq 
			\begin{cases}
				0 &\text{~if~} \ewin{n}(v, B, b) \geq \elose{n}(v, B, b)\\
				\triumph{B, b} &\text{~otherwise}
			\end{cases}
		\end{align*}
That is, \Cons lets \Pres win if $\ewin{n}(v, B, b) \geq \elose{n}(v, B, b)$, and otherwise he chooses the smallest possible winning bid. 
If $b'=0$, \Cons proceeds to \(u' = u\). If $b'>0$, we define \(u' \in \arg\max_{v' \in N(v)} \max\set{\mu_{n-1}(v', B + \trump{B, b}) - \weightf(v, v'), 0}\) 
		%	This implies 
		%	\begin{align*}
			%		\fmaxenergy{n-1}{v}{B}{b_1} = \maxorzero{\buildmun{n-1}{\zug{v_2, B + b_2}}}{\weightf{v, v_2}}
			%	\end{align*}
		
		We analyze two exhaustive cases:
		\begin{itemize}
			\item \textbf{Case I:} \(\ewin{n}(v, B, b) \geq \elose{n}(v, B, b)\).
			
			First note that, \(\ewin{n}(v, B, b) > 0\), otherwise we would have \(\mu_{n}(v, B) = 0\). 
			In this case, either \Pres outright wins the bid at \(\zug{v, B}\) (when \(b > 0\) or \(B \in \Nat\)), or \Cons wins the bid (when  \(b = 0\) and \(B \in \Natstro\)). 
			In the latter scenario, \Cons chooses the same vertex as \Pres after winning. 
			Therefore, the subsequent configuration is \(\zug{u, B \ominus b}\). 
			Now, 
			\begin{align*}
%				e+ \weightf{v, u} &= 
				\mu_n(v, B)  - 1 + \weightf{v, u}
				&= \min_{b \leq B} \max\set{\ewin{n}(v, B, b), \elose{n}(v, B, b)} -1 + \weightf{v, u}\\
				&\leq \max\set{\ewin{n}(v, B, b), \elose{n}(v, B, b)} - 1 + \weightf{v, u}\\
				&= \ewin{n}(v, B, b) - 1 + \weightf{v, u}\\
				&= \mu_{n-1}(u, B \ominus b) - \weightf{v, u} - 1 + \weightf{v, u}\\
				&= \mu_{n-1}(u, B \ominus b) - 1
			\end{align*}
			
			\item \textbf{Case II:} \(\elose{n}(v, B, b) > \ewin{n}(v, B, b)\)
			
			Like the previous case, here as well, we have \(\elose{n}(v, B, b) > 0\). 
			\Cons wins the bid at \(\zug{v, B}\) since he bids \(b' = \triumph{B, b}\).
			Therefore, the subsequent configuration becomes \(\zug{u', B \oplus b'}\). 
			We proceed as follows:
			\begin{align*}
%				e + \weightf{v, u'} &=
				 \mu_n(v, B) - 1 + \weightf{v, u'}
				&= \min_{b \leq B} \max\set{\ewin{n}(v, B, b), \elose{n}(v, B, b)} -1 + \weightf{v, u}\\
				&\leq \max\set{\ewin{n}(v, B, b), \elose{n}(v, B, b)} - 1 + \weightf{v, u}\\
				&= \elose{n}(v, B, b) - 1 + \weightf{v, u}\\
				&= \mu_{n-1}(u', B \oplus b') - \weightf{v, u'} - 1 + \weightf{v, u'}\\
				&= \mu_{n-1}(u', B \oplus b') - 1
			\end{align*}
		\end{itemize}
		
		This concludes the proof of Eq.~\eqref{ineq: inductiveconsumer}, consequently the proof of lemma as well.
\end{proof}

%%%%%%%%%%%%%%%%%%%%%%%%%%%%%%%%%%%
\section{The fixed-point algorithm to find $\sqglymu$}
\label{app:VI-alg}
		
		We develop an algorithm using this recursive definition of \(\sqglymu\), in which we initiate the function \(\sqglymu\) with \(\sqglymu{0}\), and until it reaches a fixed point, we upgrade the function from \(\sqglymu{i}\) to \(\sqglymu{i+1}\). (See the pseudocode in Algorithm \ref{algo: valiteration})

\algdef{SE}[DOWHILE]{Do}{doWhile}{\algorithmicdo}[1]{\algorithmicwhile\ #1}%

%		\begin{algorithm}
%			\caption{A fixed-point algorithm for computing energy thresholds in un-bounded energy games}\label{algo: valiteration}
%			\begin{algorithmic}[1]
%				\Require An Energy Game \(\G = \zug{V, E, \weightf, k}\)
%				\Ensure $\mu^*: V \times [k] \rightarrow \NatInf$
%				 \GAn{Change Require -> Input, Ensure -> Output}
%				 \State \(i \gets 0\)
%				\State Init $\sqglymu{0} \equiv 0$, $\mu_0 \equiv 0$
%				\Do
%					\State \(i \gets i+1\)
%					\For{\(\zug{v, B} \in V \times [k]\)}
%						\For{\(b \leq B\)}
%							\State \(\ewin{}(v, B, b) = \min_{v' \in N(v)}\max \set{\mu_{i-1}(v', B \ominus b) - \weightf(v, v'), 0}\)
%							\State \(\elose{}(v, B, b) = \max_{v' \in N(v)}\max \set{\mu_{i-1}(v', B \oplus \trump(B, b)) - \weightf(v, v'), 0}\)
%						\EndFor
%						\State \(\mu_{i}(v, B) = \min_{b \leq B}\max\set{\ewin{}(v, B, b), \elose{}(v, B, b)}\)
%						\If{\(\mu_{i}(v, B) > \upperboundforloop + 1\)}
%							\State \(\sqglymu{i}(v, B) = \infty\)
%						\Else
%							\State \(\sqglymu{i}(v, B) = \mu_j(v, B)\)
%						\EndIf
%					\EndFor
%				\doWhile{$\sqglymu{i} \not \equiv \sqglymu{i-1}$}
%				\State \Return $\mu^* = \sqglymu{i}$
%			\end{algorithmic}
%		\end{algorithm}

		\begin{algorithm}
			\caption{A fixed-point algorithm for computing energy thresholds in un-bounded energy games}\label{algo: valiteration}
			\begin{algorithmic}[1]
				\Require An Energy Game \(\G = \zug{V, E, \weightf, k}\)
				\Ensure $\mu^*: V \times [k] \rightarrow \NatInf$
				\GAn{Change Require -> Input, Ensure -> Output}
				\State \(i \gets 0\)
				\State Init $\sqglymu{0} \equiv 0$
				\Do
				\State \(i \gets i+1\)
				\For{\(\zug{v, B} \in V \times [k]\)}
				\For{\(b \leq B\)}
				\State \(\ewin{}(v, B, b) = \min_{v' \in N(v)}\max \set{\sqglymu{i-1}(v', B \ominus b) - \weightf(v, v'), 0}\)
				\If{\(\trump(B, b) < k+1\)}
					\State \(\elose{}(v, B, b) = \max_{v' \in N(v)}\max \set{\sqglymu{i-1}(v', B \oplus \trump(B, b)) - \weightf(v, v'), 0}\)
					\State $ \enxt{}(v, B, b) = \max\set{\ewin{}(v, B, b), \elose{}(v, B, b)} $
				\Else
					\State $ \enxt{}(v, B, b) = \ewin{}(v, B, b) $
				\EndIf
				\EndFor
				\State \(\sqglymu{i}(v, B) = \min_{b \leq B}\enxt{}(v, B, b)\)
				\If{\(\sqglymu{i}(v, B) > \upperboundforloop + 1\)}
				\State \(\sqglymu{i}(v, B) = \infty\)
%				\Else
%				\State \(\mu_{i}(v, B) = \mu_j(v, B)\)
				\EndIf
				\EndFor
				\doWhile{$\sqglymu{i} \not \equiv \sqglymu{i-1}$}
				\State \Return $\mu^* = \sqglymu{i}$
			\end{algorithmic}
		\end{algorithm}

%%%%%%%%%%%%%%%%%%%%%%%%%%%%%%%%%
\section{Proof of Lemma~\ref{lemma: mumonotone}}
\label{app:monotonicity}
\mumonotone*
	
\begin{proof}
	We proceed by induction on \(n\). 
	The base case is trivial since by definition \(\sqglymu{0}(v, B)= 0\), for all \(\zug{v, B}\). Thus, \(\sqglymu{1}(v, B) \geq \sqglymu{0}(v, B)\), for all configuration \(\zug{v, B}\). 
	
	For the induction hypothesis, we assume \(\sqglymu{n-1} \leq \sqglymu{n}\) and establish the statement for \(n\). 
	Note that, the induction hypothesis necessarily implies \(\mu_{n-1} \leq \mu_{n}\) as well. 
	
	Suppose towards contradiction that there exists a configuration \(\zug{v, B}\) for which \(\sqglymu{n}(v, B) > \sqglymu{n+1}(v, B)\). 
	There are two possibilities : (1)~both are finite, or (2)~\(\sqglymu{n}(v, B) = \infty\) but \(\sqglymu{n+1}(v, B) < \infty\). 
	We will show that both cases imply that \(\mu_n(v, B) > \mu_{n+1}(v, B)\). 
	For (1), the assertion is immediate since \(\sqglymu{n}(v, B)\) and \(\mu_n(v, B)\) coincide when \(\mu_n(v, B)\) is finite. 
	For (2), \(\sqglymu{n}(v, B)=\infty\) means \(\mu_n(v, B) > \upperboundforloop +1 \), while  \(\sqglymu{n+1}(v, B) < \infty\) means \(\mu_n(v, B) \leq \upperboundforloop + 1\). 
	
In the following, we reach a contradiction by showing that there exists a \(\zug{v', B'}\) for which \(\mu_n(v', B') > \mu_{n-1}(v', B')\). 
%We consider the induction hypothesis as \(\mu_{n-1} \leq \mu_{n}\), which, by the above, can be derived from our original induction hypothesis (\(\sqglymu{n-1} \leq \sqglymu{n}\)). 

Denote 
	\begin{align*}
		b_n &\coloneqq \arg\min_{b} \max\set{\ewin{n}(v, B, b), \elose{n}(v, B, b)}\\
		b_{n+1} &\coloneqq \arg\min_{b} \max\set{\ewin{n+1}(v, B, b), \elose{n+1}(v, B, b)}
	\end{align*}
Thus, we recall \(\mu_n(v, B) = \max\set{\ewin{n}(v, B, b_n), \elose{n}(v, B, b_n)}\), and \(\mu_{n+1}(v, B) = \linebreak \max\set{\ewin{n+1}(v, B, b_{n+1}), \elose{n+1}(v, B, b_{n+1})}\). 
It follows from here that
	\begin{align}\label{ineq: munwithmunminusone}
		\max\set{\ewin{n}(v, B, b_{n+1}), \elose{n}(v, B, b_{n+1})} &\geq \max\set{\ewin{n}(v, B, b_n), \elose{n}(v, B, b_n)} \nonumber\\
		&> \max\set{\ewin{n+1}(v, B, b_{n+1}), \elose{n+1}(v, B, b_{n+1})}
	\end{align}
	
	The first inequality is coming from the definition of \(b_n\), which is the bid where the minimum of \(\max\set{\ewin{n}(v, B, b), \elose{n}(v, B, b)}\) is attained.  The second inequality follows from our assumption \(\mu_n(v, B) > \mu_{n+1}(v, B)\). 
	
	Depending on where the maximum of the set \(\set{\ewin{n}(v, B, b_{n+1}), \elose{n}(v, B, b_{n+1})}\) is attained, we analyse two exhaustive cases in the following:
	\begin{itemize}
		\item \textbf{Case I:} \(\ewin{n}(v, B, b_{n+1}) \geq \elose{n}(v, B, b_{n+1})\)
		
		\smallskip
		
		We assume that \(v' \in \neighbor{v}\) is a vertex where \(\ewin{n+1}(v, B, b_{n+1})\) attains its minimum. 
		In other words, 
		\begin{align}\label{eq: monotonecase1}
			\ewin{n+1}(v, B, b_{n+1}) = \max\set{\mu_n(v', B \ominus  b_{n+1}) - \weightf(v, v'), 0}
		\end{align}
		On the other hand, with respect to \(\ewin{n}(v, B, b_{n+1})\), \(v'\) is just a vertex (where the minimum may or may not have attained)
		Thus, 
		\begin{align*}
			\max\set{\mu_{n-1}(v', B \ominus b_{n+1}) - \weightf(v, v'), 0} &\geq \ewin{n}(v, B, b_{n+1})\\
			&= \max\set{\ewin{n}(v, B, b_{n+1}), \elose{n}(v, B, b_{n+1})}\\
			&>\max\set{\ewin{n+1}(v, B, b_{n+1}), \elose{n+1}(v, B, b_{n+1})}\\
			&\geq \ewin{n+1}(v, B, b_{n+1})\\
			&= \max\set{\mu_n(v', B \ominus b_{n+1}) - \weightf(v, v'), 0}
		\end{align*}
		
		This implies the following:
		\begin{align*}
			\mu_{n-1}(v', B \ominus b_{n+1}) > \mu_{n}(v', B \ominus b_{n+1})
		\end{align*}
		which is a contradiction to the induction hypothesis \(\mu_{n-1} \leq \mu_n\). 
		
		\medskip

		\item \textbf{Case II:} \(\elose{n}(v, B, b_{n+1}) > \ewin{n}(v, B, b_{n+1})\).
		
		\smallskip
		
		Here, we assume that \(v' \in \neighbor{v}\) is a vertex where \(\elose{n}(v, B, b_{n+1})\) attains its maximum. 
		In other words, 
		\begin{align}\label{eq: monotonecase2}
			\elose{n}(v, B, b_{n+1}) = \max\set{\mu_{n-1}(v', B \oplus \trump(B, b_{n+1})) - \weightf(v, v'), 0}
		\end{align} 
		Thus, 
		\begin{align*}
			\max\set{\mu_{n-1}(v', B \oplus \trump(B, b_{n+1})) - \weightf(v, v'), 0} &= \elose{n}(v, B, b_{n+1})\\
			&= \max\set{\ewin{n}(v,  B, b_{n+1}), \elose{n}(v, B, b_{n+1})}\\
			&> \max\set{\ewin{n+1}(v, B, b_{n+1}), \elose{n+1}(v, B, b_{n+1})}\\
			&\geq \elose{n+1}(v, B, b_{n+1})\\
			&\geq \max\set{\mu_n(v', B \oplus \trump(B, b_{n+1})) - \weightf(v, v'), 0}
		\end{align*}
		This implies 
		\begin{align*}
			\mu_{n-1}(v', B \oplus \trump(B, b_{n+1})) > \mu_n(v', B \oplus \trump(B, b_{n+1}))
		\end{align*}
		which again contradicts the same induction hypothesis. 
	\end{itemize}
	
	Since, we arrive at a contradiction in both the exhaustive cases, we can conclude that there cannot be such a configuration \(\zug{v, B}\) for which \(\sqglymu{n}(v, B) > \sqglymu{n+1}(v,B)\). 
	In other words, \(\sqglymu{n} \leq \sqglymu{n+1}\), as required.
\end{proof}

\section{Proof of Lemma~\ref{lemma: conswinsinfiniteenergy}}\label{appn: conswinsinfiniteenergy}
\conswinsinfiniteenergy*

\begin{proof}
	When \(\mu(v, B) < \infty\), then $\mu \equiv \mu_n$, and the proof is as in Lem.~\ref{lem:conswinsfinite}. Assume $\mu(v, B) = \infty$.
	Let \(n\) such that \(\mu_n(v, B) > \upperboundforloop + 1\). By Lem.~\ref{lem:conswinsfinite}, for every energy \(e \leq |V|kW \leq  \mu_n(v, B) - 1\), \Cons \(e\)-wins in \(\G_n\), as well as in \(\G\). 
	
	The remaining case is \(e \geq \mu_n(v, B)\). %short  \geq \upperboundforloop + 1\).
	Recall that a configuration in $\G_n$ is $\zug{v', e', m, B'}$, where $\zug{v', B'} \in V \times [k]$ is a configuration in $\G$, the accumulated energy is $e'$, and $m$ is a ``counter'' that marks the remaining turns in the game. By Lem.~\ref{lem:conswinsfinite}, \Cons has a strategy $\tau_n$ that wins from $\zug{v, \mu_n(v, B) - 1, n, B}$ in \(\G_n\). 
	Thus, every play $\pi$ that is consistent with $\tau_n$ eventually reaches $\zug{v', -1, m, B'}$ with $m \geq 0$. Note that $\pi$ is finite and denote its length by $|\pi|$.
	Let $\rho= \path(\pi)$ be the path that is traversed in $\G$. Observe that $\sumofweights{\rho} < -\mu_n(v, B)$.
	
	We claim that \(|\pi| > |V| \cdot k\).  Indeed, at most $W$ units of energy can be consumed in each turn, thus consuming $e$ units takes at least $e/W > |V|\cdot k$ turns. Since the number of configuration in $\G$ is $|V| \cdot k$, it follows that a play $\pi$ that is consistent with $\tau_n$ must contain a cycle: there must be a prefix that ends in $\zug{v_2, e_2, m_2, B_2}$ and previously traverses a configuration $\zug{v_1, e_1, m_1, B_1}$ with $v_1 = v_2$ and $B_1 = B_2$. 
	
	Moreover, we claim that $\pi$ traverses a configuration cycle that accumulates negative energy. Let $\chi$ be a configuration cycle and write $\pi = \pi_1 \cdot \chi \cdot \pi_2$. If $\sumofweights{\chi} < 0$, we are done. Otherwise, consider the play $\pi' = \pi_1 \cdot \pi_2$ that is obtained by removing $\chi$. Note that $\sumofweights{\pi'} = \sumofweights{\pi} - \sumofweights{\chi} \leq -|V| \cdot k \cdot W + 0$, thus, as above, $|\pi'| > |V| \cdot k$ and $\pi'$ contains another configuration cycle, and we apply the same argument until a negative cycle is found. 
	
	We construct a strategy $\tau$ in $\G$ that simulates $\tau_n$. Recall that $\G$ starts from $\zug{v, B}$ with energy $e$. The simulation in $\G_n$ starts from $c_0 = \zug{v, \mu_n(v, B) - 1, n, B}$. We define $\tau$ to choose $\tau_n(c_0)$. Suppose that the next configuration in $\G$ is $\zug{v', B'}$, then we simulate \Pres in $\G_n$ so that the next configuration is $c_1 = \zug{v', \mu_n(v, B) - 1 + w(v,v'), n-1, B'}$ and define $\tau$ to choose $\tau_n(c_0, c_1)$. By continuing in this manner, the play $\pi'$ in $\G$ corresponds to a {\em simulated} play $\pi$ in $\G_n$; when $\pi'$ visits $\zug{v, B}$, then $\pi$ visits some $\zug{v, e, m, B}$. Moreover, $\pi$ is consistent with $\tau_n$. By the above, $\pi$ eventually closes a negative cycle. Write $\pi = c_0, \ldots, c_{i_1}, \ldots, c_{i_2}$ with $c_{i_j} =  \zug{v_{i_j}, e_{i_j}, m_{i_j}, B_{i_j}}$ for $j \in \set{1,2}$ such that $v_{i_1} = v_{i_2}$, $B_{i_1} = B_{i_2}$, and $\sumofweights{c_{i_1},\ldots, c_{i_2}} < 0$. We define $\tau$ to follow $\tau_n(\pref{i_1})$, that is we omit the negative cycle from the simulated play. 
	%Note that the action in $\G$ is legal since both endpoints of $\chi$ in $\G_n$ correspond to the same configuration $\zug{v, B}$ in $\G$. 
	Suppose that the next configuration in $\G$ is $\zug{v', B'}$. Then, the simulated play in $\G_n$ is $\pref{i_1} \cdot \zug{v', e_{i_1} +w(v_{i_1}, v'), m_{i_1}-1, B'}$. We continue in a similar manner. The simulation $\pi$ is obtained from $\pi'$ by repeatedly taking the minimal prefix $i$ such that $\pi'^{\leq i}$ ends in a negative cycle, and omitting the cycle. We define $\tau(\pi') = \tau_n(\pi)$. 
	
	To conclude the proof, we claim that every play $\pi'$ that is consistent with $\tau$ in $\G$ is winning for \Cons. Let $m \in \Nat$. Let $\pref{\ell}$ denote the simulation in $\G_n$ of $\pi'^{\leq m}$, for $\ell \leq m$. Observe that $\pref{\ell}$ contains no negative cycles, thus it is not winning for \Cons, $\ell \leq n$, thus $\sumofweights{\pref{\ell}} \leq n \cdot W$. Choosing $m$ such that $\pi'^{\leq m}$ traverses at least $n \cdot W +e$ negative cycles, each of which has sum of weights at most $-1$, gives $e + \sumofweights{\pi'^{\leq m}} < 0$, as required. 
\end{proof}

%%%%%%%%%%%%%%%%%%%%%%%%%%%%%%%%%%
\section{Proof of Thm.~\ref{thm:MP-energy}}\label{app: MP-energy}
\mpenergy*

\begin{proof}
	First, assume $\energy(v, B) = M \in \Nat$. \Max follows a \Pres $M$-winning strategy $\sigma$. Let $\pi$ be a consistent play. Since $\sigma$ is winning, for every $n \in \Nat$, we have $M + \sumofweights{\pref{n}} \geq 0$, thus $\meanpayoff(\pi) = \lim_{n \to \infty} \frac{1}{n} \sumofweights{\pref{n}} \geq \lim_{n \to \infty} \frac{-M}{n} = 0$. Second, assume $\energy(v, B) = \infty$. \Min follows the strategy $\tau$ that is constructed in Lem.~\ref{lemma: conswinsinfiniteenergy}. The proof of Lem.~\ref{lemma: conswinsinfiniteenergy} shows that every prefix of $\pi$ consists of negative cycles of length bounded by $|V| \cdot k$ with an additional acyclic path that accumulates energy at most $(|V| \cdot k -1) W$. It is not hard to show that this implies $\meanpayoff(\pi) < 0$.
\end{proof}

%%%%%%%%%%%%%%%%%%%%%%%%%%%%%%%%%
\section{Proof of Thm.~\ref{thm:preThreshAverage}}
\label{app:preThreshAverage}
\preThreshAverage*
			
			\begin{proof}
			By definition, \(\preThresh(v)\) is the minimum budget \(B\) for which \(\mu(v, B) < \infty\). 
			In other words, \(\preThresh(v)\) is the unique budget for vertex \(v\) such that \(\mu(v, \preThresh(v)) < \infty\) and \(\mu(v, \predb{\preThresh(v)}) = \infty\).
			In order to show that \(\preThresh\) satisfies the average property, in the following we define another function \(f: V \rightarrow \budgetset\) which maps  \(v\)  to roughly the average of \(\preThresh\) values of its neighbors, and show that \(f\) coincides with \(\preThresh\) for all vertex \(v\). 
			Formally, 
			\begin{align*}
				f(v) \coloneqq \floorvT{\sumTf{\preThresh}} + \eps 
			\end{align*} 
		where \(\vminus{\preThresh} = \arg\min_{v' \in \neighbor{v}} \preThresh(v')\), \(\vplus{\preThresh} = \arg\max_{v' \in \neighbor{v}} \preThresh(v')\), and 
		\begin{align}\label{eq: epsforthreshpr}
			\eps = 
			\begin{cases}
				0 &\text{~if~} \sumTf{\preThresh} \text{~is even and~} \preThresh(\vminus{\preThresh}) \in \Nat\\
				1 &\text{~if~} \sumTf{\preThresh} \text{~is odd and~} \preThresh(\vminus{\preThresh}) \in \Natstr\\
				* &\text{~otherwise}
			\end{cases}\tag{\text{eps}}
		\end{align}
		Below, we omit \(\preThresh\) from \(\vplus{\preThresh}\) and \(\vminus{\preThresh}\), and simply refer them as \(\vplus\) and \(\vminus\), respectively. 
		
		Thus, we proceed by showing the following two, in order to establish that \(f\) coincides with \(\preThresh\). 
		\begin{itemize}
			\item For all \(v \in V\), \(\mu(v, f(v)) < \infty\), and 
			\item \(\mu(v, \predb{f(v)}) = \infty\)
		\end{itemize} 
	
		\begin{claim}\label{clm: mufinite}
			For all \(v \in V\), \(\mu(v, f(v)) < \infty\)
		\end{claim}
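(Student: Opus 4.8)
The plan is to prove the claim $\mu(v,f(v))<\infty$ by exhibiting, for each fixed $v\in V$, a finite initial energy from which \Pres can keep the accumulated energy non‑negative forever starting in configuration $\zug{v,f(v)}$; by Thm.~\ref{them: energyconicideswithmulimit} this is exactly the statement $\mu(v,f(v))<\infty$ (equivalently $\preThresh(v)\le f(v)$). If $f(v)=k+1$ there is nothing to prove, since $\preThresh(v)\in\budgetset$ always satisfies $\preThresh(v)\le k+1$; so I may assume $f(v)\in[k]$. Write $\vplus,\vminus$ for the neighbours of $v$ attaining the maximum and minimum of $\preThresh$ over $N(v)$, as in Def.~\ref{def: averageprop}, and recall that by definition $f(v)$ stands to the pair $\preThresh(\vplus),\preThresh(\vminus)$ exactly as the average‑property formula prescribes.

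The only structural input I would need is a ``local'' version of Lem.~\ref{lem:Talgebra}. That lemma, although stated for a function satisfying the average property globally, is proved vertex by vertex: its conclusion at a vertex depends only on the relation between that vertex's value and the $T$‑values of its two extremal neighbours. Hence it applies verbatim with $f(v)$ in the role of $T(v)$ and $\preThresh(\vplus),\preThresh(\vminus)$ in the roles of $T(\vplus),T(\vminus)$. Letting $b$ be the bid produced from $f(v)$ and $\preThresh(\vminus)$ by the recipe of Def.~\ref{def: optbid} (i.e. $f(v)\ominus\preThresh(\vminus)$, with the usual $+1$ correction when $\preThresh(\vminus)$ carries the advantage, followed by the tie‑break adjustment), this yields the three inequalities $b\le f(v)$ (so $b$ is a legal bid at $\zug{v,f(v)}$), $f(v)\ominus b\ge\preThresh(\vminus)$, and $f(v)\oplus(\succb{b})\ge\preThresh(\vplus)$. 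I expect the verification of these three inequalities to be the main obstacle: it is a finite case split on the value of $\eps\in\set{0,1,*}$ and on the tie‑breaking status of $f(v)$, carried out with the $\oplus/\ominus$ operators over $\Natstr$, with a few boundary subcases ($\preThresh(\vplus)=k+1$, $\preThresh(\vminus)\in\set{0,0^*}$) that must be checked by hand.

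Granting the three inequalities, the rest is short bookkeeping. Consider the \Pres strategy that at $\zug{v,f(v)}$ bids $b$ and, on winning the bidding, moves to $\vminus$. Whatever \Cons does, after one turn the token sits at some $\zug{w,B'}$ with $w\in N(v)$ and $B'\ge\preThresh(w)$: if \Pres won, then $w=\vminus$ and $B'=f(v)\ominus b\ge\preThresh(\vminus)$; if \Cons won and moved to $w$, then $B'=f(v)\oplus(\text{\Cons's bid})\ge f(v)\oplus(\succb{b})\ge\preThresh(\vplus)\ge\preThresh(w)$, and when $\preThresh(\vplus)=k+1$ this second case cannot occur at all, because the bid \Cons would need exceeds his budget $k^*\ominus f(v)$, so only the first case arises. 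In every case $\preThresh(w)\le k^*$, hence $\mu(w,\preThresh(w))<\infty$ by Def.~\ref{def: presthresholdbudget}, and by budget‑monotonicity of $\mu$ (Cor.~\ref{corr: energyinherits}) $\mu(w,B')\le\mu(w,\preThresh(w))<\infty$. Now compose: after the first turn \Pres switches to the value‑iteration strategy of Lem.~\ref{lemma: preswinsinfinite}, which $\mu(w,B')$‑wins from $\zug{w,B'}$. Taking initial energy $M:=\maxweight+\max_{w\in N(v)}\mu(w,\preThresh(w))$ — finite since $V$ is finite and each summand is finite — the energy after the first turn is at least $M-\maxweight\ge\mu(w,\preThresh(w))\ge\mu(w,B')$, so the composed strategy keeps the energy non‑negative forever. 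Therefore $\mu(v,f(v))\le M<\infty$, which is the claim.

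In summary, everything reduces to the arithmetic fact of the second paragraph; the one‑step budget argument, the use of monotonicity, and the invocation of the value‑iteration strategy are routine, so the care should be concentrated on checking that $f(v)$'s being the average‑property value of its two extremal‑$\preThresh$ neighbours forces $b\le f(v)$, $f(v)\ominus b\ge\preThresh(\vminus)$ and $f(v)\oplus(\succb{b})\ge\preThresh(\vplus)$.
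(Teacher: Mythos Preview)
Your proposal is correct and follows essentially the same approach as the paper: both proofs hinge on the identical arithmetic fact that the average-property bid at $\zug{v,f(v)}$ lands \Pres in a configuration $\zug{w,B'}$ with $B'\ge\preThresh(w)$, and the paper carries out precisely the four-case verification you flag as the ``main obstacle.'' The only difference is packaging---the paper plugs the bid directly into the fixed-point equation for $\mu$ and shows $\ewin$ and $\elose$ are finite, whereas you wrap the same one-step analysis in a strategy and invoke Thm.~\ref{them: energyconicideswithmulimit}; both routes are equally valid and equally short once the arithmetic is done.
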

 		
 		\begin{proof}
 		We define bids for $ f(v) $.  
 		Intuitively, this is exactly \(\optbid{\preThresh}\), except we cannot directly use that notation now, since it requires \(\preThresh\) to satisfy the average property (which we are proving right now). 
 		Thus, we define a candidate bid $ \optbidthresh(v) $:
 		\begin{align*}
 			\optbidthresh(v) = 
 			\begin{cases}
 				\floorvT{\diffvT{\preThresh}} &\text{~if \(\eps = 0\) or \(1\)}\\[10pt]
 				\predbInt{\floorvT{\diffvT{\preThresh}}} &\text{~if \(\eps = *\) and \(\preThresh(\vminus) \in \Natstro\)}\\[10pt]
 				\succbInt{\floorvT{\diffvT{\preThresh}}} &\text{~otherwise}
 			\end{cases}
 		\end{align*}
 		
 		We claim that when \Pres has a budget of \(f(v)\) at \(v\), she can indeed bid \(\optbidthresh{\preThresh}(v)\), i.e, \(f(v) \ominus \optbidthresh(v) \geq 0\). 
 		The feasibility of such a bids is similar to the argument for Lem.\ref{lem:Talgebra}. 
 	
 		We proceed to first show that \(\mu(v, f(v)) < \infty\) for all \(v \in V\), and that will be followed by the proof of \(\mu(v, \predb{f(v)}) = \infty\) to conclude that \(f\) coincides with \(\preThresh\). 
 		
 		\medskip
 		
 		In order to show that \(\mu(v, f(v)) = \mu(v, \floorvT{\sumvT{\preThresh}} + \eps)< \infty\) for all \(v \in V\), we delve into a case analysis of four cases, where each case correspond to a 
 		parity of \(\sumvT{\preThresh}\) and the advantage status of \(\preThresh(\vminus)\).
 		
 		\begin{itemize}
 			\item {\bf Case I:}\label{case: I} \(\sumvT{\preThresh}\) is even and \(\preThresh(\vminus)\) is in \(\Nat\).
 			
 			In this case, \(\eps = 0\) and \(\optbidthresh(v) = \fracvT{\diffT}\) by their respective definitions. 
 			Recall that
 			\begin{align*}
 				\ewin(v, B, b) &= \min_{v' \in \neighbor{v}}\max\set{\mu(v', B \ominus b) - \weightf(v, v'), 0}\\
 				\elose(v, B, b) &= \max_{v' \in \neighbor{v}}\max\set{\mu(v', B \oplus \trump(B, b)) - \weightf(v, v'), 0}, \text{~and}\\
 				\mu(v, B) &= \min_{b \leq B} \max \set{\ewin(v, B, b), \elose(v, B, b)}
 			\end{align*}
 			
 			Since \(\optbidthresh(v)\) is one possible bid for the budget \(\fracvT{\sumvT{\preThresh}}\), we have 
 			\begin{align*}
 				&\mu\left(v, \fracvT{\sumvT{\preThresh}}\right)\\
 				&\leq \max\left\{\ewin\left(v, \fracvT{\sumvT{\preThresh}}, \fracvT{\diffT{\preThresh}}\right), \right.\\
 				&\hspace*{5em} \left. \elose\left(v, \fracvT{\sumvT{\preThresh}}, \fracvT{\diffT{\preThresh}}\right)\right\}
 			\end{align*}
 			
 			We show that both \(\ewin\left(v, \fracvT{\sumvT{\preThresh}}, \fracvT{\diffT{\preThresh}}\right)\) and \linebreak\(\elose\left(v, \fracvT{\sumvT{\preThresh}}, \fracvT{\diffT{\preThresh}}\right)\) are finite, which implies \linebreak \(\mu\left(v, \fracvT{\sumvT{\preThresh}}\right)\) is finite itself, from the above inequality. 
 			 
 			 \smallskip 
 			 
 			First, 
 			\begin{align*}
 				&\ewin\left(v, \fracvT{\sumvT{\preThresh}}, \fracvT{\diffT{\preThresh}}\right) \\
 				&= \min_{v' \in \neighbor{v}} \max\left\{ \mu\left(v', \absolut{\preThresh(\vminus)}\right) - \weightf(v, v'), 0\right\}\\
 				&\leq \max\set{\mu(\vminus, \preThresh(\vminus)) - \weightf(v, \vminus), 0} &&\hspace*{-8em}[\text{Since \(\preThresh(\vminus) \in \Nat\)}]\\
 				&< \infty &&\hspace*{-8em}[\text{As \(\mu(\vminus, \preThresh(\vminus))\) is finite by definition}]
 			\end{align*}
 		
 		\smallskip
 		
 		On the other hand, consider an arbitrary vertex \(v' \in \neighbor{v}\), for which we have 
 		\begin{align*}
 			k^* \geq \absolut{\preThresh(\vplus)}^* \geq \preThresh(\vplus) \geq \preThresh(v')
 		\end{align*}
 		
 		\SSn{We need to address the case that what if \(\preThresh(\vplus)= k+1\). The argument is if indeed \(\preThresh(\vplus)= k+1\), then \Cons would not have enough budget to bid \(\trump{B, b}\), since otherwise \(B \oplus \trump(B, b)\) would have been \( \absolut{\preThresh(\vplus)}^* = (k+1)^*\). As a consequence, we do not need to consider showing \(\elose(v, B, b)\) in that case. This is a general observation, not restricted to this case. Thus, maybe it should appear when we define \(\ewin\), \(\elose\), \(\mustar\) etc? }

 		From the monotonicity of $ \mu $ we get
 		\begin{align*}
 			\mu(v', \absolut{\preThresh(\vplus)}^*) \leq \mu(v', \preThresh(v')) < \infty
 		\end{align*}
 		
 		Since, \(\mu(v', \absolut{\preThresh(\vplus)}^*)\) is finite for any arbitrary \(v' \in \neighbor{v}\), the finiteness would hold when we take the maximum over all the neighbors of \(v\). 
 		In other words, 
 		\begin{align*}
 			&\elose\left(v, \fracvT{\sumvT{\preThresh}}, \fracvT{\diffT{\preThresh}}\right)\\
 			&= \max_{v' \in \neighbor{v}} \max\set{\mu(v', \absolut{\preThresh(\vplus)}^* - \weightf{v, v'}), 0} < \infty 
 		\end{align*}

 		Therefore, \(\mu\left(v, \fracvT{\sumvT{\preThresh}}\right)\) is finite. 
 		
 		\medskip
 		
 		\item {\bf Case II:} \(\sumvT{\preThresh}\) is odd and \(\preThresh(\vminus)\) is in \(\Natstr\). 
 		
 		In this case, \(\eps = 1\) and \(\optbidthresh(v) = \floorvT{\diffT{\preThresh}}\) by their respective definitions. 
 		
 		Therefore, 
 		\begin{align*}
 			&\mu\left(v, \floorvT{\sumvT{\preThresh}} + 1\right)\\
 			&\leq \max\left\{\ewin\left(v, \floorvT{\sumvT{\preThresh}} + 1, \floorvT{\diffT{\preThresh}}\right), \right.\\
 			&\hspace*{5em} \left. \elose\left(v, \floorvT{\sumvT{\preThresh}} +1 , \floorvT{\diffT{\preThresh}}\right)\right\}\\
 			&= \max\left\{\min_{v' \in \neighbor{v}} \max \set{\mu(v', 	\absolut{\preThresh(\vminus)} + 1) - \weightf(v, v'), 0}\right.\\
 			&\hspace*{5em}\left. \max_{v' \in \neighbor{v}} \max \set{\mu(v', 	\absolut{\preThresh(\vplus)}^*) - \weightf(v, v'), 0}\right\}
 		\end{align*}
 	
 	We show that \(\min_{v' \in \neighbor{v}} \max \set{\mu(v', 	\absolut{\preThresh(\vminus)} + 1) - \weightf(v, v'), 0}\) and \(\max_{v' \in \neighbor{v}} \max \set{\mu(v', 	\absolut{\preThresh(\vplus)}^*) - \weightf(v, v'), 0}\) are finite, implying \(\mu\left(v, \floorvT{\sumvT{\preThresh}} + 1\right)\) be finite as well.

 	First, we have
 	\begin{align*}
 		&\min_{v' \in \neighbor{v}} \max \set{\mu(v', 	\absolut{\preThresh(\vminus)} + 1) - \weightf(v, v'), 0} \\
 		&\leq \max\set{\mu(\vminus, \absolut{\preThresh(\vminus)} + 1) - \weightf(v, \vminus), 0}\\
 		&\leq \max\set{\mu(\vminus, \preThresh(\vminus)) - \weightf{v, \vminus}, 0} &&[\text{From monotonicity of $ \mu $}]\\
 		&<\infty &&[\text{Since \(\mu(\vminus, \preThresh(\vminus))\) is finite}]
 	\end{align*}
 
 	Furthermore, the explanation why \(\max_{v' \in \neighbor{v}}\max\set{\mu(v', 	\absolut{\preThresh(\vplus)}^*) - \weightf(v, v'), 0}\) is finite is already given in Case I already, thus we conclude with \(\mu\left(v, \floorvT{\sumvT{\preThresh}}\right)\) is finite. 
 	
 	\medskip
 	
 	\item {\bf Case III:} \(\sumvT\) is even and \(\preThresh(\vminus)\) is in \(\Natstr\). 
 	
 	\smallskip
 	
 	In this case, \(\eps = *\) and \(\optbidthresh(v) = \predbInt{\fracvT{\diffT{\preThresh}}}\). 
 	As usual, we proceed as follows:
 	\begin{align*}
 		&\mu\left(v, \succInt{\fracvT{\sumvT{\preThresh}}}\right) \\
 		&\leq \max\left\{\ewin\left(v, \succInt{\fracvT{\sumvT{\preThresh}}}, \predbInt{\fracvT{\diffT{\preThresh}}}\right), \right.\\
 		&\hspace*{5em} \left. \elose\left(v, \succInt{\fracvT{\sumvT{\preThresh}}} , \predbInt{\fracvT{\diffT{\preThresh}}}\right)\right\}\\
 		&= \max\left\{\min_{v' \in \neighbor{v}} \max \set{\mu(v', 	\absolut{\preThresh(\vminus)} + 1) - \weightf(v, v'), 0}\right.\\
 		&\hspace*{5em}\left. \max_{v' \in \neighbor{v}} \max \set{\mu(v', 	\absolut{\preThresh(\vplus)}^*) - \weightf(v, v'), 0}\right\}
 	\end{align*}
 
 	Note that, both the constitutes inside \(\max\{\}\) coincide with that of Case II. 
 	Since, they are already shown finite, following the exact same argument here we conclude that \(\mu\left(v, \succInt{\fracvT{\sumvT{\preThresh}}}\right)\) is finite as well. 
 	
 	\smallskip
 	
 	\item {\bf Case IV:} \(\sumvT{\preThresh}\) is odd and \(\preThresh(\vminus)\) is in \(\Nat\). 
 	
 	\smallskip
 	
 	In this case, \(\eps = *\) and \(\optbidthresh(v) = \succbInt{\floorvT{\diffT{\preThresh}}}\). 
 	Furthermore, 
 	\begin{align*}
 		&\mu\left(v, \succbInt{\floorvT{\sumvT{\preThresh}}}\right)\\
 		&\leq \max\left\{\ewin\left(v, \succbInt{\floorvT{\sumvT{\preThresh}}}, \succbInt{\floorvT{\diffT{\preThresh}}}\right), \right.\\
 		&\hspace*{5em} \left. \elose\left(v, \succbInt{\floorvT{\sumvT{\preThresh}}} , \succbInt{\floorvT{\diffT{\preThresh}}}\right)\right\}\\
 		&= \max\left\{\min_{v' \in \neighbor{v}} \max \set{\mu(v', 	\preThresh(\vminus)) - \weightf(v, v'), 0}\right.\\
 		&\hspace*{5em}\left. \max_{v' \in \neighbor{v}} \max \set{\mu(v', 	\absolut{\preThresh(\vplus)}^*) - \weightf(v, v'), 0}\right\}
 	\end{align*}
 		
 		Here, both the constitutes inside \(\max\{\}\) coincide with that Case I, thus following the exact same argument for their finiteness, we can conclude that \(\mu\left(v, \succbInt{\floorvT{\sumvT{\preThresh}}}\right)\) is finite. 
 	
 		\end{itemize}
 	
 		Thus, we conclude the proof of \(\mu(v, f(v)) < \infty\) for all \(v \in V\). 
 	\end{proof}
 		
 		\bigskip
 		
 		\begin{claim}\label{clm: muinfinite}
 			\(\mu(v, \predb{f(v)}) = \infty\)
 		\end{claim}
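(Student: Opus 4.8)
The plan is to reduce the claim to an arithmetic statement about $\predb{f(v)}$ and then dispatch it with the same four-way case analysis that underlies Claim~\ref{clm: mufinite}. Since $\mu$ is the fixed point of the (trimmed) operator of Eq.~\ref{def:O}, we get $\mu(v, \predb{f(v)}) = \infty$ as soon as $\enxt{}(v, \predb{f(v)}, b) = \infty$ --- with $\enxt{}$ now built from $\mu$ itself --- for \emph{every} \Pres bid $b \leq \predb{f(v)}$. By Lem.~\ref{lemma: mumonotone} and Def.~\ref{def: presthresholdbudget}, $\mu(v', C) = \infty$ iff $C \prec \preThresh(v')$; hence $\ewin{}(v, \predb{f(v)}, b) = \infty$ iff $\predb{f(v)} \ominus b \prec \preThresh(v')$ for all $v' \in \neighbor{v}$, i.e.\ iff $\predb{f(v)} \ominus b \prec \preThresh(\vminus)$, and, whenever $\predb{f(v)} \oplus \trump{\predb{f(v)}, b} \leq k^*$, $\elose{}(v, \predb{f(v)}, b) = \infty$ iff $\predb{f(v)} \oplus \trump{\predb{f(v)}, b} \prec \preThresh(\vplus)$. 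So it suffices to prove that for every $b \leq \predb{f(v)}$ we have $\predb{f(v)} \ominus b \prec \preThresh(\vminus)$, or else both $\predb{f(v)} \oplus \trump{\predb{f(v)}, b} \leq k^*$ and $\predb{f(v)} \oplus \trump{\predb{f(v)}, b} \prec \preThresh(\vplus)$. (If $f(v) = 0$ there is nothing to prove, since then $\sumvT{\preThresh} = 0$ forces $\preThresh(v) = 0 = f(v)$ already.)

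Intuitively this is a ``complement'' of Lem.~\ref{lem:Talgebra}: with budget one cent below the threshold, \Cons can react to a \Pres bid $b$ either by conceding the bidding --- when $b$ is large enough that \Pres's residual budget $\predb{f(v)} \ominus b$ drops below $\preThresh(\vminus)$ --- or by trumping it and steering to $\vplus$ --- when $b$ is small enough that the updated budget $\predb{f(v)} \oplus \trump{\predb{f(v)}, b}$ stays below $\preThresh(\vplus)$; and the bid at which \Cons can no longer afford the trump is exactly the crossover between the two regimes. Concretely, suppose for contradiction that neither disjunct holds for some $b$. If $\predb{f(v)} \oplus \trump{\predb{f(v)}, b} > k^*$, then a short computation --- the integral part of $\predb{f(v)} \ominus b$ is then at most $2\absolut{\predb{f(v)}} - k$, which one checks is $\prec \preThresh(\vminus)$ --- already yields the first disjunct, a contradiction. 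Otherwise $\predb{f(v)} \ominus b \succeq \preThresh(\vminus)$ and $\predb{f(v)} \oplus \trump{\predb{f(v)}, b} \succeq \preThresh(\vplus)$, and adding the integral parts while tracking the $\pm 1$ corrections introduced by $\oplus$, $\ominus$ and $\trump$ according to tie-breaking status yields $2\absolut{\predb{f(v)}} \geq \sumvT{\preThresh}$ up to a parity-dependent slack, which contradicts $f(v) = \floorvT{\sumvT{\preThresh}} + \eps$ and the definition of $\eps$ in Def.~\ref{def: averageprop}.

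I would carry out this contradiction in the four cases of Claim~\ref{clm: mufinite}: \textbf{(I)} $\sumvT{\preThresh}$ even and $\preThresh(\vminus) \in \Nat$, so $\eps = 0$ and $\predb{f(v)} = \bigl(\fracvT{\sumvT{\preThresh}} - 1\bigr)^*$; \textbf{(II)} $\sumvT{\preThresh}$ odd and $\preThresh(\vminus) \in \Natstr$, so $\eps = 1$; \textbf{(III)} $\sumvT{\preThresh}$ even and $\preThresh(\vminus) \in \Natstro$, so $\eps = *$; \textbf{(IV)} $\sumvT{\preThresh}$ odd and $\preThresh(\vminus) \in \Nat$, so $\eps = *$. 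In each case I substitute the explicit value of $\predb{f(v)}$, split on whether $b$ carries the advantage and on whether $\predb{f(v)} \oplus \trump{\predb{f(v)}, b} \leq k^*$ (treating $\preThresh(\vplus) = k+1$ separately, where the second disjunct holds automatically as soon as \Cons can afford the trump), and verify the disjunction by comparing integral parts and advantage bits. Together with Claim~\ref{clm: mufinite} this gives $f \equiv \preThresh$, i.e.\ $\preThresh$ satisfies the average property.

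The main obstacle is precisely this bookkeeping: $\oplus$, $\ominus$ and $\trump$ each shift their output by one cent depending on the advantage status of the operands, so the parity argument that a ``bad'' bid would force $2\absolut{\predb{f(v)}} \geq \sumvT{\preThresh}$ must be re-examined in every combination of (one of the four cases)$\,\times\,$(whether $b \in \Nat$)$\,\times\,$(whether \Cons can afford the trump)$\,\times\,$(whether $\preThresh(\vplus)$ carries the advantage). This mirrors --- in the opposite direction --- the computation in Claim~\ref{clm: mufinite}, and is the ``careful case-by-case analysis'' the paper flags.
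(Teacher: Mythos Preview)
Your proposal is correct and follows essentially the same route as the paper: both reduce to showing that for every bid $b \leq \predb{f(v)}$ either $\ewin{}$ or $\elose{}$ is $\infty$, and both dispatch this via the same four-way case analysis on the parity of $\sumvT{\preThresh}$ and the advantage status of $\preThresh(\vminus)$. The only cosmetic difference is that the paper argues directly by splitting at the explicit pivot bid $\optbidthresh(v)$ (for $b < \optbidthresh(v)$ one has $\elose{}=\infty$, for $b \geq \optbidthresh(v)$ one has $\ewin{}=\infty$), whereas you phrase the same arithmetic as a contradiction on $2\absolut{\predb{f(v)}} \geq \sumvT{\preThresh}$; your ``crossover'' bid is exactly the paper's $\optbidthresh(v)$.
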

 		
 		\begin{proof}
 		We will now show that \(\mu(v, \predb{f(v)}) < \infty\) by another similar case analysis in the following. 
 		We first describe the common structure that each of these cases will follow. 
 		
 		\smallskip
 		
 		For simplicity, let us fix \(B = \predb{\floorvT{\sumT} + \eps}\) for rest of this proof. 
 		In order to show that \(\mu(v, B) = \infty\), we show that for any bid \(b \leq B\), at least one of \(\ewin(v, B, b)\) and \(\elose(v, B, b)\) is \(\infty\). 
 		In each case, we distinguish between two scenarios and analyze them separately: (1) \(b < \optbidthresh(v)\), and (2) \(b \geq \optbidthresh(v)\). 
 		
 		\begin{itemize}
 			\item {\bf Case I:} \(\sumvT{\preThresh}\) is even and \(\preThresh(\vminus)\) is in \(\Nat\). 
 			
 			\smallskip 
 			
 			Recall that, in this case, \(\eps = 0\), \(B = \predb{\fracvT{\sumvT{\preThresh}}}\), and \(\optbidthresh(v)= \fracvT{\diffT{\preThresh}}\). 
 			
 			We first consider \(b < \optbidthresh\). 
 			It is easy to see that,
 			\begin{align*}\label{ineq: a}
 				\trump{B, b} \leq \succb{b} \leq \optbidthresh(v)\tag{a}
 			\end{align*}
 			That is, \(B + \trump{B, b} \leq B + \optbidthresh(v)\). 
 			We now proceed as follows:
 			\begin{align*}
 				&\max_{v' \in \neighbor{v}}\max\set{\mu(v', B \oplus \trump(B, b)) - \weightf{v, v'}, 0}\\
 				&\hspace*{1em}\geq \max\set{\mu(\vplus, B \oplus \trump(B, b)) - \weightf{v, \vplus}, 0}&&\hspace*{-15em}[\text{Since \(\vplus \in \neighbor{v}\)}]\\
 				&\hspace*{1em}\geq \max\set{\mu(\vplus, B \oplus \optbidthresh(v)) - \weightf{v, \vplus}, 0} &&\hspace*{-15em}[\text{Applying monotonicity of $ \mu $ on \eqref{ineq: a}}]\\
 				&\hspace*{1em}= \max\left\{ \mu\left(\vplus, \predb{\fracvT{\sumvT{\preThresh}}} \oplus \fracvT{\diffT{\preThresh}}\right) - \weightf{v, \vplus}, 0\right\}\\
 				&\hspace*{1em}=\max\left\{\mu\left(\vplus, \predbInt{\absolut{\preThresh(\vplus)}}\right) - \weightf{v, \vplus}, 0\right\} = \infty
 			\end{align*} 
 		
 			The very last equality is due to the fact that \(\preThresh(\vplus)\) is the minimum budget for which \(\mu(\vplus, \preThresh(\vplus)) < \infty\), for any budget (which includes \(\succInt{\absolut{\preThresh(\vplus)} -1}\)), it is \(\infty\). 
 			
 			\smallskip
 			
 			We now consider \(b \geq \optbidthresh(v)\). 
 			It implies \(B \ominus b \leq B \ominus \optbidthresh(v)\). 
 			Consider an arbitrary \(v' \in \neighbor{v}\), for which we have the following:
 			\begin{align*}
 				&\max\set{\mu(v', B \ominus b) - \weightf{v, v'}, 0}\\
 				&\hspace{2em}\geq \mu(v', B \ominus b) - \weightf{v, v'}\\
 				&\hspace*{2em}\geq \mu(v', B \ominus \optbidthresh(v)) - \weightf{v, v'} &&\hspace*{-10em}[\text{Using monotonicity of $ \mu $}]\\
 				&\hspace*{2em}= \mu\left(v', \predb{\fracvT{\sumvT{\preThresh}}} \ominus \fracvT{\diffT{\preThresh}}\right) - \weightf{v, v'}\\
 				&\hspace*{2em}= \mu\left(v', \predbInt{\absolut{\preThresh(v')}}\right) - \weightf{v, v'} = \infty
 			\end{align*}
 			
 			Since this holds for any arbitrary \(v' \in \neighbor{v}\), it holds when we take the minimum over \(\neighbor{v}\) as well. Thus, 
 			\begin{align*}
 				\min_{v' \in N(v)}\max\set{\mu\left(v', B \ominus b\right) - \weightf{v, v'}, 0} = \infty
 			\end{align*}
 			
 			Thus, we have shown for each \(b \leq B\), either one of \(\ewin(v, B, b)\) or \(\elose(v, B, b)\) is \(\infty\), implying \(\mu(v, B) = \mu(v, \predb{f(v)})\) be \(\infty\) as well, in this case. 
 			
 			\bigskip

 			 \item {\bf Case II:} \(\sumvT{\preThresh}\) is odd and \(\preThresh(\vminus)\) is in \(\Natstro\).
 			 
 			 \smallskip
 			 
 			 In this case, \(\eps = 1\), \(B = \succbInt{\floorvT{\sumT{\preThresh}}}\), and \(\optbidthresh(v) = \floorvT{\diffT{\preThresh}}\). 
 			 
 			 We first consider \(b \leq \optbidthresh(v)\). 
 			 We argue that for such \(b\), we have \(\trump(B, b) \leq \optbidthresh(v)\). 
 			 For \(b < \optbidthresh(v)\), we indeed have \(\trump(B, b) \leq \succb{b} \leq \optbidthresh(v)\). 
 			 On the other hand, for \(b = \optbidthresh(v)\), we note that \(B \in \Natstro\) but \(b \in \Nat\). 
 			 Therefore, by definition \(\trump(B, b) = b\).  
 			 Hence, for all \(b \leq \optbidthresh(v)\), we indeed have \(\trump(B, b)\leq \optbidthresh(v)\). 
 			  In other words, \(B + \trump(B, b) \leq B + \optbidthresh(v)\).

 			  We now proceed similar to how we did in Case I:
 			  
 			  \begin{align*}
 			  	&\ewin(v, B, b)\\ 
 			  	&= \max_{v' \in \neighbor{v}}\max\set{\mu(v', B \oplus \trump(B, b)) - \weightf{v, v'}, 0}\\
 			  	&\geq \max\set{\mu(\vplus, B \oplus \trump(B, b)) - \weightf{v, \vplus}, 0}&&\hspace*{-15em}[\text{Since \(\vplus \in \neighbor{v}\)}]\\
 			  	&\geq \max\set{\mu(\vplus, B \oplus \optbidthresh(v)) - \weightf{v, \vplus}, 0} &&\hspace*{-15em}[\text{Applying monotonicity of $ \mu $ on \eqref{ineq: a}}]\\
 			  	&= \max\left\{ \mu\left(\vplus, \succbInt{\floorvT{\sumT{\preThresh}}} \oplus \floorvT{\diffT{\preThresh}}\right) - \weightf{v, \vplus}, 0\right\}\\
 			  	&=\max\left\{\mu\left(\vplus, \predbInt{\absolut{\preThresh(\vplus)}}\right) - \weightf{v, \vplus}, 0\right\} = \infty
 			  \end{align*}
 		  
 		  	Next, we consider \(b > \optbidthresh(v)\). 
 		  	That is, 
 		  	\begin{align*}
 		  		b \geq \succb{\optbidthresh(v)}
 		  		\implies B \ominus b \leq B \ominus \left(\succb{\optbidthresh(v)}\right)
 		  	\end{align*} 
 	  		
 	  		Similar to Case I, here we pick an arbitrary \(v' \in \neighbor{v}\) for which we have 
 	  		\begin{align*}
 	  			&\max\set{\mu(v', B \ominus b) - \weightf{v, v'}, 0}\\
 	  			&\hspace{2em}\geq \mu(v', B \ominus b) - \weightf{v, v'}\\
 	  			&\hspace*{2em}\geq \mu(v', B \ominus \left(\succb{\optbidthresh(v)}\right)) - \weightf{v, v'} &&\hspace*{-10em}[\text{Using monotonicity of $ \mu $}]\\
 	  			&\hspace*{2em}= \mu\left(v', \succbInt{\floorvT{\sumT{\preThresh}}}\ominus \succbInt{\floorvT{\diffT{\preThresh}}}\right) - \weightf{v, v'}\\
 	  			&\hspace*{2em}= \mu\left(v', \absolut{\preThresh(\vminus)}\right) - \weightf{v, v'} 
 	  		\end{align*}
 		  	
 		  Since, \(\preThresh(\vminus) \in \Natstro\), we have \(\absolut{\preThresh(\vminus)} < \preThresh(v')\) for any \(v' \in \neighbor{v}\).
 		  In other words, \(\absolut{\preThresh(\vminus)} \leq \predb{\preThresh(v')}\) for any \(v' \in \neighbor{v}\). Thus, applying monotonicity of $ \mu $, we get
 		  \begin{align*}
 		  	\mu(v', \absolut{\preThresh(v)}) \geq \mu(v', \predb{\preThresh(v')})
 		  \end{align*}
		which implies, 
		\begin{align*}
			\max\set{\mu(v', B \ominus b) - \weightf(v, v'), 0} \geq \mu(v', \predb{\preThresh(v')}) - \weightf(v, v') = \infty
		\end{align*}
		Since, the above holds for any arbitrary \(v'\) of \(\neighbor{v}\), we can take a minimum over \(\neighbor{v}\), and conclude
		\begin{align*}
			\elose(v, B, b) = \min_{v' \in N(v)}\max\set{\mu(v', B \ominus b) - \weightf(v, v'), 0} = \infty
		\end{align*}
	
		Hence, in this case too, we have shown that for all \(b \leq B\), at least one of \(\ewin(v, B, b)\) and \(\elose(v, B, b)\) is \(\infty\), implying \(\mu(v, B) = \mu(v, \predb{f(v)})\) be \(\infty\) itself. 
		
		\bigskip
		
		\item {\bf Case III: } \(\sumvT{\preThresh}\) is even and \(\preThresh(\vminus)\) is in \(\Natstro\).
		
		\smallskip
		
		In that case, \(\eps = *\), \(B = \predb{f(v)} = \fracvT{\sumT{\preThresh}}\), and \(\optbidthresh(v) = \predbInt{\fracvT{\diffT{\preThresh}}}\). 
		
		It can be noted that here a bid cannot coincide with \(\optbidthresh(v)\), simply because \(B \in \Nat\) and \(\optbidthresh(v) \in \Natstro\). 
		In other words, we just need to take into account of \(b\)'s which are strictly greater or strictly less than 
		\(\optbidthresh(v)\).
		
		We start with \(b < \optbidthresh(v)\). 
		In that case, \(\trump(B, b) \leq \optbidthresh(v)\). 
		Hence, similar to the previous two cases, we proceed as follows:
		\begin{align*}
			&\ewin(v, B, b)\\ 
			&= \max_{v' \in \neighbor{v}}\max\set{\mu(v', B \oplus \trump(B, b)) - \weightf{v, v'}, 0}\\
			&\geq \max\set{\mu(\vplus, B \oplus \trump(B, b)) - \weightf{v, \vplus}, 0}&&\hspace*{-15em}[\text{Since \(\vplus \in \neighbor{v}\)}]\\
			&\geq \max\set{\mu(\vplus, B \oplus \optbidthresh(v)) - \weightf{v, \vplus}, 0} &&\hspace*{-15em}\\
			&= \max\left\{ \mu\left(\vplus, \fracvT{\sumT{\preThresh}} \oplus \predbInt{\fracvT{\diffT{\preThresh}}}\right) - \weightf{v, \vplus}, 0\right\}\\
			&=\max\left\{\mu\left(\vplus, \predbInt{\absolut{\preThresh(\vplus)}}\right) - \weightf{v, \vplus}, 0\right\} = \infty
		\end{align*} 
		
		On the other hand, when \(b > \optbidthresh(v)\), we have for every \(v' \in \neighbor{v}\):
		\begin{align*}
			&\max\set{\mu(v', B \ominus b) - \weightf{v, v'}, 0}\\
			&\hspace{2em}\geq \mu(v', B \ominus b) - \weightf{v, v'}\\
			&\hspace*{2em}\geq \mu(v', B \ominus \left(\succb{\optbidthresh(v)}\right)) - \weightf{v, v'} &&\hspace*{-10em}\\
			&\hspace*{2em}= \mu\left(v', \fracvT{\sumT{\preThresh}} \ominus \fracvT{\diffT{\preThresh}}\right) - \weightf{v, v'}\\
			&\hspace*{2em}= \mu\left(v', \absolut{\preThresh(\vminus)}\right) - \weightf{v, v'} 
		\end{align*}
		Similar to Case II, here we have \(\absolut{\preThresh(\vminus)} \leq \predb{\preThresh(v')}\) for all \(v' \in \neighbor{v}\) as well. 
		Thus, first applying monotonicity of $ \mu $, and then taking minimum over all \(\neighbor{v}\), we get
		\begin{align*}
			\elose(v, B, b) = \min_{v' \in \neighbor{v}}\max\set{\mu(v', B \ominus b) - \weightf(v, v'), 0} = \infty
		\end{align*}
	
		Therefore, in this case too, we have shown that for all \(b \leq B\), at least one of \(\ewin(v, B, b)\) and \(\elose(v, B, b)\) is \(\infty\), implying \(\mu(v, B) = \mu(v, \predb{f(v)})\) be \(\infty\) itself. 
		
		\bigskip
		
		\item {\bf Case IV:} \(\sumvT{\preThresh}\) is odd and \(\preThresh(\vminus)\) is in \(\Nat\).
		
		\smallskip
		
		In this case, \(\eps = *\), \(B = \predb{f(v)} = \floorvT{\sumT{\preThresh}}\), and \(\optbidthresh(v) = \succbInt{\floorvT{\diffT{\preThresh}}}\). 
		
		Similar to Case III, here too, since \(B \in \Nat\) and \(\optbidthresh(v) \in \Natstro\), a bid cannot coincide with \(\optbidthresh(v)\). 
		Thus, we only need to consider when \(b\) is strictly greater than or strictly lower than \(\optbidthresh(v)\). 
		
		We start with \(b < \optbidthresh(v)\). 
		In that case, \(\trump(B, b) \leq \optbidthresh(v)\). 
		Hence, similar to the previous two cases, we proceed as follows:
		\begin{align*}
			&\ewin(v, B, b)\\ 
			&= \max_{v' \in \neighbor{v}}\max\set{\mu(v', B \oplus \trump(B, b)) - \weightf{v, v'}, 0}\\
			&\geq \max\set{\mu(\vplus, B \oplus \trump(B, b)) - \weightf{v, \vplus}, 0}&&\hspace*{-15em}[\text{Since \(\vplus \in \neighbor{v}\)}]\\
			&\geq \max\set{\mu(\vplus, B \oplus \optbidthresh(v)) - \weightf{v, \vplus}, 0} &&\hspace*{-15em}\\
			&= \max\left\{ \mu\left(\vplus, \floorvT{\sumT{\preThresh}} \oplus \succbInt{\floorvT{\diffT{\preThresh}}}\right) - \weightf{v, \vplus}, 0\right\}\\
			&=\max\left\{\mu\left(\vplus, \predbInt{\absolut{\preThresh(\vplus)}}\right) - \weightf{v, \vplus}, 0\right\} = \infty
		\end{align*} 
		
		On the other hand, when \(b > \optbidthresh(v)\), we have for every \(v' \in \neighbor{v}\):
		\begin{align*}
			&\max\set{\mu(v', B \ominus b) - \weightf{v, v'}, 0}\\
			&\hspace{2em}\geq \mu(v', B \ominus b) - \weightf{v, v'}\\
			&\hspace*{2em}\geq \mu(v', B \ominus \left(\succb{\optbidthresh(v)}\right)) - \weightf{v, v'} &&\hspace*{-10em}[\text{Using monotonicity of $ \mu $}]\\
			&\hspace*{2em}= \mu\left(v', \floorvT{\sumT{\preThresh}} \ominus \left(\floorvT{\diffT{\preThresh}} + 1\right)\right) - \weightf{v, v'}\\
			&\hspace*{2em}= \mu\left(v', \absolut{\preThresh(\vminus)} - 1\right) - \weightf{v, v'} \\
			&\hspace*{2em}\geq \mu\left(v', \absolut{\preThresh(v')} - 1\right) - \weightf{v, v'} = \infty
		\end{align*}
	
		Thus, taking minimum over all \(\neighbor{v}\), we get
		\begin{align*}
			\elose(v, B, b) = \min_{v' \in \neighbor{v}}\max\set{\mu(v', B \ominus b) - \weightf(v, v'), 0} = \infty
		\end{align*}
		
		Therefore, in this case too, we have shown that for all \(b \leq B\), at least one of \(\ewin(v, B, b)\) and \(\elose(v, B, b)\) is \(\infty\), implying \(\mu(v, B) = \mu(v, \predb{f(v)})\) be \(\infty\) itself.
 		\end{itemize}
 		This concludes showing $ \mu(v, \predb{f(v)}) $ for all $ v $. 
 \end{proof}
 		
 		Thus, we now have both \(\mu(v, f(v)) < \infty\) and \(\mu(v, \predb{f(v)}) = \infty\). 
 		In other words, for all \(v \in V\), \(f(v)\) coincides with the characteristic property of \(\preThresh(v)\). 
 		Thus, 
 		\begin{align*}
 			\preThresh(v) = \floorvT{\sumT{\preThresh}} + \eps
 		\end{align*}
 		where \(\eps\) is as defined in \eqref{eq: epsforthreshpr}. 
 		Therefore, \(\preThresh\) satisfies the average property. 
 		\end{proof}
 		
\begin{observation}\label{obs: Talgebra}
 	Let \(T\) be a function that satisfies the average property, and \(v \in V\). Then 
 	\begin{itemize}
 		\item If \(T(\vminus) \in \Nat\), then \(T(v) \ominus b^T(v) = T(\vminus)\), and if \(T(\vminus) \in \Natstro\), then \(T(v) \ominus b^T(v) = \absolut{T(\vminus)} + 1\). On the other hand, for both cases, \(T(v) \oplus \left(\succb{b^T(v)}\right) = \succbInt{\absolut{T(\vplus)}}\). 
 		
 		\item \(\left(\succb{T(v)}\right) \ominus \left(\succb{b^T(v)}\right) = T(v) \ominus b^T(v)\), and \(\left(\succb{T(v)}\right) \oplus \left(b^T(v) \oplus 1\right) = \succbInt{\absolut{T(\vplus)}} +1 \)
 	\end{itemize}
 	
 \end{observation}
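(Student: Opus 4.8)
The plan is to remove all of the advantage bookkeeping by transporting everything through the \emph{rank} map $\textsf{rk}\colon\Natstr\to\Nat$ with $\textsf{rk}(x)=2x$ and $\textsf{rk}(x^*)=2x+1$ for $x\in\Nat$. This is the order isomorphism carrying $\prec$ to $<$, and a one-line verification against the defining cases of $\oplus$ and $\ominus$ shows that $\textsf{rk}$ turns them into ordinary integer addition and subtraction: $\textsf{rk}(B_1\oplus B_2)=\textsf{rk}(B_1)+\textsf{rk}(B_2)$, $\textsf{rk}(B_1\ominus B_2)=\textsf{rk}(B_1)-\textsf{rk}(B_2)$, $\textsf{rk}(\succb{B})=\textsf{rk}(B)+1$, $\textsf{rk}(\predb{B})=\textsf{rk}(B)-1$, and $\textsf{rk}(y+1)=\textsf{rk}(y)+2$ when $y\in\Nat$. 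Since $\textsf{rk}$ is a bijection, each of the four equalities asserted in the Observation is equivalent to the equality of the ranks of its two sides; so the whole proof reduces to computing $\textsf{rk}(T(v))$ and $\textsf{rk}(b^T(v))$ once and then substituting.

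First I would extract $\textsf{rk}(T(v))$ from the average property. Set $s=\absolut{T(\vplus)}+\absolut{T(\vminus)}$ and $d=\absolut{T(\vplus)}-\absolut{T(\vminus)}$, so that $d\ge 0$ and $s\equiv d\pmod 2$. Substituting $T(v)=\floorvT{s}+\eps$ into $\textsf{rk}$ and running the same case split as in Def.~\ref{def: averageprop} (parity of $s$ versus the advantage status of $T(\vminus)$) yields $\textsf{rk}(T(v))=s$ exactly when $T(\vminus)\in\Nat$ (the sub-cases $\eps=0$ with $s$ even, and $\eps=*$ with $s$ odd) and $\textsf{rk}(T(v))=s+1$ exactly when $T(\vminus)\in\Natstro$ (the sub-cases $\eps=1$ with $s$ odd, and $\eps=*$ with $s$ even). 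Then I read $\textsf{rk}(b^T(v))$ off Def.~\ref{def: optbid}: if $T(\vminus)\in\Nat$ then $b^T(v)=T(v)\ominus T(\vminus)$ and $\textsf{rk}(b^T(v))=\textsf{rk}(T(v))-2\absolut{T(\vminus)}=s-2\absolut{T(\vminus)}=d$; if $T(\vminus)\in\Natstro$ then $b^T(v)=T(v)\ominus(\absolut{T(\vminus)}+1)$ and $\textsf{rk}(b^T(v))=(s+1)-(2\absolut{T(\vminus)}+2)=d-1$. Legality $0\le b^T(v)\le T(v)$ is Lem.~\ref{lem:Talgebra}.

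Now every identity falls out by substitution. For the first bullet, $\textsf{rk}(T(v)\ominus b^T(v))=\textsf{rk}(T(v))-\textsf{rk}(b^T(v))$ equals $s-d=2\absolut{T(\vminus)}=\textsf{rk}(T(\vminus))$ when $T(\vminus)\in\Nat$, hence $T(v)\ominus b^T(v)=T(\vminus)$, and equals $(s+1)-(d-1)=2\absolut{T(\vminus)}+2=\textsf{rk}(\absolut{T(\vminus)}+1)$ when $T(\vminus)\in\Natstro$, hence $T(v)\ominus b^T(v)=\absolut{T(\vminus)}+1$; in both cases $\textsf{rk}(T(v)\oplus\succb{b^T(v)})=\textsf{rk}(T(v))+\textsf{rk}(b^T(v))+1$, which is $s+d+1$ in the first case and $(s+1)+(d-1)+1$ in the second, both equal to $2\absolut{T(\vplus)}+1=\textsf{rk}(\succbInt{\absolut{T(\vplus)}})$, so $T(v)\oplus\succb{b^T(v)}=\succbInt{\absolut{T(\vplus)}}$. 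For the second bullet, the first identity is immediate additivity, $\textsf{rk}(\succb{T(v)}\ominus\succb{b^T(v)})=(\textsf{rk}(T(v))+1)-(\textsf{rk}(b^T(v))+1)=\textsf{rk}(T(v)\ominus b^T(v))$; and the last one is $\textsf{rk}(\succb{T(v)}\oplus(b^T(v)\oplus 1))=(\textsf{rk}(T(v))+1)+(\textsf{rk}(b^T(v))+2)=2\absolut{T(\vplus)}+3$, which is $\textsf{rk}(\succbInt{\absolut{T(\vplus)}}+1)$ once the trailing ``$+1$'' is read as $\oplus 1$.

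I do not expect a genuine obstacle: the whole argument is bookkeeping. The points needing actual care are the one-line check that $\textsf{rk}$ is additive across all four sub-cases in the definitions of $\oplus$ and $\ominus$; keeping the parity of $s$ synchronized with the $\eps$-cases while computing $\textsf{rk}(T(v))$; and observing that it is precisely the split between the non-starred and starred branches of Def.~\ref{def: optbid} that makes $\textsf{rk}(b^T(v))$ come out as $d$ versus $d-1$, which is exactly what is needed for the two sides of the first bullet to match. This Observation, like Lem.~\ref{lem:Talgebra}, already appears in \cite{AS25}, and the computation above is the one behind it.
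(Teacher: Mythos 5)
Your proof is correct. The paper itself states this Observation without proof (treating it, like Lem.~\ref{lem:Talgebra}, as bookkeeping inherited from \cite{AS25}); where such identities are actually verified in the paper (e.g., inline in App.~\ref{app:preThreshAverage}) it is by a direct four-way case split on the parity of $\absolut{T(\vplus)}+\absolut{T(\vminus)}$ crossed with the advantage status of $T(\vminus)$. Your route through the order isomorphism $\textsf{rk}$ is genuinely different and tidier: once additivity of $\textsf{rk}$ across the defining cases of $\oplus$ and $\ominus$ is checked, the four-way split collapses to the single dichotomy $T(\vminus)\in\Nat$ versus $T(\vminus)\in\Natstro$, and all four identities become one-line integer computations with $\textsf{rk}(T(v))\in\set{s,s+1}$ and $\textsf{rk}(b^T(v))\in\set{d,d-1}$; this also makes transparent \emph{why} the starred and unstarred branches of Def.~\ref{def: optbid} are exactly the right correction. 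Your computations check out, including the reading of the $\eps=1$ clause as requiring $T(\vminus)\in\Natstro$ (the paper's ``$\Natstr$'' there is a typo, as its own case analysis confirms) and the reading of the trailing ``$+1$'' as $\oplus\,1$. The only caveat, which you inherit from the paper rather than introduce, is the degenerate situation $\absolut{T(\vplus)}=\absolut{T(\vminus)}$ with $T(\vminus)\in\Natstro$, where $b^T(v)$ comes out formally negative; the rank identities still hold verbatim under the obvious extension of $\textsf{rk}$, and legality is a separate issue that the paper also defers to Lem.~\ref{lem:Talgebra}.
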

 		
\section{Proof of Lemma~\ref{lem:acts-coincide}}\label{app: acts-coincide}
%\begin{claim}\label{claim: ebidattainsmin}
%	For \(B \in \set{\preThresh(v), \succb{\preThresh(v)}}\), \(\ebid(v, B, b) = \max\set{\ewin(v, B, b), \elose(v, B, b)}\) attains its minimum at \(b = \optbid{\preThresh}(v, B)\). 
%\end{claim}
\actscoincide*

\begin{proof}
%		We denote \(\sVI(v, B) = \zug{b, u}\), and for any budget \(b' \in B\), \(
%		\ebid(v, B, b') \coloneqq \max\set{\ewin(v, B, b'), \elose(v, B, b')}\). 
%		From Def.~\ref{def:sVI}, \(\sVI(v, B) = \zug{b, u}\) means 
%		\begin{align*}
%			b & = \arg\min_{b' \leq B} \ebid(v, B, b'), \text{~and}\\
%			u &= \arg\min_{v' \in N(v)} \max\set{\energy(v, B \ominus b) -\weightf(v, v'), 0}
%		\end{align*}
%		
%		Thus, we show that \(\ebid(v, B, b')\) attains its minimum at bid \(\optbid{\preThresh}(v, B)\). 

		Since \(B \in \set{\preThresh(v), \succb{\preThresh(v)}}\), we have \(\energy(v, B) < \infty\).
		Moreover, \(\energy(v, B) = \ebid(v, B, b')\), where \(b' = \arg\min_{b \leq B} \ebid(v, B, b)\). It follows that we must have both \(\ewin(v,B, b') < \infty\) and
		\(\elose(v, B, b') < \infty\). 
		
		We show that for any bid \(b\) which is either strictly less than \(\optbid{\preThresh}(v, B)\) or strictly greater than \(\succb{\optbid{\preThresh}(v, B)}\), \(\ebid(v, B, b) = \infty\). 
		Thus, \(\ebid(v, B, b)\) cannot attain its minimum for those \(b\)'s. 
		Finally, we show that \(\ebid(v, B, \optbid{\preThresh}(v, B)) \leq \ebid(v, B, \succb{\optbid{\preThresh}(v, B)})\), thus proving \(\ebid(v, B, b)\) attains its minimum at \(b = \optbid{\preThresh}(v, B)\).

		\begin{itemize}
			\item First, we consider \(b < \optbid{\preThresh}(v, B)\). 
			
			There are two possibilities: either \(b < \predb{\optbid{\preThresh}(v, B)}\) or \(b=  \predb{\optbid{\preThresh}(v, B)}\). 
			We claim that, either way \(\trump(B, b) \leq \predb{\optbid{\preThresh}(v, B)}\). 
			For \(b < \predb{\optbid{\preThresh}(v, B)}\), it is trivial. 
			On the other hand, since \(B\) and \(\optbid{\preThresh}(v, B)\)'s tie breaking status coincide, when \(b = \predb{\optbid{\preThresh}(v, B)}\), it is the case when \Pres has the advantage but does not use it. 
			Thus, \(\trump(B, b) = b = \predb{\optbid{\preThresh}(v, B)}\) in this case as well. 
			Thus, 
			\begin{align*}
				&\trump(B, b) \leq \predb{\optbid{\preThresh}(v, B)}\\
				\implies &B \oplus \trump(B, b) \leq B \oplus \left(\predb{\optbid{\preThresh}(v, B)}\right) \leq \predb{\preThresh(\vplus)} \\
				&\hspace*{2em} [\text{~Since \(T(v) \oplus \left(\succb{b^T(v)}\right) = \succbInt{\absolut{T(\vplus)}}\) from Obs.~\ref{obs: Talgebra}}]
			\end{align*}
			Using monotonicity of \(\energy\) with respect to budgets (Corr.~\ref{corr: energyinherits}), for all \(u \in \neighbor{v}\), we have 
			\begin{align*}
				\energy(u,B\oplus \trump(B, b)) \geq \energy(u, \predb{\preThresh(\vplus)})
			\end{align*}
			In particular, 
			\begin{align*}
				\energy(\vplus,B \oplus \trump(B, b)) \geq \energy(\vplus, \predb{\preThresh(\vplus)}) = \infty
			\end{align*}
			Since, \(\preThresh(\vplus)\) is the minimum budget \(B'\) for which \(\energy(\vplus, B') < \infty\). 
			
			Therefore, taking maximum over all \(u \in \neighbor{v}\), we get
			\begin{align*}
				\elose(v,B, b) &= \max_{u \in \neighbor{v}} \max\set{\energy(u,B \oplus \trump(B, b)) - \weightf(v, u), 0} \\
				&\geq \max\set{\energy(\vplus,B \oplus \trump(B, b)) - \weightf(v, \vplus), 0}\\
				&\geq \energy(\vplus,B \oplus \trump(B, b)) - \weightf(v, \vplus) = \infty
			\end{align*}
			
			Hence, \(\ebid(v, B, b) \geq \elose(v, B, b) = \infty\) when \(b <\optbid{\preThresh}(v, B)\).
			
			\bigskip

			\item Second, we consider \(b > \succb{\optbid{\preThresh}(v, B)}\). 
			
			In other words, \(b \geq \optbid{\preThresh}(v, B) \oplus 1\). 
			Then we have 
			\begin{align*}
				B \ominus b \leq B \ominus \left(\optbid{\preThresh}(v, B) \oplus 1\right) &\leq \predb{\preThresh(\vminus)}\\
				&\hspace*{-2em}\text{~[Since \(T(v) \ominus b^T(v) \leq \absolut{T(\vminus)} +1\) from Obs~\ref{obs: Talgebra}]}
			\end{align*}
		From Corr~\ref{corr: energyinherits}, for all \(u \in \neighbor{v}\), we have
			\begin{align*}
				\energy(u, B \ominus b) &\geq \energy(u, \predb{\preThresh(\vminus)}) \geq \energy(u, \predb{\preThresh(u)}) = \infty
			\end{align*}
			The last inequality is derived from the fact that \(\preThresh(u)\) is the minimum budget \(B\) for which \(\energy(u, B) < \infty\). 
			Since this holds for any \(u \in \neighbor{v}\), it should also be satisfied for the minimum. Thus, 
			\begin{align*}
				\ewin(v, B, b) &= \min_{u \in \neighbor{v}} \max\set{\energy(u, B \ominus b) - \weightf(v, u), 0}\\
				&\geq \min_{u \in \neighbor{v}} \energy(u, B \ominus b) - \weightf(v, u) = \infty
			\end{align*}
			Hence \(\ebid(v, B, b) \geq \ewin(v, B, b) = \infty\) as well when \(b > \succb{b^{\preThresh}(v)}\). 
		\end{itemize}
		
		\bigskip
		
		Finally, we will show that \(\ebid(v, B, b)\) attains its minimum at \(b = \optbid{\preThresh}(v, B)\) by explicitly showing that 
		\begin{align*}
			\ewin(v, B, \optbid{\preThresh}(v, B)) &\leq 
			\ewin(v, \preThresh(v), \succb{\optbid{\preThresh}(v, B)}), \text{~and}\\
			\elose(v, B, \optbid{\preThresh}(v, B)) &=
			\elose(v, B, \succb{\optbid{\preThresh}(v, B)})
		\end{align*}
		
		\begin{itemize}
			\item First, from Corr~\ref{corr: energyinherits}, we get for any \(u \in \neighbor{v}\)
			\begin{align*}
				\energy(u, B \ominus \optbid{\preThresh}(v, B)) \leq \energy(u, B \ominus \succb{\optbid{\preThresh}(v, B)})
			\end{align*}
			Thus, subtracting \(\weightf(v, u)\) from both sides, then taking maximum with \(0\), and finally taking the minimum over \(\neighbor{v}\), we arrive at
			\begin{align*}
				\ewin(v, B, \optbid{\preThresh}(v, B)) \leq \ewin(v, B, \succb{\optbid{\preThresh}(v, B)})
			\end{align*}
			
			\medskip
			
			\item Second, recall that the advantage status of \(B\) and \(\optbid{\preThresh}(v, B)\) always coincide. 
			Thus, \(\succb{\optbid{\preThresh}(v, B)}\) is a feasible bid for budget \(B\) only if \(B \in \Natstro\). 
			Consequently, we have \(\succb{\optbid{\preThresh}(v, B)} \in \Nat\), making \(\trump(B, \succb{\optbid{\preThresh}(v, B)}) = \succb{\optbid{\preThresh}(v, B)}\). 
			Therefore for any \(u \in \neighbor{v}\):
			\begin{align*}
				\energy(u, B \oplus \trump(B, \optbid{\preThresh}(v,B))) 
				&= \energy(u, B \oplus \succb{\optbid{\preThresh}(v, B)})\\
				&= \energy(u, B \oplus \trump(B, \succb{\optbid{\preThresh}(v, B)}))\\
			\end{align*}
			Henceforth, first subtracting \(\weightf(v, u)\) from both sides, then taking the maximum with \(0\), and finally by taking the maximum over all \(\neighbor{v}\), we arrive at
			\begin{align*}
				\elose(v, B, \optbid{\preThresh}(v, B)) = \elose(v, B, \succb{\optbid{\preThresh}(v, B)})
			\end{align*}
		\end{itemize}
		
		This concludes the argument that indeed \(\ebid(v, B, b)\) attains its minimum at \(b = \optbid{\preThresh}(v, B)\) where \(B \in \set{\preThresh(v), \succb{\preThresh(v)}}\). 
%		Thus, one such positional winning strategy \(\sigma\) for \Pres exists, which satisfies: for every configuration \(\zug{v, B}\) with \(B \in \set{\preThresh(v), \succb{\preThresh(v)}}\), denoting \(\sigma(v, B) = \zug{b', v'}\), we get \(b' = b^{\preThresh}(v)\).

	We proceed to the choice of vertex. 
	We use \(b' = \optbid{\preThresh}(v, B)\) for \(\zug{v, B}\) to establish that \(\sVI{\sigma}\) must choose a vertex from \(\Allowd{\preThresh}(v)\) at \(\zug{v, B}\). 
	Without loss of generality, we assume that \(B = \preThresh(v)\) since \(\preThresh(v) \ominus b^{\preThresh}(v) = (\succb{\preThresh(v)}) \ominus (\succb{b^{\preThresh}(v)})\). 
	Assume for the sake of contradiction that \(v' \not\in \Allowd{\preThresh}(v)\). 
	That means, if \(\preThresh(v) \in \Nat\), then \(\preThresh(v') > \preThresh(\vminus)\), and otherwise \(\preThresh(v') > \succb{\preThresh(\vminus)}\). 
	We argue for the case \(\preThresh(v) \in \Nat\), and the other case is dual. 
	Since \(\preThresh(v')\) is the minimum budget \(B'\) for which \(\energy(v', B')\) is finite, we have from the assumption \(\preThresh(v') > \preThresh(\vminus)\) that \(\energy(v', \preThresh(\vminus)) = \infty\). 
	Since \(\preThresh(v) \ominus \optbid{\preThresh}(v) = \preThresh(\vminus)\), \Cons can let \Pres win the current bid, making the subsequent configuration \(\zug{v', \preThresh(\vminus)}\), which is  \(e\)-winning for \Cons for any finite energy \(e\), which contradicts that \(\sVI{\sigma}\) is \(\energy(v, \preThresh(v))\)-winning for \Pres from \(\zug{v, \preThresh(v)}\). Thus, \(v'\) must be from \(\Allowd{\preThresh}(v)\). 
\end{proof}

%%%%%%%%%%%%%%%%%%%%%%%%%%%%%%%%%%%%%%%%%%%%
\section{Proof of Claim~\ref{clm: singletransitioninvariant}}
\label{app:singletransitioninvariant}

\singletransitioninvariant*
			
			\begin{proof}
%			Recall that Cor.~\ref{cor:guaranteeabygnostic} states that %Lemma~\ref{lemma: guaranteeabygnostic},
%			 when \(B \in \set{\preThresh(v), \succb{\preThresh(v)}}\), \(\presagn\) guarantees that no matter what \Cons responses at \(\zug{v, B}\), it maintains an invariant on energy. 
%			On the other hand, by definition, \(\relativebud(v, B)\) also belongs to the set \(\set{\preThresh(v), \succb{\preThresh(v)}}\). 
%			The invariant that we are claiming in this claim is similar to that of Cor.~\ref{cor:guaranteeabygnostic}, except that it is possible that \(B_m > \succb{\preThresh(v)}\).  
%			In effect, \(\relativebud(v_{m+1}, B_{m+1})\) may not coincide with either of \(\relativebud(v_m, B_m) \ominus \optbid{\preThresh}(v_m, \relativebud(v_m, B_m))\) and \(\relativebud(v_m, B_m) \oplus \succb{\optbid{\preThresh}(v_m, \relativebud(v_m, B_m))}\). 

			Recall that Lemma.~\ref{lem:acts-coincide} implies that %Lemma~\ref{lemma: guaranteeabygnostic},
			when \(B \in \set{\preThresh(v), \succb{\preThresh(v)}}\), \(\presagn\) guarantees that no matter what \Cons responses at \(\zug{v, B}\), it maintains the following invariant of energy: 
			\begin{itemize}
				\item If \Pres wins by bidding $ \optbid{\conThresh}(v, B) $, and proceeds to a $u \in \Allowd{\preThresh}(v)  $, then 
				\begin{align*}
					\energy(v, \relativebud(v, B)) + \weightf(v, u) \geq \energy(v, \relativebud(v, B) \ominus \optbid{\preThresh}(v, B))
				\end{align*}
				
				\item If \Cons wins by bidding at least $ \succb{\optbid{\preThresh}(v, B)} $, then for every $ v' \in \neighbor{v} $, 
				\begin{align*}
					\energy(v, \relativebud(v, B)) + \weightf(v, v') \geq \energy(v', \relativebud(v, B) \oplus (\succb{\optbid{\preThresh}(v, B)})) 
				\end{align*}
			\end{itemize}

		However, when we try to extend to $ i^{th} $ transition of $ \pi $, \(\relativebud(v_{m+1}, B_{m+1})\) may not coincide with either of \(\relativebud(v_m, B_m) \ominus \optbid{\preThresh}(v_m, B_m)\) and \(\relativebud(v_m, B_m) \oplus \succb{\optbid{\preThresh}(v_m, B_m)}\). 
		
			\medskip
			
			We will argue by a case analysis that, in fact, in all but one cases they do coincide (thus satisfy the energy invariant from there), and the case where they may not coincide is the only case where the equality \(\spare{m} = \spare{m+1}\) does not hold. 
			This proves that whenever \(\spare{m} = \spare{m+1}\), we get the energy invariant as mentioned. 
			
			\medskip
			
			We first assume \(\spare{m} = l < k\). 
			In other words, \(B_m\) equals to either  \(\preThresh(v) \oplus l\) or \(\preThresh(v) \oplus l^*\). 
			We consider the following exhaustive case analysis. 
			\begin{itemize}
				\item First suppose \(\relativebud(v_m, B_m) = \preThresh(v_m)\), and \Pres wins the bid at \(\zug{v_m, B_m}\). 
				Thus,
				\begin{align*}
					B_{m+1} = B_m \ominus \optbid{\preThresh}{v_m} &= \left(\preThresh(v) \oplus l\right) \ominus \optbid{\preThresh}{v_m}\\
					&= \left(\preThresh(v_{m}) \ominus \optbid{\preThresh}{v_m}\right) \oplus l\\
					&= 
					\begin{cases}
						\preThresh(v_{m+1}) \oplus l &\text{~if \(\preThresh(v_{m+1}) \in \Nat\)}\\
						\absolut{\preThresh(v_{m+1})} \oplus 1 \oplus l &\text{~otherwise}
					\end{cases}\\
				&\hspace*{10em}\text{~[Using Lem.~\ref{lem:Talgebra}]}%Obs.~\ref{obs: Talgebra}]}
				\end{align*} 
				Either way, \(\relativebud(v_{m+1}, B_{m+1})\) coincides with \(\preThresh(v_m) \ominus \optbid{\preThresh}(v_m)\). 
				Hence, the invariant holds.  		

				\smallskip
				
				\item Second, suppose \(\relativebud(v_m, B_m) = \preThresh(v_m)\) but \Cons wins the bid at \(\zug{v_m, B_m}\). 
				Recall that, \Cons bids at least \(\succb{\optbid{\preThresh}{v_m}}\), and chooses a vertex \(u \in \neighbor{v_m}\). 
				In the worst case (from \Pres's perspective), \(v_{m+1}\) can be such that \(\preThresh(v_{m+1}) = \preThresh(\vplus{v_m})\). 
				In this case, 
				\begin{align*}
					B_{m+1} \geq B_m \oplus \succb{\optbid{\preThresh}{v_m}} &= \left(\preThresh(v_m) \oplus l\right) \oplus \succb{\optbid{\preThresh}{v_m}} \\
					&= \succbInt{\absolut{\preThresh(v_{m+1})}} \oplus l
				\end{align*}
				If \(B_{m+1} > \succbInt{\absolut{\preThresh(v_{m+1})}} \oplus l\), we get \(\spare{m+1} > \spare{m}\), and we have nothing further to show.
				Similar is the case when \Cons chooses \(v_{m+1}\) with \(\preThresh(v_{m+1}) < \preThresh(\vplus{v_m})\).  
				Otherwise, \(\relativebud(v_{m+1}, B_{m+1})\) exactly coincides with \(\preThresh(v_m) \oplus \succb{\optbid{\preThresh}{v_m}}\), and thus the energy invariant holds. %the corresponding from Lemma~\ref{lemma: guaranteeabygnostic}. 
				
				\smallskip
				
				\item Third, supose \(\relativebud(v_m, B_m) = \succb{\preThresh(v_m)}\), and \Pres wins at \(\zug{v_m, B_m}\). 
				Thus,
				\begin{align*}
					B_{m+1} = B_m \ominus \succb{\optbid{\preThresh}} &= \left(\succb{\preThresh(v_m)} \oplus l\right) \ominus \left(\succb{\optbid{\preThresh}}\right)\\
					&= \left(\preThresh(v_m) \ominus \optbid{\preThresh}{v_m}\right) \oplus l\\
					&= 
					\begin{cases}
						\preThresh(v_{m+1}) \oplus l &\text{~if \(\preThresh(v_{m+1}) \in \Nat\)}\\
						\absolut{\preThresh(v_{m+1})} \oplus 1 \oplus l &\text{~otherwise}
					\end{cases}\\
					&\hspace*{10em}\text{~[Using Lem.~\ref{lem:Talgebra}]}%Obs.~\ref{obs: Talgebra}]}
				\end{align*}
				Similar to the first case, here too, either way \(\relativebud(v_{m+1}, B_{m+1})\) coincides with \(\left(\preThresh(v_m) \ominus \optbid{\preThresh}{v_m}\right)\). 
%				Hence, the required energy invariant becomes a rewrite of Cor.~\ref{cor:guaranteeabygnostic}. %the same from Lemma~\ref{lemma: guaranteeabygnostic}. 
				
				\item Finally, suppose \(\relativebud(v_m, B_m) = \succb{\preThresh(v_m)}\), and \Cons wins at \(\zug{v_m, B_m}\). 
				Moreover, suppose \Cons chooses the vertex with maximum threshold budget among its neighbors, i.e, \(\preThresh(v_{m+1}) = \preThresh(\vplus{v_m})\). 
				Since in this case, \Pres bids \(\succb{\optbid{\preThresh}{v_m}}\), we get 
				\begin{align*}
					B_{m+1} \geq B_m \oplus \optbid{\preThresh}{v_m} \oplus 1 &= \left(\succb{\preThresh(v_m)} \oplus l\right) \oplus \optbid{\preThresh}{v_m} \oplus 1\\
					&= \succbInt{\absolut{\preThresh(\vplus{v_m})}} \oplus 1 \oplus  l\\
					&= \succbInt{\absolut{\preThresh(v_{m+1})}} \oplus 1 \oplus l
				\end{align*}
				Thus, \(\spare{m+1} = \absolut{B_{m+1} \ominus \preThresh(v_{m+1})} \geq l+1\), violating the hypothesis itself. 
				Therefore, we do not need to show the energy invariant in this case. 
			\end{itemize}
		In all the above cases, we observed either the required energy invariant holds, %Lemma~\ref{lemma: guaranteeabygnostic}, 
		or \(\spare{m+1}\) becomes strictly greater than \(\spare{m}\). 
%		In case of the former, therefore the proof gets inherited from Cor.~\ref{cor:guaranteeabygnostic},
%		%Lemma~\ref{lemma: guaranteeabygnostic}, 
%		and in case of latter, since it violates the hypothesis of the current lemma, we are through. 
		\end{proof}

%%%%%%%%%%%%%%%%%%%%%%%%%%%%%%%%%%%%%%%%%%%%%%%%%%%
\section{Proof of Thm.~\ref{thm:sagn-is-winning}}
\label{app:thm:sagn-is-winning}
			We proceed to establish the consequences of Claim~\ref{clm: singletransitioninvariant} in the following:
			\begin{itemize}
				\item First, we show by induction that for all \(m \geq N\), for all \(n \geq 0\) the following holds:
				\begin{align*}
					\energy(v_m, \relativebud{v_m, B_m}) + \sum_{i = m}^{m+n} \weightf(v_i, v_{i+1}) \geq 0
				\end{align*}
				Note that, the induction is only on \(n\). 
				Thus, each step of the induction comprises the universal quantification over \(m\). 
				
				The base case is \(n = 0\). 
				It requires us to argue that for all \(m \geq N\), \(\energy(v_m, \relativebud{v_m, B_m}) \geq 0\). 
				This is true because \(\energy\) is a function that maps each configuration to a non-negative integer by construction. 
				
				We assume that the statement is true for some fixed \(n\), and will show for \(n+1\). 
				To be precise, the induction hypothesis says: for all \(m \geq N\) and for a fixed \(n\), \(\energy(v_m, \relativebud{v_m, B_m}) + \sum_{i = m}^{m+n} \weightf(v_i, v_{i+1}) \geq 0\). 
				In the inductive step, we consider an arbitrary \(m \geq N\), and \(n+1\). 
				There, we have
				\begin{align*}
					&\energy(v_m, \relativebud{v_m, B_m}) + \sum_{i = m}^{m+n+1} \weightf(v_i, v_{i+1})\\
					&\hspace*{1em}= \energy(v_m, \relativebud{v_m, B_m}) + \weightf(v_m, v_{m+1}) + \sum_{i = m+1}^{m+n+1} \weightf(v_i, v_{i+1})\\
					&\hspace*{1em}\geq \energy(v_{m+1}, \relativebud{v_{m+1}, B_{m+1}}) + \sum_{i = m+1}^{m+n+1} \weightf(v_i, v_{i+1})\\
					&\hspace*{10em} \text{~[By applying Claim~\ref{clm: singletransitioninvariant} since for all \(m \geq N, \spare{m} = r\)]~}\\
					&\hspace*{1em}=  \energy(v_{m'}, \relativebud{v_{m'}, B_{m'}}) + \sum_{i = m'}^{m'+n} \weightf(v_i, v_{i+1}) \hspace*{1em}\text{~[Setting \(m' = m+1 \geq N\)]}\\
					&\hspace{1em}\geq 0  \hspace*{19.6em}\text{~[From induction hypothesis]}
				\end{align*}
				
				In particular, the statement holds true for \(m = N\), which implies that if the energy at \(\zug{v_N, B_N}\) is at least \(\energy(v_N, \relativebud(v_N, B_N))\), then for the rest of the \(\pi\), it remains non-negative. 
				
				\item Second, we consider the finite prefix \(\finprefix\) of length \(N+1\) of \(\pi\), i.e, \(\finprefix = \zug{v_0, B_0}, \ldots \zug{v_N, B_N}\). 
				We call an infix \(h\) of \(\finprefix\) a \emph{patch} if  \(\spare\) value is fixed in \(\rho\), and is maximal. 
				Since \(\spare\) is monotonically increasing, we observe that \(\finprefix\) is a sequence of such monotonically \(\spare\)-value increasing patches, until it reaches \(\zug{v_N, B_N}\). 
				Consider such a patch \(h = \zug{v_j, B_j}, \ldots \zug{v_{j+p}, B_{j+p}}\). 
%				Thus \(\spare{j} = \spare{j+p} = l\) and \(\spare{j+p+1} > l+1\). 
				We show by induction that for all \(0 \leq n \leq p-1\):
				\begin{align*}
					\energy(v_j, \relativebud(v_j, B_j)) + \sum_{i = j}^{j+n}\weightf(v_i, v_{i+1}) \geq \energy(v_{j+n+1}, \relativebud(v_{j+n+1}, B_{j+n+1})) 
				\end{align*}
				The base case is \(n = 0\), and is directly derived from Claim~\ref{clm: singletransitioninvariant}: \(\energy(v_j, \relativebud(v_j, B_j)) + \weightf(v_j, v_{j+1}) \geq \energy(v_{j+1}, \relativebud(v_{j+1}, B_{j+1}))\) since \(\spare{j} = \spare{j+1}\). 
				We assume that the statement is true for some fixed \(n < p-1\), and will show it for \(n+1\). 
				We start with 
				\begin{align*}
					&\energy(v_j, \relativebud(v_j, B_j)) + \sum_{i = j}^{j+n+1}\weightf(v_i, v_{i+1})\\
					&\hspace{1em}=  \energy(v_j, \relativebud(v_j, B_j)) + \sum_{i = j}^{j+n}\weightf(v_i, v_{i+1}) + \weightf(v_{j+n+1}, v_{j+n+2})\\
					&\hspace{1em}\geq \energy(v_{j+n+1}, \relativebud(v_{j+n+1}, B_{j+n+1})) + \weightf(v_{j+n+1}, v_{j+n+2}) \text{~[From induction hypothesis]}\\
					&\hspace{1em}\geq \energy(v_{j+n+2}, \relativebud(v_{j+n+2}, B_{j+n+2})) \text{~[Directly from Claim~\ref{clm: singletransitioninvariant}]}
				\end{align*} 
				Since \(\energy\) only assigns non-negative value by construction, it implies that if the energy at the beginning of such a patch is at least \(\energy(v_j , \relativebud(v_j, B_j))\), it stays non-negative throughout the patch. 
				
				There are at most \(k\) such patches. 
				Let us denote \(E_1 = k \cdot \max_{v \in V}\energy(v, \preThresh(v))\), and note that \( \max_{v \in V}\energy(v, \preThresh(v)) \geq \energy(v_n, \relativebud(v_n, B_n))\) for any \(n \geq 0\). 
%				\begin{align*}
%					\max_{v \in V}\energy(v, \preThresh(v)) \geq \energy(v_n, \relativebud(v_n, B_n)) \text{~for any \(n \geq 0\)}
%				\end{align*} 
%				Let us denote \(E_1 = \max_{v \in V}\energy(v, \preThresh(v))\). 
				Therefore, having initial energy at least \(E_1\) ensures that at the beginning of each such patch \(h\), the accumulated energy remains at least \(\energy(v_j, \relativebud(v_j, B_j))\), which, in turn, will keep the accumulated energy non-negative throughout the patch. 
				Moreover, there are at most \(k\) many transition edges between patches, which can consume at most \(E_2 = k \cdot \maxweight\) energy. 
				Since, the initial energy of \(\pi\) is \(M  \geq E_1 + E_2 + \max_{v \in V}\energy(v, \preThresh(v))\), when it eventually reaches \(\zug{v_N, B_N}\), the accumulated energy remains at least \(\max_{v \in V}\energy(v, \preThresh(v)) \geq \energy(v_N, \relativebud(v_N, B_N))\). 
			\end{itemize}
			 
\stam{%redundant?
			 This concludes the argument that \(M\) is a sufficient initial energy which keeps the accumulated energy throughout \(\pi\) non-negative. 
			 Since, \(\pi\) is an arbitrary play consistent with \(\presagn\), this shows that \Pres wins the energy game from any configuration \(\zug{v, B}\) (with \(B \geq \preThresh(v)\)) by playing a budget agnostic strategy \(\presagn\).  
}

%%%%%%%%%%%%%%%%%%%%%%%%%%%%%%%%%%%%%%%%%%%%%

%\section{Proof of Lemma~\ref{lemma: sufficientnumber}}\label{app: sufficientnumber}
%\sufficientnumber*
%
%\begin{proof}
%	Since $ \energy(v, \conThresh(v)) = \infty $, by Lem.~\ref{lemma: conswinsinfiniteenergy}, \Cons has a strategy $ \tau $ that is $ M $-winning in $ \G $, for every $ M \in \Nat $. 
%	Let $ n_M $ be the maximal number of turns that \Pres can avoid against $ \tau $ (see App.~\ref{app:sufficientnumber} for the proof that such $ n_M $ exists). 
%	Thus, \Cons $ M $-wins in $ G_{n_M} $. 
%	The lemma follows from choosing $ N = n_M $, for $ M = e $. 
%\end{proof}

\section{Proof of Lemma~\ref{lemma: sufficientnumber}}\label{app:sufficientnumber}
\sufficientnumber*

\begin{proof}
	Assume towards contradiction that, there exists \(e\) such that for every integer \(m\), \(\energy_{m}(v, k^* \ominus \conThresh(v)) \leq e\). 
	In other words, for any integer \(m\), to keep the energy non-negative for \(m\) steps from \(\zug{v, B}\), \Pres needs at most \(e\) as the initial energy. 
	Note that, this does not immediately mean \Pres has a strategy to keep the energy level non-negative forever from \(\zug{v, B}\), even when the initial energy is \(e\).
	
	On the other hand, for any finite initial energy \(M\), \Cons has a strategy from \(\zug{v, B}\) to eventually make the energy level negative. 
	First, we compute existence of a finite upper bound on the number of steps \Cons needs in order to make the energy level negative if the initial energy at most \(M\). 
	Suppose the upper bound is \(n_M\). 
	Therefore, by definition \(\energy_{n_M}(v, B) > M\).
	If we substitute \(M = e\), \(n_M\) contradicts to our assumption that for all integers \(m\), \(\energy_{m} < e\). 
	In the following, we show such an upper bound \(n_M\) exists for a given finite energy \(M\). That suffices to prove the claim. 
	
	We consider an integer \(n\) such that \(\energy_{n}(v, B) > \upperboundforloop + 1\). 
	Such an \(n\) exists simply because \(\energy(v, B) = \infty\). 
	Since, \(\energy_m(v, B) \leq mW\) for any integer \(m\), we have \(n > |V|k\). 
	Recall from the proof of Lemma~\ref{lemma: conswinsinfiniteenergy}, we get a winning strategy \(\sVI{\tau}\) in \(\G\) from using \(\sVIn{\tau}{n}\) (referred to as $ \tau_n $ there) in a particular way.    
	Consider any play \(\pi\) that is consistent with \(\sVI{\tau}\). 
	We also observed in the same proof that \(\pi\) consists of only negative cycles apart from at most \(n-1\) steps.
	Each (negative) cycle is of at most \(|V|k\) lengths, and the energy level decreases by at least \(1\) in each negative cycle. 
	Thus, the energy level in \(\pi\), which was initiated with \(M\) by assumption, becomes negative in at most \((n-1) + M\upperboundforloop\) steps. 
	Hence, an upper bound on the number of steps, \(n_M\), exists for any finite energy \(M\).
\end{proof}

%%%%%%%%%%%%%%%%%%%%%%%%%%%%
%%%%%%%%%%%%%%%%%%%%%%

\section{Proof of Lemma~\ref{lemma: properties}}\label{app: properties}
\properties*

\begin{proof}
	
	We establish the three properties one by one in the following:
	
	\begin{itemize}
		
		\item 
		First of all, since $ \sVIn{\tau}{n} $ is a winning strategy in a reachability bidding game, it is memoryless~\cite{DP10}.
		Since \Cons wants to reach a vertex with negative energy in $ \R_n $, $ \tup{x, B'} $ is at least as favourable configuration for \Cons as $ (\tup{v, \energy_{m}', m}, B') $, if not more. 
		Moreover, if any \Cons-action that is available at $ (\tup{v, \energy_{m}', m}, B') $ is also avaialable at $ \tup{x, B'} $ for \Cons. 
		Therefore, any action that she plays at $ (\tup{v, \energy_{m}', m}, B')  $, she can play at $ \tup{x, B'} $ as well, and still win. 
		
		\item 
		Second, both the inequalities ensure that the energy level stays below the energy threshold that \Cons can consume from that configuration.
		Either of the inequality is violated by \(\sVIn{\tau}{N}\) implies \Pres has a response at \(\zug{x, \thresh{x}{\R_N}}\), which would make the next configuration \((m-1)\)-winning for \Pres, contradicting the assumption that \(\sVIn{\tau}{N}\) is a \Cons winning strategy.  
		
		Formally, if \(e' = e + \weightf(v, u) \geq \energy_{m-1}'(u, B' \ominus b)\) holds, then \Pres' response at \(\zug{x, B'}\) should be \(\zug{0, u}\). 
		As a result, the subsequent configuration would be \(\zug{\tup{u, e', m-1}, B \ominus b}\), which is \((m-1)\)-winning for \Pres. 
		
		Similarly, if \(e'' = e + \weightf(v, v') \geq \energy_{m-1}(v', B' \oplus \left(\succb{b}\right))\) for some \(v' \in \neighbor{v}\), then \Pres's winning response at \(\zug{x, B'}\) would be \(\zug{\succb{b}, v'}\). 
		As a result, \Pres wins the current bidding, and
		the subsequent configuration would become \(\zug{\tup{v', e'', m-1}, B' \oplus \left(\succb{b}\right)}\), which is again \((m-1)\)-winning for \Pres.
		
		Existence of such  \Pres' winning response to \(\sVIn{\tau}{N}\) contradicts the assumption that \(\sVIn{\tau}{N}\) is a winning strategy for \Cons, thus the stated inequalities must hold. 
		
		\item Finally, we consider \(B' = \conThresh(v)\), the other case is dual. 
		Suppose \(\sVIn{\tau}{n}(\tup{v, e, n}, \conThresh(v)) = \zug{b, u}\), where $ e = \energy_{n}'(v, \conThresh(v)) $. 
		
		We argue for the bids. 
		We show contradiction to each of the following cases: (1) \(b\) is \(\succb{b^{\conThresh}(v)}\), (2) \(b\) is strictly greater than $ \succb{b^{\conThresh}(v)} $, and (3) finally \(b\) is strictly less than \(\predb{b^{\conThresh}(v)}\). 
		
		First, since \(\sVIn{\tau}{n}\) is a winning strategy from reachability bidding game \(\R_N\), its proposed bid always agree with the budget, thus \(b\) cannot be \(\succb{b^{\conThresh}(v)}\). 
		
		Second, assume for the sake of contradiction, \(b \geq b^{\conThresh}(v) \oplus 1\). 
		We consider the scenario where \Pres lets \Cons win the current bid. 
		Then, at the subsequent configuration \Cons' bid will be 
		\begin{align*}
			\conThresh(v) \ominus \left(\optbid{\conThresh}(v) \oplus 1\right) = \predb{\conThresh(u)}
		\end{align*}
		Thus, applying the second iten of the lemma, we get 
		\begin{align*}
			\energy_{n-1}'(u, \predb{\conThresh(u)}) 
			&\geq \energy_{n}'(v, \conThresh(v)) + \weightf(v, u)\\
			&= \energy_{n}(v, \predb{\preThresh(v)}) -1 + \weightf(v, u)\\
			&\geq \upperboundforloop + 1 - 1+ \weightf(v, v') = \upperboundforloop + \weightf(v, u)
		\end{align*}
		
		Therefore, 
		\begin{align}\label{ineq: lower1}
			\energy_{n-1}(u, \preThresh(u)) > \upperboundforloop + \weightf(v, u)
		\end{align}
		
		On the other hand, from \(\zug{v, \preThresh(v)}\), \Pres \(\energy_{n}(v, \preThresh(v))\)-wins in \(\G_n\) by playing \(\sagn\) that plays \(\zug{\optbid{\preThresh}(v), u}\), where \(u \in \Allowd{\preThresh}(v)\). 
		Consider the scenario when \Cons wins the current bidding. 
		Since, \Cons can choose any vertex, in particular we consider the scenario when he chooses \(u\). 
		Thus, 
		\begin{align*}
			\energy_{n}(v, \preThresh(v)) + \weightf(v, u) \geq \energy_{n-1}(u, \preThresh(v) \oplus  \succb{\optbid{\preThresh}(v)})
		\end{align*}
		Since \(\preThresh(v) \oplus \succb{\optbid{\preThresh}(v)} \geq \preThresh(u)\), we have 
		\begin{align}\label{ineq: upper1}
			\energy_{n-1}(u, \preThresh(u)) &\leq \energy_{n}(v, \preThresh(v)) + \weightf(v, u)\nonumber\\
			&< \upperboundforloop + 1 + \weightf(v, u)
		\end{align}
		
		Combining \eqref{ineq: lower1} and \eqref{ineq: upper1}, we get 
		\begin{align*}
			\upperboundforloop + \weightf(v, u) < \energy_{n-1}(u, \preThresh(u)) < \upperboundforloop + 1+ \weightf(v, u)
		\end{align*}
		This is not possible since \(\energy\) is a integer valued function. 
		Thus, \(b\) cannot be strictly greater than \(\succb{\optbid{\conThresh}(v)}\). 
		
		Third, assume towards contradiction, \(b < \predb{b^{\conThresh}(v)}\). 
		\Pres should bid \(\predb{b^{\conThresh}(v)}\) to win the bidding and choose some \(v' \in  \Allowd{\preThresh}(v)\). 
		Subsequently, \Cons' budget would be 
		\begin{align*}
			\conThresh(v) \oplus \left(\predb{b^{\conThresh}(v)}\right) \leq \predb{\conThresh(\vplus)}
		\end{align*}
		Recall that \(v' \in \Allowd{\preThresh}(v)\) is actually a vertex \(\vplus{\conThresh}\). 
		Thus, as before, we have 
		\begin{align}\label{ineq: lower2}
			\energy_{n-1}(v', \preThresh(v')) > \upperboundforloop + \weightf(v, u)
		\end{align}
		
		On the other hand, from \(\zug{v, \preThresh(v)}\), \Pres \(\energy_{n}(v, \preThresh(v))\)-wins in \(\G_n\). 
		We consider the scenario when \Pres wins the current bidding by \(\sagn\), which bids \(b^{\preThresh}(v)\) and chooses a \(v' \in \Allowd{\preThresh}(v)\). 
		Thus, 
		\begin{align*}
			\energy_{n}(v, \preThresh(v)) + \weightf(v, v') \geq \energy_{n-1}(v', \preThresh(v) \ominus b^{\preThresh}(v))
		\end{align*}
		Since, \(\preThresh(v) \ominus b^{\preThresh}(v) \geq \preThresh(v')\), we have 
		\begin{align}\label{ineq: upper2}
			\energy_{n-1}(v', \preThresh(v')) &\leq \energy_{n}(v, \preThresh(v)) + \weightf(v, v')\nonumber\\
			&< \upperboundforloop+ \weightf(v, v')
		\end{align}
		Combining ~\eqref{ineq: lower2} and \eqref{ineq: upper2}, we get 
		\begin{align*}
			\upperboundforloop + \weightf(v, v') < \energy_{n-1}(v', \preThresh(v')) < \upperboundforloop + 1 + \weightf(v, v')
		\end{align*}
		
		Again a contradiction! Thus, \(b\) cannot be strictly less than \(b^{\conThresh}(v)\) either. 
		
		\medskip
		
		Finally, we argue for the choice of vertex. 
		We derive similar contradiction, now by using the fact that \Cons' bid is indeed \(b^{\conThresh}(v)\) at \(\zug{v, \conThresh(v)}\). 
		Assume towards contradiction, \(u \not \in \Allowd{\conThresh}{v}\). 
		We consider the scenario where \Pres lets \Cons win the bidding, thus \Cons' budget becomes 
		\begin{align*}
			\conThresh(v) \ominus b^{\conThresh}(v) = 
			\begin{cases}
				\conThresh(\vminus) &\text{\(\conThresh(\vminus)\) is in \(\Nat\)}\\
				\absolut{\conThresh(\vminus)} +1 &\text{otherwise}
			\end{cases}
		\end{align*}
		
		From Def.~\ref{def: allowedvertices}, it turns out \(\conThresh(u) > \conThresh(v) \ominus b^{\conThresh}(v)\). 
		Thus, arguing similar to bids, we get 
		\begin{align*}
			\upperboundforloop + \weightf(v, u) < \energy_{n-1}(u, \preThresh(u)) < \upperboundforloop + 1 + \weightf(v, u)
		\end{align*}
		
		Thus, it implies that $ u \in \Allowd{\conThresh}(v) $. 
	\end{itemize}

\end{proof}

%%%%%%%%%%%%%%%%%%%%%%%%%%%%%%%%%%%%%%%%%%%%%%%%%%%

\section{Proof of Lem.~\ref{lemma: tau-winning-aux}}\label{app:  tau-winning-aux}
\tauwinningaux*
%\conssingletransition*
			
				\begin{proof}
%				The proof is similar to Thm.~\ref{thm:sagn-is-winning}, with some necessary adjustments.
				
				Consider a play $ \pi' = \zug{v_0, B_0'}, \zug{v_1, B_1'} \ldots$ consistent with $\tagn'$, and the initial energy $ e_0 \geq 2\upperboundforloop$.
				Denote the energy level at $ i^{th} $ step as $ e_i $. 
				If the energy level \(e_i\) is already $\leq \upperboundforloop $, we are done.
				Otherwise, we assume that \(e_i > \upperboundforloop\).  
				Therefore, by Lem.\ref{lemma: properties}, $\tagn'$ bids $ \optbid{\conThresh}(v_i, B_i') $, and choose $ u_i \in \Allowd{\conThresh}(v_i) $ for every $ i $. 
				 By the average property of $ \conThresh $, we have $ B_i' \geq \conThresh(v_i) $ for all $ i $. 
				Like in the proof of Thm.~\ref{thm:sagn-is-winning}, we define a function $ \triangledown_{\pi'}: \Nat \rightarrow [k] $ that intuitively assign \Cons spare change. 
				Formally, $ \spare{i}{\pi'} = \absolut{B_i' \ominus \conThresh(v_i)} $. 
				$ \spare{i}{\pi'} $ is monotonically non-decreasing, and bounded above by $ k $, thus it eventually stabilises; let $ m \geq 0 $ and $ r \in \Nat $ such that $ \spare(i) = r $ for $ i \geq m $.  We now establish an energy invariant that $\tagn'$ maintains at each transition where the spare change is fixed.

				\begin{restatable}{claim}{conssingletransition}\label{clm: conssingletransition}
				For all \(i \geq m\), and $ n = P(e_i) $, we have $ \energy'_{n}(v_i, \relativebud(v_i, B_i)) +  \weightf(v_i, v_{i+1}) \leq \energy'_{n-1}(v_{i+1}, \relativebud'(v_{i+1}, B_{i+1}))$. 
				\end{restatable}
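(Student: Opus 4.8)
The plan is to mirror the proof of Claim~\ref{clm: singletransitioninvariant}, with \Cons's primed quantities in place of \Pres's and the truncated strategies $\sVIn{\tau}{n}$ in place of $\sVI$; here $\relativebud(v_i, B_i)$ in the statement is shorthand for $\relativebud'(v_i, B_i')$, with $B_i'$ \Cons's untrimmed budget along $\pi'$. Fix $i \geq m$ and put $n = P(e_i)$, so $\energy'_n(v_i, \relativebud'(v_i, B_i')) \geq e_i$ by the definition of $P$; since we are in the case $e_i > \upperboundforloop$, also $\energy'_n(v_i, \relativebud'(v_i, B_i')) > \upperboundforloop$ (in particular $n \geq 1$, since $\energy'_0 \equiv -1 < e_i$). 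Because $\relativebud'(v_i, B_i') \in \set{\conThresh(v_i), \succb{\conThresh(v_i)}}$ and $\tagn'(v_i, e_i, B_i') = \sVIn{\tau}{n}(\tup{v_i, e_i, n}, \relativebud'(v_i, B_i'))$, the third item of Lem.~\ref{lemma: properties} applies at this configuration: the bid chosen is $b_i \coloneqq \optbid{\conThresh}(v_i, B_i')$, which by Def.~\ref{def: optbid} reads $B_i'$ only through its tie-breaking status, hence equals $\optbid{\conThresh}(v_i, \relativebud'(v_i, B_i'))$ and shares the status of $\relativebud'(v_i, B_i')$; and when \Cons wins the $i$-th bidding, the chosen vertex $v_{i+1}$ lies in $\Allowd{\conThresh}(v_i)$. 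By the first item of Lem.~\ref{lemma: properties} we may take $\sVIn{\tau}{n}$ to act at $\tup{v_i, e_i, n}$ exactly as at $\tup{v_i, \energy'_n(v_i, \relativebud'(v_i, B_i')), n}$, and then the second item gives
\[
\energy'_n(v_i, \relativebud'(v_i, B_i')) + \weightf(v_i, v_{i+1}) \leq \energy'_{n-1}(v_{i+1}, C_i),
\]
where $C_i = \relativebud'(v_i, B_i') \ominus b_i$ if \Cons wins the $i$-th bidding, and $C_i = \relativebud'(v_i, B_i') \oplus (\succb{b_i})$ if \Cons loses it. (Legality of $b_i$ and $\succb{b_i}$ at budget $\relativebud'(v_i, B_i')$ follows from Lem.~\ref{lem:Talgebra} applied to $T = \conThresh$, which satisfies the average property by Lem.~\ref{lemma: complementaverage} and Thm.~\ref{thm:preThreshAverage}; and if \Cons loses, then \Pres can afford her minimal winning bid $\succb{b_i}$, forcing $\relativebud'(v_i, B_i') \oplus (\succb{b_i}) < k+1$ so that the hypothesis of Lem.~\ref{lemma: properties}(ii) holds.) By monotonicity of $\energy'_{n-1}(v_{i+1}, \cdot)$ in \Cons's budget (the finite-horizon counterpart for \Cons of Cor.~\ref{corr: energyinherits}, i.e., monotonicity of $\mu_{n-1}$ in \Pres's budget, Lem.~\ref{lemma: mumonotone}), it then suffices to show $\relativebud'(v_{i+1}, B_{i+1}') \geq C_i$.

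For this we carry out the $\oplus/\ominus$ bookkeeping. Set $r = \spare{i}{\pi'}$; since $\spare{i}{\pi'}$ is non-decreasing in $i$ and has stabilised from index $m$ on, $r = \spare{i}{\pi'} = \spare{i+1}{\pi'}$. A case check on tie-breaking statuses shows that for every turn $j$ one has $B_j' = \relativebud'(v_j, B_j') \oplus \spare{j}{\pi'}$ (this needs only $B_j' \geq \conThresh(v_j)$, which $\tagn'$ maintains, and the definitions of $\relativebud'$ and of the spare-change function), and that for a legal bid $b$ sharing the status of $\relativebud'(v_j, B_j')$, $(\relativebud'(v_j, B_j') \oplus r) \ominus b = (\relativebud'(v_j, B_j') \ominus b) \oplus r$ and $(\relativebud'(v_j, B_j') \oplus r) \oplus (\succb{b}) = (\relativebud'(v_j, B_j') \oplus (\succb{b})) \oplus r$. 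Now split on the $i$-th bidding. If \Cons wins, $B_{i+1}' = B_i' \ominus b_i = (\relativebud'(v_i, B_i') \oplus r) \ominus b_i = (\relativebud'(v_i, B_i') \ominus b_i) \oplus r$; comparing with $B_{i+1}' = \relativebud'(v_{i+1}, B_{i+1}') \oplus r$ and cancelling the strictly monotone injection $x \mapsto x \oplus r$ yields $\relativebud'(v_{i+1}, B_{i+1}') = \relativebud'(v_i, B_i') \ominus b_i = C_i$. If \Cons loses, \Pres pays at least $\succb{b_i}$, so $B_{i+1}' \geq B_i' \oplus (\succb{b_i}) = (\relativebud'(v_i, B_i') \oplus (\succb{b_i})) \oplus r$; cancelling $x \mapsto x \oplus r$ again yields $\relativebud'(v_{i+1}, B_{i+1}') \geq \relativebud'(v_i, B_i') \oplus (\succb{b_i}) = C_i$. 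In either case $\relativebud'(v_{i+1}, B_{i+1}') \geq C_i$, which together with the displayed inequality closes the proof.

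The main obstacle will be making the $\Natstr$-arithmetic watertight: checking $B_j' = \relativebud'(v_j, B_j') \oplus \spare{j}{\pi'}$ and the two ``distributivity'' identities across all four tie-breaking cases, confirming that $b_i = \optbid{\conThresh}(v_i, B_i')$ really shares the status of $\relativebud'(v_i, B_i')$ so that those identities apply, and handling the boundary situation in which \Pres's winning bid would carry \Cons's budget to the top of $[k]$, which, as noted, cannot happen precisely when \Cons does lose the bidding. A secondary point is that the appeals to monotonicity of $\energy'_{n-1}$ in \Cons's budget and to the fact that $\tagn'$ keeps $B_i' \geq \conThresh(v_i)$ both rest on the average property of $\conThresh$ (via Lem.~\ref{lem:Talgebra}) and on facts already established in the proof of Lem.~\ref{lemma: tau-winning-aux}, so they should be cited rather than re-derived.
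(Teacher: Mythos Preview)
Your proposal is correct and follows essentially the same route as the paper: derive the energy invariant at the trimmed budget via Lem.~\ref{lemma: properties}, then verify that $\relativebud'(v_{i+1}, B_{i+1}')$ matches (or dominates) the post-bid budget $C_i$ so the invariant transports to the next step. The only difference is cosmetic: the paper establishes the budget match by the four-case $\oplus/\ominus$ analysis of App.~\ref{app:singletransitioninvariant} (transposed from $\preThresh$ to $\conThresh$), whereas you compress this into the single identity $B_j' = \relativebud'(v_j, B_j') \oplus \spare{j}{\pi'}$ together with strict monotonicity of $x \mapsto x \oplus r$, and then absorb any slack in the \Cons-loses case via monotonicity of $\energy'_{n-1}$ in \Cons's budget. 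Both arguments rest on the same ingredients (Lem.~\ref{lem:Talgebra} for $\conThresh$, stability of the spare change for $i \geq m$), and neither buys materially more than the other.
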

			
				\begin{proof}[Proof of the Claim]
%						Combing the second and third item of Lemma.~\ref{lemma: properties}, we get that %Lemma~\ref{lemma: guaranteeabygnostic},
%						when \(B' \in \set{\conThresh(v), \succb{\conThresh(v)}}\), \(\consagn\) guarantees that no matter what \Pres responses at \(\zug{v, B}\), it maintains the following invariant of energy: 
						First of all, combining the second and third item of Lem.~\ref{lemma: properties}, we get the following energy invariant: 
						\begin{itemize}
							\item If \Cons wins by bidding $ \optbid{\conThresh}(v, B') $, and proceeds to a $u \in \Allowd{\conThresh}(v)  $, then 
							\begin{align*}
								\energy_n'(v, \relativebud'(v, B')) + \weightf(v, u) \leq \energy_{n-1}'(v, \relativebud'(v, B') \ominus \optbid{\conThresh}(v, B))
							\end{align*}
							
							\item If \Pres wins by bidding at least $ \succb{\optbid{\conThresh}(v, B')} $, then for every $ v' \in \neighbor{v} $, 
							\begin{align*}
								\energy_{n}'(v, \relativebud'(v, B')) + \weightf(v, v') \leq \energy_{n-1}'(v', \relativebud(v, B') \oplus (\succb{\optbid{\conThresh}(v, B')})) 
							\end{align*}
						\end{itemize}
						However, \(\relativebud(v_{i+1}, B'_{i+1})\) may not coincide with either of \(\relativebud(v_i, B'_i) \ominus \optbid{\conThresh}(v_i, B'_i)\) and \(\relativebud(v_i, B'_i) \oplus \succb{\optbid{\conThresh}(v_i, B'_i)}\), thus we do not immediately get to establish the claim. 
						
						\medskip
						
						We argue by a case analysis that, in fact, in all but one cases they do coincide, and the case where they may not coincide is the only case where the equality \(\spare{i}{\pi'} = \spare{i+1}{\pi'}\) does not hold, contradicting the hypothesis of the claim. 
						The case analysis uses Lem.~\ref{lem:Talgebra}, and exact same as the one in App~\ref{app:singletransitioninvariant}, substituting $ \preThresh $ by $ \conThresh $. 
				\end{proof}

				We suppose the energy level at step $ m $ in $ \pi' $ is $ e_m > \upperboundforloop$. 
				We apply Claim.~\ref{clm: conssingletransition}, for every step $ i \geq m$, until the energy level becomes $ \leq \upperboundforloop$, or $\pi' $ traverses $ P(e_m) $ more steps, whichever is sooner. For the latter, 
				%The energy level is guaranteed to become $ \leq \upperboundforloop $ at some point, because 
				if $ \pi' $ traverses all $ P(e_m) $ steps, the energy level would become less than or equal to $ \energy_{0}'(v_{m+N(e_m)}, \relativebud'(v_{m+P(e_m)}, B_{m+P(e_m)}))$, which is $ -1 $. 
			\end{proof}

\stam{
\section{Proof of Lemma~\ref{lemma: consthresholds}}\label{app: consthresholds}

\consthresholds*

\begin{proof}
Our argument proceeds by a structural induction on \(\pi\). 
In particular, by denoting the \(i^{th}\) vertex that appear in \(\pi\) as \(x_i = \zug{v_i, M_i, N-i}\), and for which \(M_i \geq 2\upperboundforloop +1\), we inductively show 
that \(\thresh{x_i}{\R_N} = \conThresh(v_i)\). 
	
For the base case, \(x_0 = \tup{v, e_0, N}\), where \(e_0 = \energy'_{N}(v, \conThresh(v)) \geq 3\upperboundforloop + 1\). 
Since, \(k^* \ominus (\predb{\conThresh(v)}) = \preThresh(v)\), and \(\energy_{N}'(v, \predb{\conThresh(v)}) = \energy_{N}(v, \preThresh(v)) -1 \leq \upperboundforloop -1\), we have \(\energy_{N}'(v, \predb{\conThresh(v)}) < e_0 \leq \energy_{N}'(v, \conThresh(v))\). 
The base case of the induction now follows from the following claim:
%OLD		Thus, by Lemma~\ref{lem: thresholdofx}, \(\thresh{x}{\R_N} = \conThresh(v)\). 

%\GAn{This paragraph was in the body. Is it needed in the proof? If so, maybe put it inside?}
%	We note that the \Cons winning strategy \(\sVIn{\tau}{n}\) is budget-agnostic in \(\R_N\). 
%	That is, for any vertex \(x = \tup{v, e, m}\), if \(\sVIn{\tau}{N}(x, \thresh{x}{\R_N}) = \zug{b, u}\), then for any budget \(B' \geq \thresh{x}{\R_N}\), \(\sVIn{\tau}{N}(x, B')\) is either \(\zug{b, u}\) or \(\zug{\succb{b}, u}\), depending on whether \(B'\)'s tie-breaking status coincides with that of \(\thresh{x}{\R_N}\) or \(\succb{\thresh{x}{\R_N}}\). 
%	Therefore, before deriving what exactly \(\sVIn{\tau}{N}(x, \thresh{x}{\R_N})\) is in the next section, we first obtain what the threshold budget of a vertex \(x\) itself is, as follows. 

\begin{claim}
		The threshold budget of a vertex \(x = \tup{v, e, m}\) of \(\R_N\) is the minimum budget \(B'\) for which \(e \leq \energy'_{m}(v, B')\). 
Stated explicitly,
		\begin{align*}
			\thresh{x}{\R_N} = 
			\begin{cases}
				B' &\text{~if \(\energy'_{m}(v, \predb{B'}) < e \leq \energy'_{m}(v, B')\)}\\
				k+1 &\text{~if \(e > \energy'_{m}(v, k^*)\)}
			\end{cases}
		\end{align*} 
		In particular, \(e \leq \energy'_{m}(v, B')\) for any budget \(B'\) implies \(\thresh{x}{\R_N} \leq B'\).  
\end{claim}
	\begin{proof}
		First of all, \(e \leq \energy'_{m}(v, B')\) means \Cons \(e\)-wins from \(\zug{v, B'}\), thus \(B' \geq \thresh{x}{\R_N}\). 
		On the other hand, \(e > \energy'_{m}(v, \predb{B'})\) means \Pres \(e\)-wins from \(\zug{v, \predb{B'}}\), thus \(\predb{B'} < \thresh{x}{\R_N}\). 
		Combining the two, we have \(\thresh{x}{\R_N} = B'\). 
	\end{proof}

	\stam{%was in the appendix
	The base case is \(x_0 = \zug{v_0, M_0, N}\) with \(2\upperboundforloop + 1 \leq M_0 \leq \energy_{N}(v, \predb{\preThresh(v)}) -1\). 
	Since \(\energy_{N}(v, \preThresh(v)) < \upperboundforloop + 1\), we have \(\energy_{N}(v, \preThresh(v)) - 1 \leq M_0 \leq \energy_{N}(v, \predb{\preThresh(v)}) - 1\). 
	Thus, \(\thresh{x_0}{R_N} = \conThresh(v_0)\). 
	}
	
	We assume the statement is true for some \(i \leq N\), i.e, for \(x_i = \zug{v_i, e_i, N-i}\) with \(e_i \geq 2\upperboundforloop + 1\), we assume \(\thresh{x_i}{\R_N} = \conThresh(v_i)\), and proceed to show it for \(i+1\). 
	We argue at configuration \(\zug{x_i, \thresh{x_i}{\R_N}}\), \(\sVIn{\tau}{N}\) bids \(\optbid{\conThresh}{v_i}\), and chooses the next vertex from \(\Allowd{\conThresh}(v_i)\). 
	Since, both \(\conThresh\) and \(Th_{\R_N}\) are average property satisfying functions in their respective arenas, if the threshold budgets of a vertex and an optimal bid at that vertex of \(\R_N\) coincides with that of a corresponding vertex in \(\G\), we have 
	\(\thresh{x_i^-}{\R_N} = \conThresh(v_i^-)\), and \(\absolut{\thresh{x_i^+}{\R_N}} = \absolut{\conThresh(v_i^+)}\). 
	Consequently, we have \(\thresh{x_{i+1}}{\R_N} = \conThresh(v_{i+1})\), if the latter is an integer, otherwise at least their absolute value coincide. 
	However, we latter argue by a contradiction that \(\conThresh(v_{i+1}) \neq \succb{\thresh{x_i}{\R_N}}\), thus proving the statement for \(i+1\).  
	
	We denote \(\sVIn{\tau}{N}(x_i, \thresh{x_i}{\R_N}) = \zug{b, u}\). 
	
	First, we argue for the bids. 
	To show that \(b\) is \(\optbid{\conThresh}{v_i}\), we prove the following: 
	(1) \(b\) cannot be \(\succb{\optbid{\conThresh}{v_i}}\), (2) \(b\) cannot be strictly less than \(\optbid{\conThresh}{v_i}\), and finally (3) \(b\) cannot be strictly greater than \(\succb{\optbid{\conThresh}{v_i}}\).  
	As a consequence, \(b\) is \(\optbid{\conThresh}{v_i}\).

%	We first denote \(\sVIn{\tau}{N}(x_n, \thresh{x_n}{\R_N}) = \zug{b, u}\). 
%	In order to show that \(b\) is \(\optbid{\conThresh}{v_n}\), we argue in step by step that (1) \(b\) cannot be \(\succb{\optbid{\conThresh}{v_n}}\), (2) \(b\) cannot be strictly less than \(\optbid{\conThresh}{v_n}\), and finally (3) \(b\) cannot be strictly greater than \(\succb{\optbid{\conThresh}{v_n}}\).  
%	As a consequence, \(b\) is \(\optbid{\conThresh}{v_n}\). 
%	We prove this by first assuming \(\thresh{x_n}{\R_N} = \conThresh(v_n)\). 
%	\SSn{Need to consider other two possibilities: one is not possible, and the other one is similar. }

	First, since \(\thresh{x_i}{\R_N}\)'s tie breaking status coincides with \(\optbid{\conThresh}{v_i}\), \(b\) cannot be \(\succb{\optbid{\conThresh}{v_i}}\).

%	Next, we argue that \(b\) can not be strictly less than \(\optbid{\conThresh}{v_i}\) by showing a contradiction using the second invariant of Lemma~\ref{lemma: necessarycond}. 
%	Similarly, by showing a contradiction using the first invariant of Lemma~\ref{lemma: necessarycond}, we later argue that \(b\) cannot be strictly greater than \(\succb{\optbid{\conThresh}{v_i}}\) either. 		
	
%	Since \(\thresh{x_i}{\R_N}\)'s tie breaking status coincides with \(\optbid{\conThresh}{v_i}\), \(b\) cannot be \(\succb{\optbid{\conThresh}{v_i}}\). 
%	Next, we argue that \(b\) can not be strictly less than \(\optbid{\conThresh}{v_i}\) by showing a contradiction using \eqref{ineq: consguaranteeII}, in case it is. 
%	Similarly, by showing a contradiction using \eqref{ineq: consguaranteeI}, we later argue that \(b\) cannot be strictly greater than \(\succb{\optbid{\conThresh}{v_n}}\) either. 		

	Before moving to the other two cases, let us recall,
	\begin{align*}
		\optbid{\conThresh}{v_i} = 
		\begin{cases}
			\conThresh(v_i) \ominus \conThresh(\vminus{i}) &\text{~if \(\conThresh(\vminus{i}) \in \Nat\)}\\
			\conThresh(v_i) \ominus \left(\absolut{\conThresh(\vminus{i})} + 1\right) &\text{~otherwise}
		\end{cases}
	\end{align*}

	\medskip
	
	Second, assume towards contradiction that \(b\) is strictly less than \(\optbid{\conThresh}{v_i}\). 
	In that case, \Pres should response by playing \(\zug{\predb{\optbid{\conThresh}{v_i}}, u}\), where \(u = \arg\max_{v' \in \neighbor{v}} \conThresh(v')\) which we also denote by \(\vplus{i}\). 
	Note that, \Pres wins the bidding. 
	Subsequently, applying Obs.~\ref{obs: Talgebra} to \(\conThresh\), we get that \Cons' budget \(B_{i+1}\) becomes
	\begin{align*}
		\conThresh(v_i) \oplus (\predb{\optbid{\conThresh}{v_i}}) \leq \predb{\conThresh(\vplus{i})}
	\end{align*} 
	Therefore, from the next configuration \Cons can consume energy 
	\begin{align*}
		\energy_{i+1}'(\vplus{i}, \conThresh(v_i) \oplus \predb{\optbid{\conThresh}{v_i}})
		 &\leq \energy_{i+1}'(\vplus{i}, \predb{\conThresh(\vplus{i})})\\
		 &= \energy_{i+1}(\vplus{i}, \preThresh(\vplus{i})) -1 \\
		 &< \upperboundforloop
	\end{align*}
	On the other hand, since \(e_i \geq 2\upperboundforloop +1\), and a single edge can consume at most energy \(W\), the energy level in the next vertex would be \(e_{i+1} = e_i + \weightf(v_i, \vplus{i}) > \upperboundforloop\), thus contradicting Lemma~\ref{lemma: necessarycond}. 
%	Hence,  \Pres \(e_{i+1}\)-wins from \(\zug{x_{i+1}, B_{i+1}}\), contradicting the assumption that \(\pi\) is a \Cons winning play. 
	Hence, \(b\) cannot be strictly less than \(\optbid{\conThresh}{v_i}\). 

	\medskip
	
	Third, assume towards contradiction that \(b\) is strictly greater than \(\succb{\optbid{\conThresh}{v_i}}\).
	We consider the scenario where \Pres plays \(\zug{0, u}\) for some \(u \in \neighbor{v_i}\), lets \Cons win the bidding. 
	Consequently, applying Obs.~\ref{obs: Talgebra} to \(\conThresh\), we can see that \Cons' budget at the next configuration becomes 
	\begin{align*}
		\conThresh(v_i) \ominus (\optbid{\conThresh}{v_i} \oplus 1) = \predb{\conThresh(\vminus{i})}
	\end{align*}
	
	Therefore, from the next configuration \Cons can consume energy 
	\begin{align*}
		\energy_{i+1}'(\vminus{i}, \predb{\conThresh(\vminus{i})}) 
		&= \energy_{i+1}(\vminus{i}, \preThresh(\vminus{i})) -1\\
		&< \upperboundforloop
	\end{align*}
	Again, \(e_{i+1} = e_i + \weightf(v_i, \vminus{i}) \geq 2\upperboundforloop + 1 - \maxweight > \upperboundforloop\), thus contradicting Lemma~\ref{lemma: necessarycond}. 
	Hence, \(b\) cannot be strictly greater than \(\succb{\optbid{\conThresh}{v_i}}\). 
	
	This concludes the proof of showing that \(\sVIn{\tau}{N}\) bids \(\optbid{\conThresh}{v_i}\) at \(\zug{x_i, \thresh{x_i}{\R_N}}\).

	\bigskip
	
	We now argue for the choice of vertex. 
	We prove that \(\sVIn{\tau}{n}\)'s choice of vertex at \(\zug{x_i, \thresh{x_i}{\R_N}}\) vertex \(u\) belongs to \(\Allowd{\conThresh}(v_n)\).
	Assume towards contradiction, that \(u \not\in \Allowd{\conThresh}(v_i)\).
	In other words, if \(\conThresh(\vminus{i}) \in \Nat\), then \(\conThresh(u) > \conThresh(\vminus{i})\), otherwise \(\conThresh(u) > \absolut{\conThresh(\vminus{i}) + 1}\). 
	In that case, consider the scenario when \Pres lets \Cons win the current bidding at \(\zug{x_i, \thresh{x_i}{\R_N}}\). 
	Consequently, \Cons budget at the next configuration becomes
%	Intuitively, when \Cons wins the bid at \(\zug{x_i, \thresh{x_i}{\R_N} }\) and chooses such an \(u\), we establish that \Pres would have enough budget to keep the energy level non-negative for rest of the game. 
%	Since \(b = \optbid{\conThresh}{v_n}\), when \Cons wins the bid at configuration \(\zug{x, \thresh{x}{\R_N}}\), his subsequent budget would become 
	\begin{align*}
		\thresh{x}{\R_N} \ominus \optbid{\conThresh}{v_i} = 
		\conThresh(v_i) \ominus \optbid{\conThresh}{v_i} &= 
		\begin{cases}
			\conThresh(\vminus{i}) &\text{~if \(\conThresh(\vminus{i}) \in  \Nat\)}\\
			\absolut{\conThresh(\vminus{i})} +1 &\text{~otherwise}
		\end{cases}\\
	&\leq \predb{\conThresh(u)}
	\end{align*}

	Therefore, from the next configuration \Cons can consume energy at most 
	\begin{align*}
		\energy'_{i+1}(u, \thresh{x_i}{\R_N} \ominus \optbid{\conThresh}{v_i}) 
		&\leq \energy'_{i+1}(u, \predb{\conThresh(u)})\\
		&= \energy_{i+1}(u, \preThresh(u))\\
		&< \upperboundforloop 
	\end{align*}

	On the other hand, \(e_{i+1} = e_i + \weightf(v_i, u) \geq 2\upperboundforloop+1 - \maxweight > \upperboundforloop\), thus contradicting Lemma~\ref{lemma: necessarycond}. 
	Hence, \(u\) must belong to \(\Allowd{\conThresh}(v_i)\). 
	
%	Thus, \(u \not\in \Allowd{\conThresh}(v_n)\) implies \(\conThresh(u) > \conThresh(v_n) \ominus \optbid{\conThresh}{v_n}\). 
%	In other words, \Pres' budget at the next configuration is \(k^* \ominus (\conThresh(v_n) \ominus \optbid{\conThresh}{v_n}) > k^* \ominus \conThresh(u) = \predb{\preThresh(u)}\), i.e, \(\preThresh(u)\) when the vertex is \(u\). 
%	%		In other words, at the next configuration, \Pres has a budget of \(\preThresh(u)\) at vertex \(u\).
%	
%	We now consider the configuration \(\zug{v, \predb{\preThresh(v)}}\), an energy \(M = \energy_{N - (n+1)}(u, \preThresh(u)) + (n+1)W\). 
%	From Prop.~\ref{prop: conswinswithmodifiedtau},
%	we know \Cons can construct a strategy \(\tau\) from \(\sVIn{\tau}{N}\) by which he \(M\)-wins from this configuration. 
%	In particular, when the game reaches its \(n^{th}\) step, \Cons has a budget \(B' \geq \conThresh(v)\), and the energy level is at least \(\energy_{N - (n+1)}(u, \preThresh(u)) + W\). 
%	From this configuration, \(\tau\) bids \(b = b^{\conThresh}(v)\), and chooses \(u \not \in \Allowd{\conThresh}(v)\) as per the assumption. 
%	Suppose \Pres responds by bidding \(0\), and thus letting  \Cons win the current bid.  
%	Consequently, at the next configuration, \Pres has a budget of at least \(\preThresh(u)\), and the energy level is at least \(\energy_{N-(n+1)}(u, \preThresh(u))\). 
%	However, from this configuration \Pres \(\energy_{N - (n+1)}(u, \preThresh(u))\)-wins contradicting to the assumption that \(\sVIn{\tau}{N}\) is a \Cons winning strategy. 
%	
%	Therefore, \(u\) must belong to \(\Allowd{\conThresh}(v)\).  

	\medskip
	
	We have now two threshold budget functions \(Th_{\R_N}\) and \(\conThresh\) which assigns same value to \(x_i = \tup{v_i, e_i, N -i}\) and \(v_i\) respectively, and also chooses the same action. 
	Therefore, \(\thresh{x_i^-}{\R_N} = \conThresh(\vminus{i})\), and \(\absolut{\thresh{x_i^+}{\R_N}} = \absolut{\conThresh(\vplus{i})}\). 
	
\end{proof}}
 
 %%%%%%%%%%%%%%
 
 \section{Proof that a finite upper-bound exists for $ \tagn' $}\label{app:maximumnumber}
 
 \begin{restatable}{claim}{maximumnumber}\label{clm: maximumnumber}
 	There exists a finite upper bound $ \psi(e, v) $ on the number of steps by which from configuration $ \zug{v, \conThresh(v)} $, $ \tagn' $ can reduce an initial energy $ e_0 \geq 2\upperboundforloop $ to an energy level less than or equal to $ \upperboundforloop$. 
 \end{restatable}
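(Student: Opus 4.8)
The statement is an \emph{existence} claim, so the cleanest route is a compactness argument. Fix the initial configuration $\zug{v, \conThresh(v)}$ and an initial energy $e_0 \ge 2\upperboundforloop$, and consider the tree $\mathcal{T}$ whose nodes are the finite prefixes of plays consistent with $\tagn'$ from $\zug{v, \conThresh(v)}$ with initial energy $e_0$, each prefix truncated at the first configuration whose accumulated energy is $\le \upperboundforloop$. I would argue: (i) $\mathcal{T}$ is finitely branching — at an interior node the accumulated energy exceeds $\upperboundforloop$, so $\tagn'$ is defined there (Def.~\ref{def: consbudgetagnostic}) and fixes \Cons's move, while \Pres has at most $\absolut{V}$ successor vertices times $\le 2(k+1)$ admissible bids; (ii) $\mathcal{T}$ has no infinite branch — by Lemma~\ref{lemma: tau-winning-aux}, along \emph{every} play consistent with $\tagn'$ from $\zug{v, \conThresh(v)}$ with initial energy $e_0$ the energy eventually drops to $\le \upperboundforloop$, so every branch terminates at a leaf. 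By König's lemma $\mathcal{T}$ is then finite, and its depth is the required bound $\psi(e_0, v)$; since $\mathcal{T}$ depends only on $e_0$, $v$ and the (fixed) game, so does $\psi(e_0, v)$.

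Laid out as steps, this is: set up the truncated tree; verify interior nodes have energy $> \upperboundforloop$ so that $\tagn'$ is total on them (which also uses that $P(\cdot)$ is finite, via Lemma~\ref{lemma: sufficientnumber}); verify finite branching from finiteness of the arena; invoke Lemma~\ref{lemma: tau-winning-aux} for the absence of infinite branches; apply König. I do not expect a serious obstacle here: essentially all the work has already been done in Lemma~\ref{lemma: tau-winning-aux}, and the only point that needs attention is reading that lemma in its quantified form, i.e.\ that the energy drops to $\le \upperboundforloop$ on \emph{every} play consistent with $\tagn'$, not merely on some play — which is exactly what it asserts.

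If an explicit bound is preferred, the patch decomposition from the proof of Lemma~\ref{lemma: tau-winning-aux} already yields one, and I would present it as the alternative. Along a play $\pi' = \zug{v_0, B_0'}, \zug{v_1, B_1'}, \ldots$ consistent with $\tagn'$, write $e_i$ for the accumulated energy at step $i$ and $\spare{i}{\pi'} = \absolut{B_i' \ominus \conThresh(v_i)}$ for the spare change; this is non-decreasing in $i$ and bounded by $k$, so the truncated prefix splits into at most $k+1$ maximal \emph{patches} on which it is constant. Inside a patch, Claim~\ref{clm: conssingletransition} (whose hypothesis is precisely $\spare{i}{\pi'} = \spare{i+1}{\pi'}$, which holds at every step of the patch) together with the definition of $P$ gives $e_{i+1} = e_i + \weightf(v_i, v_{i+1}) \le \energy'_{P(e_i)-1}(v_{i+1}, \relativebud'(v_{i+1}, B_{i+1}'))$, hence $P(e_{i+1}) \le P(e_i)-1$; thus $P$ strictly decreases along a patch and the patch has length at most the value of $P$ at its start. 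Since a single edge changes the energy by at most $\maxweight$ and, using $\energy'_n(\cdot,\cdot) \le n\maxweight$, the energy inside a patch whose starting $P$-value is $n$ never exceeds $n\maxweight$, the energy at the start of the next patch is at most $(n+1)\maxweight$. Letting $\Pi(\cdot)$ be the maximum of $P(\cdot)$ over the finitely many trimmed initial configurations $\zug{v', \relativebud'(v', B')}$ (finite and monotone in its argument by Lemma~\ref{lemma: sufficientnumber}), and setting $E_0 = e_0$, $E_{j+1} = (\Pi(E_j)+1)\maxweight$, one gets that patch $j$ has length at most $\Pi(E_j)$, so $\psi(e_0, v) = \sum_{j=0}^{k}(\Pi(E_j)+1)$ works. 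Here the delicate step is the strict decrease of $P$ within a patch — it relies on Claim~\ref{clm: conssingletransition} phrased with the trimmed budget $\relativebud'(v_{i+1}, B_{i+1}')$ rather than the raw energy invariant of Lemma~\ref{lemma: properties}, plus careful bookkeeping at the single inter-patch step where $\spare{}{}$ jumps and Claim~\ref{clm: conssingletransition} does not apply — and this is precisely why I would default to the König argument, which sidesteps it entirely.
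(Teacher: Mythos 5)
Your primary argument (the König/compactness route) is correct but genuinely different from the paper's proof, while your ``alternative'' explicit bound is essentially the argument the paper actually gives. The paper proves the claim directly by the patch decomposition: it partitions a consistent play into at most $k+1$ maximal patches on which the spare change $\triangledown_{\pi'}$ is constant, uses Claim~\ref{clm: conssingletransition} to argue that a patch starting at energy $e_m$ has length at most $P(e_m)$ (since traversing a full patch of $P(e_m)$ steps would drive the energy to $-1$), bounds the energy at the start of the next patch by $e_{m+1} \leq e_m + \maxweight \cdot P(e_m)$, and sets $\psi(e_0,v) = \sum_{m=0}^{k} P(e_m)$ --- this is, up to your extra care about monotonicity of $P$ and the single inter-patch step, exactly your second argument. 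Your König argument is a clean alternative: finite branching follows from the finiteness of the arena and the fact that $\tagn'$ fixes \Cons's move at every interior node (where the energy exceeds $\upperboundforloop$, so Def.~\ref{def: consbudgetagnostic} applies), and the absence of infinite branches is precisely the universally quantified content of Lem.~\ref{lemma: tau-winning-aux}. What it buys is a shorter, more robust existence proof that sidesteps the delicate bookkeeping at patch boundaries; what it loses is any explicit or effectively computable bound, which the paper's $\psi(e_0,v) = \sum_m P(e_m)$ provides and which is arguably in the spirit of how the claim is used (an upper bound on the number of ``additional configurations'' in Thm.~\ref{thm: tau-agn-winning}). Both routes rest entirely on Lem.~\ref{lemma: tau-winning-aux} and Claim~\ref{clm: conssingletransition}, so no new gap is introduced either way.
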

	
\begin{proof}
	We consider the play $ \pi = \zug{v_0, B_0'}, \zug{v_1, B_1'} \ldots$ consistent with $ \tagn' $. 
	Note that, $ \pi $ can be partitioned into "patches" such that $ \triangledown_{\pi} $ is stable within each patch and changes between patches. 
	Thus, we can apply Clm.~\ref{clm: conssingletransition} for a single patch. 
	We first upper bound the length of each patch.
	We claim that if the energy level at the beginning of $ m^{th} $ patch is $ e_m $, then the patch cannot be more than of length $ P(e_m) $, except for the last one, for $ 0 \leq m \leq k $. 
	This is because  if $ \pi' $ traverses a whole patch of $ P(e_m) $ steps, the energy invariant of Clm~\ref{clm: conssingletransition} makes the energy $ -1 $, which is already below $ \upperboundforloop $, implying this can happen only at the last patch. 
	There are at most $ k-1 $ transitions in $ \pi $, which are outside of any patch. 
	Moreover, the energy level at the beginning of each patch is also bounded; namely $ e_1 $ is less than or equal to $ e_0 + \maxweight \cdot P(e_0) $, $ e_2 $ is less than or equal to $ e_1 + \maxweight\cdot P(e_1) $, and so on. 
	This comes from the fact that one edge can be of at most energy $ W $.
	We define $ \psi(e_0, v) = \sum_{0 \leq m}^{k} P(e_m)$. 
	Since $ P(e) $ for any finite energy $ e $ is finite, $ \psi(e, v) $ provides the maximum number of steps by which $ \tagn' $ makes the energy level $ \leq \upperboundforloop $, from initial energy $ \geq 2\upperboundforloop $. 
\end{proof}

\section{Construction of $G_{T,\G}$}
\label{app:turn-based-construction}
Given an energy bidding game $\G$ and $T$ that satisfies the average property, we construct a turn-based energy game $G_{T, \G} = \zug{V_{\Pres}, V_{\Cons}, E, \weightf'}$ as follows:
\begin{itemize}
\item \Pres vertices are $V_{\Pres} = \set{\zug{v, B}: v \in V \text{ and } B \in \set{T(v), \succb{T(v)}, \top}}$, that is we make three copies of each vertex in $\G$.
\item \Cons vertices are $V_{\Cons} = \set{\zug{v', c}: v' \in V \text{ and } c \in V_\Pres}$.
\item The edges are as follows.
\begin{itemize}
\item Outgoing edges from \Pres vertices are as follows. Every $\zug{v,\top}$ is a winning sink for \Pres; its only outgoing edge is a self loop with weight $0$. For every $\zug{v, B} \in V_\Pres$ with $B \neq \top$ and $v' \in \Allowd{T}(v)$, we have an edge from $\zug{v,B}$ to $\zug{v', \zug{v,B}}$. 
\item Outgoing edges from \Cons vertices are as follows. Consider $\zug{u, \zug{v,B}}$, which intuitively corresponds to \Pres bidding $\zug{\optbid{T}(v, B), v'}$. 
 \Cons edges correspond to a choice between (1)~letting \Pres win the bidding by proceeding to $\zug{v', B \ominus \optbid{T}(v, B)}$ or (2)~win the bidding and choosing the successor vertex $u$, formally, $\zug{u, \tilde{B}} \in N(\zug{v', \zug{v, B}})$, for $u \in N(v)$ and $\tilde{B} = B \oplus \succb{\optbid{T}(v, B)}$ if $\tilde{B} \in \set{T(u), \succb{T(u)}}$, otherwise $\tilde{B} = \top$. 
 \end{itemize}
 \item We define the weight function \(\weightf'\) in $G_{T,\G}$. 
Weights on outgoing edges from \Cons vertices simulate the weight in $\G$, formally \(\weightf'(\zug{v', \zug{v,B}}, \zug{u, \tilde{B}}) = \weightf(v, u)\). Edges from $\zug{v, B}$ to $\zug{v', \zug{v, B}}$ have weight $0$. Thus, a play in $G_{T, \G}$ that does not end in a sink, corresponds to a play in $\G$ that traverses the same sequence of weights. 
\end{itemize}

\section{Proof of Lemma~\ref{lemma: soundness}}\label{app:soundness}
\soundness*

\begin{proof}
	Suppose \(f\) is a memoryless winning strategy by which \Pres \(\energy(w)\)-wins from any \(w\) of \(G_{T, \G}\). 
	We describe a winning strategy \(f'\) in \(\G\). 
	Suppose that \(G\) starts from a configuration \(c_0 = \zug{v, T(v)}\). 
	We initiate \(G_{T, \G}\) from \Pres vertex \(w_0 = \zug{v, T(v)}\). 
	%	 	 Suppose \(f\) chooses \(\zug{v', w_0}\) from \(w_0\), we define \(f'\) to bid \(\zug{b^T(v)}\), and to choose \(v'\) as its successor. 
	Suppose that \Cons plays according to a strategy \(g'\) in \(\G\). 
	We simulate \(g'\) by a \Cons strategy \(g\) in \(G_{T, \G}\) such that when a configuration \(c = \zug{u, B_1}\) is reached in \(\G\), the vertex in \(G_{T, \G}\) would be \(w = \zug{u, B}\) where \(B \in \set{T(u), \succb{T(u)}}\). 
	We describe how we simulate \(g'\) with a \(g\), and \(f'\) simulates \(f\). 
	Suppose that \(\G\) is in a configuration \(c = \zug{v, B_1}\), and \(G_{T, \G}\) is at a vertex \(w = \zug{v, B}\). 
	We define \(f'\) to agree with \(f_T\), and bid \(b^T(v)\). 
	Moreover, if \(f\) chooses \(\zug{v', c}\) from \(w\) in \(G_{T, \G}\), we let \(f'\)  choose \(v'\) as the next vertex. 
	If \(g'\) looses the bidding, then we define \(g\) to proceed to \(w_1 = \zug{v', B \ominus \optbid{T}(v, B)}\), and we define \(f'\) to match the move of \(f\) from \(w_1\). 
	If \(g'\) wins the bidding, and chooses \(u\) as the next vertex, we define \(g\) to proceed to \(\zug{u, B'}\) if \(B' = B \oplus \succb{b^T(v)} \leq \succb{T(u)}\). 
	Otherwise we define \(g\) to proceed to \(\zug{u, \top}\). 
%	In Appendix~\ref{}, we show that the correspondence between the games are maintained. 
	
	Let \(\pi\) be a play in \(G_{T, \G}\) that results from \(f\) and \(g\), and \(\pi'\) is a play in \(\G\) that results from \(f'\) and \(g'\). 
	Since, \(f\) is a \Pres winning strategy, she \(\energy(w)\)-wins in \(\pi\). 
	We distinguish two cases. 
	In the first case, \(\pi\) does not reach a sink vertex in \(\G_{T, \G}\). 
	In that case, \(\energy(w) + \sum_{e \in \pi} \weightf'(e) = \energy(w) + \sum_{e \in \pi'} \weightf(e') \geq 0\), thus \Pres \(\energy(w)\)-wins in \(\pi'\) as well. 
	In the second case, \(\pi\) reaches a sink vertex \(\zug{u, \top}\) in \(G_{T, \G}\).  
	Let \(\zug{u, B_1}\) is the corresponding configuration in \(\G\). 
	Note that, \(B_1\) is strictly greater than \(\succb{T(u)}\). 
	Let \(B' < B_1\) is a budget whose tie-breaking status agrees with \(B_1\), and \(\zug{u, B'} \in V_1\). 
	Intuitively, this means a "spare change" of \(B_1 - B'\) gets added in \Pres budget, and plays as if her budget is \(B'\) by restarting the game from configuration \(\zug{u, B'}\). 
	Since the total budget is bounded above by \(k\), and restarting \(f\) in this manner strictly increases the spare change, it can happen at most \(k\)-times. 
	Moreover, every such restart consumes at most \(\max_{w \in G_{T, \G}} \energy(w)\). 
	Thus the initial energy \(M\) can facilitate \(k\)-start restart, and still \Pres would be left with enough energy to win the game like the first scenario. 
	In other words, \Pres \(M\)-wins from \(\zug{v, T(v)}\) by \(f'\). 
\end{proof}
	
	\section{Proof of Lemma~\ref{lemma: completeness}} \label{app: completeness}
	\completeness*
	
	\begin{proof}
		Assume towards contradiction that \(T \equiv \preThresh\) but there exists a vertex \(w\) such that \(\energy(w) = \infty\). 
		That is, for any finite energy \(M\), there exists a \Cons strategy \(g'\) by which he \(M\)-wins from \(w\) in \(G_{T, \G}\). 
		Without loss of generality, we can assume \(w\) is a \Pres vertex. 
		This is because  if there is a \Cons \(\zug{v', c}\), energy of which is \(\infty\), it must have a successor \Pres vertex whose energy is \(\infty\). 
		Therefore, we assume \(w = \zug{v, B}\) of \(G_{T, \G}\). 
		
		Since \(B \geq T(v)\), and \(T \equiv \preThresh\), we have \(\energy(v, B) < \infty\). 
		Thus, there exists a finite \(M\) and a budget-agnostic strategy \(f\) by which \Pres \(M\)-wins from \(\zug{v, B}\).
		To be specific, \(M = (k+1) \cdot (\maxweight + \max_{v \in V} \energy(v, T(v)))\) (appears in the proof of Lemma~\ref{thm:sagn-is-winning}).
		Moreover, \(f\) bids \(\optbid{T}(v, B)\) and chooses the next vertex from \(\Allowd{T}(v)\). 
%		Here we can use \(\Cand{T}(v)\) since by assumption, \(T \equiv \preThresh\).
		We will reach a contradiction by constructing a \Cons strategy \(g\) based on \(g'\), that counters \(f\), thus showing that \(f\) is not winning. 
		Recall that a winning \Pres strategy can be thought of a strategy that, in each turn, reveals \Pres' action first, and allows \Cons to respond.
		
		We construct a \Pres strategy \(f'\) in \(G_{T, \G}\), based on 
		the winning strategy \(f\). 
		Then, however \(g'\) responds to \(f'\), we simulate that in constructing \(g\) against \(f\). 
		The construction on both side is straightforward. 
		We begin from \Pres vertex \(w = \zug{v, B}\) of \(G_{T, \G}\), and the corresponding configuration configuration \(c = \zug{v, B}\) of \(G\). 
		Since \(f\)'s action is \(\zug{b^T(v), u}\), where \(u \in \Allowd{T}(v)\), at \(c\), we define \(f'\) to choose \(\zug{u, w}\). 
		If intuitively \(g'\) chooses to let \Pres win the bidding, we define \(g\) to bid \(0\), and choose any neighbouring vertex of \(v\). 
		On the other hand, if \(g'\) chooses a vertex \(\zug{v', B \oplus \succb{b^T(v)}}\), then we define \(g\) to play the action \(\zug{\succb{b^T(v), v'}}\), and so on. 
		
		Suppose \(\pi\) be the play that results from \(f\) and \(g\) in \(\G\), and \(\pi'\) be the play that results from \(f'\) and \(g'\) in \(\G_{T, \G}\). 
		If \(\pi'\) does not reach a sink vertex, then for every finite prefix \(h\) of \(\pi\), there is a corresponding finite prefix \(h'\) of \(\pi'\), and vice-versa, such that \( \sum_{e \in h} \weightf(e) = \sum_{e' \in h'} \weightf'(e')\). 
		Since, there must exists a finite prefix \(h'\) for which \(M + \sum_{e' \in h} \weightf'(h') < 0\) but for all finite prefix \(h\), \(M + \sum_{e \in h} \weightf(e) \geq 0\), we get a contradiction! 
		Finally, if \(\pi'\) reaches a sink vertex in \(G_{T, \G'}\), then \Pres \(E\)-wins in \(\pi'\) for some finite energy \(E\), thus showing a play consistent with \Cons winning strategy \(g'\) from \(w\) but finite-energy winning for \Pres. 
		Contradiction to the assumption that \(\energy(w) = \infty\).
		
		Therefore, \(\energy(w) < \infty\) for all vertex \(w\) if \(T \equiv \preThresh\). 
	\end{proof}
\end{document}